\newtheorem{theorem}{Theorem}[section]
\newtheorem{lemma}[theorem]{Lemma}
\newtheorem{corollary}[theorem]{Corollary}
\newtheorem{remark}[theorem]{Remark}
\newtheorem{definition}[theorem]{Definition}
\newtheorem{proposition}[theorem]{Proposition}
\newtheorem{fact}[theorem]{Fact}
\newtheorem{assumption}{Assumption}
\newcommand{\brak}[1]{\left\{#1\right\}}
\newcommand{\pmone}{\brak{-1, 1}}
\newcommand{\R}{\mathbb{R}}
\DeclarePairedDelimiter{\norm}{\lVert}{\rVert}
\newcommand{\mathify}[1]{\ifmmode{#1}\else\mbox{$#1$}\fi}
\newcommand{\abs}[1]{\mathify{\left| #1 \right|}}
\newcommand{\E}{\mathbb{E}}
\newcommand{\Var}{\mathrm{Var}}
\newcommand{\diag}{\mathrm{diag}}
\newcommand{\GSE}{\textup{\textsf{GSE}}}
\newcommand{\Ecal}{\mathcal{E}}
\newcommand{\Ncal}{\mathcal{N}}
\newcommand{\Ocal}{\mathcal{O}}
\newcommand{\Mcal}{\mathcal{M}}
\newcommand{\Xcal}{\mathcal{X}}
\newcommand{\expP}{\mathbb{E}_{y \underset{P}{\sim} x}}
\newcommand{\Estar}{E^\ast}
\newcommand{\zstar}{z^\ast}
\newcommand{\tmix}{t_{\mathrm{mix}}}
\newcommand{\Ostar}{\Ocal^\ast}
\renewcommand{\star}{\ast}
\renewcommand{\>}{\rangle}
\newcommand{\er}{Erd\H{o}s-R\'enyi}
\DeclareMathOperator{\poly}{poly}
\DeclareMathOperator{\Ent}{Ent} 
\DeclareMathOperator{\KL}{KL} 
\DeclareMathOperator{\TV}{TV} 
\DeclareMathOperator{\SWAP}{\textup{\textsf{SWAP}}}
\DeclareMathOperator*{\argmin}{arg\,min}
\title{Generalized Short Path Algorithms: Towards Super-Quadratic Speedup over Markov Chain Search for Combinatorial Optimization}
\author{Shouvanik Chakrabarti\thanks{These authors contributed equally to this work, and are listed alphabetically. Correspondence should be addressed to \texttt{shouvanik.chakrabarti@jpmchase.com}.}}
\author{Dylan Herman\protect\footnotemark[1]}
\author{Guneykan Ozgul\protect\footnotemark[1]}
\author{Shuchen Zhu\protect\footnotemark[1]}
\author{Brandon~Augustino}
\author{Tianyi Hao}
\author{Zichang He}
\author{Ruslan Shaydulin}
\author{Marco Pistoia}
\affil{Global Technology Applied Research, JPMorganChase, New York, NY 10017, USA}
\date{}
\begin{document}

\maketitle

\begin{abstract}
    We analyze generalizations of quantum algorithms based on the short path framework first proposed by Hastings~[\textit{Quantum} 2, 78 (2018)], which has been extended and shown by Dalzell~et~al.~[STOC~'23] to achieve super-Grover speedups for certain binary optimization problems. We demonstrate that, under some commonly satisfied technical conditions, an appropriate generalization can achieve super-quadratic speedups not only over unstructured search but also over a classical optimization algorithm that searches for the optimum by drawing samples from the stationary distribution of a Markov chain. We employ this framework to obtain algorithms for problems including variants of Max Bisection, Max Independent Set, and finding the ground states of the Antiferromagnetic Ising Model and the Sherrington-Kirkpatrick Model, whose runtimes are asymptotically faster than those obtainable with previous short path techniques. In certain cases, our algorithms achieve super-quadratic speedups compared to the  best known classical algorithms with rigorously established runtimes.
   We conclude the paper with a numerical analysis that guides the choice of parameters for short path algorithms and raises the possibility of super-quadratic speedups in settings that are currently beyond our theoretical analysis.
\end{abstract}
\newpage
\tableofcontents
\newpage

\section{Introduction}
\label{sec:intro}

\subsection{Motivation}
\label{sec:motivation}
The prospect of quantum algorithmic speedups for combinatorial optimization has been heavily studied for more than two decades \cite{farhi2000quantum, farhi2014quantumapproximateoptimizationalgorithm,  hastings2018shortPath, montanaro2018backtracking, montanaro2020branchAndBound}. This interest is partially motivated by practical considerations, since combinatorial optimization problems are ubiquitous in scientific and industrial applications, and are a major source of computational bottlenecks \cite{abbas2024quantumoptimization, herman2023quantum, dalzell2023quantalgsurv}. A second principled motivation is that there are reasons to expect such a speedup, arising from the existence of quantum algorithms such as Grover's search algorithm~\cite{grover1996fast}, which enjoys a quadratic quantum speedup over brute force search. Since the best classical algorithms with rigorously provable runtimes for combinatorial optimization often reduce to (possibly structured) search over a large space of possible solutions, one may expect algorithms building on Grover search to provide speedups for these algorithms. In recent years, this intuition has been largely confirmed with the development of quantum accelerated versions that offer quadratic speedups for backtracking \cite{montanaro2018backtracking, ambainis2017treeSizeEstimation} and branch-and-bound \cite{montanaro2020branchAndBound,chakrabarti2022universal}, two of the main classical meta-algorithms used to obtain provable runtime guarantees. There has been some success in obtaining sub-quadratic speedups for combinatorial optimization algorithms based on Markov chain Monte Carlo (MCMC) \cite{wocjan2008speedupQSampling}, and dynamic programming~\cite{ambainis2019dynamicProgramming} methods.

Despite this progress, there remain challenges towards leveraging quantum algorithms for combinatorial optimization. Firstly, there are many problems for which quadratic quantum speedups over the state-of-the-art classical approaches have not been demonstrated, including cases where the best algorithm is based on dynamic programming~\cite{fomin2006branching}, MCMC, and local search~\cite{schoning2002probabilistic, schoning2013satisfiability}. The second challenge is more fundamental, as recent research has identified challenges towards the realization of quadratic quantum speedups due to constant-factor slowdowns compared to classical hardware manifesting from slower clock speeds, the overhead of error correction, and the limited parallelizability of quantum algorithms. Realistic estimates for the resources required to execute a quantum algorithm on a scale where it can break-even with classical computing can result in quantum runtimes exceeding many days. The viability of practical realization of a polynomial speedup increases with the degree of the speedup. For example, the resource analysis in \cite{babbush2021focus}  indicates that the outlook for realizing a quartic speedup, when considering all overheads, can be much more realistic, requiring quantum runtimes on the scale of hours instead of days. It is thus of fundamental importance to investigate whether it is possible to obtain ``super-quadratic" speedups for combinatorial optimization.

The general quantum speedups for backtracking~\cite{montanaro2018backtracking}, branch-and-bound~\cite{montanaro2020branchAndBound}, dynamic programming~\cite{ambainis2019dynamicProgramming} and MCMC methods either rely directly on Grover search, or closely related methods like amplitude estimation, quantum minimum finding, or discrete time quantum walks. These frameworks therefore admit at most quadratic speedups by construction. Furthermore, in the case of unstructured search, backtracking and branch-and-bound, quadratic speedups can be shown to be the best one can hope for in the oracle setting. The study of super-quadratic speedups necessitates the investigation of mechanisms for quantum speedup beyond Grover search. It is also likely that these speedups must leverage problem specific structure to circumvent the aforementioned lower bounds in the oracle setting.

A simpler, but very nontrivial question, is whether there are quantum algorithms for combinatorial optimization that achieve \textit{super-Grover} speedups. That is, a super-quadratic speedup over unstructured search, but not necessarily the best classical algorithm. An important first step towards  rigorously obtaining positive results in this direction was the quantum \emph{short path} algorithm from Hastings~\cite{hastings2018shortPath}, which was demonstrated to solve combinatorial optimization problems with runtime $\Ostar\left(2^{(0.5 - c(n))n}\right)$\footnote{We use the notation $\Ocal^*(2^{h(n)})$ to indicate an upper bound on the runtime of the form $\Ocal(\poly(n) 2^{h(n)})$.} , where $c(n)$ is a positively valued function of $n$. A limitation of this result was that the rigorously established bounds on $c(n)$ asymptote to $0$ as $n$ increases, leading to sub-polynomial improvements over Grover search. In a followup work that built on the framework of Hastings (but using a significantly modified algorithm and analysis), Dalzell et al.~\cite{dalzell2022mind} gave an algorithm that obtains \textit{strictly} super-Grover speedups for several combinatorial optimization problems, i.e., the algorithm achieves a runtime $\Ocal^*\left(2^{(0.5 - c)n}\right)$, where $c$ is a positive parameter independent of $n$. Our work largely builds on the algorithm of Dalzell et al.~\cite{dalzell2022mind}, which we henceforth refer to as an improved short path algorithm due to its connection with Hastings' original work.

The successful demonstration of super-Grover speedups leads to natural optimism that similar techniques could be used in principle to show super-quadratic speedups over the best known classical algorithm. There remain several challenges towards such a demonstration. On one hand, the speedups shown in~\cite{dalzell2022mind} are only larger than quadratic by very small factors.  On the other hand, the best performing classical algorithms for well-studied combinatorial optimization problems (while still exponential-time) are significantly faster than unstructured search. For instance, even when restricting ourselves to algorithms with provable runtimes, the well-known 3-SAT problem can be solved in time $\Ocal^*\left(1.32065^{n} \right)$~\cite{hertli2010improving,paturi2005improved}, the maximum independent set of an $n$-vertex graph can be found in time $\Ocal^*(1.1996^{n})$~\cite{xiao2017exact}, and the exact ground state of the Sherrington-Kirkpatrick model can be determined in time $\Ocal^*\left(1.3670^{n}\right)$~\cite{montanaro2020branchAndBound}. As a consequence, the runtimes established in~\cite{dalzell2022mind} are in most cases slower than the best classical algorithm. An exception to this is the problem of minimizing the energy of $k$-spin models, for which there has been limited study of classical algorithms. It is important to note that the mathematical analysis of~\cite{dalzell2022mind} does not make much effort to optimize the parameters of the algorithm, and the actual runtime is predicted to be better than the theoretical predictions. However, we provide results from a numerical simulation of the algorithm in~\cite{dalzell2022mind} which indicate that, even when the parameters are optimized, the speedup is likely to be insufficient by itself. For maximum independent set (MIS) problem on graphs with up to 21 vertices, the best runtime scaling (using a penalty term to enforce the constraints) achieved in our experiments is around $\Ocal^*\left(1.3195^{n}\right)$. 
This is slower than the best classical algorithm despite optimizing the parameters, indicating that the frameworks in \cite{hastings2018shortPath, dalzell2022mind} must be further generalized if genuine super-quadratic speedups are the goal.

A second consideration is that many combinatorial optimization problems of industrial importance are constrained, including well-studied examples such as finding maximum independent set, maximum/minimum bisection, vertex and set cover, portfolio optimization, and Hamiltonian cycles. The current short path algorithms can only incorporate constraints by adding penalty terms to the cost function, an approach rarely used by the state-of-the-art algorithms for combinatorial optimization. Furthermore, the runtime of the short path algorithm scales with the number of bit-strings defined on the unconstrained solution space, which can often be much larger than the number of bit-strings that satisfy all constraints in the problem formulation. As an example, consider a combinatorial optimization problem with a constraint requiring solutions to have Hamming weight $\lfloor n^{\alpha} \rfloor$ for some $0< \alpha < 1$. The number of \textit{feasible} bit-strings is about $2^{(1 - \alpha) n^{\alpha} \log(n)}$, which is asymptotically smaller than $2^n$ by a significant margin. For such constrained problems, the short path algorithm cannot even offer a super-quadratic speedup over unstructured search (if the search is restricted to the feasible region).

This paper seeks to address these limitations by analyzing a generalized version of short path algorithms that, under some technical conditions, obtain super-quadratic speedups over classical algorithms that search the space of solutions using samples from the stationary distribution of a Markov chain. We refer to such algorithms as \textit{Markov chain search (MCS)}, and the framework can be simply described\footnote{We refer the reader to Section~\ref{sec:preliminaries} for background on Markov chains.} as follows. Suppose that our aim is to minimize a real-valued cost function  $H \colon \Xcal \rightarrow \R$ for a finite set $\Xcal \subset \{-1,1\}^{n}$, and let $P$ be the transition matrix of a Markov chain that mixes to a stationary distribution $\pi$ supported on $\Xcal$ in $\poly(n)$ steps. A Markov chain search algorithm using $P$ simply runs the chain to draw samples from $\pi$, and keeps track of the running minimum of the samples in terms of the cost function $H$. This minimum (and the corresponding sample) serves as an estimate of the global minimum (and minimizer) of $H$ on $\Xcal$. To bound the expected runtime of this algorithm, it is sufficient to bound the expected number of samples before a global minimum is encountered. Letting $\pi^\ast$ denote the total probability that a sample from $\pi$ is a global minimizer of $H$, it follows that drawing $\Ocal\left((\pi^\ast)^{-1} \log(\epsilon^{-1})\right)$ samples from $\pi$ suffices to ensure that we encounter the global minimizer with probability at least $1 - \epsilon$. We outline this framework in Algorithm~\ref{alg:MarkovChainSearch}.

Markov chain search is a natural extension of unstructured search, with several advantages. Firstly, it is often possible to classically sample from distributions over the feasible set that favor lower cost solutions. A common example is sampling from the Gibbs distribution corresponding to the cost function $H$, where a solution $z \in \Xcal$ is sampled with probability proportional to $\exp(-\beta H(z))$ for some parameter $\beta$ (usually called the inverse temperature). Gibbs distributions are the stationary distributions of well known chains such as the Glauber dynamics \cite{glauber1963time} (also referred to as Metropolis sampling), and the mixing times of these Markov chains have been extensively studied in physics and computer science for decades. It is evident that in a Gibbs distribution with $\beta > 0$, low cost solutions are more likely than in the uniform distribution. If the global minima has significantly lower cost than most of the ensemble, this can lead to algorithms that are polynomially faster than unstructured search. We note that Dalzell et al.~\cite{dalzell2022mind} make reference to precisely this framework when examining the possibility of faster classical algorithms for $k$-spin problems.  A second advantage of Markov chain search arises when the feasible set is asymptotically much smaller than $2^{n}$ as discussed previously. Clearly, if there is a chain $\mathcal{M}$ that mixes in polynomial time to the uniform distribution over $|\Xcal|$, the Markov chain search algorithm with $\mathcal{M}$ is faster than unstructured search over the unconstrained space. Even if we can only prepare a (not necessarily uniform) distribution whose support is restricted to $\Xcal$, this may result in a runtime that is substantially better than unstructured search. While the classical analysis of precise runtimes using Markov chain search is typically challenging, we later give explicit examples of settings where this separation from unstructured search is realized.

In both our generalized framework and \cite{dalzell2022mind}, the advantage over quadratic comes from producing an intermediate quantum state, commonly known as a \emph{guiding state}, that has an improved overlap with the optimal solutions than the initial state does. The initial state will have a measurement distribution, in the computational basis, that can be efficiently replicated classically. This intermediate state then acts as the starting point for Grover's algorithm \cite{grover1996fast} or more specifically quantum minimum finding \cite{durr1999quantumalgorithmfindingminimum}. Hence, if a classical algorithm can also replicate the improved overlap attained by the intermediate state, then the overall quantum speedup would only be quadratic. Hence, we refer  to this ``partial dequantization'' as a \emph{Groverization} of the short-path algorithm. It is currently unclear whether or not the applications of previous short-path frameworks can be Groverized.

\subsection{Contributions}
\label{sec:contribution}
Our primary contribution is the formulation of the generalized short path algorithm and the demonstration that, under certain technical conditions, it obtains a super-quadratic speedup over Markov chain search. In particular we obtain quantum runtimes of $\Ostar\left(T(n) ^{0.5 - c(n)}\right)$ where $T(n)$ is the (exponential in $n$) runtime of the classical Markov chain search, and $c(n) > 0$ is a positive parameter. When $c(n)$ is bounded below by a constant $c > 0$ we say we have a true super-quadratic speedup, and when $c(n)>0$ for $n<\infty$ but  $c(n) = o(1)$ we say we have an \emph{asymptotically decaying} advantage over quadratic speedup (in the vein of Hastings' results~\cite{hastings2018shortPath,hastings2018weaker,hastings2019shortPathToyModel}). We go on to discuss how these conditions may be established and demonstrate explicit results. We can further decompose the implications of our main result into two major contributions.

\paragraph{Contribution 1 -- Quantum Improvements over Non-trivial Classical Algorithms :}
The chosen applications highlight two major advantages of our framework. These properties enable speedups over \emph{non-trivial} classical algorithms and new quantum runtimes, both of which would not be possible with the existing short path frameworks \cite{hastings2018shortPath, dalzell2022mind}. 
\begin{enumerate}
    \item \textbf{Handling Hard Constraints}: As a first  example of constrained optimization, we focus on optimization over strings of fixed Hamming weight, a problem we call \emph{MaxCut Hamming}, using a Markov chain based on random transpositions. In this case, Markov chain search reduces simply to a brute force search over \textit{feasible} strings, i.e., those that satisfy the Hamming weight constraint. %
    We note that the new runtime by itself is not particularly interesting as the number of feasible solutions (i.e., the number of $n$-bit binary strings of Hamming weight $n/2$) is smaller than $2^n$ by a polynomial factor and thus, the runtime for Markov chain search is not notably faster than unstructured search for this problem. The second constrained problem we consider is finding  the ground state of the \emph{hardcore model}, i.e. \emph{Maximum Independent Set}, where we utilize a constraint-preserving Glauber dynamics chain. For  random regular graphs of sufficiently high degree, the Glauber chain asymptotically outperforms brute-force random search over all $2^n$ bit-strings\footnote{However, due to known computational barriers in Gibbs sampling, unlike the Maxcut Hamming case, we are unable to sample from the uniform distribution over feasible strings.}.
    \item %
    \textbf{Non-uniform Prior}:
    Our next application is to Markov chain search for unconstrained problems but with non-uniform stationary distributions, particularly to search with Gibbs distributions prepared using the Glauber dynamics\footnote{While the hardcore model is a constrained problem that also falls into this category, the inability to sample from positive inverse-temperatures due to computational barriers implies that the advantage MCS has over bruteforce search comes from restricting to the space of independent sets. Thus, it is less about the non-uniformity of the prior.}. The problems we consider in this setting are the \emph{antiferromagnetic Ising model} and the \emph{Sherrington-Kirkpatrick model}. In both cases, the non-uniform Gibbs sampler outperforms algorithms that start from a uniform distribution.
\end{enumerate}
 The following theorem  informally summarizes our results for the above applications.

\begin{theorem}[Applications of Generalized Short Path Framework (informal Theorem \ref{thm:formal_speedup})]
\label{thm:informal_speedup}
    The the following optimization problems exhibit a quantum runtime of $\Ostar\left(T(n)^{0.5 - c(n)}\right)$, where $T(n)$ is the runtime of a classical algorithm based on Markov chain search, and $c(n) > 0$ quantifies the degree of improvement over quadratic speedup:
    \begin{enumerate}
        \item solving a generalization of MaxBisection, which we call \ref{e:MaxCut}, over  Erd\H{o}s-R\'enyi random graphs with constant average degree,
        \item finding the ground state of the hardcore model, i.e. maximum independent set, on graphs with constant maximum degree, 
        \item finding the ground state of the antiferromagnetic Ising model on graphs of constant maximum degree, 
        \item and finding the ground state of the Sherrington-Kirkpatrick model.
    \end{enumerate}
\end{theorem}
Additionally, we provide some partial evidence that it may be possible to eventually show that some of the super-quadratic speedups attained in this work cannot be Groverized.

\paragraph{Contribution 2 -- Technical Improvements with Generalized Short-Path Framework :}  From a technical point of view, while the skeleton of the framework follows that of~\cite{dalzell2022mind} quite closely, the intermediate conditions must all be uplifted to incorporate the Markov chain used for the base classical algorithm, and significant care is needed to obtain the generalized results. These generalizations are crucial to leverage the two main advantages of Markov chain search over brute force search, the ability to sample from non-uniform distributions and distributions with support restricted to feasible solutions. As we will later show, these cannot be achieved by specializations of existing frameworks. The uplift of the conditions also clarifies some aspects of the role they play in the argument. We found that approaching the original results of~\cite{dalzell2022mind} from this new perspective led to some additional insights, which may be implicit, but are not explicitly documented in other works. We note also that the algorithm of~\cite{dalzell2022mind} follows directly as a special case of our generalized framework by considering the bit-flip walk (or the lazy, random walk on the vertices of the hypercube). Finally, the generalized analysis in this paper allows us to greatly simplify some of the statistical mechanics arguments made in~\cite{dalzell2022mind}, and instead rely directly on some standard results from the theory of Markov chains and this may be of separate interest.

\subsection{Related Works}
As discussed at length in the paragraph above, the primary inspiration for our work comes from the earlier short path algorithms developed in the series of works~\cite{hastings2018shortPath,hastings2019shortPathToyModel,hastings2018weaker,dalzell2022mind}. We now discuss other related quantum algorithms and their connections to our results. The most closely related family of algorithms are based on discrete time quantum walks. The framework of Magniez at al~\cite{magniez2011search} considers a framework that closely matches the one considered here. In~\cite{magniez2011search}, the authors seek to accelerate a classical algorithm, that first prepares in time $\mathrm{S}$ a sample from the stationary distribution of a Markov chain with spectral gap $\delta$, that has $\epsilon$ overlap with some marked state that can be checked in time $\mathrm{C}$, and then runs the Markov chain to repeatedly draw and check samples from the stationary distribution. The classical algorithm finds a marked element in time $\mathcal{O}\left(\mathrm{S} + \epsilon^{-1}(\delta^{-1}\mathrm{U} + \mathrm{C})\right)$ where $\mathrm{U}$ is the cost of simulating one step of the chain. The quantum algorithm obtains a runtime of $\mathcal{O}\left(\mathrm{S} + \epsilon^{-1/2}(\delta^{-1/2}\mathrm{U} + \mathrm{C})\right)$. 

In our setting $\epsilon$ corresponds to $\pi(\Estar)$ and falls exponentially, whereas $\mathrm{S, U, C},\delta^{-1}$ all grow polynomially. Applying the~\cite{magniez2011search} framework, we obtain a quantum runtime of $\Ostar(\pi(\Estar)^{-0.5})$. If the conditions for our framework are met, we obtain a runtime of  $\Ostar(\pi(\Estar)^{-0.5 + c(n)})$. In this setting, our algorithm accelerates the~\cite{magniez2011search} framework in a manner analogous to how~\cite{dalzell2022mind} accelerates Grover search. Another common framework for analyzing search via quantum walk is that of Szegedy~\cite{szegedy2004quantum} where the quadratic speedup is over the hitting time of a marked vertex. However, since the Hitting Time $\mathrm{HT}$ from stationarity satisfies $\epsilon^{-1} \le \mathrm{HT} \le \epsilon^{-1}\delta^{-1}$,  we have $\mathrm{HT} = \Ostar(\epsilon^{-1})$ in our setting where $\epsilon^{-1}$ is exponentially larger than $\delta^{-1}$, and the runtime of both quantum walk frameworks match up to polynomial factors. 

Aside from the works we have already mentioned, quantum algorithms have also been successfully leveraged to obtain speedups for solving linear systems of equations~\cite{harrow2009quantum, childs2017quantum, chakraborty2018power}, computing partition functions~\cite{harrow2020adaptive, cornelissen2023sublinear}, estimating the volume of convex bodies~\cite{chakrabarti2023quantum}, as well as sampling from both log-concave \cite{childs2022quantum} and non-logconcave \cite{ozgul2024stochastic} distributions. We note that our work is primarily concerned with exponential time search algorithms instead of the randomized approximation schemes considered in these papers. It would be interesting to understand whether our methods can be used to obtain super-quadratic speedups for exponential time counting algorithms~\cite{goldberg2021faster}.  Along the line of super-quadratic quantum speedups, there is also recent work providing positive results for Tensor Principle Component Analysis \cite{schmidhuber2024quartic}, and approximation algorithms for combinatorial optimization~\cite{jordan2024optimization}.

\subsection{Organization}
\label{sec:organization}
The rest of the paper is organized as follows. Section~\ref{sec:preliminaries} introduces some basic notation, background on Markov chains, and a refresher on the original short path algorithm. A reader familiar with Markov chains may simply skip to Section~\ref{sec:overview}, which provides a technical introduction to the generalized short path framework and our main theoretical results. This section also includes some discussion on the techniques and the consequences of these results. The following sections contain the technical proofs, with results about the general framework in Section~\ref{sec:framework} and results about the applications to specific problems in Section~\ref{sec:applications-complete}.

\section{Preliminaries}
\label{sec:preliminaries}
We write $\log$ and $\ln$ to indicate logarithms base 2 and base $e$, respectively. We denote the $i$-th element of a vector $x \in \mathbb{C}^{n}$ by $x_i$, and the $ij$-th element of a matrix $A \in \mathbb{C}^{m \times n}$ by $A_{ij}$. For a vector $x \in \mathbb{C}^n$, the matrix $\diag(x) \in \mathbb{C}^{n \times n}$ takes the values of $x$ on its diagonal and zero elsewhere. We write $A \succeq 0$ ($A \succ 0$) to indicate that a matrix $A \in \mathbb{C}^{n \times n}$ is positive semidefinite (positive definite), i.e., all of its eigenvalues are nonnegative (positive). For two $m \times n$ matrices $A$ and $B$, we write $A \circ B$ to indicate their Hadamard (or, element-wise) product.
Note that when we say the phrase \emph{with high probability} (w.h.p. for short), we imply that a result holds asymptotically in the problem size $n$ with probability $1$. 

We will use $\mathcal{G}_{\text{RG}}(n, d)$ to denote the uniform random ensemble of regular graphs on $n$ vertices with degree $d$. Also, $\mathcal{G}_{\text{ER}}(n, p)$ will be used to denote the Erd\H{o}s-R\'enyi ensemble, where there are $n$ vertices and an edge is created between two vertices with probability $p$.

\subsection*{Order Estimates}
We define $\Ocal (\cdot)$ as
$$f(x) = \Ocal(g(x)) \iff \exists \ell \in \R{}, \alpha \in \R{}_+,~\text{such that}~f(x) \leq \alpha g(x)\quad \forall x > \ell.$$
Similarly, 
$$f(x) = o (g(x)) \iff \exists \ell \in \R{},~\text{such that}~f(x) < \alpha g(x)\quad \forall x > \ell~\text{and}~\forall \alpha \in \R{}_+.$$
We write $f(x) = \Omega (g(x)) \iff g(x) = \Ocal(f(x))$. If there exists positive constants $\alpha_1$ and $\alpha_2$ such that  
$$\alpha_1 g(x) \leq f(x) \leq \alpha_2 g(x) \quad \forall x > 0,$$
then we write $f(x) = \Theta (g(x))$.

We also define $\widetilde{\Ocal} (f(x)) =\Ocal(f(x) \cdot \textup{polylog}(x))$ and $\Ocal^* (2^{f(x)}) =\Ocal(2^{f(x)} \cdot \poly (x))$.

\subsection{Probability Toolbox}
In what follows let $\mathcal{X}$ be a finite set satisfying $|\mathcal{X}| = V$. %
A \textit{Markov chain} $\mathcal{M}$ is a random process that defines movements between elements of $\mathcal{X}$. Transitions between states are determined according to a fixed probability distribution, and can be represented by an $V \times V$ (though not necessarily symmetric) \textit{transition matrix} $P$. The entry $P_{kj} := P (k, j)$ is the probability of making a transition from $k$ to $j$, and the rows of $P$ sum to 1 to preserve normalization $\sum_{j \in \mathcal{X}} P (k, j) = 1$; we say that such a matrix is \textit{stochastic}. Additionally, $\mathcal{M}$ is \textit{ergodic} if it is \textit{aperiodic} (it does not get trapped in cycles) and \textit{irreducible} (every state can reach every other). 

One step in the chain obeys 
$$ \mu^{(t)} P  = \mu^{(t + 1)} \quad \forall t \geq 0.$$
For any initial distribution $\mu^{(0)} \in \R^{n}$ over $\mathcal{X}$, the distribution after $t$ steps of the walk is
$$\mu^{(0)} P^t = \mu^{(t)} \quad \forall t \geq 0.$$
We say that a distribution $\pi$ over $\mathcal{X}$ is a \textit{stationary distribution} if 
$$ \pi P = \pi.$$

For a function $f : \Xcal \rightarrow \mathbb{R}$, we define
\begin{align}
    \expP[f(x)] := (Pf)(x) = \sum_{y \in \Xcal}P(x,y)f(y),
\end{align}
and for two such functions $f, g$ their $\pi$ inner product is
\begin{align}
    \langle f, g \rangle_{\pi} = \sum_{x\in \Xcal}\pi(x)f(x)g(x).
\end{align}

The \emph{reversed Markov chain} of $\Mcal$ is defined as the Markov chain $\mathcal{M}^{*}= (\Xcal, P^*, \pi)$, which shares the stationary distribution $\pi$ of $\Mcal$, and $P^*$ is defined by the equation:
\begin{align}
\pi(x)P(x,y)  = \pi(y) P^* (y, x).
\end{align}
The chain $\Mcal$ is called \emph{reversible} if $P^* = P$.

The \textit{total variation distance} between two probability distributions $\mu$ and $\nu$ on $\mathcal{X}$ is defined by 
$$\textup{TV}(\mu , \nu) := \frac{1}{2} \| \mu - \nu \|_{1}.$$ It also satisfies the folowing variational formula
\begin{align*}
\textup{TV}(\mu , \nu) = \textup{sup}_{f: \lVert f \rVert_{\infty} \leq 1}\frac{1}{2}(\mathbb{E}_{\mu}[f(x)]- \mathbb{E}_{\nu}[f(x)]).  
\end{align*}
Another important quantity for comparing probability distributions is the Kullback--Leibler (KL) divergence (or, relative entropy) of two probability distributions $\mu$ and $\nu$ over $\mathcal{X}$:
$$ \KL (\mu \| \nu ) := \sum_{x \in {\mathcal{X}}} \mu (x) \ln\left( \frac{\mu(x)}{ \nu (x)} \right).$$
For KL divergence we also have the Donsker and Varadhan's variational formula \cite{donsker1983asymptotic} for $\textup{KL} (\mu\|\nu)$:
\begin{equation*}%
    \textup{KL} (\mu\|\nu)=\sup_{f \in \mathcal{F}} \left\{ \mathbb{E}_{\mu}[f(x)]-\ln(\mathbb{E}_\nu [\exp(f(x))] \right\},
\end{equation*}
where $\mathcal{F}$ denotes the set of all measurable functions.

The \textit{mixing time} of a Markov chain is the amount of time it takes for the distance to stationarity to be small:
$$ t_{\text{mix}} (\varepsilon) := \min \left\{ t : d(t) \leq \varepsilon \right\},$$
where $d(t) := \sup_{\mu} \textup{TV}(\mu P^t, \pi)$.  The mixing time of a \emph{reversible} Markov chain is related spectral properties of $P$. In this case, the matrix $P$ is similar to a symmetric matrix, and thus diagonalizable. The eigenvalues of $P$ can be ordered as
$$1 \geq \lambda_1 \geq \cdots \geq \lambda_N \geq -1.$$
It is known that $\lambda_1 = 1$ and $\lambda_2 < 1$. If the chain is aperiodic, then we can take $\lambda_N > -1$, where any chain can be made aperiodic by introducing self-loops, i.e. making the chain lazy. We define the \textit{spectral gap} of a reversible Markov chain to be:
$$ \delta : = 1 - \max_{\{\lambda_2, \dots, \lambda_N\}} \{ \lvert \lambda \rvert :  \lambda \neq \pm 1\}.$$ 
The relationship between mixing time and the spectral gap can be expressed as 
$$ t_{\text{mix}} (\varepsilon) = \widetilde{\Ocal}_{\frac{1}{\varepsilon}} \left( \delta^{-1} \right).$$
A larger spectral gap therefore implies faster mixing, meaning that the Markov chain more rapidly converges to the stationary distribution.

Next we define the \textit{Discriminant matrix} of a  Markov chain, which is a useful tool for analyzing random walks. 
\begin{definition}[Discriminant Matrix]
\label{defn:discriminant}
For a Markov chain $\mathcal{M} = (\Xcal, P, \pi)$, the discriminant Matrix is the operator with elements
\begin{align}
    D(P)_{ji} := \sqrt{P_{ij} \circ P^{*}_{ji}} = \left( \diag(\pi)^{1/2} P \diag(\pi)^{-1/2} \right)_{ij}.
\end{align}
Furthermore,  if $P$ is reversible, then $D(P)$ is symmetric, and the following hold:
\begin{enumerate}
    \item The unique, maximum eigenvalue eigenstate of $D(P)$ is $|\sqrt{\pi}\rangle$ with eigenvalue $1$. 
    \item The spectral gap of $-D(P)$ (and equivalently $P$) is 
    \begin{align}
    \label{eqn:spec_gap}
        \delta : = 1 - \max \{ \lvert \lambda \rvert : \lambda \in \sigma(D(P)), \lambda \neq \pm 1\}.
    \end{align}
    \item $\lVert D(P) \rVert_2 = 1$.
    \item If $P$ is symmetric, then $D(P) = P$.
\end{enumerate}
\end{definition}

\begin{definition}[Markov Functionals]
Let $f: \Xcal \rightarrow \mathbb{R}$. The Dirichlet form, $\mathcal{D}(f ,f)$, generated by a Markov chain $\Mcal = (\Xcal, P,\pi)$ is defined by
\begin{align*}
    \mathcal{D}(f, f) := \langle f, (I - P) f\rangle_{\pi},
\end{align*}
and if $P$ is reversible, then $I-P$ is symmetric with respect to the $\pi$-inner product, and so $\mathcal{D}$ extends to an inner product for functions $f, g$:
$$
\mathcal{D}(f, g) = \langle f, (I - P) g\rangle_{\pi} = \frac{1}{2}\mathbb{E}_{x\sim\pi}\expP[(f(x) - f(y))(g(x)-g(y))].
$$
The $\pi$-Variance of $f$ is defined as 
$$
 \textup{Var}_{\pi}(f) := \mathbb{E}_{\pi}[f^2] - (\mathbb{E}_{\pi}[f])^2,
$$
and the $\pi$-Entropy of $f$ is defined as 
$$
 \textup{Ent}_{\pi}(f) := \mathbb{E}_{\pi}[f\ln(f)] - \mathbb{E}_{\pi}[f\ln(\mathbb{E}[f])].
$$
\end{definition}

One can relate the variance to the Dirichlet form and the spectral gap using a \textit{Poincar\'e inequality}:
\begin{definition}[Poincar\'e Inequality]
A Markov chain $\Mcal = (\Xcal, P,\pi)$ satisfies a Poincar\'e inequality with constant $\delta$ if 
$$
      \mathcal{D}(f, f) \geq \delta \textup{Var}_{\pi}(f).
$$
\end{definition}
For reversible Markov chains, the Poincar\'e constant is equal to the spectral gap. It is possible to obtain better bounds on the mixing time of a Markov chain using the so-called \textit{logarithmic Sobolev} inequalities.

\begin{definition}[Log-Sobolev Inequality]
A Markov chain $\Mcal = (\Xcal, P, \pi)$ satisfies a \textit{log-Sobolev inequality} with constant $\omega := \omega_{\textup{LS}}$ if 
$$
      \mathcal{D}(f, f) \geq \omega\textup{Ent}_{\pi}(f^2).
$$
\end{definition}

\begin{definition}[Modified Log-Sobolev Inequality]
A Markov chain $\Mcal = (\Xcal, P, \pi)$ satisfies a modified log-Sobolev inequality with constant $\omega_{\textup{MLS}}$ if 
$$
      \mathcal{D}(f, \ln f) \geq \omega_{\textup{MLS}}\textup{Ent}_{\pi}(f).
$$
\end{definition}
The following chain of inequalities is well-known:
\begin{align}
    \delta \geq \omega_{\textup{MLS}} \geq \omega_{\textup{LS}}.
\end{align}
The Poincar\'e and log-Sobolev inequalities belong to a group known as the functional inequalities. They are also well-defined for non-reversible chains, although Poincar\'e now bounds the singular-value gap, and enable one to bound the corresponding mixing time \cite{chatterjee2023spectralgapnonreversiblemarkov}.

Finally, we define the $P$-pseudo Lipschitz norm, which measures the smoothness of a function with respect to the transition probabilities of a Markov chain. 
\begin{definition}[$P$-pseudo Lipschitz Norm]
\label{defn:pseudo-Lipschitz}
Let $\mathcal{M} = (\Xcal, P, \pi)$ be a Markov chain. The $P$-pseudo Lipschitz norm of $f: \Xcal \rightarrow \mathbb{R}$ is defined to be
$$
   \lVert f \rVert_{P} := \max_{x\in \mathcal{X}}\expP[(f(x) - f(y))^2].
$$
\end{definition}
 
\subsection{Short Path Algorithms}
Let $H: \pmone^n \rightarrow \R$ be a classical cost function satisfying $\sum_z H(z) =0$ for $H$ with no constant term. Consider the combinatorial optimization problem 
\begin{equation*} 
    E^{\star}:= \min_{z \in \pmone^n} H(z),
\end{equation*}
where $E^{\star}$ is the optimal objective value. Let $\Pi^{\star}$ denote the orthogonal projector onto subspace spanned by optimal assignments $|z^{\star}\rangle$. %

Let $X = \sum_{i \in [n]} X_i$ be the transverse-field operator, where $X_i$ denotes the Pauli-$X$ operator acting on qubit $i \in [n]$. A well-known approach to determine some $|z^{\star}\rangle$ is the \textit{quantum adiabatic algorithm} (QAA) \cite{farhi2000quantum}. The QAA finds a $|z^{\star}\rangle$ by considering the adiabatic time evolution of 
$$ H_b^{(\textup{QAA})} = - (1-b)X + b H,$$
as $b$ is tuned from $b = 0$ to $b = 1$.
However, QAA is known to suffer from certain ``localization'' issues, which can be viewed as a quantum analogue of getting trapped in local minima, and can result in run times that are exponentially worse than classical brute-force search \cite{altshuler2009adiabatic}. This manifests as a result of the ``avoided crossing'' phenomenon, or first-order phase transition that can lead to exponentially (or even super-exponentially) small spectral gaps. 

Recently, Hastings \cite{hastings2018shortPath} and Dalzell et al.~\cite{dalzell2022mind} proposed a new paradigm for avoiding the first-order phase transition problem with the QAA. Following the approach of \cite{dalzell2022mind}, prototypical adiabatic optimization is eschewed through two modifications. First, the term $H$ is replaced with $g_\eta \left( \frac{H}{| E^{\star} |} \right)$ for a piecewise-linear function $g_\eta: [-1, \infty) \rightarrow [-1,0]$ parameterized by $\eta \in [0,1)$:
$$ g_{\eta} (x) := \min \left( 0, \frac{x+1- \eta}{\eta} \right),$$
leading to the \emph{short-path Hamiltonian}:
\begin{align}\label{eqn: H_b}
    H_b = -\frac{X}{n}+b g_\eta \left( \frac{H}{| E^{\star} |} \right),
\end{align}
where $X$ has been normalized by its spectral norm, and $b \in [0, \infty)$.
Second, rather than slowly evolve from $-\frac{X}{n}$ to $\frac{H}{| E^{\star} |}$ as in the QAA, we jump from $- \frac{X}{n}$ to the ground state $\ket{\psi_b}$ of $H_b$ for some value of $b$ that is independent of $n$ (where the spectral gap is guaranteed to be large), and then jump from $H_b$ to the ground-state space of $\frac{H}{| E^{\star} |}$.  Note in \cite{dalzell2022mind} they also allow for scaling $H$ by an overestimate of $\lvert E^*\rvert$, for simplicity we just stick with scaling by $\lvert E^*\rvert$.

The jumps are accomplished using a unitary $U$, which combines phase estimation and amplitude amplification. For a high-level understanding, suppose we seek to enact a jump between two Hamiltonians $H_1$ and $H_2$, each acting on $n$ qubits. Denote the ground state of $H_1$ by $\ket{\psi_1}$, and let $\Pi_2$ be the projector onto the ground space of $H_2$. The unitary $U$ first employs phase estimation to implement reflection operators $R_1$ and $R_2$ that reflect about the state $\ket{\psi_1}$, and the groundspace of $H_2$, respectively. If $\delta_j$ is the spectral gap of Hamiltonian $H_j$, the operator $R_j$ can be implemented up to error $\varepsilon$ using $\delta_j^{-1} \log (1/\varepsilon)$ calls to a block-encoding of $H_j$, and often realizable using $\poly (n)$ gates. When $H_j$ is a classical Hamiltonian, $R_j$ can be implemented using $\poly (n)$ gates irrespective of $\delta_j$. From here, the unitary $U$ employs fixed-point amplitude amplification \cite{yoder2014fixed} to implement $\frac{\Pi_2 \ket{\psi_1}}{\left\| \Pi_2 \ket{\psi_1} \right\|}$, requiring at most $\Ocal \left( \left\| \Pi_2 \ket{\psi_1} \right\|^{-1} \log (1/\varepsilon) \right)$ applications of $R_1$ and $R_2$.

The algorithm is initialized to $\ket{\pmb{+}}:=\ket{+}^{\otimes n}$, the ground state of $-\frac{X}{n}$. Then, the ground state $\ket{\psi_b}$ of $H_b$ is prepared by jumping from $-X/n$ to $H_b$. Finally, we prepare $\frac{\Pi^{\star}\ket{\psi_b}}{\norm{\Pi^{\star}\ket{\psi_b}}}$ by jumping from $H_b$ to the classical Hamiltonian $\frac{H}{\abs{E^{\star}}}$. The state $\Pi^{\star} \ket{\psi_b}$ is a superposition of optimal solutions to $\min_{z \in \{ \pm 1\}^n} H(z)$, and thus measurement in the computational basis yields an optimal bit-string $z^{\star}$ with high probability. The first jump is small (in the sense that the success probability is nearly 1), whereas the second jump is large (the success probability is exponentially small). The resulting time complexity scales as $\mathcal{O}^* \left(2^{\left( \frac{1}{2} -c\right) n}\right)$, indicating a super-Grover speedup when $c > 0$. In \cite{hastings2018shortPath}, the order of short and long jump steps is reversed. We refer to both approaches \cite{hastings2018shortPath, dalzell2022mind} as \textit{short path algorithms}.

\section{Technical Overview}\label{sec:overview}

\subsection{Framework}
Define $\Xcal \subseteq \{-1,1\}^n$ and let $H \colon \Xcal \rightarrow \R$ be a cost function. We are interested in (exactly) determining $\zstar \in \Xcal$ such that 
$$\zstar \in \argmin_{z \in \Xcal}H(z).$$
We also use $H$ to refer to a diagonal Hamiltonian in a Hilbert space with a basis indexed by $z \in \Xcal$, with $\<z|H|z\> $ for the Hamiltonian identified with $H(z)$. Whether we are referring to the quantum Hamiltonian or the function will be clear from context. We define $\Estar := \min_{z \in \Xcal} H(z)$ and assume everywhere that the cost is scaled to ensure $\Estar < 0$. We will further assume that $|\Xcal|$ is super-polynomial, as our primary concern is with super-quadratic speedups over algorithms with at least super-polynomial runtime. Suppose $\pi$ is a distribution such that $\pi(E^{\star})$ is the probability that a sample from $\pi$ is a global minimizer. Suppose there exists a Markov chain with transition matrix $P$ that mixes to stationary distribution $\pi$, then the mixing time is bounded by $\tmix(\varepsilon) = \poly(n,\log(\varepsilon^{-1}))$. %
\begin{algorithm}
  \caption{\textsc{MarkovChainSearch}}
  \label{alg:MarkovChainSearch}
  \hspace*{\algorithmicindent} \textbf{Prerequisites:} Solution space $\Xcal \subset \{-1,1\}^n$, Cost function $H \colon \Xcal \rightarrow \R$, distribution $\pi$ such that $\pi(E^{\star})$ is the probability that a sample from $\pi$ is a global minimizer, a Markov chain with transition matrix $P$ and mixing time $\tmix = \tmix(\pi(E^{\star})/2) = \mathcal{O}(\poly(n))$.  \\ 
  \hspace*{\algorithmicindent} \textbf{Input}: Description of $\poly(n)$ time procedures to evaluate $H$ and perform a step of the Markov chain described by $P$, failure probability $\epsilon$.\\   
 \hspace*{\algorithmicindent} \textbf{Output}: A global minimizer $\zstar$ of $H(z)$ over $\Xcal$.
  \begin{algorithmic}[1]
    \State Set $i=0$, $z^{(0)}$ to an arbitrary point in $\Xcal$.
    \While {$i \le \frac{2}{\pi(E^{\star})}\log\left(\frac{1}{\epsilon}\right)$}
        \State Simulate $\tmix$ steps of $P$ to obtain sample $\tilde{z}$
        \If {$H(\tilde{z}) \le H(z^{(i)})$}
        \State Set $z^{(i+1)} = \tilde{z}$  and $i 
        \gets i + 1$.
        \Else 
        \State Set $z^{(i+1)} = z^{(i)}$  and $i 
        \gets i + 1$.
        \EndIf
    \EndWhile
    \State Output $z^{(i)}$.
  \end{algorithmic}
\end{algorithm}

Under these conditions, it follows that Algorithm~\ref{alg:MarkovChainSearch} finds the global minimizer of $H$ in time $\Ostar\left(\pi(E^{\star})^{-1}\right)$. Our framework seeks to accelerate this runtime, and so we assume the same setting for the quantum algorithm. In order to define the quantum framework we must access $H, P,$ and $\pi$. We assume the existence of the following subroutines:

\begin{assumption}[Quantum Input Assumptions]
\label{asm:input-assumptions}
The following subroutines are used as primitives in the Generalized Quantum Short Path framework.
\label{asm:quantum-input}
    \begin{enumerate}
        \item \textbf{Initial State Preparation:} We assume the existence of a unitary $U_\pi$ implementable using $\poly(n)$ gates, such that $U_\pi|0\rangle = |\sqrt{\pi}\rangle \coloneqq \sum_{z \in \Xcal} \sqrt{\pi(z)} |z\>$.
        \item \textbf{Block-encoding of Markov Chain:} Suppose that $P$ is the transition matrix of a reversible Markov chain, then the discriminant operator $D(P)$ (see Definition \ref{defn:discriminant}) is Hermitian. We assume the existence of a unitary $U_{D(P)}$, implementable with $\textup{poly}(n)$ gates, that is a $(\kappa_1,a)$ block-encoding of $D(P)$ for $\kappa_1 =\mathcal{O}(\poly(n))$. 
        \item \textbf{Block-encoding of Cost Function:} We assume the existence of a unitary $U_{H}$, implementable with $\textup{poly}(n)$ gates, that is a $(\kappa_2,a)$ block-encoding of $H$ for $\kappa_2~=~\mathcal{O}(\poly(n))$.
    \end{enumerate}
\end{assumption}

We will justify Assumption~\ref{asm:quantum-input} for each application of the framework. Note that preparing a block-encoding of the cost function is straightforward given a $\poly(n)$ size circuit to evaluate it at a single point, and we do not make this analysis in every case. We also do not explicitly mention these input assumptions in each of our results to avoid cluttering the presentation, but they are prerequisites for the input model in each case.
With this setup, we can define the generalized short path framework. We define a generalized ``short path Hamiltonian" $H_b$ as 
\begin{equation}
    \label{eqn:generalized-hamiltonian}
    H_b = - D(P) + b g_\eta \left(\frac{H}{|E^\ast|}\right),
\end{equation}
where $D(P)$ is the Discriminant matrix corresponding to $P$ and $g_\eta$ is defined similarly to \cite{dalzell2022mind}, by
$$ g_{\eta} (x) := \min \left( 0, \frac{x+1- \eta}{\eta} \right).$$ The block-encoding for $H_b$ can be constructed using the \textit{linear combination of block-encodings} technique \cite{gilyen2019QSVT} using $U_{D(P)}$ and $U_H$.

The framework is specified in Algorithm~\ref{alg:generalized-short-path}. The implementation of the jumps uses the framework from~\cite[Proposition 21]{dalzell2022mind} that performs fixed point amplitude amplification using reflections constructed from the block-encodings of the Hamiltonians and the quantum singular value transformation (QSVT). 
The overall runtime of Algorithm \ref{alg:generalized-short-path}, in terms of the number of queries to $U_{\pi}$ and block-encodings of $H$ and $D(P)$, is
\begin{align}
   [\min(\text{Gap}(-D(P)), \text{Gap}(H_b))]^{-1}\left(\lvert \langle \sqrt{\pi} |\psi_b\rangle\rvert^{-1} +  \lVert \Pi^{\star}|\psi_b\rangle\rVert_2^{-1}\right)\label{eqn: runtime},
\end{align}
where %
$\Pi^{\star}$ is the projector onto the ground subspace of $H$. As in previous papers, we refer to one of the steps in the algorithm as the short jump and another as the long jump. The reason for this is that the choices of $b,\eta$ made for applications of the framework will always ensure that the short jump can be carried out with a polynomial number of queries to $U_{\pi}, U_{D(P)},$ and a block-encoding of $H_b$. Thus when including Assumption \ref{asm:input-assumptions}, the runtime of the algorithm is therefore primarily determined by the long jump, and under the appropriate conditions is bounded by $\Ostar(\pi(E^{\star})^{-(0.5-c)})$, leading to a super-quadratic speedup over Markov chain search.

\begin{algorithm}
  \caption{\textsc{GeneralizedShortPathAlgorithm}}
  \label{alg:generalized-short-path}
  \hspace*{\algorithmicindent} \textbf{Input}: Algorithmic parameters $b,\eta$, Problem parameters $H, P, \pi, E^{\star}$, which define $H_b$ in Equation~\eqref{eqn:generalized-hamiltonian}. \\
 \hspace*{\algorithmicindent} \textbf{Output}: an optimal assignment $z^{\star}$ for $H$.
  \begin{algorithmic}[1]
    \State Prepare $\ket{\sqrt{\pi}}$, the ground state of $-D(P)$.
    \State \textbf{Short Jump:} Prepare $\ket{\psi_b}$ up to exponentially small error with jump $-D(P)\rightarrow H_b$.
    \State \textbf{Long Jump:} Prepare $\frac{\Pi^{\star}\ket{\psi_b}}{\norm{\Pi^{\star}\ket{\psi_b}}}$ up to exponentially small error with jump $H_b\rightarrow \frac{H}{\abs{E^{\star}}}$.
  \end{algorithmic}
\end{algorithm}

Our bounds on the runtime rely on two conditions, that we view as uplifted versions of corresponding notions in~\cite{dalzell2022mind} to the case of general Markov chains. Our first conditon captures the \emph{smoothness} of the cost function under applications of the transition matrix.

\begin{definition}[$\Delta_{P}$ stability]\label{defn:delta-p}
    Let $\Mcal = (\Xcal, P, \pi)$ be a Markov chain. We say that the cost Hamltonian $H$ is $\Delta_P(\eta)$ stable under $\Mcal$ if 
\begin{equation}\label{e:DeltaCondition}
\expP[h_{\eta}\left(H(y)\right)]\leq  h_{\eta}\left(H(x)+\Delta_{P}(\eta)\right), \quad \forall x \in \Xcal
\end{equation}
where $h_{\eta}(x) := g_{\eta}\left(\frac{x}{\lvert E^*\rvert}\right)$. If the conditions holds for all $0< \eta < 1$ we omit $\eta$ and simply say $H$ is $\Delta_P$ stable under $\Mcal$.
\end{definition}

The analysis in the following sections will clarify that $\Delta_P$-stability is a generalization of the $\alpha$-subdepolarizing condition introduced in~\cite{dalzell2022mind}. In fact, it is equivalent to a more syntactically obvious generalization of the $\alpha$-subdepolarizing condition, we state Definition~\ref{defn:delta-p} as the primary condition as it is easier to demonstrate and interpret in most cases.

The next conditon is a generalization of the spectral density condition of~\cite{dalzell2022mind}. We capture the idea that the measure (according to $\pi$) of the set of solutions $z$ for which $g_\eta\left(\frac{H(z)}{|\Estar|}\right) \neq 0$ is polynomially related to the measure of the global minimizer, for some value of $\eta$. In other words, sampling from $\pi$ does not allow one to approximately minimize $H$ to arbitrary constant relative error, super-polynomially faster than finding the exact minimum. If this condition is violated, the problem admits a simple classical sub-exponential time approximation scheme. We define this condition as follows:
\begin{definition}[$\gamma$ Spectral Density]\label{defn:gamma-spectral}
The cost Hamiltonian $H$ is said to satisfy the $\gamma$ spectral density condition with respect to the stationary distribution $\pi$ if:
$$
    \pi(E \leq (1-\eta)E^{\star}) \leq \pi(E^{\star})^{\gamma},
$$  
for some $\eta \in (0, 1)$.
\end{definition}
In section \ref{subsec:constructing_short_path_markov_chains}, we will present conditions on the underlying Markov chain that enable bounding $\gamma$. However, in some cases, we will simply assume $\gamma = \Theta(1)$, which as mentioned earlier is a mild condition for hard problems.

We are now ready to state the main results concerning our framework. These results are subject to further technical conditions on the Markov chain used for the search, and are formulated in terms of conditions that lead to efficient mixing time bounds. We have two variants of our result, the first relies on a logarithmic-Sobolev inequality for $P$.
\begin{theorem}[Informal Theorem \ref{thrm:MTG_RT}]\label{thm:main-general-logsobolev}
Let $\Mcal = (\Xcal, P, \pi)$ be a reversible, aperiodic Markov chain, and let $H: \Xcal \rightarrow \R$ be a diagonal, $\Delta_P$-stable Hamiltonian with ground state energy $E^{\star}$, that satisfies the $\gamma$ spectral density condition for some parameter $\eta$. In addition, suppose $\Mcal$ satisfies an $\omega$ log-Sobolev inequality.%
Then, under Assumption \ref{asm:input-assumptions}, Algorithm~\ref{alg:generalized-short-path} determines the ground state of $H$ over $\Xcal$ with running time
$$
\mathcal{O} \left(\textup{poly}(n)\omega^{-1}[\pi(E^{\star})^{-1}]^{\left(\frac{1}{2}-\frac{\eta(1-\eta)\lvert E^{\star}\rvert b}{2\ln(1/\pi(E^{\star}))\Delta_P}\right)}\right),
$$
where $b = \mathcal{O}\left(\gamma\omega\ln\left(\frac{1}{\pi(E^{\star})}\right)\right)$.
\end{theorem}
The spectral density condition with respect to non-uniform starting states presents a technical challenge, since unlike the uniform distribution over all
bitstrings of length $n$, they are no longer product measures. Fortunately, the condition that $\pi$ is the ground state of a fast mixing Markov chain allows for simplification via concentration inequalities for Markov chains, e.g. the so-called Herbst argument \cite{lalley2013concentration}, that allows the spectral density conditions to be established as long as the cost function has an appropriately bounded pseudo-Lipschitz norm $\norm{H}_P$.

We also present a variant of the above result that only relies on the weaker Poincar\'e inequality and does not require the spectral density condition. Still, Corollary \ref{cor:tail-poinc} shows that a Poincar\'e inequality does imply the spectral density condition, anyways, for some potentially falling $\gamma$.

\begin{theorem}[informal Theorem \ref{thrm:MTG_RT_poincare}]\label{thm:main-general-poincare}
Let $\Mcal = (\Xcal, P, \pi)$ be a reversible, aperiodic Markov chain, and let $H: \Xcal \rightarrow \R$ be a diagonal, $\Delta_P$-stable Hamiltonian with ground state energy $E^{\star}$ that satisfies the $\gamma$ spectral density condition for some parameter $\eta$. In addition, suppose $\Mcal$ satisfies a $\delta$ Poincar\'e inequality. %
Then, under Assumption \ref{asm:input-assumptions}, Algorithm~\ref{alg:generalized-short-path} determines the ground state of $H$ over $\Xcal$ with running time
$$
\mathcal{O} \left(\textup{poly}(n)\delta^{-1}[\pi(E^{\star})^{-1}]^{\left(\frac{1}{2}-\frac{\eta(1-\eta)\lvert E^{\star}\rvert b}{2\ln(1/\pi(E^{\star}))\Delta_P}\right)}\right),
$$
where $b = \mathcal{O}(\delta)$.
\end{theorem}
Theorems~\ref{thm:main-general-logsobolev} and \ref{thm:main-general-poincare} are proven in Section~\ref{sec:framework}, in order to establish specializations that rely on a bounded pseudo-Lipschitz norm. Additionally, in Section \ref{sec:framework} we present conditions under which the super-quadratic speedup is asymptotic, i.e. does not fall with problem size $n$.

\subsection{Applications}
The fact that our main results in Theorems \ref{thm:main-general-logsobolev} and \ref{thm:main-general-poincare} rely on functional inequalities implies many possibilities for interesting algorithmic speedups. Once a Markov chain with the right properties (log-Sobolev or Poincar\'e with the proper parameters) is identified, we can derive conditions on cost functions for which we have super-quadratic speedup over Markov chains. As a simple toy example, consider an expander graph of size $2^{n}$, where we are given access to a polynomial time oracle that outputs the edges incident on any vertex. Since the graph is an expander, the graph random walk satisfies a Poincar\'e inequality with constant $\delta$. Consider any assignment of costs to the nodes such that the  difference in cost between any two endpoints of an edge is bounded above by a constant $\Delta$. It follows from our results (Theorem \ref{thm:main-general-poincare}), that the generalized short path framework can find the node with minimum cost with super-quadratically fewer queries than searching with the random walk on the graph. A systematic study of cost functions that yield a speedup for various Markov chains may lead to some interesting insights. For this paper, however, we focus on identifying connections to problems of general interest.

Theorem \ref{thm:formal_speedup} presents a summary of the quantum runtimes obtained for all applications considered in this paper.
\begin{theorem}[Applications of Generalized Short Path Framework]
\label{thm:formal_speedup}
    The following optimization problems exhibit a quantum runtime of $\Ostar\left(T(n)^{0.5 - c(n)}\right)$, where $T(n)$ is the runtime of a classical algorithm based on Markov chain search, and $c(n) > 0$, specified in Tables \ref{table:maxcut} and \ref{table:gibbs}, quantifies the degree of improvement over quadratic speedup: 
    \begin{enumerate}
        \item Finding a generalization of maximum bisection (called MaxCut Hamming) over  Erd\H{o}s-R\'enyi random graphs with constant average degree, where the smaller partition is constrained to have $k$ nodes (where $k \le n/2$). In both cases, the Markov chain is the transposition walk.
        
        \item Finding a maximum independent set on graphs with constant maximum degree, where the Markov chain is Glauber dynamics, 
        \item Optimizing the antiferromagnetic Ising model on graphs of constant maximum degree, where the Markov chain is Glauber dynamics
        \item Optimizing the Sherrington-Kirkpatrick model where the Markov chain is the Glauber dynamics
    \end{enumerate}
\end{theorem}

\begin{table}[h!]
\centering
\begin{tabular}{|m{4cm}|m{5cm}|m{4cm}|m{2cm}|}
\hline
\textbf{Problem} & \textbf{Graph Model} & \textbf{Advantage $(c(n))$} & \textbf{Theorem} \\
\hline
Maximum Bisection & $\mathcal{G}_{\textup{ER}}(n, \frac{d}{n}), d = \mathcal{O}_n(1)$ & $\Theta(1)$ & Theorem~\ref{thm:maxcut_runtime} \\
\hline
MaxCut Hamming & $\mathcal{G}_{\textup{ER}}(n, \frac{d}{n}), d = \mathcal{O}_n(1)$ & $\Theta(1/\log(n/k))$ & Theorem~\ref{thm:maxcut_runtime} \\
\hline
\end{tabular}
\caption{\label{table:maxcut}Degree of quantum improvement for problems using MCS with transposition walk.}
\end{table}

\begin{table}[h!]
\centering
\begin{tabular}
{|m{2.3cm}|m{3.6cm}|m{4.5cm}|m{2.7cm}|m{2.2cm}|}
\hline
\textbf{Model} & \textbf{Graph Model} & \textbf{Inverse Temperature} & \textbf{Advantage $(c(n))$} & \textbf{Theorem} \\
\hline
\multirow{3}{=}{Hardcore Model 
} 
    & $d = \mathcal{O}_n(1)$ 
        & $\beta  \geq -(d-1)\ln(d-1)+d\ln(d-2)$ & $\Theta(1)$ & Theorem~\ref{thm:mis_constant_frac_crit} \\
\cline{2-5}
    & $d = \mathcal{O}_n(1)$ 
        & $\beta = -(d-1)\ln(d-1)+d\ln(d-2)$& $\Omega\left(\frac{1}{n^{1+4e(1+\frac{1}{d-2})}}\right)$ & Theorem~\ref{thm:mis_at_criticality_runtime} \\
\cline{2-5}
    & $\mathcal{G}_{\textup{RG}}(n, d), d = \mathcal{O}_n(1)$ 
        &$\beta > \ln\left({2\sqrt{d-1}-1}\right)$ & $\Omega(1/n)$ & Theorem~\ref{thm:mis_beyond-criticality_runtime} \\
\hline
\multirow{3}{=}{Anti-\\ferromagnetic Ising Model
} 
    & $d = \mathcal{O}_n(1)$
        &$(d-1)\tanh(\beta) < 1 $& $\Theta(1)$ & Theorem~\ref{thm:ising_constant_frac_crit} \\
\cline{2-5}
    & $d = \mathcal{O}_n(1)$
        & $(d-1)\tanh(\beta) \leq 1 + o_{n}(1)$ & $\Omega(1/n^{1+1/d})$ & Theorem~\ref{thm:ising_at_critical_runtime} \\
\cline{2-5}
    & $\mathcal{G}_{\textup{RG}}(n, d), d = \mathcal{O}_n(1)$ 
        &  $(d-1)\tanh(\beta) \leq \frac{d-1}{8\sqrt{d-1}-1}$ & $\Theta(1)$ & Theorem~\ref{thm:ising_beyond_crit_runtime} \\
\hline
Sherrington-Kirkpatrick Model& -- & $\beta < \frac{1}{4}$ & $\Theta(1/\log(n))$ & Theorem~\ref{thm:sk_runtime} \\
\hline
\end{tabular}

\caption{\label{table:gibbs}Degree of quantum improvement for problems using MCS with Glauber Dynamics. Note in all cases $d\geq 3$. $\beta$ corresponds to inverse temperature for the Gibbs sample $\propto \exp(-\beta H)$.}
\end{table}

\paragraph{Quantum Improvements that Lead to Speedups over Classical State-of-the-art}
\label{subsec:contr_quantum_improve}
We would like to highlight two particular cases where we obtain quantum algorithms that are faster than existing provable, classical state-of-the-art generic algorithms. The first case is for the antiferromagnetic Ising model over random regular graphs. Here in Table \ref{table:gibbs} we obtain $c(n) = \Theta(1)$, leading to an asymptotic super-quadratic speedup over Markoc chain search. The underlying Markov chain is the state-of-the-art classical approach of \cite{chen2025rapidmixingrandomregular}. Secondly, while for the case of MIS over random-regular graphs the improvement over quadratic speedup falls with $n$, the underlying classical MCS algorithm appears to be faster than other classical generic approaches when the degree is sufficiently large.

In the following subsections, we expand more on the details of the applications that we consider and how the ability to speedup MCS enables faster quantum algorithms. In Section \ref{subsec:groverizing}, we also discuss about potential evidence that some of the speedups obtained cannot be what we call ``Groverized.'' We say that a speedup with (generalized) short-path can be \emph{Groverized} if the edge-over-quadratic can be replicated with a classical subroutine combined with quantum minimum finding.

\subsubsection{Optimization with Fixed Hamming Weight}
\label{sec:ham-weight-overview}
We first consider optimization problems for which feasible solutions are bitstrings of fixed Hamming weight. A well-studied example of this setting is Max-Bisection~\cite{frieze1997improved, dembo2017extremal}, defined as follows
\begin{equation}\tag{MaxBisection}
\mathcal{C}^{*}_{\frac{n}{2}} := \min_{ x \in \{-1, 1\}^{n}}\left\{ -\frac{1}{2}\sum_{ i < j} e_{ij} (1-x_ix_j) : \lvert x\rvert = \frac{n}{2} \right\},
\end{equation}
The algorithm of~\cite{dalzell2022mind} does not directly yield useful results for such a problem. Firstly, the framework does not naturally incorporate constraints. More importantly, although one could attempt to enforce the constraints by means of penalty terms, this prohibits the possibility of super-Grover speedups. To see why this is the case, recall that the algorithm of~\cite{dalzell2022mind} is simply our Algorithm~\ref{alg:generalized-short-path} with the Markov chain chosen to be the random walk on the edges of the hypercube. Since every transition of such a walk changes the hypercube, the best possible value of $\Delta_P$ for the stability condition to be satisfied is of the same order as the penalty terms. On the other hand, the penalty terms must be of the same order as the cost function in order to guarantee that constraints are satisfied. By inspection of Theorem~\ref{thm:main-general-logsobolev} we observe that no super-Grover speedup is possible via the penalty-based approach (more details on the penalty approach are in Appendix \ref{sec:penalized_obj}). We overcome this challenge by employing a generalized framework that uses a Markov chain, specifically the \textit{Bernoulli-Laplace diffusion} or \textit{transposition walk}, which preserves Hamming weights and transitions from a starting string of weight $k$ to the equal superposition over all such strings. In Section~\ref{sec:fixed-hamming-weight}, we present a condition on cost functions over Hamming weight slices $k$ for which we obtain runtimes of the form $\poly(n) \binom{n}{k}^{0.5 - c}$ for a constant~$c$. 

We study a generalization of \ref{e:MaxBisection} which we term \ref{e:MaxCut}, defined as
\begin{equation}\tag{MaxCut-Hamming}
\mathcal{C}^{*}_{k} := \min_{ x \in \{-1, 1\}^{n}} \left\{ -\frac{1}{2}\sum_{i < j} e_{ij} (1-x_ix_j) : \lvert x \rvert = k \right\}.
\end{equation}
The runtimes we obtain are highlighted in Table \ref{table:maxcut}. In Section~\ref{sec:fixed-hamming-weight} we prove that the Generalized Short Path Framework achieves an overall runtime of $\Ostar\left(\binom{n}{k}^{0.5 - c}\right)$ for \ref{e:MaxCut} on \er ~random graphs when $k = \Theta(n)$ (which includes \ref{e:MaxBisection} as a special case). We also demonstrate that under the assumption of spectral density, our approach achieves a runtime of $\Ostar\left(\binom{n}{k}^{0.5 - c(n)}\right)$ with $c(n) = \Theta\left([\log(n/k)]^{-1}\right)$, when $k = o(n)$, and thus the super-quadratic advantage over Markov chain search decays as $n \to \infty$ (similar to the results of ~\cite{hastings2018shortPath}).

\subsubsection{Glauber Dynamics} 
Glauber dynamics~\cite{glauber1963time} is a well known sampling algorithm designed to sample from the Gibbs measures corresponding to Hamiltonians such as the Ising or Hardcore models. Since sampling from a Gibbs measure at arbitrarily high inverse temperatures solves the exact optimization, for most hard problems there exists a critical threshold beyond which the Glauber dynamics no longer mixes efficiently, which is discussed in Section \ref{subsec:groverizing}. Performing Markov chain search with Glauber dynamics has two advantages, if the problem is constrained then it provides a natural way to search with a distribution whose support is restricted to feasible solutions only. On the other hand, if the Glauber dynamics mixes for positive inverse temperatures, then low-energy solutions are more favored compared to the uniform distribution and the result in Markov chain search is asymptotically faster than unstructured search. In each case, we will consider classical search algorithms that  use the Glauber dynamics at for inverse-temperatures in the ranges specified in Table \ref{table:gibbs}. We demonstrate in Section~\ref{sec:glauber_dynamics} that for three models of interest, we obtain super-quadratic, but potentially asymptotically falling, speedups over polynomially-mixing Glauber dynamics. These models are:
\begin{enumerate}
    \item \textbf{The Hardcore model on graphs of constant maximum degree:} For this problem the Glauber dynamics is shown to mix only up to critical temperatures that are negative. This means that our starting distribution favors small sets compared to large sets. However, there is the advantage that the Gibbs distribution has support only on independent sets (which are usually much fewer in number than $2^n$, which is the total number of subsets). In the case of random regular graphs of sufficiently high degree, we show that this Markov chain search algorithm is faster than unstructured search (Proposition \ref{prop:comparison-to-brute-force}) as well as the best known combinatorial algorithms for finding a maximum independent set. 
    \item \textbf{The Antiferromagnetic Ising model on graphs of constant maximum degree:} The Antiferromagnetic Ising model is an unconstrained optimization problem, encompassing problems like MaxCut, and there are $2^n$ feasible solutions. In this setting, the Glauber dynamics mixes up to a positive critical inverse-temperature, thus the starting stationary distribution favors low energy solutions and the Markov chain search is faster than unstructured search (Proposition \ref{prop:ising_improve_over_brute_force}).
    \item \textbf{The Sherrington-Kirkpatrick model:} Like the Ising model, the Sherrington-Kirkpatrick model is also unconstrained. The Glauber dynamics has been shown to have fast mixing in total variation distance up to inverse-temperature $\beta < \frac{1}{4}$ \cite{chen2022localizationschemesframeworkproving}. We show that the quantum short-path algorithm applies to Glauber dynamics for SK within this $\beta$ range. Additionally MCS for this $\beta$ range outperforms bruteforce search (Proposition \ref{prop:sk_positive-beta-overlap}). However, mixing in Wasserstein-2 is possible all the way up to a known computational threshold of $\beta < 1$ \cite{alaoui2024samplingsherringtonkirkpatrickgibbsmeasure, celentano2022sudakovferniquepostampnewproof}. %
\end{enumerate}

\subsubsection{Can the Generalized Short-Path Algorithm be ``Groverized"?}
\label{subsec:groverizing}
Our techniques also allow us to shed more light on a fundamental question about the viability of true super-quadratic speedups with the short path framework. The algorithms in this paper as well as the earlier frameworks, rely on preparing a quantum state whose overlap with the global minimizers is larger than that of some easily prepared starting state, referred to as the \emph{short-jump}, and then applying quantum minimum finding to that state. It is apparent that if there existed a classical algorithm to sample in polynomial time from a distribution with overlap that matches that of this intermediate state, then there is a classical algorithm that finds the global minimum only quadratically slower than the short path algorithm. The advantage over search is then essentially ``Groverized"   and there is no hope for true super-quadratic speedups. It is therefore important to understand to what degree classical sampling techniques can approach the overlap of the intermediate state, as discussed in~\cite{dalzell2022mind}. Most natural sampling algorithms are based on the analysis of Markov chains and so our framework provides a useful tool to probe this question.

The quantum speedups over MCS with Gibbs samplers, e.g., Glauber dynamics,   are potentially interesting candidates for providing evidence against Groverization. This is because there are known computational phase transitions in mixing time. Particularly there are two known thresholds for two different graph models. For constant-degree graphs, this is known as the \emph{tree-uniqueness threshold} \cite{sly2010computationaltransitionuniquenessthreshold, galanis2016} and for sparse, random-regular graphs (and the SK model) this known as the \emph{onset of disorder chaos} \cite{chatterjee2009disorderchaosmultiplevalleys, alaoui2024samplingsherringtonkirkpatrickgibbsmeasure, huang2024hardnesssamplingantiferromagneticising}. The first threshold holds unless $\mathsf{NP} = \mathsf{RP},$ and the latter holds against any ``stable'' Gibbs sampling  algorithm \cite{alaoui2024samplingsherringtonkirkpatrickgibbsmeasure}. If the quantum algorithm can produce an overlap with the optimal solution that is larger than the Gibbs distribution at either of these thresholds, then this is evidence against dequantization. This is because, in most cases, Gibbs sampling is the only ``natural'' classical thing to do.

For the case of bounded-degree graphs, we find that the improvement over quadratic, i.e. benefit from the intermediate state, falls with $n$ at the uniqueness theshold. However, this may only be a limitation of current proof techniques. Our numerical results show that the conditions for asymptotic super-quadratic speedup with the generalized short path framework tend to be overly pessimistic.

For the random $d$-regular case, it appears unclear if Glauber dynamics for the antiferromagnetic Ising model can mix all-the-way to the point of disorder chaos. However, the current state-of-the-art inverse temperature $\beta_T$ \cite{chen2025rapidmixingrandomregular} is close, i.e. both $\beta_T$ and the point of disorder chaos $\beta_{\textup{DC}}$ scale inversely with $\sqrt{d}$. Additionally, we prove that the generalized short-path algorithm attains an asymptotic super-quadratic speedup at $\beta_T$. 

Given the above two points, there is still hope that such evidence against Groverization can  be shown.

\begin{remark}
    An earlier version of the paper only included quantum speedups over Glaubers dynamics for inverse temperatures that are a constant fraction away from the tree-uniqueness threshold. Recent results in the classical literature have shown that Glauber dynamics can mix up to the uniqueness threshold and beyond the threshold when the ensemble is random-regular graphs. As highlighted above, this has led to some new quantum runtimes along with some reassessments of the original claims against Groverization. Specifically, Glauber dynamics has been shown to mix for larger inverse-temperatures than previously thought.
\end{remark}

\section{Generalized Short Path Framework}
\label{sec:framework}
This section provides a simplified and generalized analysis of the short path algorithm presented in \cite{dalzell2022mind}. It also highlights limitations of the current method of analysis, and describes a general recipe for the determining a super-quadratic speedup.

\subsection{Summary of main results}

In what follows, let $\Mcal = (\Xcal, P, \pi)$ be a reversible, aperiodic\footnote{Note that any Markov chain can be made aperiodic by making it lazy: $P \mapsto \frac{I+P}{2}, D(P) \mapsto \frac{D(P) + I}{2}$. We will generally assume that this transformation has been implicitly done if the chain is not aperiodic, as the assumption only acts to simplify some results and is not critical. } Markov chain over a finite set $\Xcal$. Here $\pi$ denotes the stationary distribution of $P$. It is assumed that the spectral gap of $P$ is lower-bounded by $\delta$. Using the discriminant matrix of $P$, we can define a more general short path Hamiltonian $H_b$ that allows one to work with mixing operators other than $-\frac{X}{n}$.
\begin{definition}[Generalized Short Path Hamiltonian $H_b$]
 Consider a reversible Markov chain $\Mcal = (\Xcal, P, \pi)$. Let $H: \Xcal \rightarrow \mathbb{R}$ be a cost Hamiltonian with $E^{\star} < 0$. The short path Hamiltonian $H_b$ is given by
$$
    H_b := -D(P) + bg_{\eta}\left( \frac{H}{| E^{\star} |} \right),
$$
where $D(P)$ is the discriminant matrix of $P$,
and
$$ g_{\eta} (x) := \min \left( 0, \frac{x+1- \eta}{\eta} \right).$$
\end{definition}
More generally $g_{\eta} :[-1, \infty) \rightarrow [-1, 0]$ can be a non-decreasing, concave function that is differentiable at every point where it is non-zero. However, the specific choice we make is sufficient for our purposes. We will also sometimes use the notation $G_{\eta} := g_{\eta}\left( \frac{H}{| E^{\star} |} \right)$.

One major component of the analysis of Algorithm~\ref{alg:generalized-short-path} is determining an upper bound on $b$ for which the spectral gap of the short   path Hamiltonian $H_b$ is still large, i.e., $\Omega \left( \frac{1}{\poly (n)} \right)$. This upper bound serves as a proxy for how large the ``short jump'' is. A second major component is determining the increased overlap with the optimal solution provided by the short jump. To do so, we rely on the definition of $\Delta_{P}$ stability (Definition \ref{defn:delta-p}). %
If the short jump can be accomplished efficiently for some constant $b$, then $\Delta_P$ stability captures whether the short path approach provides a super-Grover runtime. This condition also has an intuitive interpretation. If we consider the optimization landscape defined by $\Mcal$, $H$ and a well (controlled by $\eta$) around the global minimum with energy $E^*$, then we do not want the energy to increase too much when moving within and around the well. Specifically, for a super-Grover runtime it should hold that $\Delta_{P}(\eta) = \Theta\left(\frac{\lvert E^*\rvert}{\ln(1/\pi(E^*))}\right)$ for some $\eta$, where $\pi$ is the stationary distribution of $\Mcal$. It is worth remarking that if $\eta = 0$, we recover the quantum unstructured search algorithm. When the explicit value of $\eta$ such that Definition \ref{defn:delta-p} holds is not important, we will simply denote $\Delta_P(\eta)$ as $\Delta_{P}$.

It turns out that any upper bound on $\Delta_P$ suffices when bounding the runtime. For example, it is a simple consequence of Jensen's inequality that we can take $\Delta_P$ to be $\sqrt{\lVert \psi \rVert_{P}}$ with $\psi = H$. A key technical contribution of this work is to reduce the conditions for determining whether a super-Grover runtime is possible to determining the log-Sobolev constant $\omega$, or Poincar\'e constant $\delta$, and the $P$-pseudo Lipschitz norm $\lVert H \rVert_{P}$ for cost Hamiltonian $H$. 

We summarize our main results below, which are the formal versions of Theorems \ref{thm:main-general-logsobolev} and \ref{thm:main-general-poincare}:
\begin{theorem}[Short-Path Runtime Under log-Sobolev]\label{thrm:MTG_RT}
Let $\Mcal = (\Xcal, P, \pi)$ be a reversible, aperiodic Markov chain, and let $H:\Xcal \rightarrow \R$ be a diagonal, $\Delta_P$-stable Hamiltonian (Definition \ref{defn:delta-p}) with ground state energy $E^{\star} < 0$, $P$-pseudo Lipschitz norm $\lVert H \rVert_{P}$. In addition, suppose $\Mcal$ has a log-Sobolev constant lower bounded by $\omega$. Then
 $\gamma$-spectral density (Definition \ref{defn:gamma-spectral}) is satisfied for
$$
    \gamma = \frac{\omega((1-\eta)E^{\star} - \mathbb{E}_{\pi}[H])^2}{\lVert H \rVert_{P}\ln(1/\pi(E^{\star}))}.
$$ 
Also if $b$ satisfies
\begin{align*}
&b< b^{\star} := \frac{2}{3}\gamma \omega\ln\left(\frac{1}{\pi(E^{\star})}\right)
\end{align*}
and \begin{align}
\label{eqn:ell_condition_ls}
\frac{1}{3\gamma \ln(1/\pi(E^{\star}))\sqrt{\mathbb{P}_{\pi}( E \leq (1-\eta)E^*)}} - \max\left(\frac{4\ln(\lvert \Xcal\rvert)}{\omega}, \frac{3\lvert E^{\star}\rvert^2(1-\eta)^2}{\Delta_P^2}\right)  \geq 2,
\end{align}
then there exists a short path algorithm that determines the ground state of $H$ over $\Xcal$ with running time
$$
\mathcal{O} \left(\textup{poly}(n)\omega^{-1}[\pi(E^{\star})^{-1}]^{\left(\frac{1}{2}-\frac{\eta(1-\eta)\lvert E^{\star}\rvert b}{2\ln(1/\pi(E^{\star}))\Delta_P}\right)}\right).
$$
Note that any upper bound on $\Delta_P$ suffices, for example one may use $\sqrt{\lVert H\rVert_{P}}$.
\end{theorem}
\begin{proof}
The proof 
is evident after combining the  statements of Corollary \ref{cor:total_runtime_bound}, Theorem \ref{thrm:b_log_sob}, Lemma \ref{lem:delta_alpha_eqv}, Lemma \ref{lem:short-path-overlap}, and Corollary \ref{cor:tail-herbst}
\end{proof}

We also present a variant of the above result that only relies on a Poincar\'e inequality.

\begin{theorem}[Short-Path Runtime Under Poincar\'e]\label{thrm:MTG_RT_poincare}
Let $\Mcal = (\Xcal, P, \pi)$ be a reversible, aperiodic Markov chain, and let $H: \Xcal \rightarrow \R$ be a diagonal, $\Delta_P$-stable Hamiltonian (Definition \ref{defn:delta-p}) with ground state energy $E^{\star} < 0$, $P$-pseudo Lipschitz norm $\lVert H \rVert_{P}$. In addition, suppose $\Mcal$ has a Poincar\'e constant $\delta$. Then
 $\gamma$-spectral density (Definition \ref{defn:gamma-spectral}) is satisfied for
\begin{align*}
    \gamma = \frac{\sqrt{\delta}((1-\eta)E^{\star} - \mathbb{E}_{\pi}[H])}{\sqrt{\lVert H \rVert_{P}}\ln(1/\pi(E^{\star}))},
\end{align*}
and suppose that $\frac{\delta}{2} > \pi(E^{\star})^{\gamma}$.
Also if $b$ satisfies
\begin{align*}
    b < b^{\star} :=\frac{\delta}{4}
\end{align*}
and
 \begin{align}
\label{eqn:ell_condition_pc}
\frac{1}{2\sqrt{\mathbb{P}_{\pi}( E \leq (1-\eta)E^*)}} - \max\left(\frac{4\ln(\lvert \Xcal\rvert)}{\delta}, \frac{3\lvert E^{\star}\rvert^2(1-\eta)^2}{\Delta_P^2}\right)  \geq 2,
\end{align}
then there exists a short path algorithm that determines the ground state of $H$ over $\Xcal$ with running time
$$
\mathcal{O} \left(\textup{poly}(n)\delta^{-1}[\pi(E^{\star})^{-1}]^{\left(\frac{1}{2}-\frac{\eta(1-\eta)\lvert E^{\star}\rvert b}{2\ln(1/\pi(E^{\star}))\Delta_P}\right)}\right).
$$
Note that any upper bound on $\Delta_P$ suffices, for example one may use $\sqrt{\lVert H\rVert_{P}}$.
\end{theorem}
\begin{proof}
The proof  is  evident after combining the statements of Corollary \ref{cor:total_runtime_bound}, Theorem \ref{thrm:b_poincare}, Lemma \ref{lem:short-path-overlap}, Lemma \ref{lem:delta_alpha_eqv}, and Corollary \ref{cor:tail-poinc}.
\end{proof}
\begin{remark}
\label{rem:conditions}
The conditions in \eqref{eqn:ell_condition_ls} and \eqref{eqn:ell_condition_pc} are more of a technical nature and actually very weak. Specifically, the two quantities that we are subtracting will likely be separated by exponential factors in $n$, e.g. $\mathbb{P}_{\pi}( E \leq (1-\eta)E^*)$ will be exponentially small for hard problems. This is  discussed in Section \ref{subsec:long_jump}. For similar reasons the additional condition on $\delta, \pi(E^{\star}), \gamma$ in Theorem \ref{thrm:MTG_RT_poincare} is also technical and very weak. This is discussed in Section \ref{sec:short_jump_subsec}.  
\end{remark}

In general, the log-Sobolev constant $\omega$ can be significantly smaller than the spectral gap of the chain $\delta$, however, we argue that this is not the case when Theorem \ref{thrm:MTG_RT} provides a super-Grover runtime. Specifically, Theorem \ref{thrm:MTG_RT} requires that $b$ is a constant, and by extension, implies we need $\omega^{-1} = \Theta(\ln(1/\pi(E^*)))$. For example, for  a very hard problem, where Markov chain search finds an optimal assignment with exponentially-small probability, i.e., $\Theta(\ln(1/\pi(E^*))) = \Theta(n)$, the condition on $b$ will imply that $\omega$ will be large, i.e.,  $\Omega\left(\frac{1}{\poly(n)}\right)$. Thus, in cases where it provides a super-Grover runtime, Theorem \ref{thrm:MTG_RT} asserts that we do not get a slower runtime by using $\omega$ instead of $\delta$.

We have the following evident corollary of the above results. This provides conditions under which the advantage over Grover with short-path does not asymptotically fall with the problem size.
\begin{corollary}[Conditions for Asymptotic Super-Grover Runtime]
\label{cor:super_grover_run}
Let $\Mcal = (\Xcal, P, \pi)$ be a reversible, aperiodic Markov chain, and let $H: \Xcal \rightarrow \R$ be a diagonal Hamiltonian with ground state energy $E^{\star} < 0$.
Suppose that the chain $\Mcal$ and cost-function $H$ are such that either Theorem \ref{thrm:MTG_RT} or \ref{thrm:MTG_RT_poincare} is satisfied for some $b = \Theta(1)$ and
\begin{align*}
    \frac{\lvert E^{\star}\rvert}{\Delta_{P}} = \Theta\left(\ln\left(1/\pi(E^{\star})\right)\right).
\end{align*} 
Then under Assumption \ref{asm:input-assumptions} the short path algorithm finds the minimizer of $H$ over $\Xcal$ with running-time bounded by
\begin{align*}
    \mathcal{O}^{\star}\left([\pi(E^{\star})^{-1}]^{\frac{1}{2} - c}\right),
\end{align*}
where
$$
    c = \frac{\eta(1-\eta)\lvert E^{\star}\rvert b}{2\Delta_{P}\ln(1/\pi(E^{\star}))} = \Theta(1).
$$
\end{corollary}

\subsection{Constructing Short Path Algorithms from Markov Chains}
\label{subsec:constructing_short_path_markov_chains}

This subsection details how the results from \cite{dalzell2022mind} can be generalized to the setting of reversible Markov chais. One of the main conditions from the aformentioned paper is that there should be a small number of low-energy states, effectively capturing that the underlying problem is hard. This is made precise through the spectral density condition generalized for an arbitrary chain $\Mcal$ (Definition \ref{defn:gamma-spectral}).
We show that the tail bound given by the spectral density condition is implied by pseudo Lipschitzness together with a functional inequality. This follows from existing results from Markov chain theory:
\begin{theorem}[Herbst's Argument, Adapted from Theorem 4.3 in \cite{lalley2013concentration} ]\label{Herbst} Suppose $\Mcal = (\Xcal, P, \pi)$ is a Markov chain with log-Sobolev constant $\omega$, and $f : \Xcal \rightarrow \mathbb{R}$ is $\lVert f \rVert_{P}$ pseudo-Lipschitz. Then,
$$
    \mathbb{P}_{\pi}[f \geq \mathbb{E}_{\pi}[f] + t] \leq e^{-\frac{\omega}{\lVert f \rVert_{P}}t^2}.
$$
\end{theorem}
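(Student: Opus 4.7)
The plan is to carry out the classical Herbst argument adapted to the Markov chain setting. The strategy is to control the moment generating function $H(\lambda) := \mathbb{E}_\pi[e^{\lambda f}]$ for $\lambda \ge 0$, establish a sub-Gaussian bound of the form $H(\lambda) \le \exp\!\left(\lambda \mathbb{E}_\pi[f] + \lambda^2 \|f\|_P / (4\omega)\right)$, and then obtain the tail bound via a Chernoff step followed by optimization in $\lambda$.

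First, I would apply the log-Sobolev inequality $\mathcal{D}(g,g) \ge \omega\,\mathrm{Ent}_\pi(g^2)$ to the test function $g = e^{\lambda f/2}$, which gives $\omega\,\mathrm{Ent}_\pi(e^{\lambda f}) \le \mathcal{D}(e^{\lambda f/2}, e^{\lambda f/2})$. Next, I would bound the Dirichlet form in terms of the pseudo-Lipschitz norm. Writing $\mathcal{D}(e^{\lambda f/2}, e^{\lambda f/2}) = \tfrac{1}{2}\mathbb{E}_{x\sim\pi}\expP[(e^{\lambda f(x)/2} - e^{\lambda f(y)/2})^2]$, the mean value theorem gives
\begin{equation*}
\bigl(e^{\lambda f(x)/2} - e^{\lambda f(y)/2}\bigr)^2 \le \frac{\lambda^2}{4}(f(x)-f(y))^2\,\max\bigl(e^{\lambda f(x)}, e^{\lambda f(y)}\bigr).
\end{equation*}
Bounding the max by the sum and using the detailed balance identity $\pi(x)P(x,y) = \pi(y)P(y,x)$ to symmetrize the $y$-indexed term against the $x$-indexed one, one obtains the clean bound $\mathcal{D}(e^{\lambda f/2}, e^{\lambda f/2}) \le \tfrac{\lambda^2}{4}\|f\|_P\,H(\lambda)$.

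Combining the two inequalities yields $\omega\,\mathrm{Ent}_\pi(e^{\lambda f}) \le \tfrac{\lambda^2}{4}\|f\|_P\,H(\lambda)$. Expanding the entropy as $\lambda H'(\lambda) - H(\lambda)\ln H(\lambda)$ and dividing by $H(\lambda)$ converts this into a differential inequality on $K(\lambda) := \ln H(\lambda)$ equivalent to $\tfrac{d}{d\lambda}\bigl(K(\lambda)/\lambda\bigr) \le \|f\|_P/(4\omega)$. Integrating from $0$ to $\lambda$ using the L'Hopital limit $\lim_{\lambda \downarrow 0} K(\lambda)/\lambda = \mathbb{E}_\pi[f]$ produces the advertised sub-Gaussian MGF bound. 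A Chernoff step then gives $\mathbb{P}_\pi[f \ge \mathbb{E}_\pi[f]+t] \le e^{-\lambda t + \lambda^2 \|f\|_P/(4\omega)}$, and minimizing at $\lambda = 2\omega t/\|f\|_P$ yields the claimed rate $\exp(-\omega t^2/\|f\|_P)$.

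The main technical obstacle is the Dirichlet form estimate, where one must handle the asymmetric expression $(e^a - e^b)^2$ in a way that recombines cleanly against $\|f\|_P = \max_x \expP[(f(x)-f(y))^2]$, and where reversibility is essential to collapse the two terms coming from the $\max$ into a single expectation. The factor of $1/4$ produced by differentiating $e^{\lambda f/2}$ rather than $e^{\lambda f}$ is precisely what gives the constant $\omega/\|f\|_P$ in the final exponent; everything downstream is routine separable-ODE bookkeeping and standard Chernoff optimization.
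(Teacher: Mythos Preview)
Your argument is the standard Herbst argument and is correct; the constants track through exactly as you describe, and the use of reversibility to symmetrize the two terms coming from $\max(e^{\lambda f(x)},e^{\lambda f(y)})$ against the definition of $\|f\|_P$ is the right move. The paper does not supply its own proof of this theorem: it quotes the result directly from \cite{lalley2013concentration} (Theorem 4.3) and uses it as a black box, so there is nothing further to compare.
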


\begin{theorem}[Adapted from Theorem 3.5 in \cite{lalley2013concentration}]\label{pc_tail} Suppose $\Mcal = (\Xcal, P, \pi)$ is a Markov chain with Poincar\'e constant $\delta$, and $f : \Xcal \rightarrow \mathbb{R}$  is $\lVert f \rVert_{P}$ pseudo-Lipschitz. Then,
$$
    \mathbb{P}_{\pi}[f \geq \mathbb{E}_{\pi}[f] + t] \leq e^{-\frac{\sqrt{\delta}}{\sqrt{\lVert f \rVert_{P}}}t}.
$$
\end{theorem}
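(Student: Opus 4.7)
The proof would follow the standard Chernoff/moment generating function route used to derive exponential concentration from a Poincar\'e inequality, which is the natural analogue of Herbst's argument (Theorem~\ref{Herbst}) for the weaker functional inequality. Since the Poincar\'e inequality only controls the variance, we cannot derive a clean first-order ODE on $\Lambda(\lambda) := \lambda^{-1}\log \mathbb{E}_{\pi}[e^{\lambda (f - \mathbb{E}_{\pi}[f])}]$ as in the log-Sobolev case; instead one obtains a recursive bound on the MGF that must be iterated.

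Without loss of generality take $\mathbb{E}_{\pi}[f] = 0$, and set $F(\lambda) := \mathbb{E}_{\pi}[e^{\lambda f}]$. By Markov's inequality $\mathbb{P}_\pi[f \ge t] \le e^{-\lambda t} F(\lambda)$, so it suffices to bound $F(\lambda)$ for $\lambda$ on the order of $\sqrt{\delta/\|f\|_P}$ and then choose $\lambda$ optimally. To bound $F(\lambda)$, I would apply the Poincar\'e inequality to $g := e^{\lambda f/2}$, obtaining $F(\lambda) - F(\lambda/2)^2 = \mathrm{Var}_\pi(g) \le \delta^{-1}\mathcal{D}(g,g)$. For the Dirichlet form I would use the elementary estimate $(e^a - e^b)^2 \le (a-b)^2 e^{2\max(a,b)}$ coming from the mean value theorem applied to the exponential, so that $(g(x)-g(y))^2 \le \tfrac{\lambda^2}{4}(f(x)-f(y))^2 (e^{\lambda f(x)} + e^{\lambda f(y)})$. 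Reversibility of $\Mcal$ symmetrizes the two exponential terms when averaged under $\pi \otimes P$, and the definition of the $P$-pseudo Lipschitz norm then yields $\mathcal{D}(g,g) \le \tfrac{\lambda^2}{4} \|f\|_P F(\lambda)$. Combining gives the key recursion: for every $\lambda$ with $\lambda^2 \|f\|_P/(4\delta) < 1$,
\begin{equation*}
F(\lambda) \le \frac{F(\lambda/2)^2}{1 - \lambda^2 \|f\|_P/(4\delta)}.
\end{equation*}

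Iterating with $\lambda_k := \lambda/2^k$ and using $F(\lambda_k)^{2^k} \to e^{\lambda \mathbb{E}_\pi[f]} = 1$ as $k\to\infty$ converts the recursion into a convergent geometric product, yielding an explicit bound of the form $\log F(\lambda) \le C \lambda^2 \|f\|_P/\delta$ for $\lambda$ bounded away from the threshold $2\sqrt{\delta/\|f\|_P}$. Optimizing the Chernoff bound $e^{-\lambda t}F(\lambda)$ at $\lambda$ proportional to $\sqrt{\delta/\|f\|_P}$ then gives the claimed subexponential tail. The main obstacle will be to control the constant in the exponent: unlike the log-Sobolev case, where Herbst's argument yields the rate directly by integrating an ODE, here the iteration introduces a product of geometric factors whose cumulative effect contributes a prefactor to $e^{-\sqrt{\delta/\|f\|_P}\,t}$; tracking this product carefully (rather than loosely bounding it) is required to match exactly the form of the inequality as stated. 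A secondary subtlety is that the $P$-pseudo Lipschitz norm is an $L^2$ quantity rather than a pointwise bound, so the mean value estimate above must be combined with reversibility to turn the $\max$ inside the exponential into an average before the pseudo-Lipschitz bound can be invoked.
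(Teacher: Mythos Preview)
The paper does not actually prove this statement: it is stated as an adaptation of a known result (Theorem~3.5 in \cite{lalley2013concentration}) and used as a black box. Your sketch is the standard Aida--Stroock/Bobkov--Ledoux iteration of the Poincar\'e inequality on the moment generating function, which is precisely how such results are proved in the literature and presumably in the cited reference. The outline is correct, including your honest caveat that the constant in the exponent requires care; nothing further is needed here since the paper treats the bound as given.
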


The following corollaries are immediate, and reduce spectral density to pseudo Lipschitzness and a functional inequality.
\begin{corollary}[Log-Sobolev Inequality \& Pseudo-Lipschitz $\implies \gamma$-Spectral Density]
\label{cor:tail-herbst}
Suppose $\Mcal = (\Xcal, P, \pi)$ is a Markov chain with log-Sobolev constant $\omega$, and that the cost function $H$ is $\lVert H \rVert_{P}$ pseudo Lipschitz. Then,
$$
    \mathbb{P}_{\pi}[H \leq (1-\eta)E^{\star}] \leq \pi(E^{\star})^{\gamma},
$$
with
$$\gamma = \frac{\omega((1-\eta)E^{\star} - \mathbb{E}_{\pi}[H])^2}{\lVert H \rVert_{P}\ln(1/\pi(E^{\star}))}.$$
\end{corollary}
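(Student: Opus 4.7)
The plan is to apply Herbst's argument (Theorem~\ref{Herbst}) directly to the function $-H$, which converts the lower-tail event $\{H \leq (1-\eta)E^{\star}\}$ into an upper-tail event for $-H$ of the form needed by the theorem. The pseudo-Lipschitz norm $\lVert \cdot \rVert_P$ depends only on the squared differences $(f(x) - f(y))^2$, so it is invariant under negation: $\lVert -H \rVert_P = \lVert H \rVert_P$. Hence Herbst's inequality applies with the same constant.

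Concretely, I would set
$$
t = \mathbb{E}_{\pi}[H] - (1-\eta)E^{\star},
$$
which is nonnegative whenever $(1-\eta)E^{\star} \leq \mathbb{E}_\pi[H]$ (the only regime in which the tail bound is meaningful; otherwise the claimed bound is vacuous since the exponent is nonnegative and $\pi(E^{\star}) \leq 1$). Then
$$
\mathbb{P}_{\pi}[H \leq (1-\eta)E^{\star}] = \mathbb{P}_{\pi}[-H \geq \mathbb{E}_{\pi}[-H] + t],
$$
and Theorem~\ref{Herbst} applied to $-H$ yields
$$
\mathbb{P}_{\pi}[-H \geq \mathbb{E}_{\pi}[-H] + t] \leq \exp\!\left(-\frac{\omega}{\lVert H \rVert_{P}}\, t^2\right).
$$

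To match the form in the statement, I would rewrite the exponential in base $\pi(E^{\star})$ using the identity $e^{-a} = \pi(E^{\star})^{a/\ln(1/\pi(E^{\star}))}$, giving
$$
\mathbb{P}_{\pi}[H \leq (1-\eta)E^{\star}] \leq \pi(E^{\star})^{\frac{\omega ((1-\eta)E^{\star} - \mathbb{E}_{\pi}[H])^2}{\lVert H \rVert_{P}\ln(1/\pi(E^{\star}))}},
$$
where I have used $t^2 = ((1-\eta)E^{\star} - \mathbb{E}_{\pi}[H])^2$. Reading off the exponent recovers the stated value of $\gamma$.

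I do not expect any substantive obstacle: the proof is essentially a one-line reformulation of Herbst's argument, with the only points requiring care being the invariance of $\lVert \cdot \rVert_P$ under negation (immediate from Definition~\ref{defn:pseudo-Lipschitz}) and the conversion between $e$ and $\pi(E^{\star})$ in the base of the exponential (a routine change of base). The same template will yield the analogous Poincar\'e-based corollary by substituting Theorem~\ref{pc_tail} in place of Theorem~\ref{Herbst}, which is why the authors also state Corollary~\ref{cor:tail-poinc} referenced in the proof of Theorem~\ref{thrm:MTG_RT_poincare}.
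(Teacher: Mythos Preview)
Your proposal is correct and matches the paper's approach: the paper simply states that the corollary is ``immediate'' from Theorem~\ref{Herbst}, and your argument---applying Herbst to $-H$ with $t = \mathbb{E}_\pi[H] - (1-\eta)E^\star$ and then changing the base of the exponential to $\pi(E^\star)$---is precisely the intended one-line derivation. You have in fact supplied more detail than the paper does, including the observation that $\lVert -H\rVert_P = \lVert H\rVert_P$ and the handling of the vacuous regime.
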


\begin{corollary}[Poincar\'e Inequality\& Pseudo-Lipschitz $\implies \gamma$-Spectral Density]
\label{cor:tail-poinc}
Suppose $\Mcal = (\Xcal, P, \pi)$ is a Markov chain with Poincar\'e constant $\delta$, and that the cost function $H$ is $\lVert H \rVert_{P}$ pseudo Lipschitz. Then,
$$
    \mathbb{P}_{\pi}[H \leq (1-\eta)E^{\star}] \leq \pi(E^{\star})^{\gamma},
$$
with
$$\gamma = \frac{\sqrt{\delta}((1-\eta)E^{\star} - \mathbb{E}_{\pi}[H])}{\sqrt{\lVert H \rVert_{P}}\ln(1/\pi(E^{\star}))}.$$
\end{corollary}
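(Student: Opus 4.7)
The plan is to derive Corollary \ref{cor:tail-poinc} as a direct invocation of the Poincar\'e tail bound in Theorem \ref{pc_tail}, applied to $f=-H$. The key observation is that the pseudo-Lipschitz norm defined in Definition \ref{defn:pseudo-Lipschitz} is based on squared differences $(f(x)-f(y))^2$, so it is invariant under a global sign flip, i.e., $\lVert -H\rVert_P = \lVert H\rVert_P$. Thus $-H$ is pseudo-Lipschitz with exactly the same constant, and Theorem \ref{pc_tail} applies verbatim.

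Concretely, I would set $t := \mathbb{E}_\pi[H] - (1-\eta)E^{\star}$, which is the deviation of interest. Since $E^{\star}<0$ and the spectral density regime of interest is the one in which $(1-\eta)E^{\star}$ lies strictly below the $\pi$-mean of the energy (otherwise the condition is trivial), we have $t>0$ and Theorem \ref{pc_tail} may be applied with this value. Plugging $f=-H$ into the tail inequality yields
\begin{equation*}
\mathbb{P}_\pi[H\le (1-\eta)E^{\star}]
= \mathbb{P}_\pi[-H\ge \mathbb{E}_\pi[-H]+t]
\le \exp\!\left(-\sqrt{\tfrac{\delta}{\lVert H\rVert_P}}\,\bigl(\mathbb{E}_\pi[H]-(1-\eta)E^{\star}\bigr)\right).
\end{equation*}

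The final step is purely algebraic: rewrite the exponential bound as a power of $\pi(E^{\star})$ by expressing the exponent as $-\gamma\ln(1/\pi(E^{\star}))$. Solving for $\gamma$ yields
\begin{equation*}
\gamma \;=\; \frac{\sqrt{\delta}\,\bigl(\mathbb{E}_\pi[H]-(1-\eta)E^{\star}\bigr)}{\sqrt{\lVert H\rVert_P}\,\ln(1/\pi(E^{\star}))},
\end{equation*}
which matches the corollary statement (up to the harmless sign reordering inside the parenthesis when rewriting). There is no real obstacle in this proof: the content is entirely encapsulated in Theorem \ref{pc_tail}, and the only care needed is (i) noting the sign-flip invariance of the pseudo-Lipschitz norm, and (ii) choosing the right value of $t$ so that the tail threshold $\mathbb{E}_\pi[-H]+t$ equals $-(1-\eta)E^{\star}$. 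The parallel with Corollary \ref{cor:tail-herbst} is exact, with the Poincar\'e inequality giving the square-root (rather than Gaussian) dependence on $t$ that is characteristic of the looser functional inequality.
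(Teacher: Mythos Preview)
Your proposal is correct and is exactly the approach the paper intends: the paper states that the corollary is ``immediate'' from Theorem~\ref{pc_tail}, and your derivation (applying the tail bound to $f=-H$, using sign-invariance of $\lVert\cdot\rVert_P$, choosing $t=\mathbb{E}_\pi[H]-(1-\eta)E^{\star}$, and rewriting the exponential as a power of $\pi(E^{\star})$) is precisely that immediate step. You also implicitly caught the sign/typo issues in the displayed exponent of the paper's statement; your version with $\mathbb{E}_\pi[H]-(1-\eta)E^{\star}$ in the numerator and $\sqrt{\lVert H\rVert_P}$ in the denominator is the correct one.
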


As noted in previous papers, the spectral density condition is relatively weak for hard problems. For example, suppose that for any constants $\gamma$ and $\eta$ it were not satisfied, but $\pi(E^*) = \Ocal(2^{-cn})$. Then Markov chain search can prepare an $\eta$ relative error, for $\eta$ arbitrarly close to $1$, approximate minimizer in time subexponential in $n$.

  \subsubsection{The Short Jump}
\label{sec:short_jump_subsec}
The short jump is defined as the preparation of $|\psi_b\rangle$ from $|\sqrt{\pi}\rangle$. The ability to find a good point to short-jump to (i.e., constant $b$ where a jump takes $\poly(n)$ time to make) is where the inherent speedup over Grover comes from. If we just decided to do only the short-jump and sample until we found the ground state, the algorithm would only be quadratically slower, due to amplitude amplification on the long-jump (which assuming good gap, costs $|\langle \psi_b|z^*\rangle|^{-1}$), than the full short path algorithm.  

The goal of this section is to determine conditions, using an initial Hamiltonian that is the discriminant of a reversible Markov chain, under which a short jump can be done efficiently. However, it does not determine whether such a short jump provides super-Grover runtime, which is the goal of the next subsection (Section \ref{subsec:long_jump}). The runtime of a short jump  is captured by the short path condition from \cite{dalzell2022mind, hastings2018shortPath}, where we present a natural generalization now.

\begin{definition}[$\theta$-Short Path Condition]
\label{defn:short-path-conditon}
Let $\Pi_{\perp} := \mathbb{I} - |\sqrt{\pi}\rangle\!\langle\sqrt{\pi}|$. %
Then, the $\theta$-short path condition holds for some constant $b > 0$ if
$$
        \GSE (\Pi_{\perp}H_b\Pi_{\perp})\geq -1 + \theta
$$
where \GSE ~denotes ``ground-state energy''.
\end{definition}
The short path condition was used in previous papers to prove a variety of other sufficient conditions for super-Grover runtime. For example, the short path condition implies a lower bound on  the spectral gap of $H_b$. %
\begin{restatable}[$\theta$-Short Path Condition $\implies$ $\theta$ Spectral Gap Bound, Adapted from Proposition 5 of \cite{dalzell2022mind}]{lemma}{shortPathToGapLem}
\label{lem:shortpath-gap}
If $H_b$ satisfies the $\theta$ short path condition, then $H_b$ has a unique ground state and the spectral gap is at least $\theta$, i.e., all excited states have energy at least $-1 + \theta$.
\end{restatable}

We also show that the short-path condition implies an overlap lower bound.
\begin{restatable}[$\theta$-Short Path Condition $\implies \frac{\theta}{4}$ Overlap]{lemma}{shortPathImpliesOverlap}
\label{lem:short-path-overlap}
If $H_b$ satisfies the $\theta$ short path condition with $\theta \leq 1$, then $\lvert \langle \sqrt{\pi} | \psi_b\rangle\rvert \geq  \frac{\theta}{4}$.
\end{restatable}
Thus one can view it as a convenient way of combining the conditions on the gap and overlap.

We stress that the short path condition's main purpose is to show the existence of an efficient ``short jump'', i.e. transforming states $|\sqrt{\pi}\rangle \rightarrow |\psi_b\rangle$. %
The other consequence derived from the short path condition was of a more technical nature. This technical condition, called the ``small-ground-state-energy shift condition'', was for bounding the cost of the long jump. Alternative conditions for this consequence of the short-path condition  are presented in the next subsection. 

We summarize the implications of the short path condition on the short-jump runtime in the following result.

\begin{theorem}[Sufficient Conditions for Efficient Short Jump]
\label{thm:short-jump-eff}
Suppose $\Mcal = (\Xcal, P, \pi)$ is a reversible Markov chain with spectral gap that is at least inverse-polynomial in $n$, and $H$ is a cost Hamiltonian satisfying the $\theta$ short path condition at $b > 0$ independent of the problem size $n$. If $\theta = \Omega\left(\frac{1}{\poly(n)}\right)$, then there exists quantum algorithm for preparing an $\varepsilon$-approximation to $|\psi_b\rangle$ starting with $|\sqrt{\pi}\rangle$, which makes $\textup{poly}(n, \log(1/\varepsilon))$ queries to block-encodings of $D(P)$ and $H_b$.
\end{theorem}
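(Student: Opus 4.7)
The plan is to cast the short-jump preparation $\ket{\sqrt{\pi}} \mapsto \ket{\psi_b}$ as an instance of fixed-point amplitude amplification and invoke Proposition~21 of \cite{dalzell2022mind}. Three ingredients are required: an efficient reflection about the starting state $\ket{\sqrt{\pi}}$, an efficient reflection about the target $\ket{\psi_b}$, and an inverse-polynomial lower bound on the overlap $\abs{\braket{\sqrt{\pi}}{\psi_b}}$, which controls the number of amplification rounds.

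The one genuinely new ingredient is the overlap bound, which I would derive directly from the short-path condition. Write $\ket{\psi_b} = \alpha\ket{\sqrt{\pi}} + \beta\ket{\phi_\perp}$ with $\ket{\phi_\perp}$ orthogonal to $\ket{\sqrt{\pi}}$ (both amplitudes real since $H_b$ is real symmetric, phased so that $\beta \geq 0$). Three elementary estimates follow immediately: (i) $E_0 := \bra{\sqrt{\pi}}H_b\ket{\sqrt{\pi}} = -1 + b\bra{\sqrt{\pi}}G_\eta\ket{\sqrt{\pi}} \leq -1$ since $G_\eta \leq 0$; (ii) $E_\phi := \bra{\phi_\perp}H_b\ket{\phi_\perp} \geq \GSE(\Pi_\perp H_b \Pi_\perp) \geq -1 + \theta$ by the short-path hypothesis; and (iii) the off-diagonal coupling $M := \bra{\sqrt{\pi}}H_b\ket{\phi_\perp} = b\bra{\sqrt{\pi}}G_\eta\ket{\phi_\perp}$ satisfies $\abs{M} \leq b$ since $\lVert G_\eta\rVert \leq 1$. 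Combining these with the variational inequality $\bra{\psi_b}H_b\ket{\psi_b} \leq E_0$ and the expansion $\bra{\psi_b}H_b\ket{\psi_b} = \alpha^2 E_0 + 2\alpha\beta M + \beta^2 E_\phi$ yields $\theta\beta^2 \leq -2\alpha\beta M \leq 2\abs{\alpha}\beta b$; the normalization $\alpha^2+\beta^2 = 1$ then gives $\abs{\braket{\sqrt{\pi}}{\psi_b}}^2 = \alpha^2 \geq \theta^2/(\theta^2 + 4b^2) = \Omega(\theta^2)$, which is inverse polynomial in $n$ under the hypothesized $\theta$.

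With the overlap in hand, the two reflections are standard. Reflection about $\ket{\sqrt{\pi}}$ is $U_\pi(2\ket{0}\!\bra{0} - I)U_\pi^\dagger$, at the cost of one call each to $U_\pi$ and its inverse. For reflection about $\ket{\psi_b}$, I would first assemble a block-encoding of $H_b$ by block-encoding $g_\eta(H/\abs{\Estar})$ via QSVT on the block-encoding of $H$ with a degree-$\poly(n)$ polynomial approximation of the piecewise-linear $g_\eta$, and then forming a block-encoding linear combination with $-D(P)$. By Lemma~\ref{lem:shortpath-gap}, $H_b$ has a spectral gap of at least $\theta$ above its ground state, so QSVT with a sign-function polynomial thresholded inside this gap realises the reflection $R_b$ to error $\varepsilon_r$ in $O(\theta^{-1}\log(1/\varepsilon_r)) = \poly(n,\log(1/\varepsilon_r))$ queries to the block-encodings of $D(P)$ and $H$.

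Feeding $R_\pi$ and $R_b$ into Proposition~21 of \cite{dalzell2022mind} outputs an $\varepsilon$-approximation of $\ket{\psi_b}$ starting from $\ket{\sqrt{\pi}}$ using $O(\abs{\braket{\sqrt{\pi}}{\psi_b}}^{-1}\log(1/\varepsilon)) = O(\theta^{-1}\log(1/\varepsilon))$ applications of each reflection. Propagating the per-reflection tolerance with $\varepsilon_r = \varepsilon/\poly(n)$ and multiplying by the per-reflection cost yields the advertised $\poly(n,\log(1/\varepsilon))$ total query complexity. The technically subtle step is the overlap bound of the second paragraph, where the short-path condition must be converted into a concrete amplitude bound; the rest is a routine assembly of QSVT and fixed-point amplitude amplification primitives.
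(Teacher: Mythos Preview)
Your proof is correct and follows the same overall skeleton as the paper: reduce the short jump to fixed-point amplitude amplification using reflections about $\ket{\sqrt{\pi}}$ and $\ket{\psi_b}$, and extract an inverse-polynomial overlap from the $\theta$ short-path condition. The main difference is in the overlap step: the paper argues by contradiction (if $|\langle\sqrt{\pi}|\psi_b\rangle| = o(1/\poly(n))$ then $|\langle\psi_b|H_b|\psi_b\rangle - \langle\psi_b^\perp|H_b|\psi_b^\perp\rangle|$ would be both $\Omega(\theta)$ and $o(1/\poly(n))$), whereas you give a direct quantitative bound $|\langle\sqrt{\pi}|\psi_b\rangle|^2 \ge \theta^2/(\theta^2+4b^2)$, which is sharper and cleaner. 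A second minor difference is that you build the reflection about $\ket{\sqrt{\pi}}$ from $U_\pi$ directly, while the paper's formulation routes it through the block-encoding of $D(P)$ and its spectral gap (hence the appearance of $\mathrm{Gap}(-D(P))$ in their proof); both are valid given the standing input assumptions, but the latter matches the theorem's stated query model more literally.
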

\begin{proof}
The result will follow if we can show that 
\begin{align}
    \min(\text{Gap}(-D(P)), \text{Gap}(H_b))\lvert \langle \sqrt{\pi} |\psi_b\rangle\rvert
\end{align}
is inverse-polynomial in $n$. The assumption on the Markov chain gap implies that $\text{Gap}(-D(P))$ is inverse-polynomial in $n$. Lemma \ref{lem:shortpath-gap} implies that $\text{Gap}(H_b)= \Omega\left(\frac{1}{\poly(n)}\right)$, and Lemma \ref{lem:short-path-overlap} implies $\lvert \langle \sqrt{\pi}| \psi_b\rangle\rvert = \Omega\left(\frac{1}{\poly(n)}\right)$.

\end{proof}

The rest of the main results of this subsection determine a range of $b$'s under which the short-path condition can be shown to hold given the $\gamma$-spectral density condition. Additionally, we provide the value of $\theta$ that is implied in each case. In Theorems \ref{thrm:MTG_RT} and \ref{thrm:MTG_RT_poincare}, the $b^{\star}$'s are a result of theorems derived in this section.
Specifically, Theorems \ref{thrm:b_log_sob} and  \ref{thrm:b_poincare}, provide the bounds on $b^{\star}$ presented in Theorems \ref{thrm:MTG_RT} and \ref{thrm:MTG_RT_poincare}, respectively. Theorem \ref{thrm:b_log_sob} shows that a log-Sobolev inequality on $\Mcal$ and  $\gamma$-spectral density for $H$ suffice to derive a range of $b$'s over the short-path condition is satisfied. In the case of only a Poincar\'e inequality and $\gamma$-spectral density, Theorem \ref{thrm:b_poincare} shows that the range of such $b$'s is bounded by the Poincar\'e constant. In either case, $\theta$ is shown to be proportional to the LS constant or Poincar\'e constant, respectively. Hence, if either the LS constant or Poincar\'e constant is at least inverse-polynomial in $n$, then Theorem \ref{thrm:b_log_sob} or  \ref{thrm:b_poincare} plus Theorem \ref{thm:short-jump-eff} show that $|\psi_b\rangle$ can be efficiently prepared for some range of $b$'s.

It is possible that if certain conditions are not met, then $b^{\star}$ falls with $n$, leading to an asymptotically vanishing range of $b$'s. Our results imply sufficient conditions under which we can show this does not occur. Additionally, the numerical results in Section \ref{sec:numerical_results} will reveal that the provable values of $b^{\star}$ are typically overly pessimistic.

From a bird's-eye view, the role of the functional inequality in bounding $b^{\star}$ is to upper bound a metric or divergence between $\psi_b^2$, the $\ell_2$ distribution of $|\psi_b\rangle$ in the computational basis, and $\pi$. The variational definition of the corresponding metric or divergence is used to to establish the lower bound. Together, these bounds, via proof by contradiction, are sufficient to derive a range of $b$'s where short path condition holds.

\begin{lemma}
\label{lemma:KL-lemma}
Suppose $\Mcal = (\Xcal, P, \pi)$ is a Markov chain that satisfies a log-Sobolev inequality with constant $\omega$. Then, for all quantum states $|\psi\rangle$ one has
$$
        \frac{1 - \langle\psi|D(P)|\psi\rangle}{\omega} \geq \textup{KL}(\psi^2 \| \pi),
$$
where $\textup{KL}(\cdot \| \cdot)$ denotes the Kullback–Leibler divergence, and $\psi^2$ is the $\ell_2$ distribution of $|\psi\rangle$ in the computational basis.
\end{lemma}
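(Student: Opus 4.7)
The plan is to reduce the claim to the log-Sobolev inequality for the Markov chain by a change of variables $f(x) = \psi(x)/\sqrt{\pi(x)}$ and then identify both sides of the log-Sobolev inequality with the two sides of the desired bound. Before starting, one may assume without loss of generality that $\psi$ has non-negative real amplitudes in the computational basis, since $D(P)$ is an entrywise non-negative matrix (for a reversible chain with positive stationary distribution) and therefore $\langle \psi|D(P)|\psi\rangle \le \langle |\psi| \,|\, D(P)\,|\,|\psi|\rangle$, while $\psi^2$ only depends on amplitude magnitudes.

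First, I would set $f(x) := \psi(x)/\sqrt{\pi(x)}$ so that the normalization $\langle \psi|\psi\rangle = 1$ becomes $\mathbb{E}_\pi[f^2] = 1$. Next, using the representation $D(P) = \mathrm{diag}(\pi)^{1/2} P\, \mathrm{diag}(\pi)^{-1/2}$ from Definition~\ref{defn:discriminant} together with the reversibility condition $\pi(x)P(x,y) = \pi(y)P(y,x)$, a direct calculation gives
\begin{equation*}
\langle \psi | D(P) | \psi \rangle \;=\; \sum_{x,y} \psi(x)\,\sqrt{\tfrac{\pi(x)}{\pi(y)}}\,P(x,y)\,\psi(y) \;=\; \sum_{x,y} \pi(x)\,P(x,y)\,f(x)\,f(y) \;=\; \langle f, Pf\rangle_{\pi}.
\end{equation*}
Subtracting this from $\langle f,f\rangle_\pi = 1$ yields $1 - \langle \psi|D(P)|\psi\rangle = \langle f, (I-P)f\rangle_\pi = \mathcal{D}(f,f)$, i.e., the Dirichlet form of $f$ under $\Mcal$.

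Now I would invoke the log-Sobolev inequality to obtain $\mathcal{D}(f,f) \ge \omega\, \mathrm{Ent}_\pi(f^2)$. Since $\mathbb{E}_\pi[f^2]=1$, the entropy term simplifies as
\begin{equation*}
\mathrm{Ent}_\pi(f^2) \;=\; \sum_x \pi(x)\, f(x)^2 \ln f(x)^2 \;=\; \sum_x \psi(x)^2 \ln\!\frac{\psi(x)^2}{\pi(x)} \;=\; \mathrm{KL}(\psi^2 \,\|\, \pi),
\end{equation*}
where the middle equality uses $\pi(x)f(x)^2 = \psi(x)^2$ and $\ln f(x)^2 = \ln(\psi(x)^2/\pi(x))$. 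Combining the two steps gives $1 - \langle\psi|D(P)|\psi\rangle \ge \omega\, \mathrm{KL}(\psi^2\|\pi)$, and dividing through by $\omega > 0$ delivers the claim.

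There is no serious obstacle; the proof is essentially a dictionary translation between the quantum language (state vector, discriminant operator) and the classical Markov chain language (density $f = \psi/\sqrt{\pi}$, Dirichlet form, entropy). The only subtle point worth flagging explicitly is the reduction to non-negative amplitudes at the start, which is needed so that the classical log-Sobolev inequality (formulated for real-valued $f$) can be applied without loss; and the careful bookkeeping in the identity $\langle\psi|D(P)|\psi\rangle = \langle f,Pf\rangle_\pi$, which requires reversibility and the particular similarity form of $D(P)$.
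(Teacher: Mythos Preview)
Your proof is correct and follows essentially the same approach as the paper: both arguments use the substitution $f = \psi/\sqrt{\pi}$ (the paper does it in the reverse direction, writing $\psi = \psi'/\sqrt{\pi}$ at the end) to identify $1 - \langle\psi|D(P)|\psi\rangle$ with the Dirichlet form $\mathcal{D}(f,f)$ and then invoke the log-Sobolev inequality, with $\mathrm{Ent}_\pi(f^2)$ collapsing to $\mathrm{KL}(\psi^2\|\pi)$ because $\mathbb{E}_\pi[f^2]=1$. Your explicit reduction to non-negative real amplitudes is a small but worthwhile addition that the paper leaves implicit.
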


\begin{proof}
From a straightforward calculation, it follows
\begin{align}
\label{eqn:dirichlet_form_reduction}
\mathcal{D}(\psi, \psi) &= 
\langle \psi, (I-P) \psi\rangle_{\pi}\\
&=\mathbb{E}_{\pi}(\psi^{\mathsf{T}}\psi) - \mathbb{E}_{\pi}(\psi^{\mathsf{T}}P\psi)\\
&=\sum_{x \in \Xcal}(\sqrt{\pi(x)}\psi(x))^2 - \sum_{x \in \Xcal} \sqrt{\pi(x)}\psi(x)D(P)_{xy}\sqrt{\pi(y)}\psi(y)\\
&=\lVert \sqrt{\pi}\psi\rVert_2^2 - \langle  \sqrt{\pi}\psi|D(P)| \sqrt{\pi}\psi\rangle.
\end{align}
Now,
$$
\lVert \sqrt{\pi}\psi\rVert_2^2 - \langle  \sqrt{\pi}\psi|D(P)| \sqrt{\pi}\psi\rangle  \geq \omega (\mathbb{E}_{\pi}(\psi^2\ln(\psi^2)) - \mathbb{E}_{\pi}(\psi^2\ln(\mathbb{E}\psi^2))),
$$
and  
$$
\mathbb{E}_{\pi}(\psi^2\ln(\psi^2)) - \mathbb{E}_{\pi}(\psi^2\ln(\mathbb{E}\psi^2)) = \sum_{x \in \Xcal} \pi(x)\psi^2(x)\ln(\psi^2(x)) - \lVert \sqrt{\pi}\psi\rVert_2^2\ln(\lVert \sqrt{\pi}\psi\rVert_2^2).
$$

Consider $\psi = \frac{\psi'}{\sqrt{\pi}}$ with $\lVert \psi'\rVert_2 = 1$. Then
$$
1 - \langle\psi|D(P)|\psi\rangle \geq \omega \sum_{x \in \Xcal} \psi^2(x)\ln\left(\frac{\psi(x)^2}{\pi(x)}\right) = \omega \text{KL}(\psi^2 || \pi).
$$
\end{proof}

\begin{lemma}
\label{lem:kl_spec_dens_lower_bound}
Suppose the $\gamma$ spectral density condition is satisfies, then for any quantum state $\psi$
\begin{align*}
    \textup{KL}(\psi^2 | \pi) \geq -\gamma \ln\left(\frac{1}{\pi^{\star}}\right) \langle \psi |g_{\eta}\left(\frac{H}{E^*}\right)|\psi \rangle - 1.
\end{align*}
\end{lemma}
\begin{proof}
Let $F(E)$ be the cumulative distribution function for the cost function $H$. Define $F_{\eta}(E)$ to be the cumulative distribution function for $g_{\eta}\left(\frac{H}{|E^{\star}|}\right)$. Then,
\[F_{\eta}(v)=
\begin{cases} 
       0 & v < -1 \\
     F(E^{\star}(1-\eta - \eta v))\leq (\pi(E^\star))^ {\gamma} & -1\leq v<0 \\
      1 & v\geq 0
   \end{cases}
\]
where we assume that the probability of low energy states under $\pi$ is low, i.e., 
$$
   F((1-\eta)E^{\star})=\mathbb{P}_{\pi}(E\leq (1-\eta)E^{\star})\leq \left(\pi(E^{\star})\right)^{\gamma} 
$$
for $0<\gamma\leq 1$ where $\pi(E^{\star})$ is given by
$$
    \pi(E^{\star}) :=  \mathbb{P}_{\pi}(E = E^{\star}) = \sum_{\substack{\{x\in \Xcal : E(x) = E^{\star}\}}} \pi(x).
$$
If this bound does not hold, then for a very small $\eta$, there is a high probability mass for the states with energy closer to $E^{\star}$. Therefore, we can find an approximate optimizer by randomly sampling from $\pi(x)$ in time sub-exponential in $\log(1/\pi(E^{\star}))$. Note that when $\pi$ is uniform and the ground state is non-degenerate, we recover the original condition of \cite[see, Lemma 5]{dalzell2022mind} since $$F((1-\eta) E^{\star})\leq 2^{-\gamma n}.$$ 

For a quantum state $|\psi\rangle$ let $\psi^2$ denote its $\ell_2$ distribution in the computational basis. Applying Donsker and Varadhan's variational formula \cite{donsker1983asymptotic} for $\textup{KL} (\psi^2\| \pi)$, we may write
$$
    \textup{KL} (\psi^2\|\pi)=\sup_{f \in \mathcal{F}{}{}}\{ \mathbb{E}_{\psi^2}[f(x)]-\ln(\mathbb{E}_\pi[\exp(f(x))]\},
$$
where $\mathcal{F}$ denotes the set of all measurable functions.
Choosing $f(x) = -\gamma\ln\left(\frac{1}{\pi^{\star}}\right)g_{\eta} \left(\frac{H(x)}{E^{\star}}\right)$ and defining  $U_{\psi} =\mathbb{E}_{\psi^2}[g_{\eta}\left(\frac{H}{E^{\star}}\right)]$, it follows
\begin{align*}
    \textup{KL} (\psi^2\|\pi)&\geq -\gamma\ln\left(\frac{1}{\pi^{\star}}\right)\mathbb{E}_{\psi^2} \left[g_{\eta} \left(\frac{H(x)}{E^{\star}} \right)\right]-\ln \left(\mathbb{E}_\pi\left[e^{-\gamma\ln\left(\frac{1}{\pi^{\star}}\right)g_{\eta} \left(\frac{H(x)}{E^{\star}} \right)} \right] \right)\\
      &\geq-\gamma \ln\left(\frac{1}{\pi^{\star}}\right) U_{\psi} - 1.
\end{align*}

The final inequality follows from 
\begin{align*}
    \mathbb{E}_\pi \left[e^{-\gamma\ln\left(\frac{1}{\pi^{\star}}\right)g_{\eta}(\frac{H(x)}{E^{\star}})} \right] &= \sum_{g_{\eta}(\frac{H(x)}{E^{\star}}))= 0}\pi(x)+\sum_{g_{\eta}(\frac{H(x)}{E^{\star}})\neq 0}\pi(x)e^{-\gamma \ln\left(\frac{1}{\pi^{\star}}\right)g_{\eta}(\frac{H(x)}{E^{\star}})}\\
    &\leq \pi(E(x)=0) + \pi(E(x)\neq 0)e^{\gamma \ln\left(\frac{1}{\pi^{\star}}\right)}\\
    &= \pi(E(x)=0) + \left(\pi^{\star}\right)^{\gamma} e^{\gamma \ln\left(\frac{1}{\pi^{\star}}\right)}\\
    &\leq 2,
\end{align*}
as we assume $\pi(g_{\eta}(\frac{H}{E^{\star}})\neq 0)\leq (\pi^{\star})^{\gamma}$ and $g_\eta(\frac{H}{E^{\star}})\geq -1$.

Thus we have the following lower bound on the KL divergence:
\begin{align}
\label{eqn:kl-lower-bound}
     \textup{KL} (\psi^2\|\pi) \geq -\gamma \ln\left(\frac{1}{\pi^{\star}}\right) U_{\psi} - 1.
\end{align}
\end{proof}

\begin{theorem}[Sufficient $b$ For Spectral Density To Imply Short Path Given Log-Sobolev]
\label{thrm:b_log_sob}
Suppose $\Mcal = (\Xcal, P, \pi)$ is a reversible Markov chain that satisfies an $\omega$ log-Sobolev inequality. If
\begin{align*}
 b< \frac{2}{3}\gamma \omega\ln\left(\frac{1}{\pi(E^{\star})}\right)
\end{align*}
    then $\gamma$ spectral density implies an $\frac{\omega}{2}$ short path condition.
\end{theorem}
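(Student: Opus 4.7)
The plan is to establish the $\omega/2$ short-path condition by showing that for every unit vector $|\psi\rangle$ orthogonal to $|\sqrt{\pi}\rangle$ one has $\langle\psi|H_b|\psi\rangle \geq -1 + \omega/2$. Decomposing $\langle\psi|H_b|\psi\rangle = -\langle\psi|D(P)|\psi\rangle + b\langle\psi|G_\eta|\psi\rangle$ and writing $g := -\langle\psi|G_\eta|\psi\rangle \in [0,1]$ (using $G_\eta \leq 0$), Lemma~\ref{lemma:KL-lemma} rearranges to $-\langle\psi|D(P)|\psi\rangle \geq -1 + \omega\,\textup{KL}(\psi^2\|\pi)$, so the task reduces to lower bounding $\textup{KL}(\psi^2\|\pi)$ in terms of $g$.

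The central step is a Donsker--Varadhan lower bound on the KL with test function $f = -\beta G_\eta$. Using $|G_\eta|\leq 1$ on the low-energy set $S := \{z : H(z) \leq (1-\eta)E^\star\}$ together with the spectral-density hypothesis $\pi(S) \leq \pi(E^\star)^\gamma$, one gets
\[
\textup{KL}(\psi^2\|\pi) \;\geq\; \beta g - \ln\!\bigl(1 + \pi(E^\star)^\gamma(e^\beta - 1)\bigr).
\]
Setting $\beta = B := \gamma\ln(1/\pi(E^\star))$ balances the two terms (since then $\pi(E^\star)^\gamma e^\beta = 1$) and yields $\textup{KL}(\psi^2\|\pi) \geq Bg - \ln 2$; optimizing $\beta$ as a function of $\psi$ recovers the tighter bound $\textup{KL}(\psi^2\|\pi) \geq Bg - h(g)$, where $h(g) = -g\ln g - (1-g)\ln(1-g)$ is the binary entropy in nats. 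Substituting back gives
\[
\langle\psi|H_b|\psi\rangle \;\geq\; -1 + g(\omega B - b) - \omega h(g).
\]

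To finish I will split on the magnitude of $g$. For $g \leq g_0 := \omega/(2b)$, the alternative spectral-gap bound $-\langle\psi|D(P)|\psi\rangle \geq -\lambda_2 \geq -1 + \omega$ (valid because $\omega \leq \delta = 1-\lambda_2$ by the chain $\omega_{\textup{LS}} \leq \omega_{\textup{MLS}} \leq \delta$) already implies $\langle\psi|H_b|\psi\rangle \geq -1 + \omega - bg \geq -1 + \omega/2$. For $g > g_0$ I will use the displayed inequality above; the binding case is the matching at $g = g_0$, which reduces to a scalar inequality on $\omega B/(2b)$ that the hypothesis $b < \tfrac{2}{3}\gamma\omega\ln(1/\pi(E^\star))$ is calibrated to enforce. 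The main obstacle I anticipate is controlling the constant slack in the Donsker--Varadhan step: the naive $\omega\ln 2$ loss erodes the usable range of $b$, so one must use the tight binary-KL bound and exploit that, in the nontrivial regime where $\pi(E^\star)^{-1}$ is exponentially large, $g_0$ is of order $1/B$ and hence $h(g_0)$ is negligible. Careful tracking of this matching at the case boundary is what pins down the stated constant $\tfrac{2}{3}$.
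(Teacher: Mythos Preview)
Your proposal follows essentially the same route as the paper: Lemma~\ref{lemma:KL-lemma} gives the upper bound $\textup{KL}(\psi^2\|\pi)\le (1-\langle\psi|D(P)|\psi\rangle)/\omega$, Donsker--Varadhan with test function proportional to $-G_\eta$ (using $\gamma$ spectral density to control $\mathbb{E}_\pi[e^f]$) gives the lower bound, and the Poincar\'e/spectral-gap inequality for $\psi\perp\sqrt\pi$ handles the small-$g$ regime. The paper packages this as a contradiction argument on the ground state of $\Pi_\perp H_b\Pi_\perp$ rather than a direct case split, and simply fixes $\beta=\gamma\ln(1/\pi(E^\star))$ (then crudely bounds $\ln 2\le 1$) instead of optimizing to your binary-entropy form, but these are cosmetic differences rather than a different argument.
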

\begin{proof}

Let $U_{\psi} = \mathbb{E}_{\psi^2}[g_{\eta}\left(\frac{H}{E^*}\right)]$. From Lemma \ref{lem:kl_spec_dens_lower_bound}, we have the following lower bound on the KL divergence:
\begin{align}
\label{eqn:kl-lower-bound}
     \textup{KL} (\psi^2\|\pi) \geq -\gamma \ln\left(\frac{1}{\pi^{\star}}\right) U_{\psi} - 1.
\end{align}
We also have the following upper bound the KL divergence from Lemma \ref{lemma:KL-lemma}:
\begin{align}
\label{eqn:kl-upper-bound}
        \frac{1 - \langle\psi|D(P)|\psi\rangle}{\omega} \geq \textup{KL}(\psi^2 \| \pi),
\end{align}
where $\omega$ is the log-Sobolev constant of $P$.

The following argument attempts to find an upper bound $b^*$ on the $b$, such that for all $b < b^*$ the two bounds above become contradicting if short path is not satisfied.

Suppose for contradiction that the $\frac{\omega}{2}$ short path condition is violated at $b$, i.e.
$$
    \GSE(\Pi_{\perp}H_b\Pi_{\perp}) < -1 + \frac{\omega}{2},
$$
where $\omega$ is the log-Sobolev constant of $P$. If $|\psi_{b}'\rangle$ is the ground state of $\Pi_{\perp}H_b\Pi_{\perp}$, then
\begin{align}
\label{eqn:short-path-viol}
-1 + \frac{\omega}{2}  >  \langle \psi_{b}' |H_b | \psi_{b}'\rangle  =- \langle \psi_{b}' |D(P)| \psi_{b}'\rangle + b\langle \psi_{b}' |G_{\eta}| \psi_{b}'\rangle =-\langle \psi_{b}' |D(P)| \psi_{b}'\rangle + bU_{\psi_b'},
\end{align}
which implies 
\begin{equation}\label{eqn:u_bound}
     1 - \langle \psi_{b}' | D(P) | \psi_{b}'\rangle  < \frac{\omega -2bU_{\psi_b'}}{2}.
\end{equation}
Thus if the short path condition is violated, then the $\KL$ upper bound above reduces to
$$
     \frac{\omega - 2bU_{\psi_b'}}{2\omega} \geq \textup{KL}(\psi_b'^2\| \pi).
$$

We also have generally that a Poincar\'e inequality (as log-Sobolev inequality implies a Poincar\'e inequality with the same constant) implies that for any $|\psi\rangle$
\begin{align}\label{e:ls_constant_ineq}
\langle\psi|-D(P) |\psi\rangle \geq -1  + \delta \geq -1  + \omega,
\end{align}
where $\delta$ denotes the Poincar\'e constant of  $P$, so if short path condition is violated combining the above with Equation \eqref{eqn:short-path-viol} gives
 \begin{align}
 \label{eqn:omega_u}
\frac{\omega}{2b} < -U_{\psi_b'},
 \end{align}
where $U_{\psi_b'} < 0$ by construction. Also, by construction $-U_{\psi_b'} \leq 1$.

Hence, when the short path condition is violated, Equations \eqref{eqn:kl-lower-bound} and  \eqref{eqn:kl-upper-bound} imply that
\begin{align}
\label{eqn:violated_short_path_KL}
    0 \geq-\gamma \ln\left(\frac{1}{\pi(E^{\star})}\right) U_{\psi_b'} -1- \frac{\omega - 2bU_{\psi_b'}}{2\omega}.
\end{align}
As mentioned earlier, we want to a range of $b$'s that contradicts this inequality, and so we solve
\begin{align*}
0 < -\gamma \ln\left(\frac{1}{\pi(E^{\star})}\right) U_{\psi_b} -1- \frac{\omega -2bU_{\psi_b'}}{2\omega},
\end{align*}
which yields
\begin{align*}
b < \frac{3\omega}{2U_{\psi_b'}} + 2\gamma\omega\ln(1/\pi(E^{\star}))
\end{align*}  
and using Equation \eqref{eqn:omega_u} since we want to consider the smallest right hand side:
$$b< \frac{2}{3}\gamma \omega\ln\left(\frac{1}{\pi(E^{\star})}\right).$$ 
Thus, the short path condition must hold for the values of $b$ given in theorem statement.
\end{proof}

It is reasonable to question if a simpler Poincar\'e inequality for $\Mcal = (\Xcal, P, \pi)$ which only depends on the spectral gap $\delta$ of $P$ would suffice to get a bound on $b$. Unfortunately, it appears that this does not provide a useful bound unless $\delta$ is constant. %

\begin{lemma}
\label{lemma:TV-lemma}
Suppose $\Mcal = (\Xcal, P, \pi)$  is a Markov chain that satisfies a Poincar\'e inequality with constant $\delta$. Then, for all quantum states $|\psi\rangle$ one has
$$
        \frac{1 - \langle\psi|D(P)|\psi\rangle}{\delta} \geq [\textup{TV}(\psi^2 , \pi)]^2,
$$
where $\textup{TV}(\cdot , \cdot)$ denotes the total variation distance,  and $\psi^2$ is the $\ell_2$ distribution of $|\psi\rangle$ in the computational basis.
\end{lemma}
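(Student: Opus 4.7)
I would follow the same template as Lemma~\ref{lemma:KL-lemma}, replacing the log-Sobolev step by the Poincar\'e inequality and replacing the KL divergence by total variation via a Cauchy--Schwarz bound. Introduce $f(x) := \psi(x)/\sqrt{\pi(x)}$, so that the unit-norm condition on $|\psi\rangle$ rewrites as $\mathbb{E}_{\pi}[f^2] = \sum_x \pi(x) f(x)^2 = \sum_x \psi(x)^2 = 1$. The Dirichlet-form computation already carried out in Equation~\eqref{eqn:dirichlet_form_reduction} gives, line for line,
\begin{equation*}
\mathcal{D}(f,f) \;=\; 1 - \langle\psi|D(P)|\psi\rangle.
\end{equation*}
Applying the assumed Poincar\'e inequality to $f$ then yields
\begin{equation*}
1 - \langle\psi|D(P)|\psi\rangle \;\geq\; \delta\,\textup{Var}_{\pi}(f).
\end{equation*}

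The remaining purely probabilistic step is to show $[\textup{TV}(\psi^2,\pi)]^2 \leq \textup{Var}_{\pi}(f)$. For this I would expand
\begin{equation*}
\textup{TV}(\psi^2,\pi) \;=\; \tfrac{1}{2}\sum_x |\psi(x)^2 - \pi(x)| \;=\; \tfrac{1}{2}\sum_x \pi(x)\,|f(x)^2 - 1| \;=\; \tfrac{1}{2}\,\mathbb{E}_{\pi}|f^2 - 1|,
\end{equation*}
then factor $|f^2-1| = |f-1|\cdot|f+1|$ and apply Cauchy--Schwarz to get $\mathbb{E}_{\pi}|f^2-1| \leq \sqrt{\mathbb{E}_{\pi}(f-1)^2 \cdot \mathbb{E}_{\pi}(f+1)^2}$. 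Since $\mathbb{E}_{\pi}[f^2] = 1$, the two factors simplify as $\mathbb{E}_{\pi}(f\pm 1)^2 = 2(1 \pm \mathbb{E}_{\pi}f)$, so their product is $4(1 - (\mathbb{E}_{\pi}f)^2) = 4\,\textup{Var}_{\pi}(f)$. This gives $\textup{TV}(\psi^2,\pi) \leq \sqrt{\textup{Var}_{\pi}(f)}$, and combining with the Poincar\'e bound above proves the lemma.

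The proof is mostly routine and I do not foresee a major obstacle. The only subtle point is that the Cauchy--Schwarz split into $(f-1)(f+1)$ is specifically what lets the normalization $\mathbb{E}_{\pi}[f^2]=1$ conspire with the cross-terms to produce $4\,\textup{Var}_{\pi}(f)$ exactly; a more naive bound, e.g.\ going through the $\chi^2$-divergence $\mathbb{E}_{\pi}[(f^2-1)^2]$, would yield higher moments of $f$ rather than the variance and would not chain cleanly with the Poincar\'e inequality, which bounds the variance alone.
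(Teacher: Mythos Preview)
Your proof is correct and follows essentially the same route as the paper: introduce $f=\psi/\sqrt{\pi}$, identify $\mathcal{D}(f,f)=1-\langle\psi|D(P)|\psi\rangle$, apply Poincar\'e to bound by $\delta\,\mathrm{Var}_\pi(f)=\delta\bigl(1-|\langle\sqrt{\pi}|\psi\rangle|^2\bigr)$, and then bound $\mathrm{TV}(\psi^2,\pi)^2$ by this variance. The only difference is that the paper simply asserts the last step $1-|\langle\sqrt{\pi}|\psi\rangle|^2\ge \mathrm{TV}(\psi^2,\pi)^2$ (this is the classical fidelity--total variation inequality), whereas you supply an explicit proof of it via the Cauchy--Schwarz factorization $|f^2-1|=|f-1|\,|f+1|$; your derivation is a clean self-contained proof of that standard bound.
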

\begin{proof}
The proof roughly follows that of Lemma \ref{lemma:KL-lemma}, i.e. using Equation \eqref{eqn:dirichlet_form_reduction}, but instead uses $\pi$-Variance. From the Poincar\'e inequality:
\begin{align*}
\frac{1 - \langle\psi|D(P)|\psi\rangle}{\delta} = \frac{\mathcal{D}(\psi/\sqrt{\pi},\psi/\sqrt{\pi})}{\delta} &\geq \text{Var}_{\pi}[\psi/\sqrt{\pi}]=\mathbb{E}_{\pi}[\psi^2/\pi] - (\mathbb{E}_{\pi}[\psi/\sqrt{\pi}])^2 \\&= 1 - |\langle\sqrt{\pi}| \psi\rangle|^2 \\&\geq [\text{TV}(\pi , \psi^2)]^2.
\end{align*} 
\end{proof}

The following theorem provides a sufficient bound on $b$ for $\gamma$-spectral density to imply the short-path condition. There is one addtional condition relating $\pi(E^{\star}), \gamma$ and $\delta$. However, this condition is satisfied whenever it makes sense to apply the generalized short-path algorithm. Specifically, we need the short jump to be efficient, i.e. $\delta = \Omega\left(\frac{1}{\textup{poly}(n)}\right)$ and the problem to be hard, i.e. $\pi(E^{\star})$ is falling super polynomially.
\begin{theorem}[Sufficient $b$ For Spectral Density To Imply Short Path Given Poincar\'e]
\label{thrm:b_poincare}
Suppose $\Mcal = (\Xcal, P, \pi)$ is a reversible Markov chain that satisfies a $\delta < 1$ Poincar\'e inequality. Suppose
\begin{align}
    b < \frac{\delta}{4}.
\end{align}
If the  $\gamma$-spectral density is satisfied with $\frac{\delta}{2} >\pi(E^{\star})^{\gamma}$, then the $\frac{\delta}{2}$ short-path condition is also satisfied.
\end{theorem}
\begin{proof}

The proof follows a similar structure as Theorem \ref{thrm:b_log_sob}. We have the variational definition of TV:
$$
\text{TV}(\pi, \psi) = \textup{sup}_{f: \lVert f \rVert_{\infty} \leq 1}\frac{1}{2}(\mathbb{E}_{\psi^2}[f(x)]- \mathbb{E}_{\pi}[f(x)]).
$$
We can choose $f(x) = -g_{\eta}(H(x)/E^*)$, which satisfies $\Vert f \rVert_{\infty} \leq 1$.  Let $$U_{\psi} = \mathbb{E}_{\psi^2}\left[g_{\eta}\left(\frac{H}{E^*}\right)\right].$$

The $\gamma$-spectral density condition implies
\begin{align*}
    & \mathbb{E}_{\pi}[f(x)] \leq \pi(E \leq (1-\eta)E^{\star}) \leq \pi(E^{\star})^{\gamma},
\end{align*}
so
\begin{align}
\label{eqn:tv-lower-bound}
\text{TV}(\pi, \psi^2) \geq  \frac{1}{2}( -U_{\psi} - \pi(E^{\star})^{\gamma}).
\end{align}

We also have the following upper bound on the TV from Lemma \ref{lemma:TV-lemma}:
\begin{align}
\label{eqn:tv-upper-bound}
        \frac{1 - \langle\psi|D(P)|\psi\rangle}{\delta} \geq [\textup{TV}(\psi^2 \| \pi)]^2,
\end{align}
where $\delta$ is the Poincar\'e constant of $P$.

Let $|\psi_{b}'\rangle$ be the ground state of $\Pi_{\perp}H_b\Pi_{\perp}$
Suppose for contradiction that the $\frac{\delta}{2}$ short path condition is violated at $b$, i.e.
$$
    \GSE(\Pi_{\perp}H_b\Pi_{\perp}) < -1 + \frac{\delta}{2},
$$
where $\delta$ is the spectral gap of $P$. Thus
\begin{align}
\label{eqn:del-short-path-viol}
-1 + \frac{\delta}{2}  >  \langle \psi_{b}' |H_b | \psi_{b}'\rangle  =- \langle \psi_{b}' |D(P)| \psi_{b}'\rangle + b\langle \psi_{b}' |G_{\eta}| \psi_{b}'\rangle =-\langle \psi_{b}' |D(P)| \psi_{b}'\rangle + bU_{\psi_b'},
\end{align}
which implies 
\begin{equation}\label{eqn:u_bound}
     1 - \langle \psi_{b}' | D(P) | \psi_{b}'\rangle  < \frac{\delta -2bU_{\psi_b'}}{2}.
\end{equation}
Thus if short path is violated, then the total variation distance upper bound above reduces to
$$
     \frac{\delta - 2bU_{\psi_b'}}{2\delta} \geq [\textup{TV}(\psi_b'^2, \pi)]^2.
$$

If $P$ satisfies a $\delta$ Poincar\'e inequality, then we have that for any $|\psi\rangle \perp |\sqrt{\pi}\rangle$
\begin{align}\label{e:ls_constant_ineq}
\langle\psi|-D(P) |\psi\rangle \geq -1  + \delta.
\end{align}
If short path is violated, then combining the above with Equation \eqref{eqn:del-short-path-viol} gives
 \begin{align}
 \label{eqn:omega_u_pon}
\frac{\delta}{2b} < -U_{\psi_b'},
 \end{align}
where $U_{\psi_b'} < 0$ by construction. Also, by construction $-U_{\psi_b'} \leq 1$.

The above, our assumption on $\frac{\delta}{2} > \pi(E^{\star})^{\gamma}$, assumption that $b < 1$, and Equation \eqref{eqn:tv-lower-bound} imply
\begin{align*}
[\text{TV}(\pi, \psi^2)]^2 \geq  \frac{1}{4}\left(-U_{\psi_b'} - \pi(E^{\star})^{\gamma}\right)^2.
\end{align*}

Hence, when the short path condition is violated, Equations \eqref{eqn:tv-lower-bound} and \eqref{eqn:tv-upper-bound} imply that
\begin{align*}
\label{eqn:violated_short_path_TV}
    0 \geq  \frac{1}{4}(-U_{\psi_b'}  - \pi(E^{\star})^{\gamma})^2 - \frac{\delta - 2bU_{\psi_b'}}{2\delta}.
\end{align*}
As mentioned earlier, we want to find a range of $b$'s that contradicts this, so we solve
\begin{align*}
0 <\frac{1}{4}(-U_{\psi_b'}  - \pi(E^{\star})^{\gamma})^2 - \frac{\delta -2bU_{\psi_b'}}{2\delta},
\end{align*}
which reduces to

\begin{align*}
&0 < \frac{1}{4}[ U_{\psi_b'}^2 
+2U_{\psi_b'}\pi(E^{\star})^{\gamma} + \pi(E^{\star})^{2\gamma}] - \frac{\delta -2bU_{\psi_b'}}{2\delta}\\
&    0 < \delta[ U_{\psi_b'}^2 + 2U_{\psi_b'}\pi(E^{\star})^{\gamma} + \pi(E^{\star})^{2\gamma}] - 2(\delta -2bU_{\psi_b'})\\
& 0 < \delta [-U_{\psi_b'}]^2 - (2\delta\pi(E^{\star})^{\gamma} + 4b)[-U_{\psi_b}'] + \delta(\pi(E^{\star})^{2\gamma} -2) 
\end{align*}

and using Equation \eqref{eqn:omega_u_pon} since we want to consider the smallest r.h.s.:
\begin{align*}
&0 < \frac{\delta^3}{4b^2} - (2\pi(E^{\star})^{\gamma} + 4b)\frac{\delta}{2b} + \delta(\pi(E^{\star})^{2\gamma} -2)\\
&0 < \delta^3 - (2\delta\pi(E^{\star})^{\gamma} + 4b)2\delta b + \delta(\pi(E^{\star})^{2\gamma} -2)4b^2\\
&0 < [-16 + 4 \pi(E^{\star})^{2\gamma}]b^2 + 4\delta \pi(E^{\star})^{\gamma}b + \delta^2
\end{align*}

so using the positive root $b^{\star}$:

\begin{align*}
&b^{\star}=\frac{\delta \pi(E^{\star})^{\gamma}}{2(4-\pi(E^{\star})^{2\gamma})} - \frac{\sqrt{16\delta^2\pi(E^{\star})^{2\gamma}- 4[-16+4\pi(E^{\star})]\delta^2}}{2[-16+4\pi(E^{\star})^{2\gamma}]}\\
&=\frac{\delta \pi(E^{\star})^{\gamma}}{2(4-\pi(E^{\star})^{2\gamma})} + 4\delta \frac{\sqrt{\pi(E^{\star})^{2\gamma} -\pi(E^{\star}) + 4}}{2[16-4\pi(E^{\star})^{2\gamma}]}\\
&\geq \frac{\delta}{4}. 
\end{align*}

Thus, the short path condition must hold for the values of $b$ given in theorem statement. 
\end{proof}

It is also natural to ask if a modified log-Sobolev inequality would work. The apparent issue is with relating the Dirichlet form to the energy with respect to $D(P)$. Although we could not prove a generic bound in terms of modified log-Sobolev constant, we can use it to derive a lower bound on the standard log-Sobolev constant by using the following thoerem.
\begin{theorem}[Theorem 1 in \cite{salez2022upgradingmlsilsireversible}]
\label{thm:modified-LSI}
    Let $\omega_{\textup{LS}}$ and $\omega_{\textup{MLS}}$ respectively denote the log-Sobolev constant and the modified log-Sobolev constant of a Markov chain $\Mcal = (\Xcal,P,\pi)$. If $\Mcal$ is reversible, then
    \[
    \omega_{\textup{LS}}\geq \frac{\omega_{\textup{MLS}}}{\log(1/p)}
    \]
    where $p$ is the smallest non-zero element in $P$.
\end{theorem}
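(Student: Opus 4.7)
The plan is to prove the comparison via the variational definitions of the two functional inequalities. By definition of the modified log-Sobolev constant, every non-negative function $g \colon \Xcal \mapsto \R$ satisfies $\omega_{\textrm{MLSI}}\,\Ent_\pi(g) \leq \mathcal{D}(g, \ln g)$. Applying this with $g = f^2$ and using $\ln f^2 = 2\ln f$, we obtain $\omega_{\textrm{MLSI}}\,\Ent_\pi(f^2) \leq 2\mathcal{D}(f^2, \ln f)$. The target bound $\omega_{\textrm{LSI}} \geq \omega_{\textrm{MLSI}}/\log(1/p)$ therefore reduces, via the variational characterization of the log-Sobolev constant, to establishing the Dirichlet-form comparison
\[
2\,\mathcal{D}(f^2, \ln f) \leq \log(1/p)\cdot \mathcal{D}(f, f)
\]
for every non-negative $f$.

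Writing both Dirichlet forms as symmetric sums over ordered pairs $(x,y)$ with $P(x,y) > 0$, and noting that every such non-zero transition satisfies $P(x,y) \geq p$, the task reduces to controlling the pair-wise ratio
\[
\frac{(f(x) + f(y))(\ln f(x) - \ln f(y))}{f(x) - f(y)}
\]
at connected pairs, weighted by $\pi(x)P(x,y)$. A naive term-by-term inequality fails because this ratio diverges as $f(x)/f(y)\to\infty$; hence the factor $\log(1/p)$ must arise from a global argument that exploits the lower bound on the smallest non-zero entry of $P$ rather than from a single pointwise estimate.

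The concrete approach I would take is a dyadic level-set decomposition $f = \sum_{k} f_k$ with $f_k := (f\wedge 2^{k+1}) - (f\wedge 2^k)$. On any single dyadic layer, the ratio of values at connected states is bounded by a factor of two, so a standard two-point log-Sobolev inequality (for the Bernoulli measure) yields a universal constant on the pointwise ratio of Dirichlet integrands. Since reversibility together with $P(x,y)\geq p$ forces $\pi(x)\geq p\,\pi(y)$ whenever $P(y,x) > 0$, layers whose $\pi$-mass falls a factor of $p$ or more below the total mass contribute negligibly relative to their entropy; hence only $\Theta(\log(1/p))$ dyadic layers carry meaningful weight. Summing the layer-wise comparisons, with a Cauchy--Schwarz step to absorb cross-terms, produces the $\log(1/p)$ factor.

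The main obstacle is making this dyadic bookkeeping sharp, so that the cumulative constant across layers is exactly $\log(1/p)$ rather than a larger function of $p$; in particular, the cross-layer contributions to $\mathcal{D}(f^2, \ln f)$ must be shown either non-positive or absorbable into the diagonal contributions. A cleaner alternative route, closer to the actual argument in \cite{salez2020sharp}, is to sidestep the explicit truncation and work directly with the Donsker--Varadhan variational formula $\Ent_\pi(f^2) = \sup_{\phi}\{\E_\pi[f^2 \phi] - \ln \E_\pi[e^\phi]\}$, reducing the comparison to a discrete mean-value estimate for $\ln$ along chain transitions, where the factor $\log(1/p)$ emerges naturally from bounding an exponential moment of a Lipschitz-type test in terms of $\mathcal{D}(f, f)$.
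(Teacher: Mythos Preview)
The paper does not prove this theorem; it is quoted from \cite{salez2020sharp} and used as a black box, so there is no ``paper's own proof'' to compare against. What I can assess is the internal soundness of your plan, and there the reduction you write down fails.

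Your first step is to reduce the statement to the Dirichlet-form comparison
\[
2\,\mathcal{D}(f^2,\ln f)\;\leq\;\log(1/p)\,\mathcal{D}(f,f)\qquad\text{for all nonnegative }f.
\]
This inequality is simply false, and no amount of dyadic bookkeeping will rescue it. Take the two-state reversible chain with $\pi=(1/2,1/2)$ and $P(1,2)=P(2,1)=p$, and set $f(1)=M$, $f(2)=1$. Then
\[
\frac{2\,\mathcal{D}(f^2,\ln f)}{\mathcal{D}(f,f)}=\frac{2(M+1)\ln M}{M-1}\xrightarrow[M\to\infty]{}\infty,
\]
so for any fixed $p$ the left side eventually exceeds $\log(1/p)$. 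The point is that nothing about the chain constrains the ratio $f(x)/f(y)$ across an edge; the parameter $p$ bounds transition probabilities, not function values. Your heuristic that ``only $\Theta(\log(1/p))$ dyadic layers carry meaningful weight'' confuses a bound on $\pi(x)/\pi(y)$ (which does follow from $P(x,y)\geq p$ via reversibility) with a bound on $f(x)/f(y)$ (which does not).

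Because this pointwise-style comparison is impossible, the actual argument in \cite{salez2020sharp} cannot and does not proceed by bounding $\mathcal{D}(f^2,\ln f)$ by $\mathcal{D}(f,f)$. The $\log(1/p)$ factor must enter through a mechanism that genuinely uses the chain structure on the entropy side rather than through a term-by-term Dirichlet inequality, and your proposal does not identify such a mechanism.
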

Note that for general Markov chains $p$ can be exponentially small. However, in certain cases such as single-site Glauber dynamics on bounded degree graphs, $p$ is only polynomially small, e.g. the antiferomagnetic $2$-spins in \cite[Proof of Theorem 1.1]{chen2023optimal}. Therefore, in these special cases, our main theorem implies a super-quadratic speed up although the speedup term $c$ might be falling with $n$.

If the conditions of Theorem \ref{thm:short-jump-eff} are satisfied (potentially by using Theorems \ref{thrm:b_log_sob} or \ref{thrm:b_poincare}), then we only need to show that $\lVert \Pi^{\star}|\psi_b\rangle\rVert_{2}^{-1}$ is exponentially smaller (say by $\pi(E^*)^{c/2}$ for some constant $c$) than $\lVert \Pi^{\star}|\sqrt{\pi}\rangle\rVert_{2}^{-1}$ to achieve a super-quadratic speedup over Markov chain search. This is the task of bounding the runtime of the \emph{long jump}.

\subsubsection{The Long Jump}
\label{subsec:long_jump}
The long jump is the preparation of an optimal solution in $\Pi^{\star}$ given $|\psi_b\rangle$, where ideally  $\lVert \Pi^{\star}|\psi_b\rangle\rVert_{2}^2 \geq \pi(E^*)^{1-c}$ for some constant $c > 0$. As discussed in the previous subsection, the only quantum speedup from this step is quadratic and is due to amplitude amplification. That is, the quantum short-jump plus classical sampling (classical long-jump) would cost $\Ostar(\pi(E^*)^{-(1-c)})$, provided that the overlap condition just mentioned is satisfied. Accordingly, a quantum short-jump applied to the discriminant of the Markov chain could still provide a nontrivial speedup. The goal of this subsection is to determine conditions on the cost Hamiltonian in terms of the Markov chain $\Mcal$ under which $\lVert \Pi^{\star}|\psi_b\rangle\rVert_{2}^2 \geq \pi(E^*)^{1-c}$ holds for some constant $c$, or more generally derive a formula for a problem-size-dependent $c(n)$ in terms of problem parameters. If we combine the results of  this section with the known runtime of amplitude amplification, then we get the runtime stated for the generalized short path algorithm (Equation \eqref{eqn: runtime}).

In order to lower bound $\lVert \Pi^{\star}|\psi_b\rangle\rVert_{2}^2$, we follow \cite{dalzell2022mind} in approximating the ground-state projector $|\psi_b\rangle\!\langle\psi_b|$ through the use of a simple degree-$\ell$ polynomial. This also makes use of a technical condition that ensures the ground state energy of $H_b$,  $E_b$, does not decrease too much from the ground state energy of $-D(P)$. Unlike \cite{dalzell2022mind},  we derive this condition without explicitly using the short path condition (Definition \ref{defn:short-path-conditon}), simplifying some of the presentation and weakening some of the algorithm's assumptions. %

At a high-level, under various problem-specific conditions, we can bound $\lVert \Pi^{\star}|\psi_b\rangle\rVert_{2}^2$ by instead bounding the relative \emph{localization enhancement} on optimal solutions attained by $|\psi_b\rangle$ over $|\sqrt{\pi}\rangle$, i.e. lower bounding
\begin{align*}
     \frac{|\langle z^{\star} | \psi_b\rangle|}{|\langle z^{\star}|\sqrt{\pi}\rangle|},
\end{align*}
for any optimal solution $z^{\star}$. If the short jump from $|\sqrt{\pi}\rangle$ to $|\psi_b\rangle$ can be done in polynomial time, then the relative improvement in $\Ostar$ runtime will be determined by the above ratio. %

We start with the following lower bound that relates the localization of $|\psi_b\rangle$ on optimal solutions to $|\sqrt{\pi}\rangle$ and $H_b$.
\begin{restatable}[Overlap To Coupling Lower Bound]{lemma}{longJumpBound}
\label{lem:long_jump_overlap_lower_bound}
Let $\Mcal = (\Xcal, P, \pi)$ be a reversible, aperiodic Markov chain with $|\Xcal| = V$. Let $\nu$ be a lower bound on either the log-Sobolev or Poincar\'e constant of $P$. Suppose there exists an integer $\ell$ such that
\begin{align*}
&\frac{4\ln(V)}{\nu}  \leq \ell \leq \frac{\langle \psi_b|\sqrt{\pi}\rangle}{b\sqrt{\mathbb{P}_{\pi}( E \leq (1-\eta)E^*)}}.
\end{align*}
If $|\psi_b\rangle$ is the unique ground state of $H_b$ and the spectral gap of $H_b$ is lower bounded by $\frac{\nu}{2}$, then for any optimal assignment $z^*$:
\begin{align*}
    \langle \psi_b | z^*\rangle = \Omega\left(\langle \sqrt{\pi}|(-H_b)^{\ell}|z^*\rangle\right),
\end{align*}
asymptotically in $V$.
\end{restatable}

We now recall the definition of $\alpha$-subdepolarizing from \cite{dalzell2022mind} but generalized to arbitrary Markov chains. The subdepolarizing property helps us to explicitly remove $H_b$ from the above bound and express things only in terms of $\pi(E^{\star})$ and problem parameters.
\begin{definition}[$\alpha_{P}$-subdepolarizing]
Let $\Mcal = (\Xcal, P, \pi)$ be a Markov chain. The pair $(\Mcal, H)$ satisfies the $\alpha$-subdepolarizing property if for $f(x) = - g_{\eta}(-x)$ (as defined earlier), the following holds for any set of constants $0 < c_1, \dots, c_{T} < 1, \forall T\in \mathbb{N}$:
$$
    \expP\prod_{t=1}^{T}f\left(\frac{c_tH(y)}{E^{\star}}\right) \geq \prod_{t=1}^{T} f\left(\frac{c_t(1-\alpha_{P})H(x)}{E^{\star}}\right),
$$
where $E^{\star}$ is the ground state energy of $H$.
\end{definition}

As shown in the proof of Proposition 3 of \cite{dalzell2022mind}, if $\Delta_{P}$-stability holds, then $\alpha_{P}$-subdepolarizing is satisfied. However, the converse also holds.
\begin{restatable}
[$\Delta_P$ stable $\iff \alpha_P$ subdepolarizing ]{lemma}{equivDepol}
\label{lem:delta_alpha_eqv}
Let $\Mcal = (\Xcal, P, \pi)$ be a Markov chain. The pair $(\Mcal, H)$ is $\Delta_P(\eta)$-stable if and only if it satisfies  $\alpha_{P}$-subdepolarizing.  They are related by the equation $\alpha_{P} = \frac{\Delta_{P}(\eta)}{\lvert E^{\star} \rvert (1-\eta)}$.
\end{restatable}

We now briefly remark on some useful upper bounds on $\Delta_{P}(\eta)$ that we alluded to earlier. However, using too loose of an upper bound may result in the runtime analysis not indicating a speedup. Note that choice of the bound on $\Delta_{P}$  is not actually used by Algorithm~\ref{alg:generalized-short-path}.
\begin{restatable}[Upper bounds on $\Delta_{P}$]{lemma}{deltaUpper}
If $H$ has $P$ pseudo-Lipschitz norm $\lVert H \rVert_{P}$, then for $\eta \in (0, 1)$, 
\begin{align*}
    \sqrt{\lVert H \rVert_{P}} \geq \Delta_{P}(\eta).
\end{align*}
Furthermore, if 
\begin{align*}
    \expP[H(y)] \leq  H(x)+\tilde{\Delta}_{P},
\end{align*}
then
\begin{align*}
     \tilde{\Delta}_{P} \geq \Delta_{P}(\eta).
\end{align*}
\end{restatable}

Our next result generalizes \cite[Lemma 3]{dalzell2022mind}, which uses $\alpha_{P}$ subdepolarizing (or equivalently $\Delta_P$ stability)  to lower bound the coupling term $\langle \sqrt{\pi}|(-H_b)^{\ell}|z^*\rangle$ in Lemma \ref{lem:long_jump_overlap_lower_bound}, removing the explicit dependence on $H_b$. This leads to a quantity that effectively determines how much more localized (in a relative sense) $\psi_b$ is on a given $z^*$ than $\sqrt{\pi}$. Note that obviously $\langle \sqrt{\pi}|(-H_b)^{\ell}|z^*\rangle$ becomes the overlap between $|\sqrt{\pi}\rangle$ and $|z^*\rangle$ when $b = 0$. However, one can show that this quantity is at least $\langle \sqrt{\pi}| z^*\rangle$ for $b > 0$, but the stability condition is used to determine when it is strictly greater.
\begin{lemma}[Enhanced Relative Localization]
\label{lem:pi_to_z_overlap}
Let $\Mcal = (\Xcal, P, \pi)$ be a Markov chain. Given positive parameters $\eta<1$, $b<1$, $\alpha_P < (1 - b)/2$, and integer $\ell > 3/\alpha^2$, suppose that
$(H, g_\eta)$ has the $\alpha_P$-subdepolarizing property. Let $z^{\star}\in\{-1, +1\}^n$ be an optimal assignment, i.e. $H(z^{\star}) = E^{\star}$. %
Then,
$$
    \langle \sqrt{\pi}|(-H_b)^{\ell}|z^*\rangle = \Omega\left( \sqrt{\pi(z^{\star})}e^{\frac{b\eta}{2\alpha_P}}\right).  
$$
In other terms, when Lemma \ref{lem:long_jump_overlap_lower_bound} holds:
\begin{align*}
    \frac{\langle z^{\star} | \psi_b\rangle}{\langle z^{\star}|\sqrt{\pi}\rangle} = \Omega\left(e^{\frac{b\eta}{2\alpha_P}}\right).
\end{align*}
\end{lemma}
\begin{proof}
Define $A, B$, and $f$ by the following equations:
\begin{align*}
    A &= D(P)\\
    B &=  -bg_{\eta}\left(\frac{H}{E^{\star}}\right)=bf\left(\frac{H}{E^{\star}}\right),
\end{align*}
so
\begin{align*}
    \langle y |A|x\rangle = \pi^{1/2}(x)P(x,y)\pi^{-1/2}(y).
\end{align*}

To compute the numerator $(A + B)^{\ell}$, start with
    \begin{align*}
        &\braket{\sqrt{\pi}|A|z^{\star}} = \braket{\sqrt{\pi}|z^{\star}} = \pi(z^{\star})^{1/2}\\
        &\braket{\sqrt{\pi}|B|z^{\star}} = b\pi(z^{\star})^{1/2}. 
    \end{align*}
Next, we compute
\begin{align*}
    \braket{\sqrt{\pi}|BA^k|z^{\star}} &= b\braket{\sqrt{\pi}|BD^k|z} = b\sum_{x_1\dots x_k}\braket{\sqrt{\pi}|B|x_k}\braket{x_k|D|x_{k-1}}\cdots \braket{x_1|D|z} \\
    &= b\sum_{x_1\dots x_k}\braket{\sqrt{\pi}|B|x_k} \pi(x_k)^{-1/2}P(x_{k-1},x_k)\pi(x_{k-1})^{1/2}\cdots \pi(x_1)^{-1/2}P(z^{\star}, x_1)\pi(z^{\star})^{1/2}\\
    &= b\sum_{x_1\dots x_k}f\left(\frac{H(x_k)}{E^{\star}}  \right)P(z^{\star},x_1)\cdots P(x_{k-1}, x_k)\pi(z^{\star})^{1/2}\\
    &= b\pi(z^{\star})^{1/2}\mathbb{E}_{x_1\sim z}\cdots\mathbb{E}_{x_k\sim x_{k-1}}f\left(\frac{H(x_k)}{E^{\star}}  \right)\\
    &\geq b\pi(z^{\star})^{1/2}f\left((1-\alpha)^k\right).
\end{align*}  

In general, we can write any string of $A$'s and $B$'s as
$$
    \dots AB^{c_3}AB^{c_2}AB^{c_1}AB^{c_0},
$$
for $c \in \ell_1(\mathbb{N}^{\infty})$, i.e., finite sequences of natural numbers of unbounded length. Let $\tilde{f}(x) := f(H(x)/E^{\star}).$ Accordingly, we can compute
\begin{align*}
    \braket{\sqrt{\pi}| \dots AB^{c_3}AB^{c_2}AB^{c_1}AB^{c_0}|z^{\star}}&=\sum_{x_1,\dots}\dots \braket{x_4|AB^{c_3}|x_3}\braket{x_3|AB^{c_2}|x_2} \braket{x_2|AB^{c_1}|x_1}\braket{x_1|AB^{c_0}|z^{\star}}\\
    &=\sum_{x_1,\dots}  b^{\sum_j {x_j}}  \dots \braket{x_2|D|x_1}f(H(x_1)/E^{\star})^{c_1}  \braket{x_1|D|z^{\star}}\\
    &=\pi(z^{\star})^{1/2}b^{\sum_{j=0}^{\infty} {c_j}} \sum_{x_1,\dots}\dots P(x_1,x_2)f(H(x_1)/E^{\star})^{c_1}P(x_1,z^{\star})\\
    &=\pi(z^{\star})^{1/2}b^{\sum_{j=0}^{\infty} {c_j}}\mathbb{E}_{z^{\star} \underset{P}{\sim} x_1}[(\tilde{f}(x_1))^{c_1}\mathbb{E}_{x_1 \underset{P}{\sim} x_2}[(\tilde{f}(x_2))^{c_2}\cdots ]]\\
    &\geq \pi(z^{\star})^{1/2}b^{\sum_{j=0}^\infty c_j}  \prod_{j=0}^\infty f((1-\alpha)^j)^{c_j}.
\end{align*}
By assumption, $b^{\sum_{j=0}^\infty c_j}$ is finite.

We only pick up $\pi(z^{\star})^{1/2}$ term instead of $2^{-n/2}$ in front of the product. Therefore Propositions~15~and~16 of \cite{dalzell2022mind} hold as long as $\ell > \frac{3}{\alpha^2}$. As observed in \cite{dalzell2022mind}, the function ``$F(x) = 1-x+x\ln(x)$'' satisfies $\frac{F(1-\eta)}{\eta} \geq \eta/2$. Hence, we obtain the stated result.
\end{proof}

In the above proof, it is the inclusion of trajectories $|z^*\rangle|x_1\rangle |x_2\rangle \dots$ that walk around the landscape of the global min, defined by the filter $g_{\eta}$, that enhance the localization ($f$ does not annihilate all paths leaving $z^*$, unlike in unstructured search). After combining Lemmas \ref{lem:long_jump_overlap_lower_bound} and \ref{lem:pi_to_z_overlap}, the proof of the following result is self-evident. This formalizes the second consequence of the previous lemma and also accounts for all $z^{\star} \in \textup{Ran}~\Pi^{\star}$.

\begin{theorem}[Long Jump Cost for General Markov Chains]
\label{thm:long_jump_cost_bound}
Let $\Mcal = (\Xcal, P, \pi)$ be a reversible, aperiodic Markov chain. Let $\nu$ be a lower bound on either the log-Sobolev or Poincar\'e constant of $P$. Given positive parameters $\eta<1$, $b<1$, $\alpha_P < (1 - b)/2$. Suppose that
$(H, g_\eta)$ has the $\alpha_P$-subdepolarizing property, and that there exists an integer $\ell$ such that
\begin{align*}
&\max\left(\frac{4\ln(V)}{\nu}, \frac{3}{\alpha_P^2}\right) < \ell \leq \frac{\langle \psi_b|\sqrt{\pi}\rangle}{b\sqrt{\mathbb{P}_{\pi}( E \leq (1-\eta)E^*)}}.
\end{align*}
If $|\psi_b\rangle$ is the unique ground state of $H_b$ and the spectral gap of $H_b$ is lower bounded by $\frac{\nu}{2}$, then
\begin{align*}
\lVert \Pi^{\star}|\psi_b\rangle\rVert_{2} =  \Omega\left(\sqrt{\pi(E^{\star})}e^{\frac{b\eta}{2\alpha}}\right).
\end{align*}
\end{theorem}

It may not be immediately obvious that the conditions on $\ell$ are not contradicting. Here, we give intuition for why this is not the case for typical applications of the algorithm. Note for an efficient algorithm, using Theorem \ref{thrm:MTG_RT}, at the very least we will need $\omega^{-1} = \mathcal{O}(\poly(n))$. The overlap satisfies $|\langle \psi_b|\sqrt{\pi}\rangle| \geq \frac{\omega}{2}$ by Lemma \ref{lem:short-path-overlap}. For hard problems, 
$\pi(E^{\star})$ will be exponentially small in $n$ and $\gamma = \Theta(1)$.  Note $\alpha_{P}^2$ is significantly larger than $\pi(E^{\star})$ for a hard problem, for showing super-Grover runtime we will  want $\alpha_{P} = \Theta(\frac{1}{\ln(1/\pi(E^\star))})$ anyways. Thus, since $b \leq \frac{2}{3}\gamma \omega \ln(1/\pi(E^{\star}))$, if  the spectral density condition holds, then  $\frac{1}{\gamma\ln(1/\pi(E^{\star}))\sqrt{\mathbb{P}_{\pi}( E \leq (1-\eta)E^*)}}$ will be exponentially larger than the lower bound on $\ell$.  So it is fine to assume we have the conditions on $\ell$ stated in Theorem \ref{thm:long_jump_cost_bound} in settings where the algorithm can successfully be applied (validating Remark \ref{rem:conditions}).

As mentioned in Corollary \ref{cor:super_grover_run}, at a constant $b$ the existence of an asymptotic speedup over Grover is determined solely by $\frac{\Delta_{P}}{\lvert E^*\rvert}$. Clearly, we must have that $\frac{\Delta_{P}}{\lvert E^*\rvert} =\Omega\left(\frac{1}{\ln(1/\pi(E^*))}\right)$, at least for a problem with at least exponential runtime. If this does not hold, then the runtime goes to zero asymptotically, an absurdity. However, it has not been \emph{directly} shown that  the derived runtime  cannot lead to this contradiction. To put one's mind at ease, we present the following result.

\begin{lemma}
Let $\Mcal = (\Xcal, P, \pi)$ be a Markov chain with log-Sobolev constant lower bounded by $\omega$. Suppose $b^*$ and $\gamma$ in Theorem \ref{thrm:MTG_RT} are $\Theta(1)$, and $\lVert H \rVert_{2} = \lvert E^*\rvert$. If 
\begin{align}
\expP[H(y)] \leq  H(x)+\tilde{\Delta}_{P},\quad \forall x \in \mathcal{X},
\end{align}
then $\frac{\tilde{\Delta}_P}{\lvert E^*\rvert} = \Omega\left(\frac{1}{\ln(1/\pi(E^*))}\right)$.
\end{lemma}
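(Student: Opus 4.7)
The plan is to iterate the one-step stability bound starting from an optimal configuration, compare the outcome with the stationary expectation $\mathbb{E}_\pi[H]$ via the mixing time of $P$, and close the loop using the $\gamma$-spectral density condition together with the observation that $b^\star$ being constant pins down $\omega$.

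First I would observe that the hypothesis $\expP[H(y)] \le H(x) + \tilde{\Delta}_{P}$ reads $PH \le H + \tilde{\Delta}_{P}$ pointwise on $\Xcal$, so by induction $P^k H \le H + k\tilde{\Delta}_{P}$ for every $k$. Evaluating at an optimal $x^\star$ with $H(x^\star)=E^\star$ gives $(P^k H)(x^\star) \le E^\star + k\tilde{\Delta}_{P}$. Since $\lVert H\rVert_2 = |E^\star|$ forces $|H(x)| \le |E^\star|$ pointwise, for any $\varepsilon > 0$ and $k \ge \tmix(\varepsilon)$ we have $|(P^k H)(x^\star) - \mathbb{E}_\pi[H]| \le 2\varepsilon |E^\star|$, so the two bounds together yield
$$\mathbb{E}_\pi[H] - E^\star \;\le\; k\,\tilde{\Delta}_{P} + 2\varepsilon|E^\star|.$$

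Next I would lower bound $\mathbb{E}_\pi[H] - E^\star$ by splitting $\mathbb{E}_\pi[H]$ according to the event $\{H \le (1-\eta)E^\star\}$. On this event (whose probability is at most $\pi(E^\star)^\gamma = o(1)$ by spectral density) I use $H \ge -|E^\star|$; on its complement I use $H \ge -(1-\eta)|E^\star|$. A direct calculation then gives $\mathbb{E}_\pi[H] - E^\star \ge \eta|E^\star|(1 - \pi(E^\star)^\gamma)$. Choosing $\varepsilon = \eta/8$ absorbs the mixing-time slack into a constant fraction of this gap, leaving $k\,\tilde{\Delta}_{P} = \Omega(\eta|E^\star|)$.

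Finally, to control $k$, I use that $b^\star = \tfrac{2}{3}\gamma\omega\ln(1/\pi(E^\star))$ being constant together with $\gamma$ constant forces $\omega = \Theta(1/\ln(1/\pi(E^\star)))$; the standard log-Sobolev-to-mixing conversion then gives $k = \widetilde{O}(\ln(1/\pi(E^\star)))$ whenever $\min_x \pi(x) \ge \exp(-\poly(n))$, the only regime of interest. Dividing gives $\tilde{\Delta}_{P}/|E^\star| = \Omega(1/\ln(1/\pi(E^\star)))$, with the implicit constants absorbing $\eta$, which is itself pinned to a constant by the constancy of $\gamma$. The main obstacle lies in this last step: the $\log\log(1/\min_x\pi(x))$ factor in the log-Sobolev-to-mixing conversion brings a polylogarithmic-in-$n$ overhead that is swept into the $\Omega$. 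One could shave this by applying the log-Sobolev inequality directly to $H - \mathbb{E}_\pi[H]$ in $L^2(\pi)$ to obtain a sharper exponential decay of $(P^k H)(x^\star) - \mathbb{E}_\pi[H]$, but the looser bound above already suffices for the stated $\Omega$-estimate.
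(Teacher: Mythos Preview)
Your approach is essentially the same as the paper's: iterate the one-step drift bound from an optimizer $x^\star$ for $k$ steps, compare $(P^kH)(x^\star)$ to $\mathbb{E}_\pi[H]$ via the mixing time, then use that $b^\star,\gamma$ constant forces $\omega=\Theta(1/\ln(1/\pi(E^\star)))$ and convert this into a mixing-time bound to conclude. Your treatment is in fact slightly more careful than the paper's in that you justify $\mathbb{E}_\pi[H]-E^\star=\Omega(|E^\star|)$ explicitly via the spectral density condition, whereas the paper absorbs this into the $\Omega(\cdot)$ without comment.
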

\begin{proof}
Let $n$ be a real parameter that parameterizes the space of feasible states $\mathcal{S}(n)$ with size $|\mathcal{S}(n)| = S(n)$. Let $M(n)$ denote the mixing time of a Markov chain $P$ with transition density $\pi$ on $\mathcal{S}(n)$ such that for any $t \ge M(n)$, $\TV(P^t \delta_x,  \pi) \le \frac{1}{100}$ for any $x \in \mathcal{S}(n)$.

Assume that $\tilde{\Delta}_P$ is a global upper bound, such that for all $x \in \mathcal{S}(n)$, $\mathbb{E}_{y \sim_P x}[H(y)] \le H(x) + \tilde{\Delta}_P$. It is easy to observe that for any random variable $X$ taking values in $\mathcal{S}(n)$, it holds that $\mathbb{E}_{Y \sim_P X}[H(Y)] \le \mathbb{E}_X [H(X)] + \tilde{\Delta}_P$. Now let $x_{\ast}$ be a global minimum of $H$ (with corresponding energy $E^\ast$) and consider taking $T = \lceil M(n) \rceil$ steps of $P$ starting from $x^\ast$, with the random state after $t \in [1,T]$ steps being denoted $X_t$. By induction, it is easy to see that $E_{X_T}[H(X_T)] \le E^\ast + \tilde{\Delta}_PT$. From the definition of mixing time however, it follows that $\mathbb{E}_{X_T}[H(X_T)] \ge \mathbb{E}_{\pi}[H(x)] - \frac{\lvert E^{\star}\rvert}{100}$. Denoting $\mathbb{E}_{\pi}[H(x)]$ by $\bar{E}$ it follows that $\tilde{\Delta}_P \ge \frac{\bar{E} - E^\ast}{T} - \frac{\lvert E^*\rvert}{100T} = \Omega\left(\frac{\lvert E^*\rvert}{M(n)}\right)$. So $\frac{\tilde{\Delta}_P}{\lvert E^{\star}\rvert} =  \Omega\left(\frac{1}{M(n)}\right)$.

If $b^*$ and $\gamma$ are constant, then the log-Sobolev constant must satisfy $\omega = \Omega\left(\frac{1}{\ln(1/\pi(E^*)}\right)$. It follows from standard results on Markov chains that $
\omega \leq \frac{1}{M(n)}$.
\end{proof}

Note that in \cite{dalzell2022mind}, the authors state an additional technical condition that is of course easy to satisfy, which is $\mathbb{E}_{\pi}[H(x)] = 0$.  The main reason for setting $\mathbb{E}_{\pi}[H(x)] = 0$ is to ensure that $E^{\star} < 0$ and for the ease of proving the tail bounds. In our setting, for arbitrary $\pi$, the expectation may need to be estimated if used as a shift. However, the shift is not necessary to run the algorithm if the $E^{\star} < 0 $ condition is already satisfied. The mean just appears as component of the runtime. Also, Corollaries \ref{cor:tail-herbst} and 
\ref{cor:tail-poinc} for the tail bounds do not assume this shift.

Lastly, we obtain the following self-evident corollary of Theorem \ref{thm:long_jump_cost_bound} when applied to Equation \eqref{eqn: runtime}
\begin{corollary}
\label{cor:total_runtime_bound}
Let $\Mcal = (\Xcal, P, \pi)$ be a reversible, aperiodic Markov chain. Let $\nu$ be a lower bound on either the log-Sobolev or Poincar\'e constant of $P$. Given positive parameters $\eta<1$, $b<1$, $\alpha < (1 - b)/2$. Suppose that
$(H, g_\eta)$ has the $\alpha$-subdepolarizing property, and that there exists an integer $\ell$ such that
\begin{align*}
&\max\left(\frac{4\ln(V)}{\nu}, \frac{3}{\alpha^2}\right) < \ell \leq \frac{\langle \psi_b|\sqrt{\pi}\rangle}{b\sqrt{\mathbb{P}_{\pi}( E \leq (1-\eta)E^*)}}.
\end{align*}
If $|\psi_b\rangle$ is the unique ground state of $H_b$ and the spectral gap of $H_b$ is lower bounded by $\frac{\nu}{2}$, then the runtime of the short-path algorithm is bounded by
\begin{align*}
\mathcal{O}\left(\textup{poly}(\log V)\cdot \nu^{-1}\left(\lvert \langle \psi_b | \sqrt{\pi}\rangle\rvert^{-1}+\pi(E^*)^{-\frac{1}{2}}e^{-\frac{b\eta}{2\alpha_{P}}}\right)\right).
\end{align*}
\end{corollary}
Hence if the short-jump can be performed in polynomial time, i.e. \begin{align*}
    [\min(\textup{Gap}(-D(P)), \textup{Gap}(H_b))]^{-1}\lvert \langle \psi_b | \sqrt{\pi}\rangle\rvert^{-1} = \mathcal{O}\left(\text{poly}(n)\right),
\end{align*}
then the runtime of the overall generalized short path algorithm is $\mathcal{O}^{\star}\left([\pi(E^*)^{-1}]^{\left(\frac{1}{2}-\frac{b\eta}{2\alpha_{P} \ln(1/\pi(E^*))}\right)}\right)$.

\section{Applications of Generalized Short-Path Framework}
\label{sec:applications-complete}

\subsection{Optimization with Fixed Hamming Weight: Transposition Walk}
\label{sec:fixed-hamming-weight}

The $k$-particle Bernoulli-Laplace (BL) diffusion or Transposition Walk on $n$ sites is a random walk on the space of Hamming weight $k$ bitstrings. For our case we will work with $\pm 1$ strings or ``spin configurations'' $x$ and define the Hamming weight $\lvert x\rvert$ as the number of $+1$'s. Formally, $\mathcal{X} = \{ x \in \{-1, 1\}^n : \lvert x\rvert = k\}$. A single step of BL consists of choosing, uniformly at random, a transposition that swaps some $x_j = 1$ with another $x_i = -1$.

There is a very natural quantum Hamiltonian on $n$-qubits that encodes the discriminant of the transposition walk:
$$
    D(P) = P =  \frac{1}{k(n-k)}\sum_{i<j
} \frac{X_{i} X_{j} + Y_{i} Y_{j}}{2},
$$
where $X_j, Y_j$ denote the Pauli operators applied to qubit $j$. This is commonly called the complete-graph $XY$ mixer. Note that there is equality between $D(P)$ and $P$ because the walk is symmetric. The ground state of $-D(P)$ is the uniform superposition over Hamming weight $k$ computational basis states, and thus encodes the stationary distribution of the transposition walk. Note that $|\sqrt{\pi}\rangle$ is just the Hamming weight $k$ Dicke state, which can be prepared efficiently \cite{B_rtschi_2019}.

We have the following log-Sobolev inequality for the transposition walk. %
\begin{theorem}[{\cite[Theorem 5]{lee1998logarithmic, salez2020sharp}}, Discrete-time]
    \label{thm:log-sobolev-swap-standard}
    Let $P$ be the transition matrix for $k$-particle Bernoulli-Laplace diffusion on $n$ sites and $\pi$ the stationary distribution. It then holds for any real valued function $\psi$ that %
    $$
       \mathcal{D}(\psi, \psi) \ge  \frac{n}{k(n-k)\tau_0\log\left(\frac{n}{\min(k, n-k)}\right)}\mathrm{Ent}(\psi^2),
    $$
    where $\tau_0$ is a universal constant.
\end{theorem}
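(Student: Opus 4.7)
The plan is to follow the Lee--Yau / Salez strategy, combining a martingale/conditioning decomposition of entropy with induction on the slice size, exploiting the permutation symmetry of the Bernoulli--Laplace chain. Without loss of generality assume $k \le n-k$, so that $\min(k, n-k) = k$. The normalization $\frac{n}{k(n-k)}$ is just the probability that a uniform random transposition actually produces a swap (and should appear naturally from the tensorization), so the content of the statement is the polylogarithmic scaling of $\tau_{LS}$.

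First, I would set up the decomposition. Pick a distinguished site (say site $n$) and write
\begin{equation*}
\mathrm{Ent}_\pi(\psi^2) \;=\; \mathbb{E}_\pi\!\left[\mathrm{Ent}(\psi^2 \mid x_n)\right] + \mathrm{Ent}_\pi\!\left(\mathbb{E}[\psi^2 \mid x_n]\right).
\end{equation*}
Conditioned on $x_n$, the function $\psi^2$ lives on a Hamming slice of size $n-1$ with weight either $k$ or $k-1$; averaging over the choice of distinguished site (using symmetry of $\pi$) produces the sub-entropies on all $\binom{n}{1}$ subslices. The outer term is an entropy on the two-point space $\{x_n = \pm 1\}$, weighted by $\mathrm{Pr}[x_n = +1] = k/n$, which is bounded by a two-point log-Sobolev inequality whose constant is a universal multiple of $\log(n/k)$.

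Second, I would match the sub-entropies to pieces of the Dirichlet form. The key combinatorial observation is that a random transposition $(i,j)$ either keeps $x_n$ fixed (contributing to a sub-slice Dirichlet form with weight either $k$ or $k-1$ on $n-1$ sites) or swaps site $n$ with some other site (contributing to the outer ``change of $x_n$'' Dirichlet form). A careful accounting, averaged over the choice of distinguished site, expresses $\mathcal{D}(\psi,\psi)$ as a convex combination of the sub-slice Dirichlet forms plus the outer one, with the right normalization $n/(k(n-k))$. Applying the induction hypothesis to each sub-slice and the two-point log-Sobolev inequality to the outer term gives a recursion of the form
\begin{equation*}
\tau_{LS}(n,k) \;\le\; \max\bigl(\tau_{LS}(n-1,k),\ \tau_{LS}(n-1,k-1)\bigr) + C,
\end{equation*}
for a universal $C$. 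Iterating this recursion from $(n,k)$ until $k$ is driven to $1$ (where the two-point inequality gives $\tau_{LS} = \Theta(\log n)$ directly) yields $\tau_{LS}(n,k) = O(\log(n/k))$ as claimed.

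The main obstacle will be the first step, namely engineering the conditioning so that the cross term (transpositions involving site $n$) is controlled by a genuine two-point log-Sobolev inequality and not an entropy on a larger alphabet. Because the slice is not a product space, naive tensorization loses the right constant; one needs either Caputo--Dai Pra--Posta style comparison to random transposition on $S_n$, or the refined two-point log-Sobolev bound of Diaconis--Saloff-Coste which gives constant $\Theta(\log(p/(1-p))/(p(1-p)))$ on a Bernoulli($p$) factor, so that the $\log(n/k)$ factor appears only once per level of the induction rather than compounding. After that is set up, the inductive step and the universal two-point estimate are routine, and the chain of inequalities $\delta \ge \omega_{\mathrm{MLS}} \ge \omega_{\mathrm{LS}}$ recalled earlier lets us compare with the spectral gap if needed for sanity checks.
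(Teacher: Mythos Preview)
The paper does not prove this theorem; it is quoted verbatim as a known result from Lee--Yau and Salez, and the authors simply use the conclusion as a black box to bound $b^\star$ in Lemma~\ref{lem:b_bound_transpo}. So there is no ``paper's proof'' to compare against.

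That said, your sketch contains a genuine gap. The conditioning/induction framework you describe is indeed the Lee--Yau strategy, but the recursion you arrive at,
\[
\tau_{LS}(n,k)\ \le\ \max\bigl(\tau_{LS}(n-1,k),\ \tau_{LS}(n-1,k-1)\bigr)\ +\ C,
\]
cannot produce the claimed $O(\log(n/k))$ bound. If you iterate it along the branch that decrements $k$, you reach $(n-k+1,1)$ after $k-1$ steps and pick up an additive $C(k-1)$; taking the other branch is even worse. Either way the output is linear in the number of steps, not logarithmic. The actual Lee--Yau argument does not add a universal constant at each level: the Dirichlet form splits so that only a $\Theta(1/n)$ fraction of the transpositions touch the distinguished site, and the recursion is of the schematic form
\[
\tau_{LS}(n,k)\ \le\ \tau_{LS}(n-1,k')\ +\ O\!\left(\frac{1}{n}\cdot\text{(two-point LSI cost)}\right),
\]
so that the increments form a harmonic-type sum telescoping to a logarithm. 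Getting the sharper $\log(n/\min(k,n-k))$ rather than $\log n$ requires the additional care in Salez's refinement (or the modified-LSI route via Theorem~\ref{thm:modified-LSI}), which your outline does not address. The ``main obstacle'' you flag at the end is real, but the more immediate problem is the arithmetic of the recursion itself.
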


This leads to the following bound on $b^*$ using the formula in Theorem \ref{thrm:MTG_RT}.
\begin{lemma}
\label{lem:b_bound_transpo} For all $k$, we have the following bound on $b$ for the transposition mixer:
\begin{align}
\label{eqn:b_transposition}
     b^*  =  \frac{2C\gamma}{3},
\end{align}
for some constant $C$.
\end{lemma}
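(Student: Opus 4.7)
The plan is to substitute the log--Sobolev constant from Theorem~\ref{thm:log-sobolev-swap-standard} into the $b^\star$ formula of Theorem~\ref{thrm:MTG_RT} and show that the quantity $\omega \ln(1/\pi(E^\star))$ simplifies to a constant independent of $n$ and $k$. Since the stationary distribution $\pi$ of the Bernoulli-Laplace walk is uniform on the Hamming-weight-$k$ slice, we have $\pi(E^\star) = |\Ecal^\star|/\binom{n}{k}$ where $\Ecal^\star$ is the ground space. In the regime of interest the ground space is at most polynomially large (otherwise sampling from $\pi$ already solves the problem efficiently), so $\ln(1/\pi(E^\star)) = \ln\binom{n}{k} - O(\log n)$, and it suffices to lower bound $\omega \ln\binom{n}{k}$ by a positive absolute constant.

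By symmetry under $k \leftrightarrow n-k$, assume without loss of generality $k \le n/2$, so that $\min(k,n-k) = k$ and $n-k \ge n/2$. Theorem~\ref{thm:log-sobolev-swap-standard} then gives
\begin{equation*}
\omega \;\ge\; \frac{n}{k(n-k)\,\tau_0 \log(n/k)} \;\ge\; \frac{1}{k\,\tau_0 \log(n/k)}.
\end{equation*}
Combining with the standard binomial estimate $\ln\binom{n}{k} = \Theta(k\ln(n/k))$ valid uniformly for $1 \le k \le n/2$, one obtains
\begin{equation*}
\omega \ln(1/\pi(E^\star)) \;=\; \Omega\!\left( \frac{k\ln(n/k)}{k \log(n/k)} \right) \;=\; \Omega(\ln 2),
\end{equation*}
since $\log$ and $\ln$ differ only by the factor $\ln 2$. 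The polynomial correction from degeneracy of the ground space is swallowed into the $\Omega(\cdot)$.

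Plugging this constant lower bound back into the formula $b^\star = \tfrac{2}{3}\gamma\omega\ln(1/\pi(E^\star))$ of Theorem~\ref{thrm:MTG_RT} yields $b^\star \ge \tfrac{2C\gamma}{3}$ for an absolute constant $C>0$, proving the claim. There is no serious obstacle in this argument: the only mild care is needed at the extremes ($k$ close to $1$ or $n-1$, where the $\Theta(k\ln(n/k))$ estimate must be invoked carefully, and the problem is in any case trivial); the core content is that the scaling of $\omega$ from the modified log-Sobolev constant of Bernoulli--Laplace diffusion exactly cancels the scaling of $\ln\binom{n}{k}$, which is precisely the reason the transposition walk is well-suited to Hamming-weight-constrained optimization within our framework.
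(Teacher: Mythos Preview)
Your proposal is correct and follows essentially the same approach as the paper: substitute the log-Sobolev constant from Theorem~\ref{thm:log-sobolev-swap-standard} into the $b^\star$ formula of Theorem~\ref{thrm:MTG_RT} and verify that $\omega\ln(1/\pi(E^\star))$ is an absolute constant via the estimate $\ln\binom{n}{k}=\Theta(k\ln(n/k))$. The paper's proof splits into the two cases $k=o(n)$ and $k=\Theta(n)$ and checks each separately, whereas your uniform treatment for $1\le k\le n/2$ (using $(n/k)^k\le\binom{n}{k}\le(en/k)^k$) is a little cleaner; conversely the paper does not discuss ground-space degeneracy, which you correctly note is harmless. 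One cosmetic slip: in your final sentence you refer to the ``modified log-Sobolev constant'' but the bound you use (and Theorem~\ref{thm:log-sobolev-swap-standard}) is for the ordinary log-Sobolev constant.
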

\begin{proof}
We have
\begin{align}
\label{eqn:log_sob_transpo}
\mathcal{D}(\psi, \psi) \geq \frac{n}{k(n-k)\tau_0\log\left(\frac{n}{\min(k, n-k)}\right)}\mathrm{Ent}(\psi^2).
\end{align}
Thus 
$$
    b < \frac{2n\log_2\binom{n}{k}\gamma}{3k(n-k)\tau_0\log_2\left(\frac{n}{\min(k, n-k)}\right)}.
$$
Using that for $k = o(n)$, $\log_2\binom{n}{k} = \Theta(k \log_2(n/k))$, and $\frac{n}{\min(k, n-k)} = \Theta(\log_2(n/k))$, we get
\begin{align}
     b  <  \frac{C2\gamma}{3},
\end{align}
for some constant $C$ to be determined, so $b$ is constant for $k = o(n)$.

For $k = \Theta(n)$, we have that $\log_2 \binom{n}{k} = \Theta(n)$, $\frac{n}{\min(k, n-k)} = \Theta(1)$, so $b^*$ is also constant.
\end{proof}

This leads to the following simple result that follows from applying Theorem \ref{thrm:MTG_RT}.
\begin{theorem}
\label{thm:bl_runtime_thm}
Let $\Mcal = (\Xcal, P, \pi)$ be the $k$-particle transposition walk on $n$ sites. Let $H: \pmone^n \rightarrow \R$ be a diagonal Hamiltonian with ground state energy $E^{\star}$. If  $\lVert H \rVert_{P} = \Ocal(1)$, and 
\begin{align*}
    \lvert E^{\star}\rvert = \Theta\left(\ln\binom{n}{k}\right),
\end{align*}
 then there exists a short path algorithm with runtime
$$
\Ostar\left(\binom{n}{k}^{\left(\frac{1}{2}-\frac{\eta(1-\eta)\lvert E^{\star}\rvert b}{2\ln\binom{n}{k}\Delta_P}\right)}\right).
$$
\end{theorem}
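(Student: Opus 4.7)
The plan is to specialize Theorem~\ref{thrm:MTG_RT} to the $k$-particle transposition walk and verify each hypothesis in turn. The walk is symmetric, hence reversible, so the general framework applies, and the stationary distribution $\pi$ is uniform on the Hamming slice, giving $\pi(E^{\star}) \geq \binom{n}{k}^{-1}$ (or exactly equal when the ground state is non-degenerate). The initial state $\ket{\sqrt{\pi}}$ is the Dicke state, which is efficiently preparable, and $D(P) = P$ is the complete-graph $XY$ mixer, which admits a polynomial-sized block-encoding; together these discharge Assumption~\ref{asm:input-assumptions}.

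Next, I would establish the spectral density and stability inputs. Theorem~\ref{thm:log-sobolev-swap-standard} yields a log-Sobolev constant $\omega = \Omega\!\left(\tfrac{n}{k(n-k)\log(n/\min(k,n-k))}\right)$; in particular $\omega^{-1} = \Theta\!\left(\ln\binom{n}{k}\right)$ whenever $k = \Theta(n)$ (and more generally up to the logarithmic factor quantified in Lemma~\ref{lem:b_bound_transpo}). Combined with the assumption $\|H\|_P = O(1)$, Corollary~\ref{cor:tail-herbst} yields the $\gamma$ spectral density condition with
\[
\gamma \;=\; \frac{\omega\bigl((1-\eta)E^{\star} - \mathbb{E}_\pi[H]\bigr)^2}{\|H\|_P \ln(1/\pi(E^{\star}))}.
\]
Using $|E^{\star}| = \Theta(\ln\binom{n}{k})$ and $\ln(1/\pi(E^{\star})) = O(\ln\binom{n}{k})$, the numerator contributes $\Theta\bigl((\ln\binom{n}{k})^2\bigr)$ while the denominator contributes $\Theta\bigl((\ln\binom{n}{k})^2\bigr)$ (since $\omega^{-1}\ln(1/\pi(E^{\star})) = \Theta(\ln\binom{n}{k}\cdot \ln\binom{n}{k})$ up to the slowly varying log factor), so $\gamma = \Theta(1)$.

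Given constant $\gamma$, Lemma~\ref{lem:b_bound_transpo} provides a positive constant $b^{\star} = \tfrac{2C\gamma}{3}$ for which the short-path condition of Theorem~\ref{thrm:b_log_sob} is satisfied, and the $\Delta_P$-stability hypothesis is an assumption of the theorem (or can be replaced by the upper bound $\sqrt{\|H\|_P}$ via Lemma on upper bounds on $\Delta(\eta)$, yielding a valid but possibly loose parameter). Plugging $\omega$, $\gamma$, $b^{\star}$ and $\pi(E^{\star})^{-1} \le \binom{n}{k}$ into the runtime bound of Theorem~\ref{thrm:MTG_RT} gives
\[
\mathcal{O}\!\left(\mathrm{poly}(n)\,\omega^{-1}\, \binom{n}{k}^{\tfrac{1}{2} - \tfrac{\eta(1-\eta)|E^{\star}|\,b}{2\ln(1/\pi(E^{\star}))\Delta_P}}\right),
\]
and absorbing the $\mathrm{poly}(n)\,\omega^{-1}$ factor into the $\mathcal{O}^{\ast}$ notation together with replacing $\ln(1/\pi(E^{\star}))$ by $\ln\binom{n}{k}$ yields the stated bound.

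The main obstacle I anticipate is the bookkeeping around the assumption $|E^{\star}| = \Theta(\ln\binom{n}{k})$: it must be strong enough to guarantee that $\gamma$ is bounded below by a constant (so that $b^{\star}$ is constant and Theorem~\ref{thrm:MTG_RT} delivers an exponent of the claimed form) while not being so strong that it contradicts $\|H\|_P = O(1)$ via the fast-mixing diameter argument at the end of Section~\ref{sec:framework}. This requires carefully tracking the ratio $|E^{\star}|/\bigl(\ln(1/\pi(E^{\star}))\Delta_P\bigr)$, which under the stated hypotheses is $\Theta(1)$, and verifying it appears correctly in the exponent; everything else reduces to plugging in the functional-inequality constants established above.
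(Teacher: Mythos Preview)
Your proposal is correct and follows essentially the same route as the paper: specialize Theorem~\ref{thrm:MTG_RT} to the transposition walk, use the log-Sobolev bound from Theorem~\ref{thm:log-sobolev-swap-standard} to get $\omega^{-1} = \Theta(\ln\binom{n}{k})$, combine with $\|H\|_P = O(1)$ and $|E^\star| = \Theta(\ln\binom{n}{k})$ to make $\gamma$ constant (hence $b^\star$ constant via Lemma~\ref{lem:b_bound_transpo}), and plug in. The paper's own proof is just a terse three-sentence version of exactly this argument; your additional remarks on Assumption~\ref{asm:input-assumptions} and on replacing $\Delta_P$ by $\sqrt{\|H\|_P}$ are accurate but not strictly needed.
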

\begin{proof}
For all $k$ we have $\omega = \Omega\left(\frac{1}{k\ln(n/k)}\right)$, and $k(\ln(n/k))\ln(1/\pi(E^*)) \asymp (\ln\binom{n}{k})^2$. Thus if $\lVert H \rVert_{P} = \Ocal(1)$ and  $\lvert E^{\star}\rvert = \Theta\left(\ln\binom{n}{k}\right)$, then $\gamma$ is constant so then $b^*$ is. We also have that $\frac{\lvert E^*\rvert}{\Delta_P} = \Ocal\left(\ln\binom{n}{k}\right)$, leading to the runtime presented.
\end{proof}

We apply the above result to a Hamming-weight constrained version of MaxCut over Erd\H{o}s-R\'enyi graphs, which we call \ref{e:MaxCut}. One well-known special case is Hamming weight $\frac{n}{2}$ called \ref{e:MaxBisection}. 

\subsubsection{Hamming weight Constrained MaxCut}

Consider a graph $G(\Ncal, \Ecal)$ with vertex set $\Ncal := [n]$ and edge set $\Ecal$. We assume $G$ is drawn from the Erd\H{o}s-R\'enyi ensemble $\mathcal{G}\left(n, \frac{p}{n-1}\right)$ for a constant $p$, i.e., each edge $e_{ij} \in \Ecal$ for $(i,j) \in [n] \times [n]$ is created with probability $\frac{p}{n-1}$ such that $G$ has an average degree of $p$. We are interested in solving the  \textit{Maximum b\\isection} problem:
\begin{equation}\label{e:MaxBisection} \tag{MaxBisection}
\mathcal{C}^{*}_{\frac{n}{2}} := \min_{ x \in \{-1, 1\}^{n}}\left\{ -\frac{1}{2}\sum_{ i < j} e_{ij} (1-x_ix_j) : \lvert x\rvert = \frac{n}{2} \right\},
\end{equation}
where $e_{ij}$ is a $\frac{p}{n-1}$ Bernoulli indicating whether the $(i,j)$ edge is present. 

For generality, we strive to present the results for an arbitrary Hamming weight constraint of size $k$ and specify $k=\frac{n}{2}$ where necessary. We call the case where $k$ can be arbitrary the \textit{MaxCut Hamming} problem:
\begin{equation}\label{e:MaxCut} \tag{MaxCut-Hamming}
\mathcal{C}^{*}_{k} := \min_{ x \in \{-1, 1\}^{n}} \left\{ -\frac{1}{2}\sum_{i < j} e_{ij} (1-x_ix_j) : \lvert x \rvert = k \right\}.
\end{equation}

The following result shows that $\lVert H \rVert_{P} = \Ocal(1)$. 
\begin{lemma}
\label{lem:maxcut-ham_pseudo-lip-bound}
For the \ref{e:MaxCut} Hamiltonian $H$, the pseudo Lipschitz constant $\lVert H\rVert_{P}$ under the transposiiton walk is $\mathcal{O}(1)$ with high probability over the graph.
\end{lemma}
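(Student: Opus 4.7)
The plan is a two-level calculation: (i) derive an exact expression for the change in $H$ under a single transposition, reducing the bound to a sum of squared degrees; (ii) invoke standard concentration of the degree sequence of a sparse Erd\H{o}s-R\'enyi graph.

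First, I would compute the change of the cost function under one step of the walk. If $y$ is obtained from $x$ by swapping $x_i = -1$ with $x_j = +1$, only edges with exactly one endpoint in $\{i,j\}$ contribute, and a direct expansion yields
$$H(y) - H(x) \;=\; \sum_{l \notin \{i,j\}} (e_{il} - e_{jl})\, x_l.$$
Using $(a-b)^2 \leq 2a^2 + 2b^2$ together with the trivial bounds $|\sum_l e_{il} x_l| \leq d_i$ and $|\sum_l e_{jl} x_l| \leq d_j$ (where $d_v$ denotes the degree of $v$ in $G$) gives the configuration-independent estimate $(H(y)-H(x))^2 \leq 2(d_i^2 + d_j^2)$.

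Next, since one step of the transposition walk selects a pair $(i,j)$ with $x_i=-1$, $x_j=+1$ uniformly among the $k(n-k)$ admissible pairs, summing and rearranging gives
$$\expP\!\left[(H(y)-H(x))^2\right] \;\leq\; \frac{2}{n-k}\sum_{i:\, x_i=-1} d_i^2 + \frac{2}{k}\sum_{j:\, x_j=+1} d_j^2 \;\leq\; \frac{2n}{k(n-k)}\sum_{v\in[n]} d_v^2.$$
The final bound is independent of $x$, so it is also an upper bound on $\lVert H\rVert_P$. In the regime of interest $k = \Theta(n)$ (which covers \ref{e:MaxBisection} and is the only regime in which the lemma is applied later), the prefactor is $\Theta(1/n)$, so it suffices to show $\sum_v d_v^2 = \mathcal{O}(n)$ w.h.p.\ over the graph.

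Finally, I would invoke a standard property of sparse Erd\H{o}s-R\'enyi graphs: writing
$$\sum_v d_v^2 \;=\; \sum_v d_v + \sum_v d_v(d_v-1) \;=\; 2\lvert\Ecal\rvert + 2 \cdot \#\{\text{cherries in } G\},$$
both terms are sums of Bernoulli indicators with expectation $\Theta(n)$ (using that $p$ is a fixed constant, whence $\mathbb{E}[d_v^2] \to p + p^2$). A second-moment/Chebyshev argument on the edge count and the number of paths of length $2$ yields $\sum_v d_v^2 = \Theta(n)$ w.h.p., completing the proof. The main obstacle is precisely this last step: the naive inequality $\sum_v d_v^2 \leq n\, d_{\max}^2$ is too lossy because $d_{\max} = \Theta(\log n / \log\log n)$ w.h.p.\ in $\mathcal{G}(n, p/(n-1))$, so one cannot argue vertex-by-vertex and must instead exploit concentration of the global cherry count. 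This is the only place where the random-graph model genuinely enters; the preceding computation is deterministic on any fixed graph.
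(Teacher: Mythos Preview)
Your argument is correct in the regime $k=\Theta(n)$ and takes a genuinely different route from the paper. The paper expands $\expP[(H(y)-H(x))^2]=\expP[H(y)^2]-H(x)\bigl(2\expP[H(y)]-H(x)\bigr)$ directly, splits the resulting four-index sum into five cases according to the sign pattern of $(x_i,x_j,x_r,x_s)$, and then argues that each of the resulting Bernoulli sums concentrates near its mean. Your approach instead bounds the one-step increment pointwise by $|H(y)-H(x)|\le d_i+d_j$ and collapses everything to the single, $x$-independent graph statistic $\sum_v d_v^2$, whose concentration follows from the edge/cherry counts. This is more elementary and also cleaner: it sidesteps the fact that the paper's five sums depend on the configuration $x$, so that a per-$x$ Chernoff bound would in principle require a union bound over $\binom{n}{k}$ choices of $x$ (the paper's argument can be repaired for $k=\Theta(n)$ by dropping the negative terms and crudely bounding the positive ones by $|\Ecal|$ or $|\Ecal|^2$, but this is not what is written). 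The tradeoff is that your degree bound discards the sign cancellations visible in the paper's decomposition; consequently your estimate $\frac{2n}{k(n-k)}\sum_v d_v^2$ degrades to $\Theta(n/k)$ when $k=o(n)$, whereas the paper's expression stays $O(1)$ at the level of means. You correctly observe that only the $k=\Theta(n)$ regime is used downstream, and indeed for very small $k$ (e.g.\ $k=1$, where $\lVert H\rVert_P=\frac{1}{n-1}\sum_u(d_u-d_v)^2$ can be as large as $d_{\max}^2$) the lemma as stated is not even true, so your restriction is the right one.
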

The proof of the above lemma is deffered to the appendix. Next we show the existance of a tail bound for \ref{e:MaxBisection}.
\begin{lemma}
Let $\Xcal = \{ x \in \{-1,1\}^n : | x | = \frac{n}{2}\}$ and $\Mcal = (\Xcal, P, \pi)$ be a reversible Markov chain. For \eqref{e:MaxBisection} on a graph $G \sim \mathcal{G}\left(n, \frac{p}{n-1}\right)$ and $D(P)$ being the transposition mixer, we have that
$$
\frac{4((1-\eta)((1-\eta) +\frac{p}{2})^2}{\tau_0} \lesssim \gamma.
$$
\end{lemma}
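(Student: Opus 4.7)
The statement is a direct instantiation of the general tail/spectral density bound in Corollary~\ref{cor:tail-herbst}, which gives
\begin{equation*}
\gamma \;=\; \frac{\omega\bigl((1-\eta)E^{\star} - \mathbb{E}_{\pi}[H]\bigr)^{2}}{\lVert H\rVert_{P}\,\ln(1/\pi(E^{\star}))}.
\end{equation*}
So the plan is to plug in the four ingredients (log-Sobolev constant $\omega$, pseudo-Lipschitz norm $\lVert H\rVert_{P}$, normalization $\ln(1/\pi(E^{\star}))$, and the energy gap $(1-\eta)E^{\star}-\mathbb{E}_{\pi}[H]$) one by one and simplify.

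First, I would specialize Theorem~\ref{thm:log-sobolev-swap-standard} to the $(n/2)$-particle transposition walk. Since $k(n-k)=n^{2}/4$ and $\log(n/\min(k,n-k))=O(1)$ when $k=n/2$, the log-Sobolev constant obeys $\omega \;\ge\; \tfrac{4}{n\,\tau_{0}}$ for some universal constant $\tau_{0}$. The pseudo-Lipschitz input, $\lVert H\rVert_{P}=O(1)$ w.h.p.\ over the graph, is exactly Lemma~\ref{lem:maxcut-ham_pseudo-lip-bound}. Finally, since $\pi$ is uniform on $\{x\in\pmone^{n}:|x|=n/2\}$ and $|\mathcal{X}|=\binom{n}{n/2}$, one has $\ln(1/\pi(E^{\star}))\le \ln\binom{n}{n/2}\lesssim n$. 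These three quantities combine into an overall prefactor of $\Theta(1/\tau_{0}n^{2})$ in front of $\bigl((1-\eta)E^{\star}-\mathbb{E}_{\pi}[H]\bigr)^{2}$.

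Next, I would compute $\mathbb{E}_{\pi}[H]$ and lower bound $|E^{\star}|$. Using the fixed-Hamming-weight identity $\mathbb{E}_{\pi}[x_i x_j]=-1/(n-1)$ for $i\neq j$, we get $\mathbb{E}_{\pi}\bigl[\tfrac{1}{2}(1-x_ix_j)\bigr]=\tfrac{n}{2(n-1)}$ per edge, so
\begin{equation*}
\mathbb{E}_{\pi}[H]\;=\;-\frac{n}{2(n-1)}\sum_{i<j}e_{ij}\;=\;-\frac{np}{4}\bigl(1+o(1)\bigr)
\end{equation*}
with high probability, by a Chernoff bound on the number of ER edges. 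For $|E^{\star}|$ I would use a simple lower bound on the size of the maximum bisection of $G\sim\mathcal{G}(n,p/(n-1))$ that is linear in $n$ with the appropriate $p$-dependent leading coefficient (either via a greedy/probabilistic argument showing that one can achieve $n((1-\eta)+p/2)$ cut edges on top of the random-bisection baseline, or by invoking the standard extremal-cut results for sparse ER graphs). Substituting into Corollary~\ref{cor:tail-herbst} and writing $(1-\eta)E^{\star}-\mathbb{E}_{\pi}[H]$ in terms of the per-vertex quantities yields the claimed lower bound on $\gamma$ up to absolute constants.

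The main obstacle is step three: pinning down the right lower bound on $|E^{\star}|$ so that the quantity $\bigl((1-\eta)|E^{\star}|/n-|\mathbb{E}_{\pi}[H]|/n\bigr)$ equals (up to constants) $(1-\eta)\bigl((1-\eta)+p/2\bigr)$. The other ingredients are essentially bookkeeping: once the log-Sobolev constant and pseudo-Lipschitz norm are in hand, the bound on $\gamma$ follows mechanically from Corollary~\ref{cor:tail-herbst}. The concentration of both the graph edge count and of $\mathbb{E}_{\pi}[H]$ is routine, so the only delicate input is the combinatorial/probabilistic estimate on the maximum bisection size, which is where the $p/2$ term in the final bound originates.
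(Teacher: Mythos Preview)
Your proposal is correct and follows essentially the same approach as the paper: both instantiate Corollary~\ref{cor:tail-herbst} with the log-Sobolev bound $\omega\ge 4/(\tau_0 n)$ from Theorem~\ref{thm:log-sobolev-swap-standard}, the pseudo-Lipschitz bound from Lemma~\ref{lem:maxcut-ham_pseudo-lip-bound}, the asymptotic $\ln(1/\pi(E^\star))\asymp n$, and the mean-energy computation of Lemma~\ref{lem:maxcut_mean_ener_bound}. The paper dispatches your ``main obstacle'' simply by invoking Lemma~\ref{lem:loose_maxcut_ham_bound} to get $|E^\star|=\Theta(n)$ (upper bound there, lower bound trivial since the max bisection beats a random one), so no delicate extremal-cut estimate is actually required.
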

\begin{proof}
Recall the expression for $\gamma$ in terms of the Herbst argument provided in Theorem~\ref{Herbst}:
$$
\gamma = \frac{\omega((1-\eta)E^{\star} - \mathbb{E}_{\pi}[H])^2}{\lVert \psi \rVert_{P}\ln(1/\pi(E^{\star}))}.
$$
From Lemma \ref{lem:maxcut-ham_pseudo-lip-bound} we have that $\lVert H \rVert_{P} = \Ocal(1)$. Applying Equation \eqref{eqn:log_sob_transpo} for $k=\Theta(n)$, the log-Sobolev constant $\omega$ satisfies:
\begin{align*}
\omega \geq  \frac{n}{k(n-k)\tau_0\log\left(\frac{n}{\min(k, n-k)}\right)} = \frac{4}{\tau_0 n}.
\end{align*}
From Lemma \ref{lem:loose_maxcut_ham_bound} and Lemma \ref{lem:maxcut_mean_ener_bound} we have
\begin{align*}
((1-\eta)E^{\star}-\mathbb{E}_{\pi}(H_c))^2 \asymp\left[(1-\eta)n +\frac{np}{2}\right]^2.
\end{align*}
The definition of $\pi$ combined with the asymptotics of the binomial coefficient for $k=\Theta(n)$ gives
\begin{align*}
\ln(1/\pi(E^{\star})) \asymp n.
\end{align*}
Putting everything together:
\begin{align*}
\frac{4((1-\eta)((1-\eta) +\frac{p}{2})^2}{\tau_0} &\lesssim \gamma.
\end{align*}
\end{proof}

Lastly, $\lvert E^{\star}\rvert = \Theta(n)$ with high probability from Lemma \ref{lem:loose_maxcut_ham_bound}. Since this is $\Theta\left(\ln\binom{n}{k}\right)$ for $k =\Theta(n)$ 
all of the conditions of Theorem \ref{thm:bl_runtime_thm} are met.

Unfortunately, the current analysis is insufficient to show this for $k=o(n)$. For example, using Lemma \ref{lem:constr_max_delta} we can take an upper bound of 
\begin{align}
    \Delta_{P} = \frac{\mathcal{C}^{*}_{k}(n-2)}{k(n-k)},
\end{align}
which gives that $\frac{\lvert E^*\rvert}{\Delta_P} = \Ocal(k)$. However, $\ln\binom{n}{k} = \Theta(k\ln(n/k))$ for $k = o(n)$. Assuming a tail bound, this leads to a speedup that is falling with $n$. Specifically, the speedup is falling with $\frac{1}{\ln(n)}$, which is reminiscent of the running time achieved by \cite{hastings2018shortPath}. We summarize the two cases in the following theorem.
\begin{theorem}
\label{thm:maxcut_runtime}
Let $\Mcal = (\Xcal, P, \pi)$ be the $k$-particle transposition walk on $n$ sites. Let $H: \pmone^n \rightarrow \R$ be a diagonal Hamiltonian encoding the \ref{e:MaxCut} cost function for a graph $G \sim \mathcal{G}\left(n, \frac{p}{n-1}\right)$. Then either of the following runtimes hold depending on $k \leq n$. 
\begin{itemize}
    \item If $k=\Theta(n)$, then there exists a short path algorithm with runtime
$$
\Ostar\left(\binom{n}{k}^{\frac{1}{2}-c}\right),
$$
for some constant $c$, and
    \item if $k = o(n)$, then there exists a short path algorithm with runtime
$$
\Ostar\left(\binom{n}{k}^{\frac{1}{2}-\frac{c}{\ln(n)}}\right),
$$
for some  constant $c$.
\end{itemize}
\end{theorem}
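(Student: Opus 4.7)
The plan is to assemble the pieces already collected in Section~\ref{sec:fixed-hamming-weight} and feed them into Theorem~\ref{thm:bl_runtime_thm} (which is itself an instantiation of the master runtime bound Theorem~\ref{thrm:MTG_RT}). In both regimes the framework demands four ingredients: (i) a log-Sobolev constant for the transposition walk, (ii) a constant lower bound on $b^{\star}$, (iii) a $\gamma$-spectral density condition for the starting distribution $\pi$, and (iv) control on the ratio $|E^{\star}|/\Delta_{P}$ so that the speedup exponent $\frac{\eta(1-\eta)|E^{\star}| b}{2\ln\binom{n}{k}\Delta_{P}}$ is either a positive constant or an explicit decaying function of $n$.

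For the $k=\Theta(n)$ case, I would proceed by simply invoking the ingredients already proved: Theorem~\ref{thm:log-sobolev-swap-standard} yields $\omega = \Omega(1/n)$, Lemma~\ref{lem:b_bound_transpo} gives a constant $b^{\star}$, Lemma~\ref{lem:maxcut-ham_pseudo-lip-bound} shows $\|H\|_{P}=\mathcal{O}(1)$ w.h.p., the spectral-density bound of the previous lemma yields a constant $\gamma$, and the w.h.p. estimate $|E^{\star}|=\Theta(n) = \Theta(\ln\binom{n}{k})$ from Lemma~\ref{lem:loose_maxcut_ham_bound} verifies the last hypothesis of Theorem~\ref{thm:bl_runtime_thm}. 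Combining these yields the claimed runtime $\Ostar(\binom{n}{k}^{1/2 - c})$ for some constant $c>0$ depending on $p$ and $\tau_0$. The only nontrivial bookkeeping is confirming that the probabilistic statements (Lipschitz norm, $|E^{\star}|$, mean energy) hold simultaneously w.h.p., which follows by a union bound.

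For $k = o(n)$, the strategy changes because the trivial upper bound on $\Delta_{P}$ is cruder: using Lemma~\ref{lem:constr_max_delta} we take $\Delta_{P}\le \mathcal{C}^{\star}_{k}(n-2)/(k(n-k))$, so $|E^{\star}|/\Delta_{P}=\mathcal{O}(k)$. However, $\ln\binom{n}{k}=\Theta(k\ln(n/k))$ in this regime, so the exponent becomes
\[
\frac{1}{2} - \frac{\eta(1-\eta)|E^{\star}|b}{2\ln\binom{n}{k}\,\Delta_{P}} \;=\; \frac{1}{2} - \Theta\!\left(\frac{1}{\ln(n/k)}\right).
\]
I would first reprove Lemma~\ref{lem:b_bound_transpo} in this regime (which is already done there, since the calculation splits $k=o(n)$ versus $k=\Theta(n)$), then combine with the assumed tail/spectral-density condition (via Corollary~\ref{cor:tail-herbst}, using the same pseudo-Lipschitz control as in the dense case) to get a constant $\gamma$, and finally plug the estimate on $|E^{\star}|/\Delta_{P}$ into the exponent formula of Theorem~\ref{thrm:MTG_RT}. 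Because $\ln(n/k)=\mathcal{O}(\ln n)$ uniformly in $k$, the exponent is at most $\tfrac{1}{2}-c/\ln n$, which is the stated bound.

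The main obstacle in the $k=o(n)$ case is precisely that the $\Delta_{P}$ bound is pessimistic: it scales like $|E^{\star}|/k$ rather than like $|E^{\star}|/\ln\binom{n}{k}$, which is the best one could hope for. This mismatch is exactly what forces the decaying $1/\ln(n)$ speedup and mirrors the phenomenon in Hastings~\cite{hastings2018shortPath}. A secondary, more technical concern is that the tail bound in the sparse regime relies on a pseudo-Lipschitz estimate analogous to Lemma~\ref{lem:maxcut-ham_pseudo-lip-bound}; one must check that the same random-graph argument still produces $\|H\|_{P}=\mathcal{O}(1)$ w.h.p.\ when $k=o(n)$, for which the degree concentration of $\mathcal{G}(n,p/(n-1))$ suffices.
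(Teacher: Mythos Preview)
Your proposal is correct and follows the same assembly-of-ingredients approach as the paper: for $k=\Theta(n)$ you feed Lemmas~\ref{lem:maxcut-ham_pseudo-lip-bound}, \ref{lem:loose_maxcut_ham_bound}, \ref{lem:b_bound_transpo} and the $\gamma$-bound into Theorem~\ref{thm:bl_runtime_thm}; for $k=o(n)$ you use the cruder $\Delta_P$ bound from Lemma~\ref{lem:constr_max_delta} and track the resulting $1/\ln(n/k)$ loss in the exponent, exactly as the paper does in the paragraph preceding the theorem.

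One point worth flagging: in the $k=o(n)$ regime the paper does \emph{not} derive the spectral-density condition from Herbst; it explicitly writes ``assuming a tail bound.'' Your parenthetical suggestion that Corollary~\ref{cor:tail-herbst} plus $\|H\|_P=\mathcal{O}(1)$ yields a constant $\gamma$ is too optimistic here: the Herbst formula gives $\gamma \asymp \omega\,|E^{\star}-\mathbb{E}_\pi H|^2 / \ln(1/\pi(E^\star))$, and with $\omega^{-1},\ \ln(1/\pi(E^\star))\asymp k\ln(n/k)$ you would need $|E^{\star}|=\Theta(k\ln(n/k))$, which Lemma~\ref{lem:loose_maxcut_ham_bound} only upper-bounds (it gives $o(k\log n)$, no matching lower bound). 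So for the $k=o(n)$ case, treat the tail bound as a standing hypothesis rather than a consequence of the pseudo-Lipschitz estimate.
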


\subsection{Glauber Dynamics}
\label{sec:glauber_dynamics}

\textit{Glauber dynamics} is a Markov chain algorithm designed to sample from the Gibbs distribution of a system, particularly in spin models like the Ising or hardcore model \cite{glauber1963time}. The Gibbs measure $\pi_{\beta}$ for a system with configuration space $\mathcal{X}$ and Hamiltonian $H$ is defined as 
\begin{equation*}
\label{eq:Gibbs}
   \pi_{\beta}(x)=  \frac{\exp(-\beta H(x))}{Z(\beta)}
\end{equation*}
where $x \in \mathcal{X}$ is a configuration, $\beta = \frac{1}{T}$ is the inverse temperature, and $Z(\beta)$ is the partition function
\begin{equation*}
    Z(\beta) = \sum_{x \in \mathcal{X}} \exp(-\beta H(x)).
\end{equation*}
Glauber dynamics generates a Markov chain with this measure as its stationary distribution by sequentially updating a single site (spin or vertex) according to marginal distribution. Specifically, the update proceeds as follows: (i) a site (vertex) $v$ is chosen uniformly at random; (ii) the states of all other sites $u\neq v$ remain unchanged; (iii) the new state of $v$ is sampled from the marginal distribution of $v$ conditioned on its neighbors. 

The efficiency of Gibbs sampling in this case is related to how fast the Glauber dynamics mixes to its stationary distribution. In fact, approximate Gibbs sampling is tightly connected to partition function estimation in terms of computational complexity \cite{vstefankovivc2009adaptive}, both of which are closely related to statistical phase transitions. For bounded-degree graphs, these transitions have been shown correspond to the \textit{uniqueness/non-uniqueness} threshold on an infinite $d$-regular tree which captures whether the root of the tree is affected by the leaves. Hence, this critical point is also called the \emph{tree-uniqueness threshold}. In the uniqueness regime, correlations decay rapidly, allowing efficient approximation of the partition function. However, beyond the non-uniqueness threshold, long range correlations emerge and no polynomial-time algorithm can approximate the partition function. However, the tree threshold only appears to indicate the hardness transition for worst-case instances. The phase transition point for random graphs appears to be beyond the tree-uniqueness threshold.

In particular, for the hardcore model with fugacity parameter $\lambda = e^{\beta}$, \cite{10.1145/1132516.1132538} presented a fully polynomial time approximation scheme ($\mathsf{FPTAS}$) for computing the partition function for graphs with maximum degree $d$ when $\lambda\leq (1-\xi)\lambda_c$ where $\lambda_c = \frac{(d-1)^{(d-1)}}{(d-2)^d}$ is the corresponding uniqueness threshold. Additionally, Glauber dynamics is known to mix in polynomial-time for any constant $\xi$ and $d$. On the other hand, \cite{sly2010computationaltransitionuniquenessthreshold} proved that there is no fully polynomial randomized approximation scheme ($\mathsf{FPRAS}$) to approximate the partition function for bounded-degree graphs when $\lambda>\lambda_c$ unless $\mathsf{NP}=\mathsf{RP}$ confirming the main conjecture of \cite{mossel2007hardnesssamplingindependentsets}.

For $\lambda$ at criticality, $\lambda = \lambda_c$, \cite{chen2025rapidmixinguniquenessthreshold} showed that Glauber dynamics mixes in polynomial time and hence partition function approximation is also polynomial time. However, approaching criticality $\lambda = (1-\xi)\lambda_c$, the modified log-Sobolev and standard log-Sobolev constants for Glauber dynamics appear to fall as $\mathcal{O}(\exp(-1/\xi))$ \cite{chen2023optimal}. Still, the Poincar\'e constant (hence the spectral gap) remains $\mathcal{O}(1/\text{poly}(n))$ at criticality.

 For random regular graphs, the situation is quite different. \cite{chen2025rapidmixingrandomregular} showed that Glauber dynamics mixes in polynomial time for $\lambda > \lambda_{c}$, going all the way up to $\lambda = \mathcal{O}(1/\sqrt{d})$. For the hardcore model, this was done by providing a Poincar\'e inequality rather than log-Sobolev inequality.

There is a similar situation for the antiferromagnetic Ising model over bounded-degree graphs. The critical inverse-temperature $\beta_c$, corresponding to the uniqueness/non-uniqueness phase transition on the infinite $d$-regular tree, is defined by $(d-1)\tanh (\beta_c) = 1$ \footnote{Note in some literature the criticality condition is expressed in terms of $e^{-2\beta_C}$ where the condition becomes $e^{-2\beta_C} \geq  \frac{d-2}{d}$.} \cite{chen2025rapidmixinguniquenessthreshold}. \cite{galanis2016} show that unless $\mathsf{NP}=\mathsf{RP}$  there is no $\mathsf{FPRAS}$ for approximating the partition function of the antiferromagnetic Ising model when $\beta > \beta_c$. Glauber dynamics mixes in polynomial-time for $(d-1)\tanh(\beta_c) \leq 1 - \delta$ \cite{chen2023optimal} for constant $\delta \in (0, 1)$. More recently, \cite{chen2025rapidmixinguniquenessthreshold} showed that Glauber dynamics also mixes efficiently for $(d-1)\tanh(\beta_c) \leq 1 + \frac{\alpha}{\sqrt{n}}$ for constant $\alpha > 0$. Hence for the antiferromagnetic Ising model, we can mix slightly beyond the threshold  with an edge that vanishes as $n \rightarrow \infty$. 

For the antiferromagnetic Ising model over \emph{random $d$-regular graphs} \cite{chen2025rapidmixingrandomregular}, Glauber dynamics mixes in total variation distance for \emph{constant $d$} when $(d-1)\tanh(\beta) \leq \frac{d-1}{8\sqrt{d-1}-1} \footnote{This is converted from the condition $e^{-2\beta} \geq 1- \frac{1}{4\sqrt{d-1}}$ in \cite[Corollary 5.5]{chen2025rapidmixingrandomregular}.}$ with high probability. This is close to a computational threshold, as it is known that with high probability no (stable \cite{alaoui2024samplingsherringtonkirkpatrickgibbsmeasure}) classical algorithm can efficiently mix in \emph{Wasserstein-2 distance} beyond $\beta_{\textup{DC}}$, where $(d-1)\tanh(\beta_{\textup{DC}}) = \frac{d-1}{\sqrt{d}}$ \cite{huang2024hardnesssamplingantiferromagneticising}, when $d$ is sufficiently large. This is due to a phenomenon known as \emph{disorder chaos} \cite{chatterjee2009disorderchaosmultiplevalleys}.

There is a well-known correspondence between the free-energy densities of the antiferromagnetic Ising model over random $d$-regular graphs, as $d \rightarrow \infty$, and the SK model \cite{dembo2017extremal}. Specifically, the SK model also suffers from disorder chaos and no stable classical algorithm can efficiently mix in Wasserstein-2  when the inverse-temperature exceeds $\beta > 1$ \cite[Theorem 2.5]{alaoui2024samplingsherringtonkirkpatrickgibbsmeasure}, with high probability. One can show that a properly normalized $\beta_{\textup{DC}}$ for the Ising case matches the SK point of disorder chaos, i.e.  $\lim_{d \rightarrow \infty}\sqrt{d} \cdot \beta_{\textup{DC}} = 1$. Classical stable algorithms have been shown to sample in Wasserstein-$2$ from the SK model for $\beta < 1$ \cite{alaoui2024samplingsherringtonkirkpatrickgibbsmeasure, celentano2022sudakovferniquepostampnewproof}. Our quantum algorithm provides a speedup for the regime of $\beta < \frac{1}{4}$ \cite{chen2022localizationschemesframeworkproving}, where efficient sampling in TV can be shown.

For convenience, we prove the following lemma that is useful when establishing $\|H\|_P = \Ocal(1)$.

\begin{lemma}
\label{lem:coordinate-Lipshitz}
Let $P$ be a Glauber dynamics chain on $\mathcal{X}$ and let $H: \Xcal \rightarrow \R$ be a diagonal Hamiltonian with ground state energy 
$$ E^{\star} \coloneqq \min_{x \in \Xcal} H(x).$$ 
If the following holds for all $x, x'$ such that $P(x, x')>0$: 
\begin{align}
   \lvert H(x') - H(x)\rvert \leq q,    
\end{align}
then $\|H\|_P \leq q^2$.
\end{lemma}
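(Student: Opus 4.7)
The plan is to unpack the definition of the $P$-pseudo Lipschitz norm and observe that it is a $P(x,\cdot)$-weighted average of the squared one-step cost differences, and then bound this average by the pointwise hypothesis. Concretely, by Definition \ref{defn:pseudo-Lipschitz} we have
\[
 \|H\|_P \;=\; \max_{x \in \Xcal} \expP\bigl[(H(x) - H(y))^2\bigr] \;=\; \max_{x \in \Xcal} \sum_{y:\,P(x,y)>0} P(x,y)\,(H(x) - H(y))^2,
\]
so only pairs $(x,y)$ with $P(x,y)>0$ ever contribute. For any such pair the hypothesis of the lemma provides a uniform constant $C>0$ with $|H(y)-H(x)| \leq C$.

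Applying this uniform bound inside the sum and then using that $P(x,\cdot)$ is a probability distribution ($\sum_y P(x,y) = 1$) yields
\[
 \sum_{y:\,P(x,y)>0} P(x,y)\,(H(x)-H(y))^2 \;\leq\; C^2 \sum_{y} P(x,y) \;=\; C^2,
\]
for every $x \in \Xcal$, and taking the maximum over $x$ gives $\|H\|_P \leq C^2 = \Ocal(1)$, completing the argument. There is essentially no obstacle here: the only thing to note is that, although single-site Glauber dynamics from a state $x$ on $n$ sites can have up to $\Theta(n)$ neighbors (one per site), the pseudo-Lipschitz norm is defined as an expectation rather than a sum, so the $n$ cancels against the $1/n$ uniform site-selection probability inherent in $P$, and the $\Ocal(1)$ bound is preserved.
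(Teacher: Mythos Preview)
Your proof is correct and follows essentially the same approach as the paper: both unpack the definition of $\|H\|_P$ as a $P$-weighted average of squared one-step differences, bound each contributing term by the uniform $\Ocal(1)$ hypothesis, and use $\sum_y P(x,y)=1$. Your write-up is in fact more careful than the paper's (you explicitly restrict the sum to $P(x,y)>0$ and note why the $\Theta(n)$ neighbor count is harmless), but the argument is the same.
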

\begin{proof}
The proof simply follows from the definition of $\|H\|_P$ as
    \begin{align*}
        \|H\|_P &= \max_x \sum_{x'}P(x, x')(H(x')-H(x))^2
        \leq \max_{x, x'}  \lvert H(x') - H(x)\rvert^2
        \leq q^2.
    \end{align*}
\end{proof}

\subsubsection{Bounding the Mixing time of Glauber Chains}

\label{subsec:glauber_mixing_tech}
In this section we review some of the known techniques for bounding the log-Sobolev and Poincar\'e constants of Glauber chains. These will be used to derive all of the results in the following subsections.

Consider the following definitions regarding an arbitrary measure $\pi$.
\begin{definition}[Approximate Tensorization of Entropy]
 We say that a distribution $\pi$ satisfies approximate tensorization of entropy with constant $C_1 > 0$ if for all $\pi$-measurable functions $f : \Xcal \rightarrow \mathbb{R}$:
 \begin{align*}
\Ent_{\pi}[f]&\leq C_1\sum_{v\in \Xcal} \Ent_{\pi_v}[f].
 \end{align*}
\end{definition}

\begin{definition}[$u$-Marginally Bounded Measure (Definition 1.9~\cite{chen2023optimal})]
We say that the stationary distribution $\pi$ on $\Xcal \subset \{-1, 1\}^{n}$ is $u$-marginally bounded: if  for every $\Lambda \subset \Xcal$ and partial assigment $\tau : \Lambda \rightarrow \{-1, 1\}$ that are not $\pi$-measure zero, it holds that for every $x \in \Xcal \setminus \Lambda$:
\begin{align*}
    \pi(\sigma_{v} = i | \sigma_{\Lambda} = \tau) \geq u,
\end{align*}
whenever the above conditional measure is nonzero.
\end{definition}

The above is used to provide alternative sufficient conditions to Theorem \ref{thm:modified-LSI}  from \cite{chen2023optimal} under which a modified log-Sobolev inequality can be promoted to a log-Sobolev inequality. We record this as the following fact.  
\begin{fact}
\label{sobolev-marginally-bounded}
Let $\Xcal$ be a set of size $n$ and µ be a distribution over $[q]^\Xcal$. If $\pi$ satisfies
the approximate tensorization of entropy with constant $C_1$ and $\pi$ is $u$-marginally bounded, then the Glauber dynamics
for $\pi$ satisfies the standard log-Sobolev inequality with constant $\omega =\frac{1-2u}{\log(1/u-1)}\frac{1}{C_1 n }$ when $u<\frac{1}{2}$, or $\omega = \frac{1}{2C_1 n}$ when $u=\frac{1}{2}$. 

\begin{proof}
Fix a configuration $x$ and consider a Markov chain $P_v$ that updates vertex $v$ according to marginal probability distribution $\pi_v =\pi(v|x_{\Xcal- \{v\}})$. Then, LS constant of this Markov chain $\rho_v$ is lower bounded by $\rho_v\geq \frac{1-2\pi_v^{\star}}{\log(1/\pi_v^{\star}-1)}$ when $\pi_v^{\star}<\frac{1}{2}$ or $\rho_v=\frac{1}{2}$ when $\pi_v^{\star}=\frac{1}{2}$ due to \cite[Theorem A.1]{10.1214/aoap/1034968224}. By the definition of the log-Sobolev constant, we have
\begin{align*}
\rho_v \Ent_{\pi_v}[f] &\leq \mathcal{E}_{P_v}(\sqrt{f}, \sqrt{f}) = \Var_{\pi_v}[\sqrt{f}].
\end{align*}
The transition matrix of Glauber dynamics can be written as $\frac{1}{n}\sum_{v\in \Xcal} \pi(v|\cdot)$. Therefore, the Dirichlet form for Glauber dynamics is
\begin{align*}
    \mathcal{D}(\sqrt{f}, \sqrt{f}) = \frac{1}{n} \sum_{v\in \Xcal} \Var_{\pi_v}[\sqrt{f}]. 
\end{align*}
Using the tensorization of entropy,
\begin{align*}
    \Ent_{\pi}[f]&\leq C_1\sum_{v\in \Xcal} \Ent_{\pi_v}[f] \leq C_1 \sum_{v\in \Xcal}\frac{1}{\rho_v} \Var_{\pi_v}[\sqrt{f}] \leq  C_1 n \max_{v \in \Xcal} \left({\frac{1}{\rho_v}}\right) \mathcal{D}(\sqrt{f}, \sqrt{f}).
\end{align*}
Combining the marginal boundedness property of $\pi$ (i.e., $\pi_v \geq u $ for all $v\in \Xcal$), with the monotonicity of the function $\frac{1-2y}{\log(1/y-1)}$ for $y\in[0, 1/2]$, it follows that $\omega\geq  \frac{1-2u}{\log(1/u-1)}\frac{1}{C_1 n}$ when $u<\frac{1}{2}$, and $\omega = \frac{1}{2 C_1 n} $ when $u=\frac{1}{2}$. %

\end{proof}
\end{fact}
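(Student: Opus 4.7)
The plan is to decompose the log-Sobolev (LS) inequality for Glauber dynamics into local, single-site contributions via the approximate tensorization hypothesis, and then invoke a sharp single-site LS bound driven by the worst-case conditional marginal probability $u$. First I would recall the classical fact (Diaconis--Saloff-Coste, extended to $q$-point spaces) that for any distribution $p$ on $[q]$ with $p_{\min}\geq u$, the resampling chain with kernel $P_p(x,y)=p(y)$ satisfies an LS inequality with constant at least $\rho(u):=\frac{1-2u}{\log(1/u-1)}$ when $u<\tfrac{1}{2}$ and $\tfrac{1}{2}$ when $u=\tfrac{1}{2}$. Since $y\mapsto\frac{1-2y}{\log(1/y-1)}$ is monotone nonincreasing on $(0,\tfrac{1}{2}]$, the $u$-marginal-boundedness hypothesis implies that for every $v\in V$ and every conditioning $x_{V\setminus\{v\}}$, the conditional marginal $\pi_v:=\pi(\,\cdot\,|x_{V\setminus\{v\}})$ produces a single-site resampling chain $P_v$ whose LS constant $\rho_v$ is uniformly at least $\rho(u)$.

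Next I would combine this local bound with the approximate tensorization hypothesis, which reads $\Ent_\pi[f]\leq C_1\sum_{v\in V}\mathbb{E}_\pi[\Ent_{\pi_v}[f]]$. Applied pointwise in the conditioning, the local LS inequality gives $\Ent_{\pi_v}[f]\leq \rho_v^{-1}\Var_{\pi_v}[\sqrt{f}]\leq \rho(u)^{-1}\Var_{\pi_v}[\sqrt{f}]$; taking expectations and summing yields $\Ent_\pi[f]\leq C_1\rho(u)^{-1}\sum_v\mathbb{E}_\pi[\Var_{\pi_v}[\sqrt{f}]]$. The final ingredient is the identity $\mathcal{D}(\sqrt{f},\sqrt{f})=\tfrac{1}{n}\sum_v\mathbb{E}_\pi[\Var_{\pi_v}[\sqrt{f}]]$ for the Glauber Dirichlet form, which follows because the Glauber transition matrix is the uniform average of the single-site resampling kernels $P_v$ and each $P_v$ has Dirichlet form equal to the $\pi_v$-variance of $\sqrt{f}$ in the $v$-th coordinate. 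Substituting produces $\Ent_\pi[f]\leq \frac{C_1 n}{\rho(u)}\mathcal{D}(\sqrt{f},\sqrt{f})$, which is exactly the claimed LS inequality with $\omega=\rho(u)/(C_1 n)$.

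The main subtlety I would watch for is the single-site LS bound itself. In its sharpest form the Diaconis--Saloff-Coste bound is stated for two-point spaces, so extending it cleanly to general $q$-ary marginals requires either directly citing a $q$-point generalization or a short comparison argument reducing to the worst-case two-point restriction. The boundary behavior at $u=\tfrac{1}{2}$ must also be treated separately, since $\log(1/u-1)$ vanishes there and the expression $\rho(u)$ must be interpreted as the limiting value $\tfrac{1}{2}$. Once the local LS constant is pinned down and one verifies that the monotonicity of $\rho$ on $(0,\tfrac{1}{2}]$ is applied in the correct direction, the remainder of the argument is essentially bookkeeping: linearity of expectation, the tensorization inequality, and the Glauber-Dirichlet-form averaging identity combine modularly to produce the stated bound.
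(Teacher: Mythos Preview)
Your proposal is correct and follows essentially the same route as the paper: cite the Diaconis--Saloff-Coste single-site LS bound, use approximate tensorization to reduce global entropy to a sum of local entropies, bound each local entropy by a local variance via the single-site LS constant, and identify the resulting sum with $n$ times the Glauber Dirichlet form. One small slip: the function $y\mapsto\frac{1-2y}{\log(1/y-1)}$ is monotone \emph{increasing} on $(0,\tfrac12]$ (it tends to $0$ as $y\to 0^+$ and to $\tfrac12$ as $y\to\tfrac12^-$), not nonincreasing as you wrote; your conclusion $\rho_v\ge\rho(u)$ is nonetheless the right one, and you correctly flag this direction as something to verify.
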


The approximate tensorization of entropy constant can be bounded if $\pi$ is marginally bounded and spectrally independent.
\begin{definition}[$\rho$-Spectral Independence]
    Given $\Lambda \subset V$ and partial assignment $\tau : \Lambda \rightarrow \{1, -1\}$ that is not $\pi$-measure zero. For every $(u, i), (v, j) \in V \times V$, $u \neq v$, if $\pi(\sigma_w = i | \sigma_{\Lambda} = \tau) > 0, w \in \{u, v\}$, then we define the corresponding element of the influence matrix to be 
    \begin{align*}
        \Psi_{\pi}^{\tau}((u,i), (v,j)) =\pi(\sigma_v = j | \sigma_v = i, \sigma_{\Lambda} = \tau) - \pi(\sigma_v = j | \sigma_{\Lambda} = \tau),
    \end{align*}
    and zero otherwise. We say that $\pi$ is $\eta$-spectrally independent if  for every $\Lambda, \tau$, the largest eigenvalue of $\Psi_{\pi}^{\tau}$ is bounded by $\rho$.
\end{definition}

\begin{theorem}[Entropy Factorization, Theorem 2.9 in \cite{chen2023optimal}]
\label{thm:entropy_fact}
    Let $d \geq 3$ be an integer and $b, \rho > 0$ be reals. Suppose that
$G = (\Ncal, \Ecal)$ is an n-vertex graph of maximum degree at most $d$ and $\mu$ is a totally connected Gibbs distribution of some spin system on $G$. If $\mu$ is both $u$-marginally
bounded and $\rho$-spectrally independent and $n\geq \frac{24 d }{u^2}(\frac{4\rho}{u^2}+1)$, then $\mu$ satisfies the approximate tensorization of entropy with constant
$$
    C_1 = \frac{18 \log(1/u)}{u^4}\left(\frac{24 d}{u^2} \right)^{\frac{4\rho}{u^2}+1}.
$$
\end{theorem}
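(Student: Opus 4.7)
The plan is to prove the entropy tensorization via a block-dynamics argument coupled with the conversion of spectral independence into an entropic contraction. The high-level strategy has three movements: (i) set up a block dynamics on $G$ whose stationary distribution is $\mu$ and whose blocks are small enough that spectral independence can be exploited locally; (ii) upgrade $\eta$-spectral independence together with $u$-marginal boundedness into an entropy factorization for this block dynamics; and (iii) compare the block dynamics to the single-site resampling in order to recover approximate tensorization of entropy for $\mu$ itself. The numerical constant $C_1$ in the theorem should then follow by tracking the explicit losses in each step.

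For step (i) I would choose blocks consisting of the $r$-neighborhoods in $G$ with $r = \Theta(\eta/u^2)$. Because $G$ has maximum degree at most $d$, each block has size at most $(d)^{r+1}$, which is of order $(24d/u^2)^{4\eta/u^2 + 1}$; this is precisely why the hypothesis $n \geq \frac{24d}{u^2}(\frac{4\eta}{u^2}+1)$ appears, so that the graph actually accommodates such blocks. For step (ii) I would use the key fact that spectral independence is preserved under conditioning on the complement of a block (this is the standard ``local-to-global'' mechanism). On each block, spectral independence at all scales together with marginal boundedness yields entropic independence by the framework of Anari--Jain--Koehler--Pham--Vuong; this is where the factor $\frac{\log(1/u)}{u^4}$ enters, since it quantifies the multiplicative price of converting an $L^2$ bound (variance contraction) into an $L \log L$ bound (entropy contraction) in the presence of $u$-bounded marginals.

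For step (iii), once the block dynamics satisfies $\mathcal{D}_{\mathrm{block}}(f,\log f) \geq c \, \mathrm{Ent}_\mu(f)$, one compares Dirichlet forms: a block resampling of size $\ell$ can be simulated by a sequence of $\ell$ single-site updates, with the comparison cost bounded using marginal boundedness again. Multiplying the block entropy constant by the block size gives the approximate tensorization constant for single-site Glauber dynamics, which after consolidating factors yields the stated $C_1 = \frac{18\log(1/u)}{u^4}\left(\frac{24d}{u^2}\right)^{4\eta/u^2 + 1}$.

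The main obstacle, as should be clear, is step (ii): spectral independence is a bound on the top eigenvalue of an influence/correlation matrix, which by itself only gives a Poincaré inequality, whereas what is needed is an entropy (modified log-Sobolev type) bound. Closing this gap with the explicit constants required by the theorem statement, rather than with an arbitrary combinatorial blowup, is the delicate part; it relies on a conditional entropy decomposition that is valid only in the marginally bounded regime, and the exponent $\frac{4\eta}{u^2}+1$ essentially records the number of recursive conditioning steps that this decomposition performs before the block is fully resampled.
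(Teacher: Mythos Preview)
The paper does not prove this statement: it is quoted verbatim as Theorem~2.9 from Chen, Liu, and Vigoda~\cite{chen2023optimal} and used as a black box to conclude that the log-Sobolev constant for Glauber dynamics on bounded-degree graphs is $\Omega(n^{-1})$. There is therefore no ``paper's own proof'' against which to compare your attempt.

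That said, since you have sketched a route, a brief remark on how it lines up with the actual argument in~\cite{chen2023optimal}. Your three-movement plan (block dynamics, upgrade spectral to entropic contraction via marginal boundedness, then compare to single-site updates) is the correct skeleton. The main inaccuracy is in step~(i): the blocks in~\cite{chen2023optimal} are not $r$-neighborhoods in the graph metric. They use a \emph{uniform} block dynamics that resamples a uniformly random subset of $\ell = \Theta\bigl(\tfrac{d}{u^2}(\tfrac{\eta}{u^2}+1)\bigr)$ vertices, and the hypothesis $n \ge \tfrac{24d}{u^2}(\tfrac{4\eta}{u^2}+1)$ is there so that $\ell \le n$, not so that graph-balls of a certain radius fit. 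The exponent $\tfrac{4\eta}{u^2}+1$ arises from an inductive local-to-global entropy comparison across pinnings (an entropic analogue of the Alev--Lau theorem), not from a block-size count via $d^{r}$. Your reading of where the $\log(1/u)/u^4$ factor comes from is essentially right: it is the cost of converting variance contraction into entropy contraction under $u$-bounded marginals.
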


 Analogously to entropy, we have a similar notion of approximation tensorization of variance.

\begin{definition}[Approximate Tensorization of Variance]
 We say that a distribution $\pi$ satisfies approximate tensorization of variance with constant $C_1 > 0$ if for all $\pi$-measurable functions $f : \Xcal \rightarrow \mathbb{R}$:
 \begin{align*}
\Var_{\pi}[f]&\leq C_2\sum_{v\in \Xcal} \Var_{\pi_v}[f].
 \end{align*}
\end{definition}

Approximate tensorization of variance leads to a lower bound on the spectral gap of Glauber dynamics.
\begin{fact}[Lemma 2.1 (1) from \cite {chen2025rapidmixinguniquenessthreshold}]If $\pi$ satisfies $C_2$ approximate tensorization of variance, then the spectral gap of Glauber dynamics is lower bounded by $\frac{1}{C_2n}$.
\end{fact}

\subsubsection{Hardcore Model -- Maximum Independent Set Problem}
\label{sec:MIS-glauber}

Given a graph $ G=G(\Ncal, \Ecal)$, an \emph{independent set} is a subset of vertices in $\Ncal$ such that no two vertices are connected by an edge in $\Ecal$. A \emph{maximal independent set} (MIS) is the largest independent set of $G$. Namely, we solve the optimization problem
\begin{equation}\label{e:MIS} \tag{Maximum Independent Set}
\mathcal{C}^{*}_{G=(\Ncal,\Ecal)} := \min_{ x \in \{0, 1\}^{|\Ncal|}}\left\{ - \sum_{i \in \Ncal} x_i  : x_i + x_j \le 1~\forall (i,j) \in \Ecal  \right\}
\end{equation}
Equivalently, for $|\Ncal| = n$, we can denote an independent set by a configuration $x \in \mathcal{X}\subseteq \{0, 1\}^n$ such that if an edge $e =(i,j)\in \Ecal$ then, $x_i x_j =0$. We denote the size of the maximum independent set by $|x^{\star}|$. Then, finding the maximum independent set is equivalent to finding a ground state of Hamiltonian $H: \mathcal{X}\to [1, n]$,
\begin{equation*}
    H(x) = -\sum_{i=1}^n x_i.
\end{equation*}
The Hamiltonian $H$ is defined on the constrained space $\mathcal{X}$ and therefore we need a constrained walk to explore the state space. For this purpose, we use the Glauber dynamics defined as
\begin{equation}
    P_{\lambda}(x,x') = \begin{cases}
        0 &  \text{if }|x - x'|>1  \\
        \frac{1}{n}\frac{\lambda}{\lambda+1} & \text{if } |x - x'|=1 \text{ and } x\subseteq x'\\
         \frac{1}{n}\frac{1}{\lambda+1} & \text{if } |x- x'|=1 \text{ and } x' \subseteq x\\
          1- \sum_{x''\neq x }P_{\lambda}(x,x'') & \text{if } x = x'.
    \end{cases}
\end{equation}
This model is also referred as \textit{hardcore model} in statistical physics and we'll use the same terminology. Note that Glauber dynamics initialized at an independent set can only move between independent sets. Furthermore, it converges to its stationary distribution 
\[
\pi_{\lambda}(x) = \frac{\lambda^{|x|}}{Z}.
\]
The parameter $\lambda$ is also called \textit{fugacity} and one can recover the Gibbs form in \ref{eq:Gibbs} by setting $e^{\beta} = \lambda$. In principle, one can find the maximum independent set by setting $\lambda$ sufficiently high so that distribution $\pi_{\lambda}$ concentrates around the global minimum of $H$. Unfortunately, this approach results in exponential mixing time in the worst case due to uniqueness/non-uniqueness phase transitions. Alternatively, one can draw exponentially many samples from $\pi_{\lambda}$ at $\lambda \leq \lambda_c$ as Glauber dynamics mixes in polynomial time up to and including the critical threshold \cite{chen2023optimal, chen2025rapidmixinguniquenessthreshold}. As the current quantum techniques can only quadratically improve the run time of the first approach, we use the second apprach. More specifically, we consider Glauber chain $P_{\lambda}$ at $\lambda \leq \lambda_c$ so that we can prepare $\ket{\sqrt{\pi_{\lambda}}}$ efficiently. Next, we consider the short path Hamiltonian $H_b: \mathcal{X}\rightarrow \mathbb{R}$,
\[
H_b = -D(P_{\lambda}) + b g_{\eta}\left(\frac{H}{|E^{\star}|} \right),
\]
where $D$ is the discriminant matrix as usual and $E^{\star} = \min_{x} H(x) = -|x^{\star}|$. In accordance with the generalized short path framework, the algorithm starts from $\ket{\sqrt{\pi_\lambda}}$ and jumps to the ground state of $H_b$ for $b>0$. We describe how we can prepare the block-encoding of $D$ for Glauber dynamics and also prepare $\pi_{\lambda}$ in Appendix \ref{sec:block-encoding}. 

The following lemma establishes a bound on  $\|H\|_P$ for Glauber dynamics.
\begin{lemma}
\label{lem:mis-pseudo-lip-glauber}
    Let $P$ be a Glauber dynamics chain on the Hardcore model with fugacity parameter $\lambda$ on a graph $G(\mathcal{N}, \Ecal)$. Then, $\|H\|_P = \Ocal(1)$.
\end{lemma}
\begin{proof}
As Glauber dynamics flips one spin at a time, for all $x, x' \in \mathcal{X}$ such that $P(x, x')>0$ we have
 \[
    |H(x)-H(x')|\leq 1.
\]
Then, by Lemma \ref{lem:coordinate-Lipshitz}, $\|H\|_P = \Ocal(1)$.
\end{proof}

We have the following bound on the LS constant using the techniques from Section \ref{subsec:glauber_mixing_tech}.
\begin{corollary}[Log-Sobolev Constant For Constant-Fraction From Criticality]
\label{cor:log_sob_constant_frac_critical}
Let $G = (\Ncal, \Ecal)$ be any graph with constant-bounded degree $d$, $n$ vertices, and $\lambda_c$ the corresponding tree-uniqueness threshold for the hardcore model. Under the conditions of Theorem \ref{thm:entropy_fact}, for any constant $\xi \in (0, 1)$ and $\lambda \leq (1-\xi)\lambda_c$, the log-Sobolev constant of Glauber dynamics satisfies $\omega = \Omega(1/n)$.
\end{corollary}
\begin{proof}
In  \cite{chen2021rapidmixingglauberdynamics}, as discussed in \cite[Proof of Theorem 1.1]{chen2023optimal}, it was shown that for any $\xi \in (0, 1)$, $\lambda \leq (1-\xi)\lambda_c$ and that the spectral independence is $\rho = \mathcal{O}(1/\xi)$. Also, \cite[Proof of Theorem 1.1]{chen2023optimal} showed that $u$ is lower bounded by a constant in $\xi$ and $\lambda$.  Hence for graphs of constant maximum degree $d$, Theorem \ref{thm:entropy_fact} and Fact \ref{sobolev-marginally-bounded} imply that the Glauber dynamics at fugacity $\lambda$ has a log-Sobolev constant that is $\Omega(1/n)$ for bounded-degree graphs. 
\end{proof}

Having showed that Glauber dynamics chain satisfies desired log-Sobolev constant and bounded $\|H\|_P$, we present the final run time.

\begin{theorem}[Constant Fraction From Criticality]
    \label{thm:mis_constant_frac_crit}
    Let $G = (\mathcal{N},\mathcal{E})$ be any graph with a constant maximum degree $d$ and $\lambda_c$ the corresponding tree-unqiueness threshold for $d$. Let $P$ be a Glauber dynamics chain on a graph $G$ at fugacity parameter $\lambda \leq (1-\xi)\lambda_c$ for any constant $\xi \in (0, 1)$ and stationary distribution $\pi_{\lambda}$. Then, there exists a short path algorithm that finds the maximum independent set in $G$ with running time 
    \[
    \Ostar\left([\pi_{\lambda}(E^{\star})^{-1}]^{\frac{1}{2}-c} \right),
    \]
    where $c>0$ is a constant.
\end{theorem}
\begin{proof}
    It is evident that for sparse graphs ($d = \Ocal(1)$), the size of the maximum independent set is $\Theta(n)$. Since $d$ is constant in $n$, the problem still can not be solved in $\poly(n)$ time and requires $2^{\Ocal(n)}$ samples for hard instances. Hence, $\log(1/\pi_{\lambda}(E^{\star})) = \Theta(n)$ .  The log-Sobolev constant from Corollary \ref{cor:log_sob_constant_frac_critical} satisfies $\omega^{-1} = \Ocal(n)$ for the Glauber dynamics on bounded degree graphs for any constant-fraction below the critical fugacity $\lambda_c$. Also Lemma \ref{lem:mis-pseudo-lip-glauber} implies a constant $P$-psuedo Lipschitz norm. Therefore, the conditions of Corollary \ref{cor:super_grover_run} hold. Hence, we obtain a super-quadratic speedup over sampling from $\pi_{\lambda}$.
\end{proof}

The above result only applies for a constant fraction from the critical threshold $\lambda_c$, i.e. $\xi$ is constant. As mentioned earlier, Glauber dynamics is known to mix in polynomial time when $\xi = 0$. However, it appears that the MLS and LS vanish at this threshold, while  the spectral gap remains polynomially small. This comes from showing approximate tensorization of variance (Section \ref{subsec:glauber_mixing_tech}).

\begin{lemma}[Inverse-polynomial Spectral Gap at Criticality]
\label{lem:spec_gap_hard_core_at_crit}
    Let $G = (\Ncal, \Ecal)$ be a graph with constant maximum degree, $n$ vertices, and corresponding tree-unqiuness threshold for the $d$ and the hardcore model $\lambda_c$. The spectral gap of Glauber dynamics at fugacity $\lambda_c$ satisfies $\delta = \Omega(n^{-1-{4e\left(1 + \frac{1}{d-2}\right)}})$.
\end{lemma}
\begin{proof}
Follows from comments near the end of the \cite[Proof of Theorem 1.1]{chen2025rapidmixinguniquenessthreshold}. 
\end{proof}

In this case, we can apply Theorem \ref{thrm:MTG_RT_poincare} and obtain an asymptotically falling super quadratic speedup at criticality.
\begin{theorem}[Bounded Degree Graphs At Criticality]
    \label{thm:mis_at_criticality_runtime}
    Let $P$ be a Glauber dynamics chain on a graph $G(\mathcal{N},\mathcal{E})$ with a bounded maximum degree $d = \mathcal{O}(1)$ at fugacity parameter $\lambda_c$ and stationary distribution $\pi_{\lambda_c}$.  Also suppose that $\gamma$-spectral density is satisfied for $\gamma = \Theta(1)$. Then, there exists a short path algorithm that finds the maximum independent set in $G$ with running time 
    \[
    \Ostar\left([\pi_{\lambda_c}(E^{\star})^{-1}]^{\frac{1}{2}-c(n)} \right),
    \]
    where $c(n) = \Omega(1/n^{1+4e\left(1 + \frac{1}{d-2}\right)})$.
\end{theorem}
\begin{proof}
    Follows from  Theorem \ref{thrm:MTG_RT_poincare}, Lemma \ref{lem:spec_gap_hard_core_at_crit} and Lemma \ref{lem:mis-pseudo-lip-glauber}. 
\end{proof}

As mentioned earlier, for random regular graphs, the Glauber dynamics mixes beyond $\lambda_c$ with high probability. This appears to hold all the way up to  $\lambda= \Ocal(1/\sqrt{d})$ \cite{chen2025rapidmixingrandomregular}. Specifically, we have the following lemma extracted from \cite{chen2025rapidmixingrandomregular}.

\begin{lemma}[Spectral Gap For Random Regular Beyond Criticality -- Theorem 5.2 (2)\cite{chen2025rapidmixingrandomregular}]
\label{lem:spec_gap_beyond_crit_reg}
    Let $G\sim \mathcal{G}_{\textup{RG}}(n, d)$ be a random $d$-regular graph for constant $d$. Consider the fugacity $\lambda = \frac{1-\xi}{2\sqrt{d-1}-1}$. Then with high probability $\pi_{\lambda}$ has a spectral gap lower bounded by $\delta = \Omega(\xi/n)$.
\end{lemma}

However, it appears that a lower bound was not provided fro the MLS or LS constants, i.e. \cite[Theorem 5.2]{chen2025rapidmixingrandomregular} becomes vacuous for the hardcore model. We again  apply Theorem \ref{thrm:MTG_RT_poincare} and obtain a result like Theorem \ref{thm:mis_at_criticality_runtime}, providing an asymptotically falling super-quadratic speedup beyond the uniqueness threshold for random regular graphs.

\begin{theorem}[Random Regular Graphs Beyond Criticality]
\label{thm:mis_beyond-criticality_runtime}
    Let $P$ be a Glauber dynamics chain on a random $d$ regular-graph $G\sim \mathcal{G}_{\textup{RG}}(n, d)$, with constant $d$ and fugacity parameter $\lambda \leq \frac{1-\xi}{2\sqrt{d-1}-1}$, $\xi \in (0, 1)$, and stationary distribution $\pi_{\lambda}$. Also suppose that $\gamma$-spectral density is satisfied for $\gamma = \Theta(1)$. Then with high probability over the choice of graph, there exists a short path algorithm that finds the maximum independent set in $G$ with running time 
    \[
    \Ostar\left([\pi_{\lambda}(E^{\star})^{-1}]^{\frac{1}{2}-c(n)} \right),
    \]
    where $c(n) = \Omega(\xi n^{-1})$.
\end{theorem}
\begin{proof}
    Follows from  Theorem \ref{thrm:MTG_RT_poincare}, Lemma \ref{lem:spec_gap_beyond_crit_reg} and Lemma \ref{lem:mis-pseudo-lip-glauber}. 
\end{proof}

Even with the above results, it is still not clear how Markov chain search compares to the best classical algorithms for MIS problem. As we consider a Markov chain that mixes to Gibbs distribution with fugacity $\lambda$, sampling from the Gibbs distribution may favor the smaller independent sets when $\lambda <1$. However, even if this is the case, the stationary distributions\ only has support on the constrained space (i.e. on the valid independent sets) rather than on the entire $2^n$ possible configurations. Therefore, we might still expect to prove that the success probability is larger than brute force search for certain graph models. To this end, the following proposition lower bounds the probability of finding the optimum by sampling from Glauber dynamics when restricting to \emph{random regular graphs}.

\begin{proposition}     
\label{prop:comparison-to-brute-force}
Let $\pi_{\lambda}$ be the stationary distribution of Glauber dynamics on a random $d$-regular graph $G\sim \mathcal{G}_{\textup{RG}}(n, d)$. Choose  $x^{\star}$ to be a particular maximum independent set in $G$. %
Then we have
$$
    \pi_\lambda(x^{\star}) \geq  2^{-\kappa n},
$$
 with $\kappa =  -\frac{2\log(\lambda)\log d  }{d} + \frac{1}{2d}+ \frac{\log(1+\lambda)}{2}$.
\end{proposition}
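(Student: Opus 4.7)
The plan is to lower bound $\pi_\lambda(\mathcal{I}^\star)$ by the weight that $\pi_\lambda$ assigns to the single configuration $x^\star$, namely $\pi_\lambda(x^\star) = \lambda^{|x^\star|}/Z_G(\lambda)$, and then to separately bound the numerator from below (via the independence number of a random regular graph) and the denominator from above (via a structural partition-function inequality). Once both estimates are in hand, the claim follows by taking $\log_2$ and matching the three summands appearing in the stated $\kappa$.

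For the numerator, I would invoke a standard lower bound on the independence number of a random $d$-regular graph. For $d \geq d_0$ with $d_0$ sufficiently large, it is known that $|x^\star| \geq \tfrac{2\log d_0}{d_0}\, n\, (1-o(1))$ with high probability, following the Frieze--Suen asymptotic (sharpened by Bayati--Gamarnik--Tetali, and matching the known interpolation results for random regular graphs). I would cite this rather than reprove it. Raising $\lambda$ to the $|x^\star|$-th power and taking $\log_2$ contributes the first summand $\tfrac{2\log\lambda\,\log d_0}{d_0}\,n$ of $\kappa$.

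For the denominator, I would apply the Kahn--Zhao inequality: for every $d$-regular graph $G$ on $n$ vertices and every $\lambda>0$,
\[
Z_G(\lambda) \;\leq\; Z_{K_{d,d}}(\lambda)^{n/(2d)} \;=\; \bigl(2(1+\lambda)^{d}-1\bigr)^{n/(2d)}.
\]
Using $2(1+\lambda)^{d}-1 \leq 2(1+\lambda)^{d}$ together with $d \geq d_0$ then yields
\[
\log_2 Z_G(\lambda) \;\leq\; \frac{n}{2 d_0} \;+\; \frac{n\log_2(1+\lambda)}{2},
\]
which supplies the remaining two terms of $\kappa$ with the appropriate signs. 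Combining the numerator and denominator bounds, specializing to $\lambda=\lambda_c$, and using $\pi_\lambda(\mathcal{I}^\star) \geq \pi_\lambda(x^\star)$ gives the stated inequality.

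The main obstacle is the first step: producing a lower bound on $|x^\star|$ of the required form, uniformly valid for all $d \geq d_0$. The asymptotic rate $\alpha_d \sim 2\log d/d$ is \emph{decreasing} in $d$, so some care is needed when converting the bound to one indexed by $d_0$. The cleanest route is to state the independence-number estimate at the specific degree $d = d_0$ and argue that the overall lower bound on $\pi_\lambda(\mathcal{I}^\star)$ obtained by combining it with Kahn--Zhao at $\lambda_c(d)$ is monotone-favorable in $d$ (so that $d = d_0$ is the worst case). The Kahn--Zhao step is a black-box invocation, and the final combination is elementary arithmetic.
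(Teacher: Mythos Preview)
Your denominator step (Kahn--Zhao) matches the paper exactly, but the numerator step has the inequality pointed the wrong way. You propose to lower bound $|x^\star|$ and then claim that $\lambda^{|x^\star|}$ is lower bounded by $\lambda^{(2\log d_0/d_0)n}$. This only works when $\lambda>1$, but in the regime that matters here $\lambda=\lambda_c(d)=(d-1)^{d-1}/(d-2)^d$ satisfies $\lambda_c<1$ already for $d\ge 6$ and in fact $\lambda_c\sim e/d\to 0$. Since $\log\lambda<0$, the map $k\mapsto \lambda^k$ is decreasing, so lower bounding $|x^\star|$ gives an \emph{upper} bound on $\lambda^{|x^\star|}$, not a lower bound. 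This is why the paper invokes the \emph{upper} bound $|x^\star|\le \tfrac{2\log d}{d}\,n$ for random $d$-regular graphs (Bollob\'as), which is exactly what is needed to get $\lambda^{|x^\star|}\ge \lambda^{(2n\log d)/d}$ when $\lambda<1$.

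Your own worry about monotonicity in $d$ (``the asymptotic rate $\alpha_d\sim 2\log d/d$ is decreasing in $d$'') is in fact a symptom of this: because $|x^\star|\approx \tfrac{2\log d}{d}\,n \le \tfrac{2\log d_0}{d_0}\,n$ for $d\ge d_0$, the inequality $|x^\star|\ge \tfrac{2\log d_0}{d_0}\,n$ cannot hold uniformly anyway, while the reverse inequality does and is precisely what the argument requires. Swapping ``lower bound'' for ``upper bound'' on $|x^\star|$ fixes the proof; the rest of your outline is correct and coincides with the paper's.
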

\begin{proof}
The probability of $x^{\star}$ in $\pi_{\lambda}$ is given by
$$
  \pi_{\lambda}(x^{\star}) = \frac{\lambda_c ^{|x^{\star}|}}{\sum_{x\in \mathcal{X}} \lambda ^{|x|} }.
$$
We first bound the denominator. To do that, we invoke \cite[Theorem 2]{ZHAO_2009}, which asserts that 
$$
     Z(\lambda)=\sum_{x\in\mathcal{X}  } \lambda ^{|x|} \leq (2 (1+\lambda )^d -1)^{\frac{n}{2d}},
$$
for any $d$-regular graph.

For the numerator, we need to bound $|x^{\star}|$. A known upper bound for the size of the maximum independent set is given by $2\log(d)n/d$ \cite{73f24c1e-86b4-384b-98bc-a8e418257ab8}. Combining these, we have
$$
    \pi(x^{\star}) \geq \frac{\lambda ^{(2 \frac{n}{d}\log d)}}{(2 (1+\lambda )^d -1)^{\frac{n}{2d}}}
    \geq  \frac{\lambda ^{(2 \frac{n}{d}\log d)}}{(2 (1+\lambda )^d )^{\frac{n}{2d}}} 
    = 2^{n(\frac{2\log(\lambda)\log d  }{d} - \frac{1}{2d}- \frac{\log(1+\lambda)}{2} )}.
$$
\end{proof}

Note that for any constant $\xi \in (0, 1)$, $\lambda = (1-\xi)\lambda_c$ and $\lambda_c = \frac{(d-1)^{(d-1)}}{(d-2)^d}$, we have that $\kappa < 1$ for sufficiently large degree $d > d_0$ for some $d_0$.
Hence the runtime is smaller than $2^n$, which is the runtime of brute force search over unconstrained space. Therefore, Markov chain search for this problem might be significantly faster than brute force search mainly due to fact that Glauber dynamics stays in the constrained space.

\begin{remark}
    The parameter $\kappa$ from Proposition \ref{prop:comparison-to-brute-force} is a decreasing function of the degree. For sufficiently high degree, the runtime $2^{\kappa n}$ the is in fact faster than the best known generic algorithm for Maximum Independent Set (runtime of $\tilde{\mathcal{O}}(1.1996^n)$, due to Xiao and Nagamuchi~\cite{xiao2017exact}). 
\end{remark}

\subsubsection{Antiferromagnetic Ising Model}
\label{sec:ising-glauber}

Consider the 2-spin Ising model on a graph $G=(\Ncal, \mathcal{E})$ defined via the Hamiltonian  
\begin{align*}
    H(x) = \sum_{(i,j) \in \Ecal} J_{ij}x_ix_j+\sum_{j}h_jx_{j},
\end{align*}
where the entries of $J$ are interaction coefficients and $h$ defines an external field.

We will consider the antiferromagnetic case $J_{ij} < 0$ with no external-field. Specifically, let $J_{ij} = -1$ for all $(i,j) \in \Ecal$ and $h = \mathbf{0}$. This model corresponds to Gibbs sampling with weights $\pi(x) \propto \exp(-\beta H(x))$, and a Gibbs sample can be prepared by using Glauber dynamics similar to the hardcore model. The only change is the transition probabilities, which can be computed by the marginal distribution $\pi(x_i^{t+1}|x^t_{\Ncal \setminus \{i\}})$. %

We consider the following short path Hamiltonian
\[
H_b(x) = -D(P_{\beta}(x)) +b g_{\eta}\left(\frac{H}{|E^{\star}|}\right),
\]
where $P_{\beta}$ is the Glauber dynamics transition matrix for inverse-temperature $\beta$. Similar to the setting of the MIS problem, a block-encoding of $D(P_{\beta})$ can be prepared efficiently.

First we have the following simple result bounding the psuedo-Lipschitz constant.
\begin{lemma}
\label{lem:ising-pseudo-lip-glauber}
    Let $P$ be a Glauber dynamics chain on the antiferromagnetic Ising model with inverse-temperature  $\beta$ on a graph $G(\mathcal{N}, \Ecal)$ with degree at most $d$. Then, $\|H\|_P = \Ocal(d)$.
\end{lemma}
\begin{proof}
As Glauber dynamics flips one spin at a time, for all $x, x' \in \mathcal{X}$ such that $P(x, x')>0$ we have
 \[
    |H(x)-H(x')|\leq 2d.
\]
Then, by Lemma \ref{lem:coordinate-Lipshitz}, $\|H\|_P = \Ocal(d)$.
\end{proof}

\begin{corollary}[Log-Sobolev Constant at Constant-Fraction away From Criticality]
\label{cor:ising_log_sob_constant_frac_critical}
Let $G = (\Ncal, \Ecal)$ be any graph with constant-bounded degree $d$, $n$ vertices, and $\beta_c$, $\beta$ satisfies $(d-1)\tanh(\beta) \leq 1$ the corresponding tree-uniquness threshold for the antiferromagnetic Ising Model. If $\beta$ satisfies $(d-1)\tanh(\beta) \leq 1 - \xi$  for any constant $\xi \in (0, 1)$, then under the conditions of Theorem \ref{thm:entropy_fact}, the log-Sobolev constant of Glauber dynamics satisfies $\omega = \Omega(1/n)$.
\end{corollary}
\begin{proof}
From \cite{chen2023optimal} and \cite[Lemma 3.24]{chen2025rapidmixinguniquenessthreshold} for the specified $\beta$ range $\eta = \mathcal{O}(1)$, when $d$ and $\xi$ are constants. Hence Theorem \ref{thm:entropy_fact} implies approximate entropy factorization with $C_1 = \mathcal{O}(1)$. \cite[Proof of Theorem 1.1]{chen2023optimal} gives that $\pi_{\beta}$ satisfies the bounded marginals condition when $d = \mathcal{O}(1)$. Hence, MLSI can be promoted to LSI when using approximate tensorization of entropy. Hence the LS constant is lower bounded by $\Omega(n^{-1})$.
\end{proof}

\begin{theorem}[Constant Fraction away From Criticality]
\label{thm:ising_constant_frac_crit}
    Let $P$ be a Glauber dynamics chain for the antiferromagnetic Ising model over an arbitrary graph $G=(\mathcal{N},\mathcal{E})$  with constant maximum degree $d$.  Suppose that for constant $\xi \in (0, 1)$, the inverse temperature parameter $\beta$ satisfies $(d-1)\tanh(\beta) \leq 1 - \xi$ and that the stationary distribution is $\pi_{\beta}$. If $\lvert E^{\star} \rvert = \Theta(n)$, then there exists a short path algorithm that finds the ground state of the Ising model Hamiltonian running time bounded by
    \[
    \Ostar \left([\pi_{\beta}(E^{\star})^{-1}]^{ \frac{1}{2}-c} \right),
    \]
    where $c > 0$ is a constant.
\end{theorem}
\begin{proof}
We have that $\omega = \Omega(n^{-1})$ for specified $\beta$ range from Corollary \ref{cor:ising_log_sob_constant_frac_critical}. Furthermore, $\log(1/\pi_{\beta}(E^{\star})) = \Theta(n)$ due to hardness of the problem. From Lemma \ref{lem:ising-pseudo-lip-glauber} $\lVert H \rVert_{P}= \Ocal(d) = \Ocal(1)$. The rest follows by plugging in the parameters to Theorem \ref{thrm:MTG_RT}.
\end{proof}

We also have the following result for at and slightly beyond criticality.

\begin{corollary}[Log-Sobolev Constant at Criticality]
\label{cor:ising_log_sob_at_critical}
Let $G = (\Ncal, \Ecal)$ be any graph with constant-bounded degree $d$, $n$ vertices, and $\beta$ satisfies $(d-1)\tanh(\beta) \leq 1 + o_n(1)$. Then under the conditions of Theorem \ref{thm:entropy_fact}, the log-Sobolev constant of Glauber dynamics satisfies $\omega = \Omega(1/n^{2+1/d})$.
\end{corollary}
\begin{proof}
\cite[Proof of Theorem 1.1]{chen2023optimal} gives that $\pi_{\beta}$ satisfies the bounded marginals condition when $d = \mathcal{O}(1)$. Hence, MLSI can be promoted to LSI when using approximate tensorization of entropy. From \cite[Proof of Theorem 3.23]{chen2025rapidmixinguniquenessthreshold} we have
that Glauber dynamics for the above specified range of $\beta$ satisfies approximate tensorization of entropy with $C_1 = \mathcal{O}(n^{1 + \frac{2}{d-2}})$. Hence the LS constant is lower bounded by $\Omega(n^{-2-1/d})$.
\end{proof}

\begin{theorem}[Bounded Degree Graphs at Criticality]
\label{thm:ising_at_critical_runtime}
    Let $P$ be a Glauber dynamics chain for the antiferromagnetic Ising model over an arbitrary graph $G(\mathcal{N},\mathcal{E})$  with maximum constant degree $d$.  Suppose the inverse temperature parameter $\beta$ satisfies $(d-1)\tanh(\beta) \leq 1 + o_{n}(1)$ and that the stationary distribution is $\pi_{\beta}$. If $\lvert E^{\star}\rvert = \Theta(n)$ and $\gamma$-spectral density is satisfied for $\gamma = \Theta(1)$, then there exists a short path algorithm that finds the ground state of the Ising model Hamiltonian running time bounded by
    \[
    \Ostar \left([\pi_{\beta}(E^{\star})^{-1}]^{ \frac{1}{2}-c(n)} \right),
    \]
    where $c(n) = \Omega(n^{-1-1/d})$.
\end{theorem}
\begin{proof}
From Corollary \ref{cor:ising_log_sob_at_critical}, we have a LS constant lower bounded by $\Omega(n^{-2 - \frac{2}{d-2}})$. The rest of the proof follows the same as Theorem \ref{thm:ising_constant_frac_crit}.
\end{proof}

We also have a super-quadratic speedup for the Ising model over random-regular graphs. However, unlike for the hardcore model, the speedup is an asymptotic super-quadratic speedup and does not fall with $n$. We start with a proposiiton bounding the size of $\lvert E^{\star}\rvert$ over random regular graphs.

\begin{proposition}
    \label{prop:Ising-ground-state-energy}
    The optimum energy of Ising Model Hamiltonian $H$ on a random regular graph $G\sim \mathcal{G} (n, d)$ satisfies $|E^{\star}| = \Theta(n)$ with high probability.  
\end{proposition}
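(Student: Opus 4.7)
The plan is to obtain matching upper and lower bounds of order $n$ on $|E^{\star}|$, where the lower bound relies on known results about the MaxCut of random regular graphs.

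First I would establish the trivial upper bound. Since $G$ is $d$-regular with $d = O(1)$, we have $|\mathcal{E}| = dn/2$, and each term $x_i x_j$ in $H$ is bounded by $1$ in absolute value. Therefore $|H(x)| \leq |\mathcal{E}| = dn/2$ for every $x \in \{-1,+1\}^n$, giving $|E^{\star}| = O(n)$.

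For the matching lower bound, I would translate the ground state energy into a MaxCut statement. Letting $\mathrm{cut}(x)$ denote the number of edges $(i,j) \in \mathcal{E}$ with $x_i \neq x_j$, a direct calculation gives
\begin{equation*}
H(x) = |\mathcal{E}| - 2\,\mathrm{cut}(x),
\end{equation*}
so that $|E^{\star}| = 2\,\mathrm{MaxCut}(G) - |\mathcal{E}|$. Hence $|E^{\star}| = \Omega(n)$ is equivalent to showing that the MaxCut of $G$ exceeds $|\mathcal{E}|/2 = dn/4$ by a constant fraction of $n$ with high probability over $G \sim \mathcal{G}(n,d)$. This is a standard fact about random regular graphs: for any fixed $d \geq 3$, Dembo--Montanari--Sen showed that $\mathrm{MaxCut}(G)/n$ converges in probability to a constant $P(d)$ strictly greater than $d/4$. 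Intuitively, a uniform random $\pm 1$ assignment already cuts $dn/4$ edges in expectation, and because random regular graphs are locally tree-like (so ``essentially bipartite'' on short scales), local corrections improve each of a constant fraction of the vertices, yielding a linear-in-$n$ improvement over the random baseline. Plugging this bound in gives $|E^{\star}| = 2n(P(d) - d/4) + o(n) = \Omega(n)$ w.h.p.

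The main obstacle, such as it is, lies in the lower bound: the trivial Edwards--Erdős improvement of MaxCut over $|\mathcal{E}|/2$ is only $O(\sqrt{n})$, so one genuinely needs to exploit the random-regular structure. The cleanest route is to invoke the Dembo--Montanari--Sen theorem as above; a more self-contained alternative would be to exhibit an explicit randomized configuration (for instance, the output of a constant number of Glauber sweeps from a random start, analyzed via the local weak limit of $G$ to an infinite $d$-regular tree) whose expected Hamiltonian is $-cn$ for some $c>0$, and then conclude via $E^{\star} \leq \mathbb{E}[H(x)]$.
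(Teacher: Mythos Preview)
Your proposal is correct. The upper bound is immediate, and for the lower bound you correctly rewrite $|E^{\star}| = 2\,\mathrm{MaxCut}(G) - |\mathcal{E}|$ and invoke the Dembo--Montanari--Sen result that $\mathrm{MaxCut}(G)/n \to P(d) > d/4$ for random $d$-regular graphs, which gives the $\Omega(n)$ gap you need. Your remark that the Edwards--Erd\H{o}s bound alone would only yield $O(\sqrt{n})$ is a nice sanity check that the random-regular structure is genuinely being used.

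The paper takes a slightly different route: instead of MaxCut, it relates $E^{\star}$ to the \emph{minimum bisection width} via the identity $|\mathrm{BW}| = (s + E_{\mathrm{gs}})/2$ (citing Zdeborov\'a) and then cites known $\Theta(n)$ scaling of both $s$ and $|\mathrm{BW}|$ on sparse random regular graphs. Your MaxCut formulation is arguably more direct for the antiferromagnetic Ising model, since the ground state literally \emph{is} the MaxCut configuration, and the DMS result gives exactly the quantitative separation from $|\mathcal{E}|/2$ that is needed; the paper's bisection-width argument is terser and leans on the same circle of results but with one extra translation step. Either way, both proofs ultimately reduce to the same known structural fact about extremal cuts in random regular graphs.
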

\begin{proof}
Let $s$ denote the number of edges in the graph. The ground state of $H$ can be related to minimum bisection width \cite{Zdeborov__2010} denoted by $|\textrm{BW}|$ as follows
$$
    |\textrm{BW}| = \frac{s+E_{\textrm{gs}}}{2}.
$$
Using this equality, $E_{\textrm{gs}}  = s - 2|BW|$. Next, we consider random regular graphs. For sparse random regular graphs $s = \Theta(n)$ and $|\textrm{BW}| = \Theta(n)$ (See \cite{DIAZ2007120, cojaoghlan2020isingantiferromagnetmaxcut}).
\end{proof}

\begin{lemma}[Log-Sobolev Beyond Criticality For Random Regular]
\label{lem:ising_log_sob_regular}
     Let $G\sim \mathcal{G} (n, d)$ be a random $d$-regular graph for constant $d$.  Suppose the inverse temperature parameter $\beta$ satisfies $(d-1)\tanh(\beta) \leq \frac{d-1}{8\sqrt{d-1}-1}$. If $d = \mathcal{O}(1)$, then with high probability over the choice of graph, the log-Sobolev constant of Glauber dynamics for the antiferromagnetic Ising model satisfies $\omega = \Omega(1/n)$.
\end{lemma}
\begin{proof}
\cite[Proof of Theorem 1.1]{chen2023optimal} gives that $\pi_{\beta}$ satisfies the bounded marginals condition when $d = \mathcal{O}(1)$. Hence, MLSI can be promoted to LSI when using approximate tensorization of entropy.  From \cite[Theorem 5.2]{chen2025rapidmixingrandomregular} we have MLS for $(d-1)\tanh(\beta_c) \leq \frac{d-1}{8\sqrt{d-1}-1}$ scaling as $\Omega(\frac{e^{-\sqrt{d}}}{n})$ for constant $d$.
\end{proof}

\begin{theorem}[Beyond Criticality for Random Regular Graphs]
\label{thm:ising_beyond_crit_runtime}
    Let $P$ be a Glauber dynamics chain for the antiferromagnetic Ising model over a randomly chosen $d$-regular graph $G\sim \mathcal{G}_{\textup{RG}}(n, d)$.  Suppose the inverse temperature parameter $\beta$ satisfies $(d-1)\tanh(\beta) \leq \frac{d-1}{8\sqrt{d-1}-1}$ and that the stationary distribution is $\pi_{\beta}$. If $d = \mathcal{O}(1)$, then with high probability over the choice of graph there exists a short path algorithm that finds the ground state of the Ising model Hamiltonian running time bounded by
    \[
    \Ostar \left([\pi_{\beta}(E^{\star})^{-1}]^{ \frac{1}{2}-c} \right),
    \]
    where $c > 0$ is a constant.
\end{theorem}
\begin{proof}
From Lemma \ref{lem:ising_log_sob_regular}, we have with high probability over the graph $\omega = \Omega(n^{-1})$. Furthermore, $\log(1/\pi_{\beta}(E^{\star})) = \Theta(n)$ due to hardness of the problem. Again, $\lVert H\rVert_{P} = \Ocal(d) =\Ocal(1)$. Then the result follows from Corollary \ref{cor:super_grover_run} applied to the case of Theorem \ref{thrm:MTG_RT}.
\end{proof}

As mentioned earlier the inverse temperature bound, call it $\beta_T$, in Theorem \ref{thm:ising_beyond_crit_runtime} is very close to a known computational threshold, called the disorder chaos threshold. Specifically, it is known that with high probability, there exists a problem size $n_0$ such that no stable classical algorithm can mix in Wasserstein-2 distance when the inverse temperature exceeds $\frac{1}{\sqrt{d}}$ and $n \geq n_0$ \cite{huang2024hardnesssamplingantiferromagneticising}. The quantum algorithm is improving upon a Gibbs sampler that mixes in TV to a point near this threshold. For a bounded domain, TV mixing implies Wasserstein-2 mixing. Unfortunately, the current provable edge that quantum has over the Gibbs sampler at the current known threshold $\beta_T < \frac{1}{\sqrt{d}}$ falls faster with $d$ than $\frac{1}{\sqrt{d}} - \beta_T$,  implying it is currently unclear if quantum's edge is enough to breach the Gibbs sampling threshold.

The following result rigorously shows that with high probability Markov chain search outperforms brute-force search for random regular graphs.
\begin{proposition}
\label{prop:ising_improve_over_brute_force}
    For the antiferromagnetic Ising model over random $d$-regular graphs, for any $\beta > 0$, the probability of observing the ground state under the Gibbs measure is at least $2^{-\kappa n}$. With high probability in $n$, $\kappa > -1$.
\end{proposition}
\begin{proof}
    The replica symmetric bound for antiferromagnetic Ising over random regular graphs \cite[Theorem 1.1]{cojaoghlan2020isingantiferromagnetmaxcut} gives that
\begin{align*}
        \frac{\ln(Z(\beta))}{n} \leq \ln(2) + \frac{d}{2}\ln\left(\frac{1+e^{-\beta}}{2}\right) + \epsilon(n),
    \end{align*}
with high probability in $n$. Hence
\begin{align*}
    \frac{\log(\pi(E^{\star}))}{n} \geq \frac{\beta \lvert E^{\star}\rvert}{n} - \frac{d}{2
    \ln(2)}\ln\left(\frac{1+e^{-\beta}}{2}\right) - 1 + \epsilon(n).
\end{align*}
If we apply Proposition \ref{prop:Ising-ground-state-energy}, then the above is  $> -1$ for $\beta > 0, d \geq 0$ and sufficiently large $n$.
\end{proof}

\subsubsection{Sherrington-Kirkpatrick Model}
Consider the Sherrington-Kirkpatrick Hamiltonian
\begin{align}
    H(x) = \frac{1}{\sqrt{n}}\sum_{i < j \leq n}^n g_{ij} x_i x_j,
\end{align}
where the interaction coefficients $g_{ij}$ are i.i.d. standard Gaussian random variables. We can Gibbs sample from the following distribution,
\begin{align*}
    \pi(x) \propto \exp(-\beta H(x))
\end{align*}
using Glauber dynamics. The following provides a bound on $\Delta_{P}$ for Glauber over SK.
\begin{lemma}
\label{lem:delta-sk}
Let $P$ be a Glauber dynamics chain for the SK model on a graph $G(\mathcal{N},\Ecal)$. Then, $$\Delta_P = \Ocal(1).$$ 
\end{lemma}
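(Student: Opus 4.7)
The plan is to reduce bounding $\Delta_P$ to bounding the pseudo-Lipschitz norm $\lVert H\rVert_P$, using the Jensen-type upper bound $\Delta_P(\eta)\le\sqrt{\lVert H\rVert_P}$ stated earlier. So it suffices to show that $\lVert H\rVert_P = \Ocal(1)$ with high probability over the random couplings $\{g_{ij}\}$.

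First I would compute the one-step energy change explicitly. Since Glauber dynamics selects a site $i$ uniformly at random and resamples $x_i$, any $y$ with $P(x,y)>0$ differs from $x$ in at most one coordinate. If $y$ differs from $x$ at site $i$, a direct calculation gives
\begin{equation*}
H(y) - H(x) = -\frac{2x_i}{\sqrt{n}}\sum_{j:(i,j)\in\Ecal} g_{ij}\, x_j.
\end{equation*}
Since the probability of actually flipping is at most $1$,
\begin{equation*}
\mathbb{E}_{y\underset{P}{\sim} x}\bigl[(H(y)-H(x))^2\bigr] \le \frac{4}{n^2}\sum_{i=1}^n\Bigl(\sum_{j:(i,j)\in\Ecal} g_{ij} x_j\Bigr)^{\!2} = \frac{4}{n^2}\lVert G x\rVert_2^2,
\end{equation*}
where $G$ is the symmetric zero-diagonal matrix with $G_{ij}=g_{ij}$ on the edges of the underlying graph and zero elsewhere.

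Next I would pass to a uniform bound over $x\in\{-1,1\}^n$ via the operator norm: $\lVert Gx\rVert_2^2 \le \lVert G\rVert^2\lVert x\rVert_2^2 = n\lVert G\rVert^2$, which yields $\lVert H\rVert_P \le 4\lVert G\rVert^2/n$. The remaining ingredient, and the main technical step, is a high-probability spectral-norm bound $\lVert G\rVert = \Ocal(\sqrt{n})$ for the random coupling matrix. In the complete-graph case this is a classical Wigner/GOE estimate ($\lVert G\rVert\le (2+o(1))\sqrt{n}$ with probability $1-o(1)$). For the diluted SK model on a sparse graph, the same $\Ocal(\sqrt{n})$ scaling follows from standard concentration of norms of sparse symmetric Gaussian matrices (for instance via the matrix Bernstein or Latała-type moment bounds), potentially with a larger constant depending on the maximum degree.

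Combining these yields $\lVert H\rVert_P \le 4\lVert G\rVert^2/n = \Ocal(1)$ with high probability, and therefore $\Delta_P\le\sqrt{\lVert H\rVert_P}=\Ocal(1)$ as claimed. I expect the main obstacle to be justifying the spectral-norm bound in the generality stated (``possibly diluted''); for the fully connected SK model it is immediate from Wigner theory, but for general random graphs one needs to invoke the appropriate concentration result that matches the sparsity regime.
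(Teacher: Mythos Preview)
Your proof is correct but follows a genuinely different route from the paper's. The paper bounds $\Delta_P$ via the first-moment quantity $\tilde{\Delta}_P$ rather than $\sqrt{\lVert H\rVert_P}$: it observes that under the hypercube (bit-flip) walk each quadratic term flips sign with probability $2/n$, so $\mathbb{E}_{y\sim x}[H(y)] = (1-4/n)H(x)$ and hence $\mathbb{E}_{y\sim x}[H(y)] - H(x) \le 4\lvert E^\star\rvert/n = \Ocal(1)$ using only $\lvert E^\star\rvert = \Theta(n)$. It then argues that at positive $\beta$ Glauber accepts energy-increasing moves with smaller probability than the hypercube walk, so the same $\tilde{\Delta}_P$ bound transfers. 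Your approach instead works directly with the Glauber chain, controls the second moment, and trades the (problem-specific) input $\lvert E^\star\rvert = \Theta(n)$ for the random-matrix input $\lVert G\rVert = \Ocal(\sqrt{n})$. The payoff is that your argument avoids the somewhat informal comparison step between Glauber and the hypercube walk; the cost is that you must invoke a Wigner-type spectral bound, which you correctly flag as the main technical dependency (and which indeed holds with room to spare in the bounded-degree diluted case, e.g.\ already via the Frobenius norm).
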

\begin{proof}
    We first consider hypercube walk. If we flip a spin at random, the sign of each term in $H$ will flip with probability $2/n$. Therefore the energy of each term increases at most by a factor of  $1-\frac{4}{n}$ in expectation. Since the ground state energy $|E^{\star}| = \Theta(n)$, the energy increases at most by constant in expectation. From the definition of Glauber dynamics a bit flip is proposed uniformly, and accepted with probability larger than $\frac{1}{2}$ if $\beta (H(x') - H(x))<0$. Therefore, if we are running the Glauber dynamics at some positive finite temperature, moves that increase energy are made with strictly lower probability than the hypercube walk. Thus an uppper bound on stability with respect to the hypercube walk is also a valid upper bound for Glauber dynamics at positive $\beta$.
\end{proof}
\begin{lemma}
    \label{lem:lsi-sk}
    Let $P$ be a Glauber dynamics chain for SK model on a graph $G(\mathcal{N},\Ecal)$ at inverse temperature  $\beta < \frac{1}{4}$. Then, the log-Sobolev constant satisfies $\omega \geq \Omega(1/(n\log n))$.
\end{lemma}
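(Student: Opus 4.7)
The strategy is to first prove a modified log-Sobolev inequality (MLSI) with constant $\Omega(1/n)$ for the Glauber dynamics on the diluted SK model, and then invoke Theorem \ref{thm:modified-LSI} to translate it into a standard LSI, at the cost of a $\log(1/p_{\min})$ factor coming from the smallest nonzero transition probability. The MLSI itself will follow from the same spectral-independence-plus-marginal-boundedness pipeline used for the Ising and hard-core models in the preceding sections.

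Concretely, I would first show that with high probability over the Gaussian disorder, $\max_{(i,j)\in\Ecal}|g_{ij}| = O(\sqrt{\log n})$, so every edge carries weight $|g_{ij}|/\sqrt{n} = O(\sqrt{(\log n)/n})$. Since the maximum degree is a constant, the effective local field $h_v(x) = \frac{1}{\sqrt{n}}\sum_{u \sim v} g_{uv} x_u$ is then $o(1)$ uniformly over $v$ and over boundary conditions $x$. Consequently, at any fixed $\beta$, each single-site conditional marginal $\pi_\beta(x_v = \pm 1 \mid x_{\Ncal\setminus\{v\}}) = (1 + e^{\pm 2\beta h_v(x)})^{-1}$ is bounded below by a constant $u \in (0, 1/2)$ independent of $n$, i.e.\ the Gibbs measure is $u$-marginally bounded. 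Next I would establish constant spectral independence at any $\beta < \beta_c$; because interactions have magnitude $o(1)$, this follows from the standard tree-recursion and contraction arguments for bounded-degree spin systems in the uniqueness regime (as in \cite{chen2021rapidmixingglauberdynamics, chen2023optimal}), applied conditional on the high-probability event that the couplings are well controlled.

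With $u$-marginal boundedness and $O(1)$ spectral independence in hand, Theorem 2.9 of \cite{chen2023optimal} yields approximate tensorization of entropy with constant $C_1 = O(1)$. Combined with the constant single-site log-Sobolev estimate used in the proof of Fact \ref{sobolev-marginally-bounded}, this gives $\omega_{\mathrm{MLSI}} = \Omega(1/n)$ for the full Glauber chain. Since a single Glauber step selects a vertex with probability $1/n$ and then applies the conditional update with probability bounded below by the constant $u$, the smallest nonzero entry of $P$ satisfies $p_{\min} = \Omega(1/n)$, so $\log(1/p_{\min}) = O(\log n)$. Applying Theorem \ref{thm:modified-LSI} gives
\[
    \omega \;\geq\; \frac{\omega_{\mathrm{MLSI}}}{\log(1/p_{\min})} \;=\; \Omega\!\left(\frac{1}{n \log n}\right),
\]
which is the claimed bound.

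The main obstacle is verifying spectral independence for the random Gaussian couplings at $\beta < \beta_c$: existing proofs for Ising-type models work with deterministic bounded couplings, so some care is required to invoke them conditional on the high-probability event $\{\max_{(i,j)}|g_{ij}| = O(\sqrt{\log n})\}$ and to choose the definition of $\beta_c$ so that the contraction estimate for the tree recursion survives uniformly on this event. Once that conditioning is made precise, the remaining steps are direct applications of results already stated or cited in the excerpt.
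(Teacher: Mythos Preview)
Your overall strategy---establish a modified log-Sobolev inequality with constant $\Omega(1/n)$ and then convert it to a standard LSI via Theorem~\ref{thm:modified-LSI} using $p_{\min} = \Theta(1/n)$---matches the paper's proof exactly. The difference lies in how the MLSI is obtained: the paper simply cites the entropic-independence result of Anari et al.~\cite{anari2021entropicindependenceimodified} (Theorem~12(a)), which directly gives $\omega_{\mathrm{MLSI}} = \Omega(1/n)$ for Glauber dynamics on the SK model at $\beta < \beta_c$, whereas you attempt to re-derive it through the spectral-independence/marginal-boundedness pipeline of Chen et al. Your route is more laborious and, as you yourself flag, requires verifying spectral independence for random Gaussian couplings, which is not literally contained in the cited references. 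One further point worth noting: if your spectral-independence argument does go through, then Fact~\ref{sobolev-marginally-bounded} already delivers the \emph{standard} LSI with constant $\Omega(1/n)$, not merely the MLSI, so the final appeal to Theorem~\ref{thm:modified-LSI} would be superfluous and you would in fact have proved a stronger bound than the lemma claims.
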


\begin{proof}
    By \cite[Corollary 51]{chen2022localizationschemesframeworkproving}%
    , the modified log-Sobolev constant for Glauber dynamics is $\Omega(1/n)$ when $\beta < \frac{1}{4}$. For possible transitions, the transition probability of Glauber dynamics is $\Omega(n^{-1})$. Hence, by Theorem \ref{thm:modified-LSI}, the log-Sobolev constant scales as $\Omega(1/(n\log n))$.
\end{proof}

\begin{theorem}
\label{thm:sk_runtime}
    Let $P$ be a Glauber dynamics chain for the Sherrington-Kirkpatrick model at inverse temperature parameter $\beta < \frac{1}{4}$ and stationary distribution $\pi_{\beta}$. Then, there exists a short path algorithm that finds the optimal solution of the Sherrington Kirkpatrick Hamiltonian with running time 
    \[
    \Ocal \left(\poly(n)[\pi_{\beta}(E^{\star})^{-1}]^{\left(\frac{1}{2}-\frac{c}{\log(n)}\right)} \right),
    \]
    where $c>0$ is a constant.
\end{theorem}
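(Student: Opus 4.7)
The plan is to instantiate Theorem \ref{thrm:MTG_RT}, the log-Sobolev version of the generalized short-path runtime, with the Glauber chain $P_\beta$ for the SK model at $\beta<\beta_c$. Four ingredients need to be supplied: $\Delta_P$-stability, the pseudo-Lipschitz norm $\|H\|_P$, a log-Sobolev constant $\omega$, and the scalings of $|E^\star|$ and $\ln(1/\pi_\beta(E^\star))$. Lemma \ref{lem:delta-sk} already gives $\Delta_P = \Ocal(1)$ by a hypercube-walk comparison that is tightened at positive $\beta$ because energy-increasing moves are rejected more often. The pseudo-Lipschitz norm satisfies $\|H\|_P = \Ocal(1)$ with high probability over the Gaussian disorder, since a single Glauber spin-flip at site $k$ changes $H$ by $(2/\sqrt n)\,x_k \sum_{j\sim k} g_{kj}x_j$, which is sub-Gaussian with variance $\Ocal(1)$ for a bounded-degree neighborhood; uniformity over $x$ follows from Hanson--Wright concentration and a union bound. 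Lemma \ref{lem:lsi-sk} provides $\omega = \Omega(1/(n\log n))$, obtained by combining the $\Omega(1/n)$ modified log-Sobolev estimate of \cite{anari2021entropicindependenceimodified} at $\beta<\beta_c$ with the MLSI-to-LSI conversion of Theorem \ref{thm:modified-LSI}.

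For the remaining two quantities, $|E^\star| = \Theta(n)$ is the standard concentration estimate for the SK ground state, and the hardness of the SK optimum forces $\ln(1/\pi_\beta(E^\star)) = \Theta(n)$: otherwise polynomial-time Glauber sampling at $\beta<\beta_c$ would yield a sub-exponential-time classical solver, contradicting standard hardness assumptions. With these bounds in hand, Corollary \ref{cor:tail-herbst} certifies the $\gamma$-spectral-density condition with
\[
\gamma \;=\; \Theta\!\left(\frac{\omega\,(E^\star)^2}{\|H\|_P\,\ln(1/\pi_\beta(E^\star))}\right),
\]
and Theorem \ref{thrm:MTG_RT} then licenses any short-jump parameter $b \le b^\star = \tfrac{2}{3}\gamma\omega\ln(1/\pi_\beta(E^\star))$. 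Substituting the scalings above and observing that $|E^\star|/\ln(1/\pi_\beta(E^\star)) = \Theta(1)$ and $\Delta_P = \Theta(1)$, the runtime exponent of Theorem \ref{thrm:MTG_RT} collapses to $\tfrac{1}{2} - \Theta(b^\star)$, yielding the advertised $\tfrac{1}{2} - c/\log n$ once the $\log n$ penalty coming from Theorem \ref{thm:modified-LSI} is tracked through. The $\omega^{-1} = \Ocal(n\log n)$ prefactor is absorbed into $\poly(n)$, and the quantum input assumptions (efficient block-encoding of $D(P_\beta)$ and preparation of $\lvert\sqrt{\pi_\beta}\rangle$) are handled as in the MIS case of Section \ref{sec:MIS-glauber}.

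The main obstacle is precisely the $\Theta(\log n)$ loss in the MLSI-to-LSI conversion of Theorem \ref{thm:modified-LSI}, which is inherited because the smallest nonzero Glauber transition probability is $\Theta(1/n)$. This loss is what separates SK from the MIS and Ising cases in Sections \ref{sec:MIS-glauber} and \ref{sec:ising-glauber}, where approximate tensorization of entropy gives a true $\Omega(1/n)$ \emph{standard} log-Sobolev constant and hence a constant-factor speedup; no comparable standard LSI is currently known for SK at $\beta<\beta_c$, so the $1/\log n$ decay of the exponent is essentially the cost of passing through MLSI. A secondary (but routine) concern is ensuring that the several w.h.p.\ statements over the disorder---for $|E^\star|$, for $\|H\|_P$, and for $\ln(1/\pi_\beta(E^\star))$---all hold on a common event, which follows from a union bound over events of measure $1-o(1)$ and should be stated explicitly in the final theorem.
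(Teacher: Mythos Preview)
Your overall scaffolding matches the paper's, but there is a quantitative gap in the spectral-density step that causes your exponent to fall short of the claimed $\tfrac12 - c/\log n$.

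You establish $\gamma$ via the Herbst argument (Corollary~\ref{cor:tail-herbst}). Plugging in $\omega = \Theta(1/(n\log n))$, $|E^\star| = \Theta(n)$, $\|H\|_P = \Ocal(1)$, and $\ln(1/\pi_\beta(E^\star)) = \Theta(n)$ gives $\gamma = \Theta(1/\log n)$, not $\Theta(1)$. Then $b^\star = \tfrac{2}{3}\gamma\,\omega\,\ln(1/\pi_\beta(E^\star)) = \Theta(1/\log n)\cdot\Theta(1/(n\log n))\cdot\Theta(n) = \Theta(1/(\log n)^2)$, so the exponent you actually obtain is $\tfrac12 - c/(\log n)^2$. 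The $\log n$ penalty from the MLSI-to-LSI conversion is paid \emph{twice} on your route---once inside $\gamma$ and once more in $b^\star$---whereas you account for it only once.

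The paper avoids this double loss by not using Herbst at all for the tail bound. Instead it invokes Proposition~4 of \cite{dalzell2022mind} to bound the \emph{count} of low-energy states by $2^{\gamma' n}$ for a constant $\gamma'$, and then lifts this to the Gibbs measure via the elementary inequality $\pi_\beta(E\le (1-\eta)E^\star)\le 2^{\gamma' n}\,\pi_\beta(E^\star)$ together with $\ln(1/\pi_\beta(E^\star)) = \Theta(n)$. This yields a \emph{constant} $\gamma$ in the generalized spectral-density condition, so the only $\log n$ loss is the single one coming from $\omega$ in $b^\star = \Theta(\gamma\,\omega\,\ln(1/\pi_\beta(E^\star))) = \Theta(1/\log n)$, matching the stated runtime. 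To repair your argument, replace the Herbst-based spectral density with this direct counting-and-lifting step.
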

\begin{proof}
    We first show that the tail bound holds for SK model. By using proposition 4 in \cite{dalzell2022mind}, we know that the number of low energy states with energy smaller than $E^{\star}(1-\eta)$ is smaller than $2^{\gamma n}$ where $\gamma$ is a constant. By assuming that $\log(1/\pi(E^{\star})) = \Theta(n)$, we can conclude that the generalized tail bound holds as well. Note that if this assumption fails, then it means that there exists a sub-exponential solver for SK model. Finally, since $\lvert E^{\star}\rvert= \Theta(n)$ and 
 $\log(1/\pi(E^{\star})) = \Theta(n)$. Thus, by  Lemma \ref{lem:lsi-sk} and Theorem \ref{thrm:b_log_sob}, we have $b = \Ocal(1/\log(n))$. Since $\Delta$ is constant by Lemma \ref{lem:delta-sk}, the total runtime scales as $(\pi_{\beta}(E^{\star}))^{-1\left(\frac{1}{2}-\frac{c}{\log(n)}\right)}$ due to Theorem \ref{thrm:MTG_RT}.
\end{proof}

We conclude this section by demonstrating that Markov chain search using Glauber dynamics at a positive inverse-temperature is faster than unstructured search.

\begin{proposition}
\label{prop:sk_positive-beta-overlap}
Let $\pi_{\beta}$ be the Gibbs distribution corresponding to the SK cost function $H \colon \{0,1\}^n \rightarrow \mathbb{R}$ at some positive inverse temperature $\beta > 0$. Then the probability of observing the ground state under the Gibbs measure is at least $2^{-\kappa n}$. Where, $\kappa > -1$ for sufficiently large $n$.
\end{proposition}
\begin{proof}
    Let $Z(\beta)$ be the partition function associated with $\pi_{\beta}$. We know that $\frac{\log Z(\beta)}{n}$ has subGaussian tails around its mean for every fixed $\beta$ \cite[Equation 1.54]{talagrand2010mean}. A loose version of Guerra's replica symmetric bound \cite[Theorem 1.3.7]{talagrand2010mean} gives that
    \begin{align*}
        \frac{\mathbb{E}[\log Z(\beta)]}{n} \leq 1 + \frac{\mathbb{E}[\ln \cosh(\beta z)]}{\ln(2)}.
    \end{align*}
    Hence with high probability if $\frac{\lvert E^{\star}\rvert}{n} =: \lvert P^{\star}\rvert  \approx .763\dots + o_{n}(1)$
    \begin{align*}
        \frac{\log(\pi(E^{\star}))}{n} \geq \frac{\beta}{\ln(2)} \left(\lvert P^{\star} \rvert - \frac{\mathbb{E}[\ln \cosh(\beta z)]}{\beta}\right) - 1 + \epsilon(n),
    \end{align*}
    which is $> -1$ for sufficiently large $n$.
\end{proof}

\section{Numerical Results}

\label{sec:numerical_results}

We perform numerical evaluations to empirically verify our findings. We focus on the constrained problems studied in this work, including \eqref{e:MaxBisection}, MaxCut with a Hamming weight constraint $k=o(n)$ \eqref{e:MaxCut}, and MIS with a penalized objective. For \eqref{e:MaxBisection}, we take $n$ to be even since $k=\frac{n}{2}$. 
For \eqref{e:MaxCut}, we take $k=\lfloor\sqrt{n}\rfloor$. For MIS, we take the penalty factor to be $n$, such that no energy reduction from constraint violation can justify the penalty.
For all three problems, we generate 100 random unweighted graphs for each $n$ from the Erdős–Rényi model with the probability of each edge existing to be $\frac{2\ln n}{n}$. The constant factor 2 is chosen to ensure a reasonable graph density at the scale we cover. We set $\eta=0.5$ in all experiments.

To improve the scalability of our numerical experiments, we construct $H_b$ as a sparse matrix in the compressed sparse row format and employ a GPU-accelerated iterative eigensolver to compute only the two smallest eigenvalues and the corresponding eigenvectors. For \eqref{e:MaxBisection} and \eqref{e:MaxCut}, which are explicitly constrained, we can scale up further by projecting $H_b$ onto the space spanned by all feasible states. The dimension of the computational space then drops from $2^n$ to $\binom{n}{k}$. With these efforts, we obtained results with up to 30 qubits for \eqref{e:MaxCut}. %

\begin{figure}[t]
    \centering
    \includegraphics[width=\linewidth]{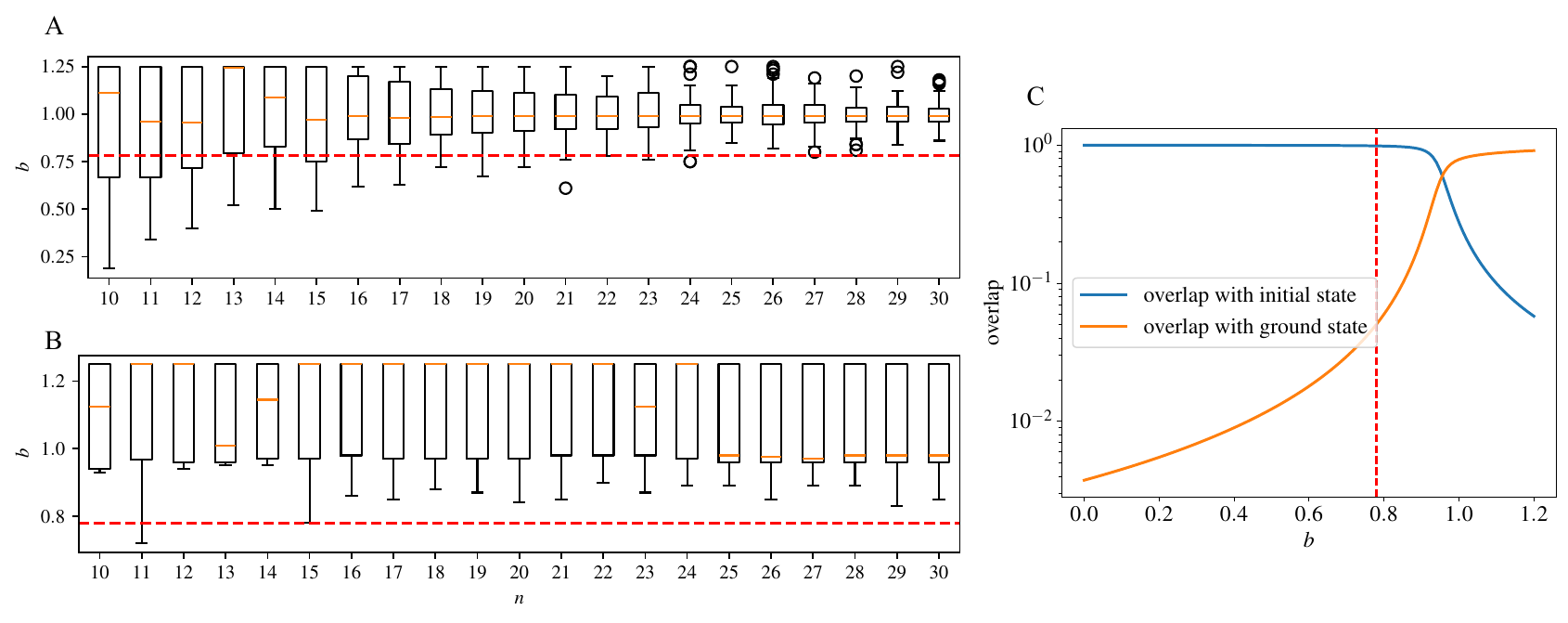}
    \caption{Empirical selection of $b$. 
    \textbf{A} Quartiles of $b$ values that minimize the effective runtime of the algorithm for \eqref{e:MaxCut}. As $n$ increases, the runtime-optimal $b$ converges to a range approximately between 0.8 and 1.2. The red dot line shows the converged value of $b \approx 0.78$ of phase transition where the overlap with the initial state crosses $0.99$. 
    \textbf{B} Quartiles of $b$ values that minimize the spectral gap for \eqref{e:MaxCut}. 
    For most instances tested, the spectral gap is minimized when $b$ is larger than the phase transition value, rendering the phase transition $b$ a safe choice. 
    \textbf{C} The overlap values with the initial state and the ground state (optimal solution) for one $n=30$ \eqref{e:MaxCut} instance with varying $b$. The dotted verticle line denotes the phase transition $b$.}
    \label{fig: mkc-chooseb}
\end{figure}

\begin{figure}[t]
    \centering
    \includegraphics[width=0.65\linewidth]{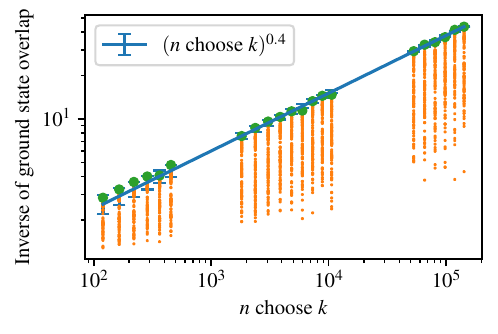}
    \caption{
    The inverse of the ground state overlap versus the feasible space size $\binom{n}{k}$ for \eqref{e:MaxCut} with $n$ varying from 10 to 30 and $b=$ 0.78. The worst-case instances are fitted using an exponential function with base $\binom{n}{k}$ with an error bar denoting one standard deviation of the fitted exponent. The 95\% confidence interval on the fitted exponent is $[0.391, 0.408]$.
    }
    \label{fig: mkc-runtime}
\end{figure}

First, we want to identify what values of $b$ are practically appropriate. In~\cite{dalzell2022mind}, the authors numerically show that the original short path algorithm works well for the 3-spin problem for $b$ up to around 0.8, which is much larger than the theoretical bound of $b\leq 1.02\times 10^{-4}$. Here, we show that a similar observation can be made for the constrained and penalized cases. For \eqref{e:MaxCut}, Figure~\ref{fig: mkc-chooseb} A shows the quartiles of $b$ values that minimize the effective runtime (Equation~\ref{eqn: runtime}) of the algorithm. Note that the $b$ values are hard-capped at 1.25 and may be higher. We see that as $n$ increases, the optimal $b$ converges to a range approximately between 0.8 and 1.2. However, when choosing the value of $b$ that needs to work for all instances, we want a conservative value that avoids encountering the possibly superexponentially small spectral gap. For this purpose, we identify the value of $b$ at which the phase transition occurs. We numerically characterize the phase transition point by the overlap of $\ket{\psi_b}$ with the initial state dropping below 0.99. An example of the overlap with varying $b$ is shown in Figure~\ref{fig: mkc-chooseb} C. Empirically, we observe that the phase transition $b$ converges to around $0.78$ as $n$ increases.  
In Figure~\ref{fig: mkc-chooseb} B, we plot the quartiles of $b$ values that minimize the spectral gap. For most instances, the spectral gap is minimized when $b$ is greater than the phase transition value $\approx 0.78$. 
Therefore, we expect a ubiquitous value of phase transition to work for a \eqref{e:MaxCut} instance with high probability.

We then fit the worst-case runtime to empirically demonstrate the super-Grover speedup. 
Although the inverse of the spectral gap term in the runtime (Equation~\ref{eqn: runtime}) has a $\Ocal(\poly(n))$ complexity, it may still affect the exponential fitting at the scale of numerical experiments. Thus, we use the inverse of ground state overlap $\lvert \langle \psi_b|z\rangle\rvert^{-1}$, the only exponential growth term in the runtime, to fit the asymptotic speedup.
In Figure~\ref{fig: mkc-runtime}, we set $b$ to be 0.78 and plot the inverse of ground state overlap $\lvert \langle \psi_b|z\rangle\rvert^{-1}$ of all \eqref{e:MaxCut} instances with respect to $\binom{n}{k}$, the size of the feasible space. We fit the worst-case instances using an exponential function with base $\binom{n}{k}$, the exponent of which is equivalent to the factor $a$ in $2^{an}$ for the unconstrained case. The error bar of the fitted line denotes one standard deviation of the fitted exponent. We see the empirical $b$ values give a super-Grover speedup, which is much better than the theoretically guaranteed bounds.%

In Figure~\ref{fig:multiple_problem_runtime}, we show the empirical worst-case scaling for all three problems with different choices of $b$. A proper selection of $b$ yields a super-Grover speedup across all examined problems. Conversely, when $b$ is excessively high, the algorithm may encounter a small spectral gap in the worst case. To demonstrate this, we use the runtime (Equation~\ref{eqn: runtime}, which includes the inverse of the spectral gap term) 
as the metric and observe that the quality of the fitting degrades. Our numerics lead to two interesting conceptual observations: firstly, as observed also by~\cite{dalzell2022mind} the optimal choices of $b$ are well beyond what is predicted by the theoretical analysis. Secondly, in the case of \eqref{e:MaxCut} we numerically observe an advantage over quadratic speedup that does not decay with $n$ which is beyond the current theoretical analysis and indicates that the runtime of the long jump can possibly be characterized through weaker conditions than $\Delta_{P}$ stability. Finally, we confirm in our setting that is indeed reasonable to make the choice of $b$ by choosing the largest such value that allows for large overlap with the ground state. If the value of this critical $b$ asymptotes quickly as a function of $n$ this suggests a numerical mechanism for the development of efficient short path algorithms.

\begin{figure}[t]
    \centering
    \includegraphics[width=\linewidth]{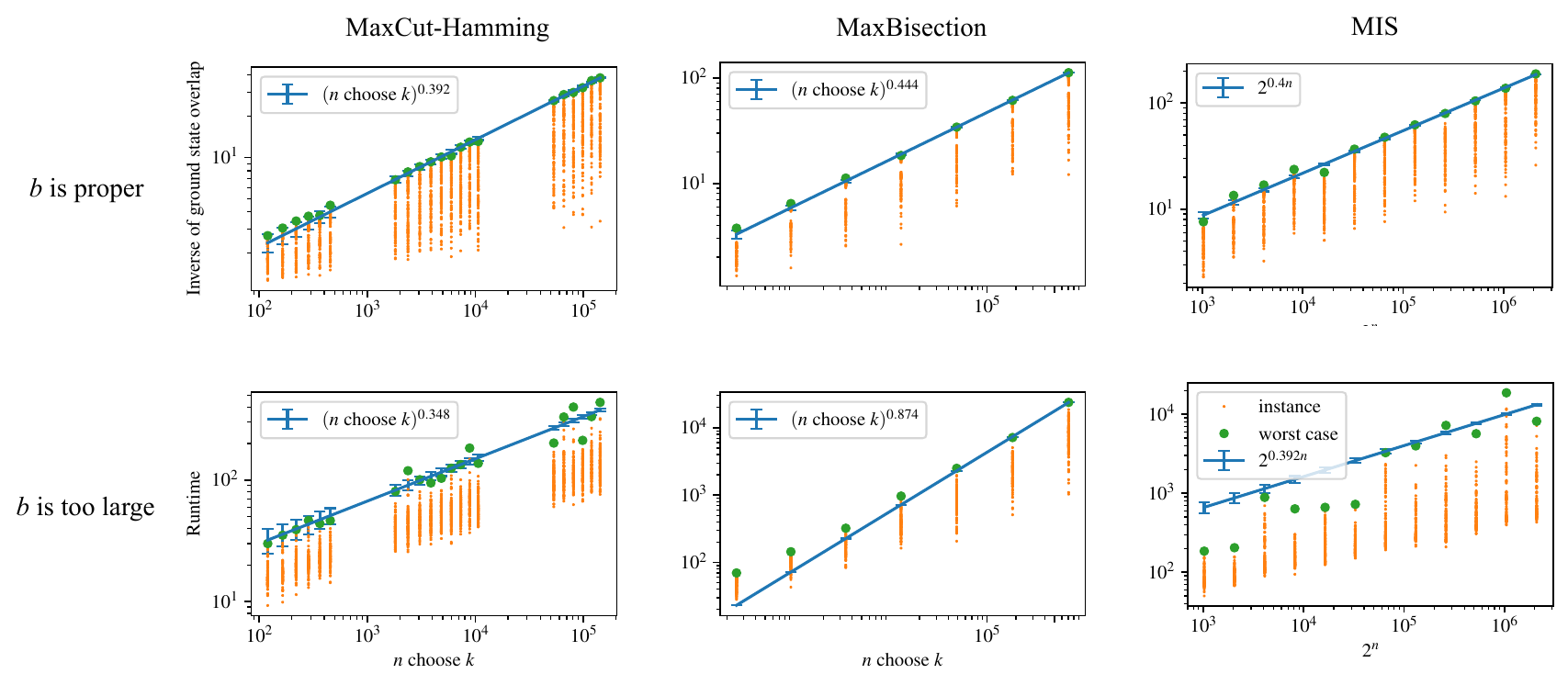}
    \caption{
    Empirical fitting of the inverse overlap of the ground state $\lvert \langle \psi_b|z\rangle\rvert^{-1}$ and the runtime~\ref{eqn: runtime} of \eqref{e:MaxCut}, \eqref{e:MaxBisection}, and MIS with difference choices of $b$. 
    For the left column \eqref{e:MaxCut}, we use data with $n$ ranging from 10 to 30. Top: $b=0.8$ the fitted exponent is 0.392 with 95\% confidence interval $[0.383, 0.402]$, indicating there could exist $b$ that is better than the phase transition one in Figure~\ref{fig: mkc-runtime}; Bottom: $b=1$ the fitted exponent is 0.348 with 95\% confidence interval $[0.271, 0.426]$. 
    For the middle column \eqref{e:MaxBisection}, we use data with $n$ ranging from 16 to 22. Top: $b=0.7$ the fitted exponent is 0.444 with 95\% confidence interval $[0.436, 0.452]$; Bottom: $b=1$ the fitted exponent is 0.873 with 95\% confidence interval $[0.842, 0.905]$. 
    For the right column (MIS), we use data with $n$ ranging from 10 to 21. Top: $b=0.6$ the fitted exponent is 0.400 with 95\% confidence interval $[0.386, 0.415]$; Bottom: $b=0.8$ the fitted exponent is 0.392 with 95\% confidence interval $[0.122, 0.663]$. 
    }
    \label{fig:multiple_problem_runtime}
\end{figure}

\section*{Acknowledgements}
The authors thank  Aram Harrow, Shree Hari Sureshbabu, and Jiayu Shen
 for insightful discussions and feedback, and their colleagues at the Global Technology Applied Research center of JPMorganChase for their support and helpful discussions. 

\appendix

   \section{Technical Details for Generalized Short Path Framework}

\shortPathToGapLem*
\begin{proof}
 Recall by construction, the ground state energy of $H_b$ is at most $-1$.  Note that $\theta \leq 1$. Suppose, in order to arrive at a contradiction, that there are at least two orthogonal eigenstates $|\psi_1\rangle$ and $|\psi_2\rangle$, with energy strictly below $-1 + \theta$. Since the short path condition is satisfied, at least one of $|\psi_1\rangle$ or $|\psi_2\rangle$ has nonzero overlap with $|\pi\rangle$. Without loss of generality, assume this is the case for $|\psi_2\rangle$. 
 
Now consider the state
$$
    |\psi'\rangle = \left(1 +  \frac{\lvert\langle \sqrt{\pi}| \psi_1\rangle\rvert^2}{\lvert\langle \pi | \psi_2 \rangle\rvert^2}\right)^{-1/2}\left(|\psi_1\rangle - \frac{\langle \sqrt{\pi}| \psi_1\rangle}{\langle \sqrt{\pi} | \psi_2 \rangle}|\psi_2\rangle\right),
$$
which is orthogonal to $|\sqrt{\pi}\rangle$. Since $\GSE\left(\Pi_{\perp}H_b\Pi_{\perp}\right) \leq \langle \psi'|H_b|\psi'\rangle$, and 
\begin{align*}
    \langle \psi'|\Pi_{\perp}H_b\Pi_{\perp}|\psi'\rangle &= \langle \psi'|H_b|\psi'\rangle\\
    &<  \left(1 +  \frac{\lvert\langle \sqrt{\pi}| \psi_1\rangle\rvert^2}{\lvert\langle \sqrt{\pi} | \psi_2 \rangle\rvert^2}\right)^{-1}\left[(-1+\theta) + \frac{\lvert\langle \sqrt{\pi}| \psi_1\rangle\rvert^2}{\lvert\langle \sqrt{\pi} | \psi_2 \rangle\rvert^2}(-1+\theta) \right] \leq -1 + \theta,
\end{align*}
we get a contradiction.
\end{proof}

\shortPathImpliesOverlap*
\begin{proof}
Let $|\psi_b^{\perp}\rangle$ be the component of $\ket{\psi_b}$ orthogonal to $|\sqrt{\pi}\rangle$:
\begin{align*}
    |\sqrt{\pi}\rangle = \langle \sqrt{\pi} | {\psi_b}\rangle | \psi_b\rangle + \sqrt{1 - \langle \sqrt{\pi} | {\psi_b}\rangle^2}|\psi_b^{\perp}\rangle,
\end{align*}
where by stoquasticity of $H_b$, $\langle \sqrt{\pi} | \psi_b \rangle \geq 0$.

The short path condition implies that 
\begin{align}
\langle \psi_b^{\perp}|  H_b  |\psi_b^{\perp}\rangle \geq -1 + \theta,
\end{align}
and since $\langle \psi_b| H_b|\psi_b\rangle \leq -1$, we have 
\begin{align}
&\lvert \langle \psi_b^{\perp}| H_b|\psi_b^{\perp}\rangle - \langle \psi_b| H_b| \psi_b\rangle \rvert  \geq \theta.
\end{align}

Combining the above with $\lVert H_b\rVert_{2} \leq 2$ gives
\begin{align*}
\theta^2 \leq \lvert \langle \psi_b| H_b| \psi_b\rangle - \langle \psi_b^{\perp}| H_b|\psi_b^{\perp}\rangle\rvert^2 &\leq 4\lVert |\psi_b\rangle - |\psi_b^{\perp}\rangle\rVert^2\\
&=8(1- \langle \sqrt{\pi} | \psi_b^{\perp}\rangle)\\
&\leq 8(\langle \sqrt{\pi} | \psi_b\rangle)^2,
\end{align*}
so the result follows.
\end{proof}

\longJumpBound*
\begin{proof}

Define 
\begin{align*}
    \mathcal{P}_{\ell} := \left(\frac{H_b}{ E_b}\right)^{\ell}.
\end{align*}

Since $P$ is aperiodic, we have $-I\preceq -D(P) \prec I$. The negative definiteness of $g_{\eta}(H)$ implies the maximum eigenvalue of  $H_b$ cannot exceed that of $-D(P)$. Hence, since $\nu$ is a lower bound on either the log-Sobolev or Poincar\'e constant of $P$ (lower bounding its spectral gap  as defined in \eqref{eqn:spec_gap}), we have that $H_b$ does not have any eigenstates with eigenvalue $> 1- \frac{\nu}{2}$.
Additionally, the only eigenstate with energy below $-1 + \frac{\nu}{2}$ is the ground state $|\psi_b\rangle$, by assumption.

Thus,
\begin{align*}
\langle \sqrt{\pi}|\mathcal{P}_{\ell}|z\rangle &= \langle \sqrt{\pi}|\psi_b\rangle\langle\psi_b|z\rangle + \sum_{|E_b'| \leq 1 - \frac{\nu}{2}} \left(\frac{E'_b}{E_b} \right)^{\ell}\langle \sqrt{\pi}|\psi'_b\rangle\langle\psi'_b|z\rangle\\
&\leq \langle \sqrt{\pi}|\psi_b\rangle\langle\psi_b|z\rangle + (V -1)(1-\frac{\nu}{2})^{\ell},
\end{align*}
as all $E_b' < 0$.

The above implies
\begin{align*}
    \lvert E_b \rvert^{-\ell}\cdot \langle \sqrt{\pi}|(-H_b)^{\ell}|z\rangle - V(1-\frac{\nu}{2})^{\ell}< \langle \sqrt{\pi}|\psi_b\rangle\langle\psi_b|z\rangle < \lvert \langle\psi_b|z\rangle\rvert.
\end{align*}

We also have that
\begin{align*}
\lvert E_b\rvert \langle \psi_b|\sqrt{\pi}\rangle &= \langle \psi_b|D(P) - bG_{\eta} | \sqrt{\pi}\rangle \\
&= \langle \psi_b | \sqrt{\pi}\rangle - b\langle \psi_b | G_{\eta} |\sqrt{\pi}\rangle \\
&\leq \langle \psi_b | \sqrt{\pi}\rangle + b \sqrt{\mathbb{P}_{\pi}( E \leq (1-\eta)E^*)}.
\end{align*}
Since $\langle \psi_b | \sqrt{\pi}\rangle \geq 0$ by stoquasticity, we get
\begin{align*}
    \lvert E_b \rvert^{-\ell} \geq \exp\left(-\ell\frac{b\sqrt{\mathbb{P}_{\pi}( E \leq (1-\eta)E^*)}}{\langle \psi_b|\sqrt{\pi}\rangle}\right).
\end{align*}
If $\ell$ satisfies the conditions in the Lemma statement, then there exists an integer giving the stated result.
\end{proof}

\equivDepol*
\begin{proof}

The $\implies$ direction follows mostly from the proof of Proposition 3 in \cite{dalzell2022mind}
By definition of $g_{\eta}$, $f$ is monotonically non-decreasing. From   \cite[Proposition 12]{dalzell2022mind}, $\prod_{t=1}^{T}f(c_tx)$ is also a convex function. Thus
\begin{align*}
\sum_{y}P(x,y)\prod_{t=1}^{T}f\left(\frac{c_tH(y)}{E^{\star}}\right) &\geq \prod_{t=1}^{T}f\left(\sum_{y}P(x,y)\frac{c_tH(y)}{E^{\star}}\right)\\
&\geq \prod_{t=1}^{T}f\left(\frac{c_tH(x)}{E^{\star}}\left(1 + \frac{\Delta}{H(x)}\right)\right).
\end{align*}

Recall that $g_{\eta}(z)$ is zero for all $z \geq (1-\eta)E^{*}$. If for some $t$, $\frac{c_tH(x)}{E^{*}} > (1-\eta)E^{*}$, then the whole product is zero, as in this case
$$\frac{c_tH(x)}{E^{\star}}\left(1 + \frac{\Delta}{H(x)}\right) > (1-\eta)E^{*}.$$ 
Thus, the stated hypothesis is satisfied trivially. If 
$$\frac{H(x)}{E^{*}} < \frac{(1-\eta)E^{*}}{c_t} \leq (1-\eta)E^{*}$$ for all $t$, then 
$$\frac{1}{H(x)} > -\frac{1}{(1-\eta)\lvert E^{*}\rvert}.$$ 
By monotonicity of $f$
\begin{align*}
\prod_{t=1}^{T}f\left(\frac{c_tH(x)}{E^{\star}}\left(1 + \frac{\Delta}{H(x)}\right)\right) \geq \prod_{t=1}^{T}f\left(\frac{c_tH(x)}{E^{\star}}\left(1 - \frac{\Delta}{(1-\eta)\lvert E^*\rvert}\right)\right).
\end{align*}

For the $\Leftarrow$ direction, consider taking $c_{t} \rightarrow 0, t \neq 0$ and $c_1 \rightarrow 1$. By continuity we have
$$
\sum_{y}P(x,y)f\left(\frac{H(y)}{E^{\star}}\right) \geq f\left(\frac{(1-\alpha_{P})H(x)}{E^{\star}}\right)
$$
We can suppose  $H(x) < -(1-\eta)\lvert E^*\rvert$, since otherwise the right-hand side is zero, and so our choice of $\Delta$ in terms of $\alpha_P$ clearly works. Thus,
$$
\sum_{y}P(x,y)f\left(\frac{H(y)}{E^{\star}}\right)\geq f\left(\frac{H(x)+\alpha_P(1-\eta)\lvert E^{*}\rvert}{E^*}\right).
$$
so by definition of $f$
$$
\sum_{y}P(x,y)g_{\eta}\left(\lvert E^*\rvert^{-1}H(y)\right) \leq  g_{\eta}\left(\lvert E^*\rvert^{-1}(H(x)+\alpha_P(1-\eta)\lvert E^{*}\rvert)\right).
$$
\end{proof}

\deltaUpper*
\begin{proof}
Note that if have the stronger condition, for some $\tilde{\Delta}_{P} > 0$,
$$
\expP[H(y)] \leq  H_c(x)+\tilde{\Delta}_{P},  
$$
which may actually be easier to show, then we also have $\Delta_{P}(\eta)$ stability with $\Delta_{P}(\eta) \leq \tilde{\Delta}_{P}$, since by concavity of $h_{\eta} = g_{\eta}(\frac{x}{\lvert E^*\rvert})$ and Jensen's inequality:
$$
   \expP[h_{\eta}\left(\lvert E^*\rvert^{-1}H(y)\right)] 
 \leq h_{\eta}\left(\sum_{y}P(x,y)H(y)\right) \leq h_{\eta}( H(x)+\tilde{\Delta}_{P}).
$$
Also from the above, it is a simple consequence of Jensen's inequality that we can take $\Delta_P(\eta)$ to be the $\sqrt{\lVert H \rVert_{P}}$:
\begin{align*}
 \sqrt{\lVert \psi \rVert_{P}} &= \sqrt{\max_{x\in \Xcal}\sum_{y}P(y,x)(H(x) - H(y))^2} \\&\geq \max_{x\in \Xcal}\sum_{y}P(y,x)\lvert H(x) - H(y)\rvert \\&\geq \tilde{\Delta}_{P} \\&\geq \Delta_{P}(\eta),\quad \forall \eta \in [0, 1).
\end{align*}
\end{proof}

\section{Technical Details for MaxCut Hamming and MaxBisection}

\begin{lemma}
\label{lem:maxcut_mean_ener_bound}
Let $G(\Ncal, \Ecal)$ be drawn from the Erd\H{o}s-R\'enyi ensemble $\mathcal{G}\left(n, \frac{p}{n-1}\right)$ for a constant $p$. Consider the objective of \eqref{e:MaxCut}:
$$
     H(x) := -\frac{1}{2}\sum_{i < j} e_{ij}(1-x_ix_j).
$$
Then, with probability at least $1-\delta$ over the graph, it follows: %
$$
 \left\lvert-\mathbb{E}_{\pi}  H(x) - \frac{pk(n-k)}{n} \right\rvert \leq C \log(\delta^{-1}) \frac{2\sqrt{p}k(n-k)}{n\sqrt{2(n-1)}},
$$
where $C > 0$ is an arbitrary constant, and $\pi$ is the uniform distribution of Hamming-weight $k$ strings.
\end{lemma}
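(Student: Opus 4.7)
The plan is to reduce the statement to a concentration inequality for the number of edges in the random graph $G$. First, I would compute $\mathbb{E}_\pi[H(x)]$ for a \emph{fixed} graph realization. Note that for $x$ drawn uniformly from Hamming-weight-$k$ strings, and any pair $i < j$, a direct counting argument gives $\Pr_\pi[x_i \neq x_j] = \frac{2k(n-k)}{n(n-1)}$, since we must choose one coordinate from the $k$ positive positions and one from the $n-k$ negative positions. Since $-\tfrac{1}{2}(1-x_ix_j) = -\mathbb{1}[x_i\neq x_j]$, this gives
\begin{equation*}
-\mathbb{E}_\pi H(x) \;=\; \frac{2k(n-k)}{n(n-1)}\sum_{i<j}e_{ij} \;=\; \frac{2k(n-k)}{n(n-1)}\, N_G,
\end{equation*}
where $N_G$ is the (random) number of edges of $G$.

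Next I would compute the expectation over the graph distribution: $\mathbb{E}_G[N_G] = \binom{n}{2}\cdot \frac{p}{n-1} = \frac{np}{2}$, so $\mathbb{E}_G\mathbb{E}_\pi[-H(x)] = \frac{pk(n-k)}{n-1}$. The lemma's target centering $\frac{pk(n-k)}{n}$ differs from this by $\frac{pk(n-k)}{n(n-1)}$, which is of lower order than (and hence easily absorbed into) the stated error term for $n$ large and any constant $C$.

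The main step is then concentration of $N_G$ around $\frac{np}{2}$. Since $N_G$ is a sum of $\binom{n}{2}$ independent Bernoulli$(p/(n-1))$ random variables with variance $\sigma^2 \le \frac{np}{2}$, Bernstein's inequality yields
\begin{equation*}
\Pr\!\left[\,|N_G - \mathbb{E}[N_G]|\geq t\,\right]\;\le\; 2\exp\!\left(-\frac{t^2/2}{\sigma^2 + t/3}\right).
\end{equation*}
Choosing $t = C'\sqrt{np\log(\delta^{-1})}$ for a suitable constant $C'$ makes the right-hand side at most $\delta$. Rescaling by the prefactor $\frac{2k(n-k)}{n(n-1)}$ and combining with the centering discrepancy gives a bound of the shape
\begin{equation*}
\left\lvert -\mathbb{E}_\pi H(x) - \frac{pk(n-k)}{n}\right\rvert \;\le\; C'\,\frac{2k(n-k)\sqrt{p\log(\delta^{-1})}}{n(n-1)^{1/2}\,\sqrt{n}},
\end{equation*}
which, after tidying constants and using $\sqrt{\log(\delta^{-1})}\le \log(\delta^{-1})$ for $\delta$ small enough, matches the claimed form with an appropriate constant $C$.

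I expect the only genuine obstacle to be bookkeeping: specifically, matching the precise prefactor $\frac{2\sqrt{p}\,k(n-k)}{n\sqrt{2(n-1)}}$ and explaining why a $\log(\delta^{-1})$ (rather than $\sqrt{\log(\delta^{-1})}$) factor suffices. The first is a routine constant manipulation once Bernstein is applied, and the second is either a deliberately loose restatement absorbing the Bernstein linear-term regime, or it can be replaced by the tighter $\sqrt{\log(\delta^{-1})}$ scaling at the cost of a different constant $C$; both are acceptable since any constant $C$ is allowed in the statement.
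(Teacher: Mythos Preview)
Your approach is essentially identical to the paper's: both compute $-\mathbb{E}_\pi H(x) = \frac{2k(n-k)}{n(n-1)}\sum_{i<j}e_{ij}$ via the same counting argument for $\Pr_\pi[x_i\neq x_j]$, then apply a standard concentration bound (the paper uses Chernoff, you use Bernstein) to the binomial edge count and rescale. Your handling of the $\log(\delta^{-1})$ versus $\sqrt{\log(\delta^{-1})}$ issue is exactly right---the paper simply states the looser form without comment---though note a small arithmetic slip in your displayed final bound (an extra $\sqrt{n}$ in the denominator); the correct rescaling gives $\frac{2k(n-k)\sqrt{p}}{\sqrt{n}(n-1)}$, matching the paper's $\frac{2\sqrt{p}\,k(n-k)}{n\sqrt{2(n-1)}}$ up to constants.
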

\begin{proof}
The shift required to ensure that the mean over in-constraint strings is zero is given by
$$
    -\mathbb{E}_{\pi} H(x) = \sum_{i < j} e_{ij}\mathbb{E}_{\pi} \frac{1}{2}(1-x_ix_j)  = \sum_{i < j} e_{ij}\Pr[x_i\neq x_j].
$$
Note that there are $n \choose k$ bitstrings of Hamming weight $k$. If $x_i \neq x_j$, we have the freedom to place $k-1$ ``$-1$''s in $n-2$ spots. Adding a factor of two since the same can be done for ``$+1$''s, we get
$$
    \Pr[x_i\neq x_j] = \frac{2{n-2 \choose k-1}}{{n \choose k}} = \frac{2k(n-k)}{n(n-1)}.
$$

Suppose edge creation probability is $\frac{p}{n}$. Since  $\sum_{i<j}e_{ij}$ a Binomial random variable $B \left(\binom{n}{2}, \frac{p}{n} \right)$, applying the Chernoff bound asserts that with probability at least $1 - \delta$, we have:
$$
   \left \lvert \sum_{i<j}e_{ij} - \binom{n}{2}\frac{p}{n} \right\rvert \leq  C \log(\delta^{-1})\sqrt{\binom{n}{2}\frac{p}{n} \left(1-\frac{p}{n} \right)},
$$
where $C > 0$ is an arbitrary constant.

Accordingly, with probability $1-\delta$, it follows:
\begin{align*}
\left\lvert \mathbb{E}_{\pi}H(x) - \binom{n}{2}\frac{2k(n-k)}{n(n-1)}\frac{p}{n} \right\rvert &\leq C\log(\delta^{-1}) \frac{2k(n-k)}{n(n-1)}\sqrt{\binom{n}{2}\frac{p}{n} \left(1-\frac{p}{n} \right)}\\
\implies \left\lvert \mathbb{E}_{\pi}H(x) - \frac{pk(n-k)}{n}\right\rvert &\leq C\log(\delta^{-1}) \frac{2\sqrt{p}k(n-k)}{n\sqrt{2(n-1)}}.
\end{align*}
\end{proof}
Thus, for $k=\frac{n}{2}$ one has
$$
\left\lvert \mathbb{E}_{\pi}H(x) - \frac{pn}{4} \right\rvert \leq C\log(\delta^{-1}) \frac{\sqrt{p}n}{2\sqrt{2(n-1)}} \implies
\left\lvert \frac{\mathbb{E}_{\pi}H(x)}{n} -\frac{p}{4}\right\rvert \leq C\log(\delta^{-1}) \frac{\sqrt{p}}{2\sqrt{2(n-1)}},
$$
from which one can conclude 
\begin{equation}\label{eqn:shift_max_bisec} 
    -\frac{\mathbb{E}_{\pi}H(x)}{n} = \frac{p}{4}(1 + o(1))
\end{equation}
with high probability. Thus for $k=n/2$, $\lvert E^{\star}\rvert = \Theta(n)$ so the cost function only gets shifted by constant to make it mean  zero. More generally, for $k=o(n)$, with high probability
$$
     -\mathbb{E}_{\pi}H(x) =  pk(1 + o(1)).
$$

\begin{lemma}
\label{lem:constr_max_delta}
    Let $H$ be the cost function of \eqref{e:MaxCut}:
    $$
        H(x) := -\sum_{i < j}e_{ij}\frac{(1-x_ix_j)}{2},
    $$
    and $P$ be the transition matrix for the transposition walk on the space Hamming-weight $k$ bitstrings. Then, for an arbitrary graph $G$ with $\lvert \mathcal{E} \rvert$ edges and Hamming-weight $k$ MaxCut $\mathcal{C}^{*}_{k}$, it follows:
    $$\frac{\mathcal{C}^{*}_{k}(n-2)}{k(n-k)}  \geq \mathbb{E}_{y \sim x}[H(y)] -  H(x)  \geq  \frac{\mathcal{C}^{*}_{k}(n-2)}{k(n-k)}  - \left(\lvert \mathcal{E}\rvert -  \mathcal{C}^{*}_{k}\right)\max \left\{ \frac{2}{k}, \frac{2}{n-k} \right\},$$
    for every $x \in \{ u \in \{-1,1\}^n : | u | = k\}$.
\end{lemma}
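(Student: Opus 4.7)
The plan is to compute $\mathbb{E}_{y\sim x}[H(y)]-H(x)$ exactly by an edge-by-edge analysis of the transposition walk, and then read off the two bounds from that exact formula. Recall that a single step of the transposition walk at $x$ picks uniformly at random a pair $(a,b)$ with $x_a = +1$ and $x_b = -1$ (there are $k(n-k)$ such equally likely pairs) and sets $y$ equal to $x$ with the values at positions $a$ and $b$ swapped. Since $H(y)-H(x) = \tfrac12\sum_{(\alpha,\beta)\in\Ecal} e_{\alpha\beta}(y_\alpha y_\beta - x_\alpha x_\beta)$, by linearity
$$
\E_{y\sim x}[H(y)]-H(x) \;=\; \frac{1}{2}\sum_{(\alpha,\beta)\in\Ecal}\E\bigl[y_\alpha y_\beta - x_\alpha x_\beta\bigr].
$$
So the whole task reduces to computing one expectation per edge, classified by the sign pattern of $(x_\alpha,x_\beta)$.

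The casework for each edge is elementary but needs to be tracked carefully. Call an edge \emph{cut} if $x_\alpha x_\beta=-1$, and write $C=C(x)$ for the number of cut edges and $U_{++},U_{--}$ for the number of uncut edges with both endpoints $+1$ and both $-1$ respectively, so $C+U_{++}+U_{--}=\lvert\Ecal\rvert$. For an edge with both endpoints $+1$, $y_\alpha y_\beta$ flips sign precisely when exactly one of $\alpha,\beta$ equals the chosen $+1$ coordinate $a$, which happens with total probability $2/k$, giving $\E[y_\alpha y_\beta - x_\alpha x_\beta]=-4/k$. The mirror computation for both endpoints $-1$ gives $-4/(n-k)$. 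For a cut edge, say $x_\alpha=+1,x_\beta=-1$, enumerating the four cases $\{a=\alpha\text{ or not}\}\times\{b=\beta\text{ or not}\}$ with the appropriate $1/(k(n-k))$, $(n-k-1)/(k(n-k))$, $(k-1)/(k(n-k))$, $(k-1)(n-k-1)/(k(n-k))$ probabilities and collecting terms yields $\E[y_\alpha y_\beta]=\tfrac{2(n-2)}{k(n-k)}-1$, and therefore $\E[y_\alpha y_\beta - x_\alpha x_\beta]=\tfrac{2(n-2)}{k(n-k)}$.

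Plugging these three per-edge expectations back into the linear combination gives the exact identity
$$
\E_{y\sim x}[H(y)]-H(x) \;=\; \frac{(n-2)\,C(x)}{k(n-k)}\;-\;\frac{2U_{++}(x)}{k}\;-\;\frac{2U_{--}(x)}{n-k}.
$$
From this identity, both bounds follow by monotonicity arguments. For the upper bound, the two ``uncut'' terms are non-positive, so they can be dropped, and then $C(x)\le \mathcal{C}^\ast_k$ yields $\E[H(y)]-H(x)\le \tfrac{\mathcal{C}^\ast_k(n-2)}{k(n-k)}$. For the lower bound, replace each of $2/k$ and $2/(n-k)$ by their maximum, so that $-\tfrac{2U_{++}}{k}-\tfrac{2U_{--}}{n-k} \geq -(U_{++}+U_{--})\max\{2/k,\,2/(n-k)\} = -(\lvert\Ecal\rvert-C(x))\max\{2/k,\,2/(n-k)\}$, and then bound $C(x)\le \mathcal{C}^\ast_k$ (so that $\lvert\Ecal\rvert-C(x)\ge \lvert\Ecal\rvert-\mathcal{C}^\ast_k$) in the negative term to obtain the claimed expression.

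There is essentially no hard step in this proof: the only piece requiring care is the cut-edge computation, where the four-way case split must be handled without error, and the algebra $n-3-(k-1)(n-k-1) = 2(n-2)-k(n-k)$ must be simplified correctly to land on the clean $2(n-2)/(k(n-k))$ coefficient that ultimately drives the $\Delta_P$ bound of order $\mathcal{C}^\ast_k(n-2)/(k(n-k))$.
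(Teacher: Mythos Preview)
Your derivation of the exact identity
\[
\E_{y\sim x}[H(y)]-H(x) \;=\; \frac{(n-2)\,C(x)}{k(n-k)} - \frac{2U_{++}(x)}{k} - \frac{2U_{--}(x)}{n-k}
\]
and the upper bound are correct and match the paper's argument essentially line for line (the paper computes $\mathbb{P}[y_i\neq y_j]$ in each of the three sign cases and arrives at the same formula).

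The lower-bound step, however, has a direction error. You correctly obtain
\[
\E[H(y)]-H(x)\;\ge\; \frac{(n-2)C(x)}{k(n-k)} - \bigl(\lvert\Ecal\rvert-C(x)\bigr)\max\{2/k,\,2/(n-k)\}.
\]
But then you invoke $C(x)\le\mathcal{C}^\ast_k$ to replace $\lvert\Ecal\rvert-C(x)$ by $\lvert\Ecal\rvert-\mathcal{C}^\ast_k$ ``in the negative term.'' This goes the wrong way: since $\lvert\Ecal\rvert-C(x)\ge \lvert\Ecal\rvert-\mathcal{C}^\ast_k$ and the prefactor is negative, the substitution produces an \emph{upper} bound on that term, not a lower one. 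The same issue hits the positive first term, which you leave as $\frac{(n-2)C(x)}{k(n-k)}$ while the target has $\mathcal{C}^\ast_k$ there. The whole right-hand side is an \emph{increasing} affine function of $C(x)$, so $C(x)\le \mathcal{C}^\ast_k$ cannot yield the stated lower bound. Concretely, for $n=4$, $k=2$, a single edge $\{1,2\}$, and $x=(+1,+1,-1,-1)$, one has $\E[H(y)]-H(x)=-1$ while the claimed lower bound evaluates to $1/2$. The paper itself simply asserts the lower inequality without a supporting step and never uses it downstream (only the upper bound feeds into $\Delta_P$), so this does not affect the applications, but it is a genuine gap in your argument for the lower bound as written.
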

\begin{proof}
Fix a graph $G$, and consider an arbitrary $x \in \{-1, 1\}^n$ such that $\lvert x \rvert = k$ corresponding to number of $+1$'s. Swaps only occur between $x_i$ and $x_j$ that are different. All one step transitions denoted by $\sim$ are implied to be under the transposition walk.
\begin{align*}
H(x) &= -\sum_{i < j}e_{ij}\frac{(1-x_ix_j)}{2} = -\sum_{i < j~|~x_i \neq x_j}e_{ij},\\
\mathbb{E}_{y \sim x}[H(y)] &=  -\sum_{i < j}e_{ij}\mathbb{E}_{y\sim x} \left[\frac{1-y_iy_j}{2} \right],\\
\mathbb{E}_{y\sim x} \left[\frac{1-y_iy_j}{2} \right] &= \mathbb{P}[y_i \neq y_j | x].
\end{align*}

For a given fixed $x \in \{-1, 1\}^n$ satisfying $\lvert x \rvert = k$, we know there are $k(n-k)$ pairs of indices such that $x_i \neq x_j $, $\binom{n-k}{2}$  indices where $x_i = x_j = -1$, and  $\binom{k}{2}$ indices where $x_i = x_j = 1$. Thus we can decompose the expectation as follows:
\begin{equation}\label{e:Lem2.2_exp}
 \begin{aligned}
\mathbb{E}_{y \sim x}[H] =  &-\sum_{i <j~:~x_i  \neq x_j}e_{ij}\mathbb{P}_{y\sim x}[y_i \neq y_j | x_i \neq x_j] - \sum_{i <j~:~x_i = x_j =1}e_{ij}\mathbb{P}_{y\sim x}[y_i \neq y_j |  x_i = x_j = 1]  \\&-\sum_{i <j~:~x_i = x_j = -1}e_{ij}\mathbb{P}_{y\sim x}[y_i \neq y_j | x_i = x_j = -1].
\end{aligned}
\end{equation}

Using the following facts:
\begin{align*}
&\mathbb{P}_{y\sim x}[y_i \neq y_j | x_i \neq x_j] = \frac{(k-1)(n-k-1) + 1}{k(n-k)}\\
&\mathbb{P}_{y\sim x}[y_i \neq y_j | x_i = x_j = 1] =\frac{2}{k} \\
&\mathbb{P}_{y\sim x}[y_i \neq y_j | x_i = x_j = -1] =\frac{2}{n-k}.
\end{align*}
the expression \eqref{e:Lem2.2_exp} simplifies to 
\begin{align*}
\mathbb{E}_{y \sim x}[H] =  &-\sum_{i <j~:~x_i  \neq x_j}e_{ij}\frac{(k-1)(n-k-1) + 1}{k(n-k)} - \sum_{i <j~:~x_i = x_j=1}e_{ij}\frac{2}{k} -\sum_{i <j~:~x_i = x_j = -1}e_{ij}\frac{2}{n-k}.
\end{align*}
As a consequence, for all $x$ with Hamming weight $k$,
\begin{align*}
&\mathbb{E}_{y \sim x}[H] - H(x) \\ &= -\sum_{i <j~:~x_i  \neq x_j}e_{ij}\left(\frac{(k-1)(n-k-1) +1}{k(n-k)} - 1\right) -\sum_{i <j~:~x_i = x_j =1}e_{ij}\frac{2}{k} - \sum_{i <j~:~x_i = x_j = -1}e_{ij}\frac{2}{(n-k)}\\
&= \sum_{i <j~:~x_i  \neq x_j}e_{ij}\frac{n-2}{k(n-k)} -\sum_{i <j~:~x_i = x_j=1}e_{ij}\frac{2}{k} - \sum_{i <j~:~x_i = x_j = -1}e_{ij}\frac{2}{(n-k)}.
\end{align*} 
Thus for any $x \in \{-1,1\}^n$ and any graph $G$:
$$
    \mathbb{E}_{y \sim x}[H] \leq H(x) + \frac{-H(x)(n-2)}{k(n-k)},
$$
We have:
$$
  \forall x,  \mathbb{E}_{y \sim x}[H] \leq H(x) + \frac{\mathcal{C}^{*}_{k}(n-2)}{k(n-k)}.
$$
For the lower bound, for every $x \in \{u \in \{-1,1\}^n : | u| = k \}$, we have:
$$
\mathbb{E}_{y \sim x}[H(y)] -  H(x)  \geq  \frac{\mathcal{C}^{*}_{k}(n-2)}{k(n-k)}  - \left(\lvert \mathcal{E}\rvert -  \mathcal{C}^{*}_{k}\right)\max \left\{ \frac{2}{k}, \frac{2}{n-k} \right\},
$$
where $\lvert \mathcal{E}\rvert$ is the number of edges.
\end{proof}

\begin{lemma}
For the \ref{e:MaxCut} Hamiltonian $H$, the pseudo Lipschitz constant $\lVert H\rVert_{P}$ under the transposiiton walk is $\mathcal{O}(1)$ with high probability.
\end{lemma}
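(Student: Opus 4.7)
The plan is to start by writing $H(x)-H(y)$ explicitly for $y$ obtained from $x$ by the transposition that swaps positions $a$ (with $x_a=+1$) and $b$ (with $x_b=-1$). Only edges incident to $\{a,b\}$ can change the value of $H$, and the edge $(a,b)$ itself (if present) actually contributes nothing because $y_a y_b = (-x_a)(-x_b) = x_a x_b$. A direct calculation then yields
\[
H(x) - H(y) \;=\; \sum_{\ell \ne b:\,(a,\ell)\in\mathcal{E}} x_a x_\ell \;+\; \sum_{\ell \ne a:\,(b,\ell)\in\mathcal{E}} x_b x_\ell,
\]
a sum in which each summand lies in $\{-1,+1\}$. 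Consequently the $x$-uniform, deterministic bound $|H(x)-H(y)| \le d_a + d_b$ holds, where $d_v$ is the degree of vertex $v$ in $G$.

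Next I would square this bound and average over the uniform choice of $(a,b)$ under the transposition walk. Expanding $(d_a+d_b)^2$ and splitting the double sum over the $+1$ and $-1$ coordinates of $x$,
\[
\mathbb{E}_{y\sim x}\bigl[(H(x)-H(y))^2\bigr] \;\le\; \frac{S_+}{k} + \frac{S_-}{n-k} + \frac{2\, D_+ D_-}{k(n-k)},
\]
where $S_{\pm} = \sum_{v:\,x_v=\pm 1} d_v^2$ and $D_{\pm} = \sum_{v:\,x_v=\pm 1} d_v$. Each of these $x$-dependent quantities is dominated by the $x$-independent graph quantities $\sum_v d_v^2$ and $\sum_v d_v = 2|\mathcal{E}|$. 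For the regime $k = \Theta(n)$ in which the lemma is actually invoked, the three terms on the right-hand side are respectively of order $\tfrac{1}{n}\sum_v d_v^2$, $\tfrac{1}{n}\sum_v d_v^2$, and $|\mathcal{E}|^2/n^2$, uniformly in $x$. The proof thus reduces to the purely graph-theoretic claim that $|\mathcal{E}| = \Theta(n)$ and $\sum_v d_v^2 = O(n)$ both hold with high probability over $G \sim \mathcal{G}(n,\tfrac{p}{n-1})$.

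These global bounds are standard. The edge count $|\mathcal{E}|$ is Binomial with mean $\Theta(n)$, so $|\mathcal{E}|=\Theta(n)$ w.h.p.\ by a Chernoff bound. For the degree second-moment, I would use the identity $\sum_v d_v^2 = 2|\mathcal{E}| + 2C$, where $C$ is the number of length-two paths (``cherries'') in $G$; since $\mathbb{E}[C] = \Theta(np^2)$, it suffices to show $C=O(n)$ with high probability, which follows either from a direct variance calculation on the cherry-count U-statistic or from a McDiarmid-type bounded-differences inequality applied to the edge indicators. Substituting these estimates into the bound above yields $\|H\|_P = O(1)$, uniformly in $x$ with high probability over the graph.

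I expect the main obstacle to be the concentration step for $\sum_v d_v^2$. Its first moment is immediate, but since distinct length-two paths can share an edge, the indicators defining $C$ are correlated and some care is needed in bounding the variance (or in setting up a clean martingale). Everything else — the deterministic bound $|H(x)-H(y)| \le d_a + d_b$, the algebraic reduction to the global graph statistics, and the Chernoff estimate for $|\mathcal{E}|$ — is elementary, so the bulk of the work sits in obtaining a clean high-probability control of the cherry count.
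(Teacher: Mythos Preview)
Your proposal is correct for the regime $k=\Theta(n)$ (which, as you note, is the only one in which the lemma is invoked for a constant speedup) and takes a genuinely different, considerably more elementary route than the paper.

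The paper expands $\mathbb{E}_{y\sim x}[(H(x)-H(y))^2] = \mathbb{E}_{y\sim x}[H(y)^2] - H(x)\bigl(2\mathbb{E}_{y\sim x}[H(y)] - H(x)\bigr)$ exactly, via a case analysis on the sign pattern $(x_i,x_j,x_r,x_s)$ for every pair of edges; it computes the exact transposition-walk probabilities $\mathbb{E}_{y\sim x}[\hat y_{ij}\hat y_{rs}]$ in each of four cases and then applies Chernoff-type bounds to each of the five resulting $x$-dependent edge sums. Your approach short-circuits all of this with the deterministic bound $|H(x)-H(y)| \le d_a+d_b$ and reduces directly to the graph-wide quantities $\sum_v d_v^2$ and $|\mathcal{E}|$. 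Two payoffs: the argument is much shorter, and the final bound is automatically uniform in $x$, so no union bound over the $\binom{n}{k}$ configurations is needed --- whereas the paper's edge sums depend on $x$ and in principle require one. The cost is that your degree bound discards the sign cancellations in $\sum_{\ell\sim a} x_a x_\ell$, so it cannot recover $O(1)$ when $k=o(n)$; the paper's exact expression in principle could, though its own concentration step is already delicate in that regime.

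One small caveat on the obstacle you flag: the second-moment route for the cherry count works cleanly (pairs of length-two paths sharing an edge contribute $O(n)$ to $\mathrm{Var}(C)$, and Chebyshev finishes). The McDiarmid route is less clean than you suggest, because flipping a single edge $(i,j)$ changes $C$ by up to $d_i+d_j-2$, which is not uniformly bounded; you would have to condition on a max-degree event first, and even then the resulting concentration is weak. Stick with the variance computation.
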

\begin{proof}
Recall
$$
   \lVert H(x) \rVert_{P} := \expP[(H(y) - H(x))^2],
$$
so we can consider
$$
\expP[H(y)^2] -  H(x)(2\expP[H(y)] - H(x)).
$$
The only new term is $\mathbb{E}_{y\sim x}[H(y)^2]$, which requires us to look at  terms like:
$$
    e_{ij}e_{rs}\expP \left[\frac{1-y_iy_j}{2}\frac{1-y_{r}y_{s}}{2} \right],
$$
$y$ is $x$ after a single random transposition. We can put these terms in groups based on:

\begin{enumerate}
    \item $x_j = x_i = x_r= x_s = \pm 1$, the term is always zero
    \item 
    $x_j = x_i = x_r \neq x_s =\pm 1$ \begin{align*}&-1 :\mathbb{E}_{y\sim x} \left[\frac{1-y_iy_j}{2}\frac{1-y_{r}y_{s}}{2}\right] =\frac{2(n-k-1)}{k(n-k)}\\
    &+1:\mathbb{E}_{y\sim x}\left[\frac{1-y_iy_j}{2}\frac{1-y_{r}y_{s}}{2}\right] =\frac{2(k-1)}{k(n-k)}\end{align*}
    \item $x_j = x_i \neq x_r= x_s = \pm 1$ the term is zero unless an element of  $\{i, j\}$ is swapped with an element of  $\{r, s\}$, giving \begin{align*}\pm 1:\mathbb{E}_{y\sim x}\left[\frac{1-y_iy_j}{2}\frac{1-y_{r}y_{s}}{2}\right] = \frac{4}{(n-k)k}\end{align*}
    \item $x_j \neq x_i = x_r \neq x_s = \pm 1$,
    $$
        \mathbb{E}_{y\sim x} \left[\frac{1-y_iy_j}{2}\frac{1-y_{r}y_{s}}{2} \right] = \frac{(n-k-2)(k-2) + 2}{k(n-k)}= \frac{k(n-k) - 2(n-3)}{k(n-k)}
    $$
\end{enumerate}

Let $\hat{y}_{ij} = \frac{1- y_i y_j}{2}$, then expanding:
\begin{align*}
H(y) &=  -\sum_{i <j~:~x_i  \neq x_j}e_{ij}\hat{y}_{ij}- \sum_{i <j~:~x_i = x_j =1}e_{ij}\hat{y}_{ij} -\sum_{i <j~:~x_i = x_j = -1}e_{ij}\hat{y}_{ij}\\
\mathbb{E}_{y\sim x}[H(y)^2] &=\mathbb{E}_{y\sim x}\left\{\left(\sum_{i <j~:~x_i  \neq x_j, }e_{ij}\hat{y}_{ij}\right)^2\right\} \\&+ 2\mathbb{E}_{y\sim x}\left\{\left(\sum_{i <j~:~x_i = x_j =1}e_{ij}\hat{y}_{ij}\right)\left(\sum_{r<s~:~x_r \neq x_s}e_{ij}\hat{y}_{ij}\right)\right\} \nonumber \\
& + 2\mathbb{E}_{y\sim x}\left\{\left(\sum_{i <j~:~x_i = x_j =-1}e_{ij}\hat{y}_{ij}\right)\left(\sum_{r<s~:~x_r \neq x_s}e_{ij}\hat{y}_{ij}\right)\right\}\\&+2\mathbb{E}_{y\sim x}\left\{\left(\sum_{i <j~:~x_i = x_j =-1}e_{ij}\hat{y}_{ij}\right)\left(\sum_{r<s~:~x_r = x_s =1}e_{ij}\hat{y}_{ij}\right)\right\}, 
\end{align*}
where we have eliminated the square of the $x_i=x_j=\pm 1$ terms since they fall into group $1$. Expanding further and passing the expectations through:
\begin{align*}
\mathbb{E}_{y\sim x}[H(y)^2] &= \sum_{i<j~:~x_i \neq x_j} e_{ij}\mathbb{E}_{y\sim x}\hat{y}_{ij}\\ 
&\quad+2\sum_{i <j, r<s, (i,j)\neq (r,s)~:~x_i  \neq x_j,~x_r \neq x_s }e_{ij}e_{rs}\mathbb{E}_{y\sim x}[\hat{y}_{ij}\hat{y}_{rs}] \\
&\quad+ 2\sum_{i <j, r <s~:~x_i = x_j =1, x_r \neq x_s}e_{ij}e_{rs}\mathbb{E}_{y\sim x}[\hat{y}_{ij}\hat{y}_{rs}] \nonumber \\
&\quad+ 2\sum_{i <j, r <s~:~x_i = x_j =-1, x_r \neq x_s}e_{ij}e_{rs}\mathbb{E}_{y\sim x}[\hat{y}_{ij}\hat{y}_{rs}] \nonumber \\
&\quad+ 2\sum_{i <j, r <s~:~x_i = x_j =1, x_r = x_s = -1}e_{ij}e_{rs}\mathbb{E}_{y\sim x}[\hat{y}_{ij}\hat{y}_{rs}] \nonumber.
\end{align*}

Next we compute the expectations by identifying which group they belong to:
\begin{align*}
\mathbb{E}_{y\sim x}[H(y)^2] &= \frac{k(n-k)-(n-2)}{k(n-k)}\sum_{i<j~:~x_i \neq x_j} e_{ij}\\ 
&+\frac{2k(n-k) - 4(n-3)}{k(n-k)}\sum_{i <j, r<s, (i,j)\neq (r,s)~:~x_i  \neq x_j,~x_r \neq x_s }e_{ij}e_{rs} \\&+ \frac{4(n-k-1)}{k(n-k)}\sum_{i <j, r <s~:~x_i = x_j =1, x_r \neq x_s}e_{ij}e_{rs} \nonumber \\
& + \frac{4(k-1)}{k(n-k)}\sum_{i <j, r <s~:~x_i = x_j =-1, x_r \neq x_s}e_{ij}e_{rs} \nonumber \\
&+ \frac{8}{(n-k)k}\sum_{i <j, r <s~:~x_i = x_j =1, x_r = x_s = -1}e_{ij}e_{rs}. \nonumber
\end{align*}
The only component to clarify is the first sum, which follows from a calculation in the previous lemma
\begin{align}
\expP[\hat{y} | x_i \neq x_j] = \frac{(k-1)(n-k-1) + 1}{k(n-k)} = \frac{k(n-k)-(n-2)}{k(n-k)}.
\end{align}

For the $2\mathbb{E}_{y \sim x}[H] - H(x) $ part, we use the same computations from the previous lemma, where we computed $\mathbb{E}_{y \sim x}[H] - H(x)$:

\begin{align*}
&2\mathbb{E}_{y \sim x}[H] - H(x) \\ &= -\sum_{i <j~:~x_i  \neq x_j}e_{ij}\left(\frac{2(k-1)(n-k-1) +2 }{k(n-k)} - 1\right) -\sum_{i <j~:~x_i = x_j =1}e_{ij}\frac{4}{k} - \sum_{i <j~:~x_i = x_j = -1}e_{ij}\frac{4}{(n-k)}\\
&= \sum_{i <j~:~x_i  \neq x_j}e_{ij}\frac{-k(n-k)+2(n-2)}{k(n-k)} -\sum_{i <j~:~x_i = x_j=1}e_{ij}\frac{4}{k} - \sum_{i <j~:~x_i = x_j = -1}e_{ij}\frac{4}{(n-k)},
\end{align*}
where recall that
\begin{align*}
&\expP[\hat{y}| x_i = x_j = 1] =\frac{2}{k} \\
&\expP[\hat{y}| x_i = x_j = -1] =\frac{2}{n-k}.
\end{align*}
Then we compute $-H(x)[2\mathbb{E}_{y \sim x}[H] - H(x)]$:
\begin{align*}
&-H(x)[2\mathbb{E}_{y \sim x}[H] - H(x)] \\&=\sum_{i <j:~x_i  \neq x_j}e_{ij}\frac{-k(n-k)+2(n-2)}{k(n-k)}\\&+\sum_{i <j, r < s, (i,j)\neq(r,s):~x_i  \neq x_j,x_r \neq x_s}e_{rs}e_{ij}\frac{-2k(n-k)+4(n-2)}{k(n-k)} \nonumber\\ &-\sum_{i <j, r < s~:~x_i = x_j=1, x_r \neq x_s}e_{rs}e_{ij}\frac{4}{k} \\&- \sum_{i <j, r < s:~x_i = x_j = -1, x_r \neq x_s}e_{rs}e_{ij}\frac{4}{(n-k)}.
\end{align*}

We can now put all the expressions together to get:
\begin{align}
\label{eqn:psuedo_lip_bound}
\mathbb{E}_{y\sim x}[H(y)^2] -  H(x)(2\mathbb{E}_{y\sim x}[H(y)] - H(x)) &= \frac{(n-2)}{k(n-k)}\sum_{i<j~:~x_i \neq x_j} e_{ij}\\ 
&\quad+\frac{4}{k(n-k)}\sum_{i <j, r<s, (i,j)\neq (r,s)~:~x_i  \neq x_j,~x_r \neq x_s }e_{ij}e_{rs} \\
&\quad- \frac{4}{k(n-k)}\sum_{i <j, r <s~:~x_i = x_j =1, x_r \neq x_s}e_{ij}e_{rs} \nonumber \\
&\quad- \frac{4}{k(n-k)}\sum_{i <j, r <s~:~x_i = x_j =-1, x_r \neq x_s}e_{ij}e_{rs} \nonumber \\
&\quad+ \frac{8}{k(n-k)}\sum_{i <j, r <s~:~x_i = x_j =1, x_r = x_s = -1}e_{ij}e_{rs}. 
\end{align}

Let $X$ be a Binomial random variable $\mathcal{B}(M, q)$, then
\begin{align}
\label{eqn:sqr_cher}
\mathbb{P}[ X^2 \geq (1 +\delta)^2\mathbb{E}[X^2]] \leq e^{-\frac{\delta^2Mq}{2+\delta}}.
\end{align}
This follows simply from Jensen's inequality
\begin{align*}
\mathbb{P}[ X^2 \geq (1 +\delta)^2\mathbb{E}[X^2]] &\leq  \mathbb{P}[ X^2 \geq (1 +\delta)^2(\mathbb{E}[X])^2]\\
&= \mathbb{P}[ X \geq (1 +\delta)\mathbb{E}[X]]\\
&\leq e^{-\frac{\delta^2\mathbb{E}[X]}{2+\delta}},
\end{align*}
where the last inequality is the multiplicative Chernoff bound. Also for $X$ and $Y$ being independent Binomials we have that:
\begin{align*}
 \mathbb{P}[XY \geq (1+\delta)\mathbb{E}[XY]]  \leq 2e^{-\frac{\delta^2\min(\E[X], \E[Y])}{2+\delta}},
\end{align*}
which follows from
\begin{align*}
\mathbb{P}[XY \geq (1+\delta)\mathbb{E}[XY]] \le \mathbb{P}[X \geq (1 + \delta/2)\E[X] \vee Y \geq (1 + \delta/2)\E[Y]],
\end{align*}
which follows from 
\begin{align*}
(X \geq (1 + \delta/3)\E[X]) \wedge (Y \geq (1 + \delta/3)\E[Y]) &\implies  XY \leq (1 + \delta)\E[X]\E[Y] \\&= XY \leq (1 + \delta)\E[XY],
\end{align*}
so union bound gives the desired result.

Thus, we can assume with high probability all of the sums of Bernoullis in Equation \eqref{eqn:psuedo_lip_bound} are constant factors from their means. Their means are
\begin{enumerate}
    \item $\E[\sum_{i<j~:~x_i \neq x_j} e_{ij}] \frac{pk(n-k)}{n-1} \asymp pk$
    \item $\E[\sum_{i <j, r<s, (i,j)\neq (r,s)~:~x_i  \neq x_j,~x_r \neq x_s }e_{ij}e_{rs}] =[\frac{p}{n-1}k(n-k)]^2 \asymp p^2k^2$
    \item  $\E[\sum_{i <j, r <s~:~x_i = x_j =1, x_r \neq x_s}e_{ij}e_{rs}] =[\frac{p}{n-1}]^2\binom{k}{2}k(n-k) \asymp p^2\frac{k^3}{n}$
    \item $\E[\sum_{i <j, r <s~:~x_i = x_j =-1, x_r \neq x_s}e_{ij}e_{rs}] = [\frac{p}{n-1}]^2\binom{n-k}{2}k(n-k) \asymp p^2kn$
    \item $\E[\sum_{i <j, r <s~:~x_i = x_j =1, x_r = x_s = -1}e_{ij}e_{rs}] = [\frac{p}{n-1}]^2\binom{k}{2}\binom{n-k}{2} \asymp p^2k^2$.
\end{enumerate}

Plugging the above asymptotics in for the Binomials in Equation \eqref{eqn:psuedo_lip_bound} suffices to obtain $\lVert H \rVert_{P} = \Ocal(1)$.

\end{proof}

\begin{lemma}
\label{lem:loose_maxcut_ham_bound}
With high probability, the optimal objective value of \eqref{e:MaxCut} satisfies:
\begin{align}
    \mathcal{C}^*_k = \begin{cases}
          o(k\log(n)) &\text{if}~k = o(n),\\
           \Ocal(n) &\text{if}~k = \Theta(n),
    \end{cases}
\end{align}
where $k$ is the Hamming weight.
\end{lemma}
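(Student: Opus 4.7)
The plan is to treat the two regimes separately, using the fact that for the cost Hamiltonian $H(x) = -\frac{1}{2}\sum_{i<j}e_{ij}(1 - x_ix_j)$, the quantity $|\mathcal{C}^*_k|$ equals the maximum cut value over all bipartitions of sizes $(k,n-k)$, and hence is a positive quantity that can be bounded by either the total number of edges or by degrees of vertices on the smaller side. Recall $G \sim \mathcal{G}(n, p/(n-1))$ with $p$ constant.

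For $k=\Theta(n)$, I would simply observe that any cut has size at most $|\mathcal{E}|$. Since $|\mathcal{E}| \sim \mathrm{Bin}\bigl(\binom{n}{2}, p/(n-1)\bigr)$ has mean $pn/2 = \Theta(n)$, a standard Chernoff bound yields $|\mathcal{E}| = \mathcal{O}(n)$ with high probability, giving $|\mathcal{C}^*_k| = \mathcal{O}(n)$ immediately.

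For $k=o(n)$, I would instead use the degree-based upper bound: for any subset $S \subseteq [n]$ with $|S|=k$, the number of edges crossing $(S, [n]\setminus S)$ is at most $\sum_{v \in S}\deg(v) \le k\,\Delta(G)$, where $\Delta(G) := \max_v \deg(v)$. Maximizing over $S$, I get $|\mathcal{C}^*_k| \le k\,\Delta(G)$. Then I would invoke the classical fact that for Erd\H{o}s--R\'enyi graphs with constant expected degree $p$, the maximum degree satisfies $\Delta(G) = \mathcal{O}(\log n / \log\log n)$ with high probability; this follows from a union bound applied to the Chernoff tail of $\mathrm{Bin}(n-1, p/(n-1))$, where for $t = C\log n/\log\log n$ and sufficiently large $C$, $\mathbb{P}[\deg(v) \ge t] \le e^{-\Omega(t \log t)} = o(1/n)$. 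Plugging in yields $|\mathcal{C}^*_k| = \mathcal{O}(k \log n/\log\log n) = o(k \log n)$, as claimed.

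The main (minor) obstacle is the $k=o(n)$ case: a naive union bound over all $\binom{n}{k}$ subsets of size $k$ to directly bound the maximum cut via Chernoff does not give anything useful once $k$ grows, since the cut value is only concentrated around its mean $\Theta(pk)$ on the scale $\sqrt{k}$ and the entropy $\log\binom{n}{k} = \Theta(k \log(n/k))$ of subsets overwhelms this. The key trick is therefore to bypass this by the deterministic inequality ``cut $\le$ degree sum on the smaller side,'' which reduces the problem to the already-well-understood concentration of the maximum degree of a sparse Erd\H{o}s--R\'enyi graph. Everything else is bookkeeping.
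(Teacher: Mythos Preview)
Your proof is correct and takes a genuinely different route from the paper. The paper argues via the first moment method: it bounds $\mathbb{P}[X_m>0]\le \binom{n}{k}\binom{k(n-k)}{m}(p/n)^m$ where $X_m$ counts feasible cuts of size $m$, and then locates the threshold $m$ at which this bound vanishes (roughly $m\asymp k\log(n/k)$ when $k=o(n)$, and $m\asymp n$ when $k=\Theta(n)$). You instead use purely deterministic inequalities---cut $\le |\mathcal{E}|$ in the dense regime, cut $\le k\,\Delta(G)$ in the sparse regime---and push all the randomness into well-known concentration results for $|\mathcal{E}|$ and $\Delta(G)$.

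Your argument is shorter and more elementary, and in the $k=o(n)$ case it actually yields the quantitatively sharper bound $\mathcal{O}(k\log n/\log\log n)$ rather than merely $o(k\log n)$. The paper's first-moment calculation, on the other hand, is more informative about where the true threshold lies (it identifies the transition at $m\log(m/k)\asymp k\log(n/k)$), which could matter if one later wanted a matching lower bound; your degree-sum inequality throws away the combinatorics of which $k$ vertices are chosen and so cannot recover that. For the purposes of the lemma as stated, your approach is entirely adequate and arguably preferable.
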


\begin{proof}
We can use the indicator trick to try to bound  the probability of a cut set of a given size. Let $z \in \{0, 1\}^n$, $\lvert z \rvert = k$ and $ I_{z,m}$ be a Bernoulli random variable indicating whether there are $m$ edges cut with assignment $z$ over the random choice of graph. Then 
$$X_m := \sum_{z \in \{0, 1\}^n, \lvert z\rvert =k} I_{z,m}$$
is the number of in-constraint cuts of size $m$.  Note that $I_{z,m}$ are not independent. For any graph $G$ drawn from $\mathcal{G}(n, p/n)$, we have
$$\mathbb{P}[I_{z,m} = 1] = \binom{k(n-k)}{m}(p/n)^{m}(1-p/n)^{k(n-k) - m}.$$ The first moment method gives:
$$
    \mathbb{P}[X_m > 0] \leq \mathbb{E}[X_m] \leq \binom{n}{k}\binom{k(n-k)}{m}n^{-m}.
$$

Suppose that $k = o(n) \cap \omega(1)$ and $m = \Ocal(n)$. Then $k(n-k) = \Ocal(nk)$, and we can apply the following asymptotics:
$$
    \mathbb{P}[X_m > 0] \leq \binom{n}{k}\binom{k(n-k)}{m}n^{-m} \asymp \left(\frac{n}{k}\right)^{k}\left(\frac{nk}{m}\right)^m n^{-m} = n^k k^{m-k}m^{-m}.
$$

The goal is to try to identify the phase transition point at which the probability goes to zero asymptotically. We can look at 
$$
    k\log(n) + m\log(k) - k\log(k) - m\log(m).
$$
Upon taking $ m = \Theta(k \cdot r)$, we obtain $$  k\log(n) + m\log(k) - k\log(k) - m\log(m) = k\log \left( \frac{n}{kr^r} \right).$$  Transition is at $\log(n/k) = r\log(r)$. For $r = \log(n)$, $ \mathbb{P}[X_m > 0] \rightarrow 0$, thus $C^*_k =o(k\log(n))$.

Suppose $k = \Theta(n)$,  then $\log\binom{n}{k} \asymp \mathcal{H}(\frac{k}{n})n$, where $\mathcal{H}$ is the binary entropy function. Thus
$$
    \mathbb{P}[X_m > 0] \leq \binom{n}{k}\binom{k(n-k)}{m}n^{-m} \asymp 2^{\mathcal{H}(n/k)n}\left(\frac{nk}{m}\right)^m n^{-m} = 2^{\mathcal{H}(n/k)n}k^{m}m^{-m}.
$$
We can look at 
$$
   \mathcal{H}(n/k)n - m\log(k/m),
$$
so  transition is at $m = 2^{\mathcal{H}(n/k)}k = \Theta(n)$. Thus for $k=\Theta(n)$, $\mathcal{C}^{*}_k = \Ocal(n)$.
\end{proof}

\section{Constrained Short Path via Penalized Objective}
\label{sec:penalized_obj}
Suppose we want to solve the following constrained problem
$$
    \min_{x \in \mathcal{X} \subseteq \{-1, 1\}^n} H(x).
$$
Suppose we also have a CSP 
\begin{align*}
\mathcal{C}(x) = \sum_{\ell=1}^mC_{\ell}(x),
\end{align*}
with $m$ constraints $C_{\ell}$, indicating the in-constraint solutions. Suppose $C_{\ell}$ has $k_{\ell}$  literals and has $s_{\ell}$ satisfying assignments, then
\begin{align*}
    \mathcal{C}_{\ell}(x) = 
    \begin{cases}
        -\frac{1}{s_{\ell}} \quad \mbox{if x satisfies constraint $\ell$} \\
        \frac{1}{2^{k_{\ell}}-s_{\ell}} \quad \mbox{otherwise}.
    \end{cases}
\end{align*}

Consider the task of minimizing the penalized Hamiltonian
\begin{align}
\label{eqn:penalty_ham}
    \tilde{H}(x) =\frac{H(x)}{\lVert H \rVert_{2}} + \mathcal{C}(x).
\end{align}
The Markov chain $\Mcal = (\Xcal, P, \pi)$ is the random walk on the hypercube, and hence $\pi$ is the uniform distribution over $\{-1, 1\}^n$. Thus, the short path Hamiltonian is the same as \cite{dalzell2022mind} but with the penalized cost Hamiltonian.

The global minimum of Equation \eqref{eqn:penalty_ham} is in fact the in-constraint minimum and $\mathbb{E}_{\pi}[\tilde{H}(x)] = \mathbb{E}_{\pi}[\frac{H(x)}{\lVert H\rVert_{2}}]$. Also note that $\lvert E^*\rvert = \Theta(m)$ for $\tilde{H}$. The below results will show that due to the normalization, the properties of the CSP are the only components that matter for determining the runtime. We denote $k = \max k_{\ell}$.

\begin{lemma}
Let $\Mcal = (\Xcal, P, \pi)$ be the random walk on the Hamming hypercube. The penalized Hamiltonian is $\Delta_P$ stable with
\begin{align}
    \Delta_{P}(\eta) = \Ocal\left(\frac{mk}{n}\right).
\end{align}
\end{lemma}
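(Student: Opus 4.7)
The plan is to reduce the $\Delta_P$ stability condition to a bound on the expected one-step change $\expP[\tilde H(y)-\tilde H(x)]$, and then bound this change by accounting separately for the contributions of the normalized cost function and of each penalty clause. Since $g_\eta$ is concave and non-decreasing on $[-1,\infty)$, so is $h_\eta(\cdot)=g_\eta(\cdot/|E^{\star}|)$. Jensen's inequality therefore yields
$$
\expP[h_\eta(\tilde H(y))]\le h_\eta\!\left(\expP[\tilde H(y)]\right),
$$
so it suffices to exhibit a $\Delta_P(\eta)=\Ocal(mk/n)$ for which $\expP[\tilde H(y)]\le \tilde H(x)+\Delta_P(\eta)$ for all $x$; monotonicity of $h_\eta$ then immediately implies the $\Delta_P$-stability inequality \eqref{e:DeltaCondition}.

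The next step is to split the expected change under a single hypercube bit-flip as
$$
\expP[\tilde H(y)-\tilde H(x)]=\frac{1}{\lVert H\rVert_2}\expP[H(y)-H(x)]+\sum_{\ell=1}^{m}\expP[C_\ell(y)-C_\ell(x)].
$$
The cost term is uniformly bounded by $|H(y)-H(x)|/\lVert H\rVert_2\le 2$, contributing at most an $\Ocal(1)$ additive term. For each clause, the value of $C_\ell$ depends only on the $k_\ell$ literals in clause $\ell$, so a uniformly random bit-flip alters $C_\ell$ only when the flipped index coincides with one of these literals, which happens with probability $k_\ell/n$. On that event the change $|C_\ell(y)-C_\ell(x)|$ is bounded by the diameter of the range of $C_\ell$, namely $\tfrac{1}{s_\ell}+\tfrac{1}{2^{k_\ell}-s_\ell}\le 2$, since both $s_\ell\ge 1$ and $2^{k_\ell}-s_\ell\ge 1$ for any non-trivial clause. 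Hence each clause contributes at most $2k_\ell/n$ in expectation, and summing gives total penalty contribution $\sum_\ell 2k_\ell/n\le 2mk/n$.

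Combining the two contributions yields $\expP[\tilde H(y)-\tilde H(x)]=\Ocal(mk/n+1)$, which is $\Ocal(mk/n)$ in the relevant regime where $mk/n=\Omega(1)$ (and in any case can be absorbed into the stated asymptotic bound up to constants, since the framework only uses $\Delta_P$ up to constants). Plugging back into Jensen's inequality and using monotonicity of $h_\eta$ completes the proof with $\Delta_P(\eta)=\Ocal(mk/n)$, independent of $\eta$. The main technical obstacle is the per-clause bookkeeping: one must be careful that the crude bound on $|C_\ell(y)-C_\ell(x)|$ remains uniform in $s_\ell$ and $k_\ell$ without an accidental dependence on the clause arity, and one should verify that the normalization $|E^{\star}|=\Theta(m)$ in the definition of $h_\eta$ is consistent with the scaling of $\tilde H$ used here. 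A secondary, minor point is to confirm that $\Delta_P(\eta)$ may indeed be taken independent of $\eta$, which follows from the fact that our upper bound on the expected one-step change does not invoke $\eta$ at any stage.
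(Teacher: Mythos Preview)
Your proof is correct and follows essentially the same approach as the paper: both reduce $\Delta_P$-stability to a bound on the expected one-step change via Jensen and monotonicity of $h_\eta$ (this is the paper's ``Upper bounds on $\Delta(\eta)$'' lemma), then split into the normalized cost term (bounded by $\Ocal(1)$) and the CSP term (bounded by $\Ocal(mk/n)$ since each clause is touched with probability $k_\ell/n$). The paper's one-line proof simply omits the per-clause bookkeeping and the explicit invocation of Jensen that you spell out.
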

\begin{proof}
\begin{align}
\Delta_{P}(\eta) \leq \frac{\mathbb{E}_{y\sim}[H(y) - H(x)]}{\lVert H\rVert_{2}} + \frac{mk}{n}  = \Ocal\left(\frac{mk}{n}\right).
\end{align}
\end{proof}

\begin{lemma}[Tail bound condition for Penalized Hamiltonian]
Suppose $\mathbb{E}_{\pi}[\tilde{H_c}] = 0$,
$$
\pi((1-\eta)E^{\star})\leq 2^{-\gamma}, \quad \gamma = \Omega\left[\frac{2(1-\eta)^2}{\ell^2}\right].
$$
\end{lemma}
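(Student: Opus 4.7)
The plan is to apply the Herbst-style concentration bound of Corollary~\ref{cor:tail-herbst} to the penalized Hamiltonian $\tilde{H}$ viewed as a function on $\{-1,1\}^n$ under the uniform stationary distribution of the hypercube random walk, and then rearrange the resulting tail estimate into the desired form for $\gamma$.

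First, I would collect the inputs that Corollary~\ref{cor:tail-herbst} requires. (i)~The hypercube walk is a product chain, and the classical result of Diaconis--Saloff-Coste gives log-Sobolev constant $\omega = \Theta(1/n)$. (ii)~The hypothesis $\mathbb{E}_{\pi}[\tilde H] = 0$ lets the tail threshold be $(1-\eta)E^{\star}$ with no mean-correction term, and as noted directly above the lemma we have $\lvert E^{\star}\rvert = \Theta(m)$: at a feasible minimizer the CSP penalty contributes $-\sum_\ell 1/s_\ell = -\Theta(m)$ while the normalized part $H/\lVert H\rVert_2$ contributes only $O(1)$. (iii)~Since $\pi$ is uniform on $\{-1,1\}^n$, $\ln(1/\pi(E^{\star})) = \Theta(n)$ up to at most polynomial degeneracy, which we may absorb.

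Next, I would bound the $P$-pseudo-Lipschitz norm. A single step of the hypercube walk flips a uniformly chosen bit, so
\[
\lVert \tilde H\rVert_P \;=\; \max_{x}\;\frac{1}{n}\sum_{i=1}^{n}\bigl(\tilde H(x^{(i)}) - \tilde H(x)\bigr)^{2},
\]
where $x^{(i)}$ denotes $x$ with bit $i$ flipped. The $H/\lVert H\rVert_2$ component contributes an $O(1/\sqrt{n})$ change per flip (this is where the $\lVert H\rVert_2$ normalization is used), and the penalty $\mathcal{C}$ changes by at most a constant multiple of the number of constraints containing bit $i$ times $\max_\ell(1/s_\ell,\, 1/(2^{k_\ell}-s_\ell))$. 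A direct summation then yields a bound of the form $\lVert \tilde H\rVert_P = O(\ell^{2} m^{2}/n^{2})$, where $\ell$ denotes the chosen per-bit sensitivity parameter.

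Finally, plugging these three estimates into Corollary~\ref{cor:tail-herbst} gives
\[
\gamma \;\geq\; \frac{\omega\bigl((1-\eta)\lvert E^{\star}\rvert\bigr)^{2}}{\lVert \tilde H\rVert_P\,\ln(1/\pi(E^{\star}))} \;=\; \Omega\!\left(\frac{(1/n)\,(1-\eta)^{2}m^{2}}{(\ell^{2}m^{2}/n^{2})\cdot n}\right) \;=\; \Omega\!\left(\frac{(1-\eta)^{2}}{\ell^{2}}\right),
\]
which matches the claimed lower bound (up to the universal constant $2$, which comes out of the standard hypercube log-Sobolev constant). The main obstacle will be the pseudo-Lipschitz bound: the per-bit contribution of $\mathcal{C}$ mixes the clause sizes $k_\ell$, the per-clause normalizers $s_\ell$ and $2^{k_\ell}-s_\ell$, and the variable degrees, so the definition of $\ell$ must be chosen to absorb all of these simultaneously. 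The normalization of $H$ by its spectral norm is what makes the first component of $\tilde H$ harmless relative to the penalty, so this is where the choice of scaling in \eqref{eqn:penalty_ham} is essential; once $\lVert\tilde H\rVert_P$ is pinned down, the remainder of the argument is a one-line application of Herbst.
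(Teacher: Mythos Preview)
Your proposal is correct and follows essentially the same route as the paper: both arguments apply the Herbst concentration inequality for the hypercube walk (log-Sobolev constant $\omega=2/n$), bound the pseudo-Lipschitz norm of $\tilde H$ by $O(m^{2}k^{2}/n^{2})$ (the paper asserts $\Delta_i=\mathcal{O}(mk/n)$ directly; you decompose the CSP contribution more carefully but arrive at the same order), and then combine with $|E^{\star}|=\Theta(m)$ to extract $\gamma=\Omega((1-\eta)^{2}/k^{2})$. The parameter you call $\ell$ plays the role of the paper's $k=\max_\ell k_\ell$; the paper's own statement uses $\ell$ in place of $k$, which is just a notational slip.
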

\begin{proof}

    Suppose flipping one variable $x_i$ changes at most $\Delta_i$ in the value of $H$ while holding other variables constant. Then 
    $$
        \Delta_i  = \mathcal{O}\left(\frac{mk}{n}\right).
    $$

    Assuming $\E_{\pi}[H(x)]=0$. Let  $C(E)$ denote the number of $x$ with energy $\leq E$ under $\tilde{H}$. For $E<0$, $C(E)/2^n$ is the probability for a random $x$, $H(x)$ deviates below $\E_{\pi}[H(x)]=0$ by amount at least $|E|$. The $P$-pseudo Lipschitz norm of $\tilde{H}$ is bounded by $$\lVert \tilde{H}\rVert_{P}=\sum_{i=1}^{m}\frac{\Delta_i^2}{n} = \mathcal{O}\left(\frac{m^2k^2}{n^2}\right).$$ Also, $\omega = \frac{2}{n}$ for the hypercube walk, and thus the Herbst argument implies
    $$
        \pi(E) \leq e^{-\Omega\left(\frac{2nE^2}{(mk)^2}\right)}.
    $$
    Taking $E=E^{\star}(1-\eta)$, we have
     $$
    \pi((1-\eta)E^{\star})\leq 2^{-n\gamma}, \quad \gamma = \Omega\left[\left(\frac{\abs{E^{\star}}}{m}\right)^2\frac{2(1-\eta)^2}{k^2}\right].
     $$
     However $\lvert E^{\star}\rvert = \Theta(m)$.
    
\end{proof}

The following result is then evident from Theorem \ref{thrm:MTG_RT}, given that from \cite{dalzell2022mind} we have that $b^*$ is constant for the random walk on the Hamming cube if $\gamma$ is constant.
\begin{theorem}
Let $\Mcal = (\Xcal, P, \pi)$ be random walk on the $n$-bit Hamming cube. Let $H: \pmone^n \rightarrow \R$ be a diagonal Hamiltonian with ground state energy 
$$ E^{\star} := \min_{x \in \Xcal} H(x).$$ 
If the number of constraints is $m = \Theta(n)$, then there exists a short path algorithm with runtime
$$
\Ostar\left(2^{n\left(\frac{1}{2}-\frac{(1-\eta)\lvert E^{\star}\rvert b}{n2\ln(2)\Delta_P}\right)}\right).
$$
\end{theorem}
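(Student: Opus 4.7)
The plan is to proceed by direct application of Theorem~\ref{thrm:MTG_RT} to the penalized Hamiltonian $\tilde H$ from Equation~\eqref{eqn:penalty_ham}, using the Markov chain given by the random walk on the $n$-bit Hamming cube. Thus the bulk of the work is to verify the three hypotheses of Theorem~\ref{thrm:MTG_RT}: (i) a $\Delta_P$-stability bound for $\tilde H$ under $P$, (ii) the $\gamma$ spectral density condition, and (iii) that the resulting $b^\star$ is a positive constant independent of $n$. Each of these has already been essentially established in the three preceding lemmas in this appendix; the job is to assemble them and track the scalings under the assumption $m = \Theta(n)$.

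First, from the preceding lemma we have $\Delta_P(\eta) = \mathcal{O}(mk/n) = \mathcal{O}(k)$ once $m = \Theta(n)$, and since we are in a CSP with clause width $k$ treated as constant, $\Delta_P = \mathcal{O}(1)$. Second, the tail bound lemma gives $\pi((1-\eta) E^\star) \le 2^{-n\gamma}$ with $\gamma = \Omega\!\left((\lvert E^\star\rvert/m)^2 (1-\eta)^2/k^2\right)$; since $\lvert E^\star\rvert = \Theta(m) = \Theta(n)$ by construction of the penalty, $\lvert E^\star\rvert/m = \Theta(1)$ and hence $\gamma = \Theta(1)$. Third, the log-Sobolev constant of the random walk on the $n$-cube is $\omega = 2/n$, while $\ln(1/\pi(E^\star)) = \Theta(n)$, so the product $\omega \ln(1/\pi(E^\star)) = \Theta(1)$. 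Plugging into the formula $b^\star = \tfrac{2}{3}\gamma\,\omega\,\ln(1/\pi(E^\star))$ from Theorem~\ref{thrm:MTG_RT} gives $b^\star = \Theta(1)$, as desired.

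With all three conditions verified, Theorem~\ref{thrm:MTG_RT} applies directly and yields the runtime bound
\[
\mathcal{O}\!\left(\mathrm{poly}(n)\,\omega^{-1}\,\bigl[\pi(E^\star)^{-1}\bigr]^{\frac{1}{2} - \frac{\eta(1-\eta)\lvert E^\star\rvert b}{2\ln(1/\pi(E^\star))\,\Delta_P}}\right),
\]
and substituting $\ln(1/\pi(E^\star)) = n\ln 2$ (or using the trivial bound $\pi(E^\star) \ge 2^{-n}$ for the uniform distribution on $\{-1,1\}^n$ since the ground state of $\tilde H$ is feasible) gives the stated form $\Ostar\bigl(2^{n(\frac12 - c)}\bigr)$ with the exponent shown. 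The $\mathrm{poly}(n)$ and $\omega^{-1} = n/2$ factors are absorbed into the $\Ostar$ notation.

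The only step that requires any subtlety is ensuring that the ``mean-zero'' normalization used in the tail bound lemma does not change the scaling of $\lvert E^\star\rvert$; since $\lvert E^\star\rvert = \Theta(m) = \Theta(n)$ whether or not one subtracts $\mathbb{E}_\pi[\tilde H]$, and since the shift only appears as a $\mathrm{poly}(n)$ multiplicative factor in the final runtime, this is harmless. I do not foresee a substantive obstacle beyond this bookkeeping, because the framework has been set up precisely so that a Hamming-cube specialization recovers the setting of~\cite{dalzell2022mind}, and the penalized-$\tilde H$ construction has been engineered so that every relevant quantity ($\lvert E^\star\rvert$, $\Delta_P$, $\gamma$, $\omega n$) is of the right order for $b^\star$ to be a genuine constant.
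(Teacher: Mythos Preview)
Your proposal is correct and follows essentially the same approach as the paper: the paper's proof simply states that the result is ``evident from Theorem~\ref{thrm:MTG_RT}, given that from \cite{dalzell2022mind} we have that $b^*$ is constant for the random walk on the Hamming cube if $\gamma$ is constant.'' You spell out the verification of the hypotheses of Theorem~\ref{thrm:MTG_RT} (computing $\omega = 2/n$, $\ln(1/\pi(E^\star)) = \Theta(n)$, and $\gamma = \Theta(1)$ directly rather than citing \cite{dalzell2022mind}), but the structure is identical.
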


The penalty method ensures that our algorithm only outputs feasible solutions, but the framework only provides a speedup over unconstrained brute-force search. Hence,  the generalized short path algorithm is significantly more effective.

\section{Details about Mixer Implementation}
\label{sec:block-encoding}
\subsection{Block-encoding the Glauber Mixer}
Since the Glauber dynamics mixer is a symmetric sparse matrix, we can implement the block-encoding in polynomial time by assuming sparse oracle access to the non-zero entries of $P$. Let $s$ be the maximum number of non-zero entries in any row of $D$. Then, we can implement the following oracle,
$$
    O_{S} \ket{x}\ket{0} \mapsto \frac{1}{\sqrt{s}}\sum_{y} \ket{x}\ket{y}
$$
where $y$ is an index of a non-zero entry in $P(x,\cdot)$. Since Glauber dynamics can only update one site at most, the transition matrix contains at most $n$ entries. Hence, we can implement this oracle by computing $P$ at most $n$ times. Define the oracle for access to the elements of $P$ in the following way,
$$
    O_{A} \ket{x}\ket{y}\ket{0} \mapsto \ket{x}\ket{y}\left(\sqrt{P(x,y)}\ket{0} +\sqrt{1-P(x,y)}\ket{1}\right).
$$
Implementing this oracle takes at most $\Ocal(n)$. Finally, we define \textsf{SWAP} operator,
$$
    \SWAP \ket{a} \ket{x}\ket{y}\ket{b} \mapsto \ket{y}\ket{b}\ket{a} \ket{x}.
$$
Then, the circuit $O_{S}^{\dagger} O_{A}^{\dagger} \SWAP O_{A} O_{S}$ implements block-encoding of $D/s$. To see this, compute
$$
  \bra{0}\bra{0}\bra{0}\bra{y} O_{S}^{\dagger} O_{A}^{\dagger} \SWAP O_{A} O_{S} \ket{0}\ket{0}\ket{0}\ket{x}.
$$
One has
\begin{align*}
  \ket{0}\ket{0}\ket{0}\ket{x}& \xrightarrow{O_S}  \frac{1}{\sqrt{s}}\sum_y \ket{0}\ket{y} \ket{0}\ket{x}\\
  &\xrightarrow{O_A}  \frac{1}{\sqrt{s}}\sum_y \ket{0}\ket{y} (\sqrt{P(x,y)}\ket{0}+\sqrt{1-P(x,y)}\ket{1})\ket{x}\\
  &\xrightarrow{\SWAP}  \frac{1}{\sqrt{s}}\sum_y (\sqrt{P(x,y)}\ket{0}+\sqrt{1-P(x,y)}\ket{1})\ket{x}\ket{0}\ket{y}.
\end{align*}
On the other hand,
\begin{align*}
    \bra{0}\bra{0}\bra{0}\bra{y} O_{S}^{\dagger} O_{A}^{\dagger}= \frac{1}{\sqrt{s}}\sum_z \bra{0}\bra{z} (\sqrt{P(y,z)}\ket{0}+\sqrt{1-P(y,z)}\ket{1})\ket{y}.
\end{align*}
Therefore,
$$
      \bra{0}\bra{0}\bra{0}\bra{y} O_{S}^{\dagger} O_{A}^{\dagger} \SWAP O_{A} O_{S} \ket{0}\ket{0}\ket{0}\ket{x} =\frac{1}{s}\sqrt{P(x,y)P(y,x)}.
$$
Hence, we can implement the block-encoding in at most polynomial time.
\subsection{Ground State Preparation for Glauber Mixer}
We discuss how to prepare the ground state of discriminant operator for Glauber dynamics for hard-core model and Ising model at $\lambda\leq \lambda_c$ and $\beta_c\leq \beta$ respectively. This is all we can afford to prepare since the spectral gap of $D$ falls exponentially fast when $\lambda>\lambda_c$ ($\beta>\beta_c$) due to statistical phase transitions. For simplicity, we consider the hard-core model to explain the idea. However, the details for both models will be given as seperate prpositions below.
Let $D_{\lambda}$ be the discriminant matrix of Glauber dynamics at fugacity $\lambda$. We use the block-encoding of $D_{\lambda}$ and amplitude amplification to prepare its ground state. Although, we can efficiently create block-encoding for $D_{\lambda}$ and apply singular value transformations to build a projector to its ground state, amplitude amplification might need exponentially many calls to this projector when we do not have a warm initial state.  
Instead, we combine classical annealing with quantum singular value transformation to prepare the ground state of $D_{\lambda}$ to create sequence of states where each state is warm with respect to the next one. Suppose that we prepare $\pi^{(1)}$, the coherent quantum state corresponding to the ground state of $D_{\lambda_1}$ where $0<\lambda_1<\varepsilon$,
$$
    \ket{\sqrt{\pi^{(1)}}} =\sum_{x\in \mathcal{X}} \sqrt{\pi_{\lambda_1}(x)}\ket{x}
$$
where the sum is over all independent sets $x$ in $G$. Next, we increase fugacity to $\lambda_2 = \lambda_1(1+\Delta)$ and prepare,
$$
    \ket{\sqrt{\pi^{(2)}}} =\sum_{x\in \mathcal{X}} \sqrt{\pi_{\lambda_2}(x)}\ket{x}.
$$
This quantum state can be prepared by applying ground state projector of $D_{\lambda_1}$ to and $D_{\lambda_2}$ to $\ket{\sqrt{\pi^{(0)}}}$  through fixed point amplitude amplification. We repeat this process until we prepare $\ket{\sqrt{\pi^{(k)}}}$,
$$
    \ket{\sqrt{\pi^{(k)}}} =\sum_{x\in \mathcal{X}} \sqrt{\pi_{\lambda_k}(x)}\ket{x}
$$
with $\lambda_k = \lambda_c$. We need to show that this process can be done in $\poly(n)$ time.
\begin{proposition}[Preparation of Gibbs State for Hard-core Model]
    Consider Glauber dynamics chain for hard-core model on graph $G(\mathcal{N},\mathcal{E})$ with transition matrix $P_{\lambda}$ with stationary distribution,
    \begin{equation}
        \pi_{\lambda}(x) = \frac{\lambda^{|x|}}{Z}.
    \end{equation}
Let $\delta(\lambda)>0$ be the spectral gap of the discriminant matrix $D(P_{\lambda})$ associated with $P_\lambda$. Then, there exists a quantum algorithm that prepares the  ground state of $-D$ up to $\varepsilon$ accuracy in $\ell_2$ norm with run time $\widetilde{\Ocal}(\delta_{\textup{min}}^{-1/2}\log(1/\varepsilon))$ where $\delta_{\textup{min}} = \inf_{0\leq\lambda'\leq \lambda}\delta(\lambda')$.
\end{proposition}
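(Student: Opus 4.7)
The plan is to implement a classical-to-quantum simulated annealing scheme in the fugacity parameter, stepping through a cooling schedule $\lambda_0 < \lambda_1 < \cdots < \lambda_k = \lambda$ and converting $\ket{\sqrt{\pi_{\lambda_i}}}$ into $\ket{\sqrt{\pi_{\lambda_{i+1}}}}$ via fixed-point amplitude amplification driven by an approximate projector onto the top eigenvector of $D(P_{\lambda_{i+1}})$, obtained through quantum singular value transformation (QSVT) on the block-encoding constructed in the previous subsection. I would initialize at a polynomially small $\lambda_0 = 1/\mathrm{poly}(n)$ so that $\pi_{\lambda_0}$ concentrates on $\ket{\emptyset}$ up to inverse-polynomial error, making $\ket{\sqrt{\pi_{\lambda_0}}}$ trivially preparable, and then use a geometric schedule $\lambda_{i+1} = \lambda_i(1+\Delta)$ with $\Delta$ to be chosen.

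First, I would fix $\Delta$ small enough that the Bhattacharyya overlap
\[
|\langle \sqrt{\pi_{\lambda_i}}\mid\sqrt{\pi_{\lambda_{i+1}}}\rangle| \;=\; \frac{Z(\sqrt{\lambda_i \lambda_{i+1}})}{\sqrt{Z(\lambda_i)\,Z(\lambda_{i+1})}}
\]
is bounded below by a constant (say $1/2$). A second-order expansion in the natural parameter $\beta = \log\lambda$ of this exponential family gives $1 - \mathrm{overlap} = \Theta\!\bigl((\Delta\beta)^2 \,\mathrm{Var}_{\pi_{\lambda_i}}[|X|]\bigr)$, and since $|x|\le n$ we may bound the variance crudely by $n^2$, so $\Delta\beta = \Theta(1/n)$ suffices. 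The number of stages is therefore $k = \mathcal{O}(n \log(\lambda/\lambda_0)) = \mathrm{poly}(n)$ under the standing assumption that $\log\lambda$ is polynomially bounded.

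Second, at each stage I would apply QSVT to the block-encoding of $D(P_{\lambda_{i+1}})$ to realize an $\varepsilon''$-approximate reflection about its top eigenvector using the standard square-root gap filtering polynomial, at cost $\widetilde{\mathcal{O}}\!\bigl(\delta(\lambda_{i+1})^{-1/2}\log(1/\varepsilon'')\bigr)$ queries. Combined with the reflection about the input state (via its preparation circuit), fixed-point amplitude amplification then converges to $\ket{\sqrt{\pi_{\lambda_{i+1}}}}$ in $\mathcal{O}(\log(1/\varepsilon'))$ rounds because the initial overlap is constant. Propagating errors additively in $L^2$ and setting the per-stage target error to $\varepsilon/k$ gives total error $\varepsilon$ and total query complexity $\widetilde{\mathcal{O}}(\delta_{\min}^{-1/2}\log(1/\varepsilon))$, with the polynomial factor $k$ absorbed into the tilde.

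The main obstacle I anticipate is obtaining the uniform overlap bound uniformly across the schedule, especially as $\lambda \to \lambda_c^{-}$: if $\mathrm{Var}_{\pi_\lambda}[|X|]$ grows faster than $\mathrm{poly}(n)$ near criticality, then $\Delta$ would have to shrink correspondingly and inflate the schedule length. The cleanest remedy is to exploit $\lambda \le \lambda_c$ together with the polynomial Poincar\'e constant of Glauber dynamics in this regime (established via the spectral independence and marginal boundedness results already cited in Section~\ref{sec:MIS-glauber}), which applied to the $1$-Lipschitz coordinate function $x \mapsto |x|$ yields $\mathrm{Var}_{\pi_\lambda}[|X|] = \mathcal{O}(\mathrm{poly}\,n)$. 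This keeps the schedule length polynomial and validates absorption into $\widetilde{\mathcal{O}}$; a secondary, routine bookkeeping issue is to ensure all QSVT approximation and amplitude-amplification errors propagate linearly under the chosen $L^2$ metric, which follows from standard unitarity arguments.
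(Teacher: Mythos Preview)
Your approach is essentially the same as the paper's: an annealing schedule in the fugacity with geometric steps $\lambda_{i+1}=\lambda_i(1+\Delta)$, constant consecutive overlap, and fixed-point amplitude amplification using a QSVT-based approximate projector onto the top eigenvector of $D(P_{\lambda_{i+1}})$.

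The one place where your route is more circuitous than necessary is the overlap bound. The paper sidesteps any variance or exponential-family expansion entirely with the elementary estimate
\[
|\langle \sqrt{\pi_{\lambda_{i}}}\mid\sqrt{\pi_{\lambda_{i+1}}}\rangle|
=\sum_{x}\frac{\lambda_i^{|x|/2}\lambda_{i+1}^{|x|/2}}{\sqrt{Z_iZ_{i+1}}}
\ge \sum_x \frac{\lambda_{i+1}^{|x|}(1+\Delta)^{-|x|/2}}{Z_{i+1}}
\ge (1+\Delta)^{-n/2},
\]
using only $Z_{i+1}\ge Z_i$ and $|x|\le n$. This gives $\Delta=\Theta(1/n)$ directly and holds uniformly along the entire schedule, so your anticipated obstacle about $\mathrm{Var}_{\pi_\lambda}[|X|]$ blowing up near $\lambda_c$ simply never arises; no appeal to Poincar\'e is needed. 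Your initialization from $\ket{\emptyset}$ at a polynomially small $\lambda_0$ is a cleaner variant of what the paper does (it starts from a uniform superposition over singletons and an exponentially small $\lambda_1$), but either works.
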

\begin{proof}
We start with the following quantum state,
$$
    \ket{\sqrt{\pi^{(0)}}} = \frac{1}{\sqrt{n}}\sum_{i=1}^n \ket{x_i}
$$
where $x_i$ is all $0$ bit string except location $i$. This quantum state is ground state of $D$ at $\lambda = 0$ as each $x_i$ is an independent set and they are the only ones with non-zero probability. This quantum state is essentially superposition over all strings with Hemming weight 1. This state can be prepared by applying $\frac{1}{\sqrt{n}}\sum_{i=1}^n X_i$ where $X_i$ is Pauli-$X$ applied to the all $0$ state. Therefore, this state can be prepared in polynomial time. However, we cannot implement the Glauber dynamics at $\lambda=0$ as the transition density is not meaningful. Instead, we start with $\ket{\sqrt{\pi^{(1)}}}$ which is the ground state of $D_{\lambda_1}$ which can be prepared from $\ket{\sqrt{\pi^{(0)}}}$ since $\ket{\sqrt{\pi^{(0)}}}$ and $\ket{\sqrt{\pi^{(1)}}}$ overlaps significantly
\begin{align*}
    |\braket{\sqrt{\pi^{(0)}}|\sqrt{\pi^{(1)}}}| = \frac{1}{\sqrt{n}}\sum_{i=1}^n\frac{\lambda_{1}^{1/2}}{\sqrt{Z_{1}}}
    \geq \frac{\sqrt{n\lambda_1}}{\sqrt{n\lambda_1 + \sum\limits_{|\mathcal{I}|>1}\lambda_1^{\mathcal{|I|}}}}
    &= \Omega(1),
\end{align*}
for $\lambda_1 \leq \frac{3n}{2^n}$. Similarly, given $\ket{\sqrt{\pi^{(k-1)}}}$, we can prepare $\ket{\sqrt{\pi^{(k)}}}$ efficiently. The number of calls to the ground state projector by the fixed point amplitude amplification from $\ket{\sqrt{\pi^{(k-1)}}}$ to $\ket{\sqrt{\pi^{(k)}}}$ is $\widetilde{\Ocal}(|\langle \sqrt{\pi^{(k-1)}}|\sqrt{\pi^{(k)}}\rangle|^{-1})$. The overlap can be calculated as 
\begin{align*}
    \left|\!\braket{\sqrt{\pi}^{(k-1)}|\pi^{(k)}}\!\right| &= \sum_{x\in \mathcal{X}}\frac{\lambda_{k-1}^{|x|/2}}{\sqrt{Z_{k-1}}}\frac{\lambda_{k}^{|x|/2}}{\sqrt{Z_{k}}}\\
    &\geq \sum_{x\in \mathcal{X}}\frac{\lambda_{k-1}^{|x|/2}\lambda_{k}^{|x|/2}}{Z_{k}}\\
    &= \sum_{x\in \mathcal{X}}\frac{\lambda_{k}^{|x|}\Delta^{-|x|/2}}{Z_{k}}\\
    &\geq  (1+\Delta)^{\frac{-n}{2}}\\
    &\geq 1-\frac{n\Delta}{2}\\
    &=\Omega(1),
\end{align*}
if we set $\Delta\leq \frac{2}{n}$. Hence, starting from $\ket{\sqrt{\pi^{(1)}}}$ we can prepare a schedule that maintains constant overlap with the subsequent state. Since we have constant overlap throughout the schedule, each amplitude amplification step only requires constant number of calls to ground state projectors. Also note that implementing the ground state projector requires $\widetilde{\mathcal{O}}(\delta^{-1})$ calls to block-encoding of $D$ \cite{gilyen2019QSVT}. Finally, we only need to do $\Ocal(\poly(n))$ rounds of amplitude amplification since $\lambda_{k} = \lambda_1 \exp(2k/n)$ and for $k\geq \frac{n}{2}\log(\lambda_c/\lambda_1)$, $\lambda_k \geq \lambda_c$.

\end{proof}

\begin{proposition}[Preparation of Gibbs State for Ising Model Model]
     Consider Glauber dynamics chain for Ising model on graph $G(\mathcal{N},\mathcal{E})$ with transition matrix $P_{\beta}$ with stationary distribution,
    \begin{equation}
        \pi_{\beta}(x) = \frac{\exp(-\beta H(x))}{Z}.
    \end{equation}
Let $\delta(\beta)>0$ be the spectral gap of the discriminant matrix $D(P_{\beta})$ associated with $P_\beta$. Then, there exists a quantum algorithm that prepares the  ground state of $-D$ up to $\varepsilon$ accuracy in $\ell_2$-norm with run time $\widetilde{\Ocal}(\delta_{\textup{min}}^{-1/2}\log(1/\varepsilon))$ where $\delta_{\textup{min}} = \inf_{0\leq\beta'\leq \beta}\delta(\beta')$.
\end{proposition}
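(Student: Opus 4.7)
The plan is to mirror the annealing strategy that was used for the hardcore model in the preceding proposition, now adapted to the Ising Gibbs family $\{\pi_\beta\}_{\beta\in[0,\bar\beta]}$. First, I would fix the endpoint of the schedule at $\beta_0 = 0$, where $\pi_0$ is uniform over $\{-1,1\}^n$ and the coherent state $\ket{\pi^{(0)}}=\ket{+}^{\otimes n}$ can be produced by $n$ Hadamards applied to $\ket{0}^{\otimes n}$. I would then construct a sequence $0=\beta_0<\beta_1<\cdots<\beta_k=\bar\beta$ with an additive step size $\Delta\beta = \beta_j-\beta_{j-1}$ that is small enough to guarantee constant overlap between consecutive coherent Gibbs states; since $|H(x)|\le |\mathcal E|=O(nd)$ on a graph of maximum degree $d$, taking $\Delta\beta = c/(nd)$ for a small constant $c$ keeps the schedule length $k=O(nd \bar\beta/c)=\mathrm{poly}(n)$.

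The transition step from $\ket{\pi^{(j-1)}}$ to $\ket{\pi^{(j)}}$ would be performed by fixed-point amplitude amplification using a QSVT-based approximate reflection about the top eigenvector of the block-encoded discriminant $D(P_{\beta_j})$. The block encoding itself is the one described in Appendix~\ref{sec:block-encoding} and can be implemented in $\mathrm{poly}(n)$ gates. Each amplification round costs $\widetilde{\mathcal O}(\delta(\beta_j)^{-1/2}\log(k/\varepsilon))$ queries by the standard quantum simulated annealing / Szegedy-walk speedup; the number of rounds is $\widetilde{\mathcal O}(|\langle \pi^{(j-1)}|\pi^{(j)}\rangle|^{-1})$, which I show below is $O(1)$.

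The overlap estimate is the only new technical point. Using $\pi_{\beta_j}(x)=(Z_{j-1}/Z_j)\exp(-\Delta\beta\, H(x))\pi_{\beta_{j-1}}(x)$, a direct calculation gives
\[
|\langle \pi^{(j-1)}|\pi^{(j)}\rangle| \;=\; \sum_x \sqrt{\pi_{\beta_{j-1}}(x)\pi_{\beta_j}(x)} \;=\; \frac{Z\!\left((\beta_{j-1}+\beta_j)/2\right)}{\sqrt{Z_{j-1}Z_j}}.
\]
Convexity of $\beta\mapsto \ln Z(\beta)$ (equivalently, positivity of the energy variance under $\pi_\beta$) shows this ratio is at most $1$; for a lower bound I would use $|H(x)|\le |\mathcal E|$ to get $\exp(-\Delta\beta\,H(x))\ge\exp(-\Delta\beta |\mathcal E|)$, which together with the identity above yields $|\langle \pi^{(j-1)}|\pi^{(j)}\rangle|\ge \exp(-\Delta\beta |\mathcal E|/2)=\Omega(1)$ for the chosen $\Delta\beta=c/(nd)$. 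This matches the role played by the bound $(1+\Delta)^{-n/2}\ge 1-n\Delta/2$ in the hardcore proof.

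Combining the three ingredients, the total query cost to block-encodings of $D(P_{\beta_j})$, $j\le k$, is
\[
\sum_{j=1}^{k} \widetilde{\mathcal O}\!\left(\delta(\beta_j)^{-1/2}\log(k/\varepsilon)\right) \;=\; \widetilde{\mathcal O}\!\left(\delta_{\min}^{-1/2}\log(1/\varepsilon)\right),
\]
absorbing the polynomial factors from $k=\mathrm{poly}(n)$ and the polynomial gate cost of each block encoding into the $\widetilde{\mathcal O}(\cdot)$. The main obstacle, and the place where the Ising case is genuinely different from the hardcore case, is calibrating $\Delta\beta$ so that the single-step overlap remains bounded below by a constant while keeping $k$ polynomial: in the hardcore setting one exploits $|x|\le n$, whereas here one must use the (in general larger) bound $|H(x)|=O(nd)$, which is why the graph degree enters the schedule length. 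All other ingredients—preparing $\ket{+}^{\otimes n}$, QSVT-based reflection, and fixed-point amplitude amplification—are identical to the hardcore case and require no modification.
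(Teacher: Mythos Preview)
Your proposal is correct and follows essentially the same annealing-based approach as the paper: start from the uniform state $\ket{+}^{\otimes n}$ at $\beta=0$, use an additive schedule with step size $\Delta\beta=\Theta(1/\lVert H\rVert)$ so that consecutive coherent Gibbs states have constant overlap, and advance via fixed-point amplitude amplification with QSVT-based reflections about the top eigenvector of $D(P_{\beta_j})$. Your overlap computation via the partition-function ratio $Z((\beta_{j-1}+\beta_j)/2)/\sqrt{Z_{j-1}Z_j}$ and your explicit invocation of the $\delta^{-1/2}$ quantum-walk reflection cost are slightly more careful than the paper's write-up, but the structure and all key ideas are identical.
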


\begin{proof}
The proof is similar to the hard-core model, and we start with the following quantum state,
$$
    \ket{\sqrt{\pi^{(0)}}} = \frac{1}{\sqrt{2^n}}\sum_{x \in \{0, 1\}^n} \ket{x}
$$
This quantum state is ground state of $D$ at $\beta = 0$ since for $\beta = 0$, Glauber dynamics is equivalent to hypercube walk. Similar to MIS case, given $\ket{\sqrt{\pi^{(k-1)}}}$, we can prepare $\ket{\sqrt{\pi^{(k)}}}$ up to $\varepsilon$ accuracy efficiently. The number of calls to the ground state projector by the fixed point amplitude amplification from $\ket{\sqrt{\pi^{(k-1)}}}$ to $\ket{\sqrt{\pi^{(k)}}}$ is $\widetilde{\Ocal}(|\langle \sqrt{\pi^{(k-1)}}|\sqrt{\pi^{(k)}}\rangle|^{-1})$. The overlap can be calculated as 
\begin{align*}
    \left|\!\braket{\sqrt{\pi^{(k-1)}}|\sqrt{\pi^{(k)}}}\!\right| &= \sum_{x\in G}\frac{\exp(-\beta_{k-1} H(x)/2)}{\sqrt{Z_{k-1}}}\frac{\exp(-\beta_{k} H(x)/2)}{\sqrt{Z_{k}}}\\
    &\geq  \sum_{x\in G}\frac{\exp(-\beta_{k-1} H(x)/2)\exp(-\beta_{k} H(x)/2)}{Z_k}\\
    &=  \sum_{x\in G}\frac{\exp(-\beta_{k} H(x))\exp((\beta_{k}-\beta_{k-1}) H(x)/2)}{Z_k}\\
    &\geq \exp((\beta_k-\beta_{k-1})\|H\|/2)\\
    &=\Omega(1),
\end{align*}
if we set $(\beta_k-\beta_{k-1})=\frac{1}{\|H\|}$. Hence, starting from $\ket{\sqrt{\pi^{(0)}}}$ we can prepare a schedule that maintains constant overlap with the subsequent state.

Since we have constant overlap throughout the schedule, each amplitude amplification step only requires constant number of calls to ground state projectors. Also note that implementing the ground state projector requires $\widetilde{\mathcal{O}}(\delta^{-1})$ calls to block-encoding of $D$ \cite{gilyen2019QSVT}. Finally, we only need to do $\poly(n)$ rounds of amplitude amplification since $\|H\| = \poly(n)$ and length of the schedule is polynomial. 
\end{proof}

\clearpage
\bibliography{bibo}

\section*{Disclaimer}
This paper was prepared for informational purposes by the Global Technology Applied Research center of JPMorgan Chase \& Co. This paper is not a product of the Research Department of JPMorgan Chase \& Co. or its affiliates. Neither JPMorgan Chase \& Co. nor any of its affiliates makes any explicit or implied representation or warranty and none of them accept any liability in connection with this paper, including, without limitation, with respect to the completeness, accuracy, or reliability of the information contained herein and the potential legal, compliance, tax, or accounting effects thereof. This document is not intended as investment research or investment advice, or as a recommendation, offer, or solicitation for the purchase or sale of any security, financial instrument, financial product or service, or to be used in any way for evaluating the merits of participating in any transaction.

\end{document}